\else \usepackage{latexsym}\fi
\newcounter{sarrow}
\newcounter{sarrow1}
\newcommand\xnrsquigarrow[1]{%
\stepcounter{sarrow1}%
\mathrel{\begin{tikzpicture}[baseline= {( $ (current bounding box.south) + (0,-0.5ex) $ )}]
\node[inner sep=.5ex] (\thesarrow) {$\scriptstyle #1$};
\path[draw,<-,decorate,
  decoration={zigzag,amplitude=0.7pt,segment length=1.2mm,pre=lineto,pre length=4pt}]
    (\thesarrow1.south east) -- (\thesarrow1.south west);
    $\slashedarrowfill@\relbar\relbar/$
    \end{tikzpicture}}%
}
\def\slashedarrowfill@#1#2#3#4#5{%
  $\m@th\thickmuskip0mu\medmuskip\thickmuskip\thinmuskip\thickmuskip
   \relax#5#1\mkern-7mu%
   \cleaders\hbox{$#5\mkern-2mu#2\mkern-2mu$}\hfill
   \mathclap{#3}\mathclap{#2}%
   \cleaders\hbox{$#5\mkern-2mu#2\mkern-2mu$}\hfill
   \mkern-7mu#4$%
}
\def\rightslashedarrowfillb@{%
  \slashedarrowfill@\relbar\relbar/\rightarrow}
\newcommand\xnrightarrow[2][]{%
  \ext@arrow 0055{\rightslashedarrowfillb@}{#1}{#2}}
\def\rightslashedarrowfille@{%
  \slashedarrowfill@\relbar\relbar/\twoheadrightarrow}
\newcommand\xntworightarrow[2][]{%
  \ext@arrow 0055{\rightslashedarrowfille@}{#1}{#2}}
\def\rightslashedarrowfillg@{%
  \slashedarrowfill@\relbar\relbar{\raisebox{.12em}{}}\twoheadrightarrow}
\newcommand\xtworightarrow[2][]{%
  \ext@arrow 0055{\rightslashedarrowfillg@}{#1}{#2}}
\def\rightslashedarrowfillx@{%
  \slashedarrowfill@\Relbar\Relbar/\rightrightarrows}
\newcommand\xnTworightarrow[2][]{%
  \ext@arrow 0055{\rightslashedarrowfillx@}{#1}{#2}}
\def\rightslashedarrowfilly@{%
  \slashedarrowfill@\Relbar\Relbar{\raisebox{.12em}{}}\rightrightarrows}
\newcommand\xTworightarrow[2][]{%
  \ext@arrow 0055{\rightslashedarrowfilly@}{#1}{#2}}
\tikzset{nomorepostaction/.code=\let\tikz@postactions\pgfutil@empty}
\newcommand\black{\ensuremath{\blacktriangleright}}
\newcommand\white{\ensuremath{\vartriangleright}}
  \newcommand\whbl{\white\kern-.1em--\kern-.1em\black}
  \newcommand\blwh{\black\kern-.1em--\kern-.1em\white}
  \newcommand\blbl{\black\kern-.1em--\kern-.1em\black}
  \newcommand\whwh{\white\kern-.1em--\kern-.1em\white}
\newtheorem{theorem}{Theorem}[section]
\newtheorem{definition}[theorem]{Definition}
\title[Draft of Truly Concurrent Process Algebra Is Reversible]
      {Truly Concurrent Process Algebra Is Reversible}
\author[Yong Wang]
    {Yong Wang\\
     College of Computer Science and Technology,\\
     Faculty of Information Technology,\\
     Beijing University of Technology, Beijing, China\\
     }
\begin{document}
\label{firstpage}

\makecorrespond

\maketitle

\begin{abstract}
Based on our previous process algebra for concurrency APTC, we prove that it is reversible with a little modifications. The reversible algebra has four parts: Basic Algebra for Reversible True Concurrency (BARTC), Algebra for Parallelism in Reversible True Concurrency (APRTC), recursion and abstraction.
\end{abstract}

\begin{keywords}
Reversible Computation; True Concurrency; Behaviorial Equivalence; Bisimilarity
\end{keywords}

\section{Introduction}{\label{int}}

Process algebra is a formal tool to capture computation, especially concurrency, such as CCS \cite{ALNC} \cite{CC} \cite{CCS} and ACP \cite{ACP}. Several years ago, we do some work on process algebra for true concurrency, such as APTC \cite{APTC} and CTC \cite{CTC}, while traditional process algebra focuses on interleaving.

Reversible calculi \cite{CR} \cite{RCCS2} \cite{TCSR} tries to describe reversible computation in the framework of process algebra. Based on CTC and APTC, we also did some work on reversible algebra called RCTC \cite{RCTC} and RAPTC \cite{RAPTC}. But the axiomatization of RAPTC is imperfect, it is sound, but not complete. The main reason is that the existence of multi choice operator makes a sound and complete axiomatization can not be established.

In this paper, we try to use alternative operator to replace multi choice operator and we get a sound and complete axiomatization for reversible computation. The main reason of using alternative operator is that when an alternative branch is forward executing, the reverse branch is also determined and other branches have no necessaries to remain. But, when a process is reversed, the other branches disappear. We call the reversible algebra using alternative operator partially reversible algebra.

This paper is organized as follows. In section \ref{bg}, we introduce some preliminaries on APTC, reversible semantics, and proof techniques. We introduce the whole sound and complete axiomatization in section \ref{bartc}, \ref{aprtc}, \ref{rec}, \ref{abs}. Finally, we conclude this paper in section \ref{con}.

\section{Backgrounds}\label{bg}

\subsection{APTC}\label{aptc}

In this subsection, we introduce the preliminaries on truly concurrent process algebra APTC \cite{APTC}, which is based on the truly concurrent bisimulation semantics. APTC has an almost perfect axiomatization to capture laws on truly concurrent bisimulation equivalence, including equational logic and truly concurrent bisimulation semantics, and also the soundness and completeness bridged between them.

APTC captures several computational properties in the form of algebraic laws, and proves the soundness and completeness modulo truly concurrent bisimulation/rooted branching truly concurrent bisimulation equivalence. These computational properties are organized in a modular way by use of the concept of conservational extension, which include the following modules, note that, every algebra are composed of constants and operators, the constants are the computational objects, while operators capture the computational properties.

\begin{enumerate}
  \item \textbf{BATC (Basic Algebras for True Concurrency)}. BATC has sequential composition $\cdot$ and alternative composition $+$ to capture causality computation and conflict. The constants are ranged over $\mathbb{E}$, the set of atomic events. The algebraic laws on $\cdot$ and $+$ are sound and complete modulo truly concurrent bisimulation equivalences, such as pomset bisimulation $\sim_p$, step bisimulation $\sim_s$, history-preserving (hp-) bisimulation $\sim_{hp}$ and hereditary history-preserving (hhp-) bisimulation $\sim_{hhp}$.
  \item \textbf{APTC (Algebra for Parallelism for True Concurrency)}. APTC uses the whole parallel operator $\between$, the parallel operator $\parallel$ to model parallelism, and the communication merge $\mid$ to model causality (communication) among different parallel branches. Since a communication may be blocked, a new constant called deadlock $\delta$ is extended to $\mathbb{E}$, and also a new unary encapsulation operator $\partial_H$ is introduced to eliminate $\delta$, which may exist in the processes. And also a conflict elimination operator $\Theta$ to eliminate conflicts existing in different parallel branches. The algebraic laws on these operators are also sound and complete modulo truly concurrent bisimulation equivalences, such as pomset bisimulation $\sim_p$, step bisimulation $\sim_s$, history-preserving (hp-) bisimulation $\sim_{hp}$. Note that, these operators in a process except the parallel operator $\parallel$ can be eliminated by deductions on the process using axioms of APTC, and eventually be steadied by $\cdot$, $+$ and $\parallel$, this is also why bisimulations are called an \emph{truly concurrent} semantics.
  \item \textbf{Recursion}. To model infinite computation, recursion is introduced into APTC. In order to obtain a sound and complete theory, guarded recursion and linear recursion are needed. The corresponding axioms are RSP (Recursive Specification Principle) and RDP (Recursive Definition Principle), RDP says the solutions of a recursive specification can represent the behaviors of the specification, while RSP says that a guarded recursive specification has only one solution, they are sound with respect to APTC with guarded recursion modulo truly concurrent bisimulation equivalences, such as pomset bisimulation $\sim_p$, step bisimulation $\sim_s$, history-preserving (hp-) bisimulation $\sim_{hp}$, and they are complete with respect to APTC with linear recursion modulo truly concurrent bisimulation equivalence, such as pomset bisimulation $\sim_p$, step bisimulation $\sim_s$, history-preserving (hp-) bisimulation $\sim_{hp}$.
  \item \textbf{Abstraction}. To abstract away internal implementations from the external behaviors, a new constant $\tau$ called silent step is added to $\mathbb{E}$, and also a new unary abstraction operator $\tau_I$ is used to rename actions in $I$ into $\tau$ (the resulted APTC with silent step and abstraction operator is called $APTC_{\tau}$). The recursive specification is adapted to guarded linear recursion to prevent infinite $\tau$-loops specifically. The axioms for $\tau$ and $\tau_I$ are sound modulo rooted branching truly concurrent bisimulation equivalences (a kind of weak truly concurrent bisimulation equivalence), such as rooted branching pomset bisimulation $\approx_p$, rooted branching step bisimulation $\approx_s$, rooted branching history-preserving (hp-) bisimulation $\approx_{hp}$. To eliminate infinite $\tau$-loops caused by $\tau_I$ and obtain the completeness, CFAR (Cluster Fair Abstraction Rule) is used to prevent infinite $\tau$-loops in a constructible way.
\end{enumerate}

APTC can be used to verify the correctness of system behaviors, by deduction on the description of the system using the axioms of APTC. Base on the modularity of APTC, it can be extended easily and elegantly. For more details, please refer to the manuscript of APTC \cite{APTC}.

\subsection{Truly Concurrent Behavioral Semantics}\label{tcbs}

The semantics of APTC is based on truly concurrent bisimulation/rooted branching truly concurrent bisimulation equivalences, and the modularity of APTC relies on the concept of conservative extension, for the conveniences, we introduce some concepts and conclusions on them.

\begin{definition}[Prime event structure with silent event]\label{PES}
Let $\Lambda$ be a fixed set of labels, ranged over $a,b,c,\cdots$ and $\tau$. A ($\Lambda$-labelled) prime event structure with silent event $\tau$ is a tuple $\mathcal{E}=\langle \mathbb{E}, \leq, \sharp, \lambda\rangle$, where $\mathbb{E}$ is a denumerable set of events, including the silent event $\tau$. Let $\hat{\mathbb{E}}=\mathbb{E}\backslash\{\tau\}$, exactly excluding $\tau$, it is obvious that $\hat{\tau^*}=\epsilon$, where $\epsilon$ is the empty event. Let $\lambda:\mathbb{E}\rightarrow\Lambda$ be a labelling function and let $\lambda(\tau)=\tau$. And $\leq$, $\sharp$ are binary relations on $\mathbb{E}$, called causality and conflict respectively, such that:

\begin{enumerate}
  \item $\leq$ is a partial order and $\lceil e \rceil = \{e'\in \mathbb{E}|e'\leq e\}$ is finite for all $e\in \mathbb{E}$. It is easy to see that $e\leq\tau^*\leq e'=e\leq\tau\leq\cdots\leq\tau\leq e'$, then $e\leq e'$.
  \item $\sharp$ is irreflexive, symmetric and hereditary with respect to $\leq$, that is, for all $e,e',e''\in \mathbb{E}$, if $e\sharp e'\leq e''$, then $e\sharp e''$.
\end{enumerate}

Then, the concepts of consistency and concurrency can be drawn from the above definition:

\begin{enumerate}
  \item $e,e'\in \mathbb{E}$ are consistent, denoted as $e\frown e'$, if $\neg(e\sharp e')$. A subset $X\subseteq \mathbb{E}$ is called consistent, if $e\frown e'$ for all $e,e'\in X$.
  \item $e,e'\in \mathbb{E}$ are concurrent, denoted as $e\parallel e'$, if $\neg(e\leq e')$, $\neg(e'\leq e)$, and $\neg(e\sharp e')$.
\end{enumerate}
\end{definition}

\begin{definition}[Configuration]
Let $\mathcal{E}$ be a PES. A (finite) configuration in $\mathcal{E}$ is a (finite) consistent subset of events $C\subseteq \mathcal{E}$, closed with respect to causality (i.e. $\lceil C\rceil=C$). The set of finite configurations of $\mathcal{E}$ is denoted by $\mathcal{C}(\mathcal{E})$. We let $\hat{C}=C\backslash\{\tau\}$.
\end{definition}

A consistent subset of $X\subseteq \mathbb{E}$ of events can be seen as a pomset. Given $X, Y\subseteq \mathbb{E}$, $\hat{X}\sim \hat{Y}$ if $\hat{X}$ and $\hat{Y}$ are isomorphic as pomsets. In the following of the paper, we say $C_1\sim C_2$, we mean $\hat{C_1}\sim\hat{C_2}$.

\begin{definition}[Pomset transitions and step]
Let $\mathcal{E}$ be a PES and let $C\in\mathcal{C}(\mathcal{E})$, and $\emptyset\neq X\subseteq \mathbb{E}$, if $C\cap X=\emptyset$ and $C'=C\cup X\in\mathcal{C}(\mathcal{E})$, then $C\xrightarrow{X} C'$ is called a pomset transition from $C$ to $C'$. When the events in $X$ are pairwise concurrent, we say that $C\xrightarrow{X}C'$ is a step.
\end{definition}

\begin{definition}[Weak pomset transitions and weak step]
Let $\mathcal{E}$ be a PES and let $C\in\mathcal{C}(\mathcal{E})$, and $\emptyset\neq X\subseteq \hat{\mathbb{E}}$, if $C\cap X=\emptyset$ and $\hat{C'}=\hat{C}\cup X\in\mathcal{C}(\mathcal{E})$, then $C\xRightarrow{X} C'$ is called a weak pomset transition from $C$ to $C'$, where we define $\xRightarrow{e}\triangleq\xrightarrow{\tau^*}\xrightarrow{e}\xrightarrow{\tau^*}$. And $\xRightarrow{X}\triangleq\xrightarrow{\tau^*}\xrightarrow{e}\xrightarrow{\tau^*}$, for every $e\in X$. When the events in $X$ are pairwise concurrent, we say that $C\xRightarrow{X}C'$ is a weak step.
\end{definition}

We will also suppose that all the PESs in this paper are image finite, that is, for any PES $\mathcal{E}$ and $C\in \mathcal{C}(\mathcal{E})$ and $a\in \Lambda$, $\{e\in \mathbb{E}|C\xrightarrow{e} C'\wedge \lambda(e)=a\}$ and $\{e\in\hat{\mathbb{E}}|C\xRightarrow{e} C'\wedge \lambda(e)=a\}$ is finite.

\begin{definition}[Pomset, step bisimulation]\label{PSB}
Let $\mathcal{E}_1$, $\mathcal{E}_2$ be PESs. A pomset bisimulation is a relation $R\subseteq\mathcal{C}(\mathcal{E}_1)\times\mathcal{C}(\mathcal{E}_2)$, such that if $(C_1,C_2)\in R$, and $C_1\xrightarrow{X_1}C_1'$ then $C_2\xrightarrow{X_2}C_2'$, with $X_1\subseteq \mathbb{E}_1$, $X_2\subseteq \mathbb{E}_2$, $X_1\sim X_2$ and $(C_1',C_2')\in R$, and vice-versa. We say that $\mathcal{E}_1$, $\mathcal{E}_2$ are pomset bisimilar, written $\mathcal{E}_1\sim_p\mathcal{E}_2$, if there exists a pomset bisimulation $R$, such that $(\emptyset,\emptyset)\in R$. By replacing pomset transitions with steps, we can get the definition of step bisimulation. When PESs $\mathcal{E}_1$ and $\mathcal{E}_2$ are step bisimilar, we write $\mathcal{E}_1\sim_s\mathcal{E}_2$.
\end{definition}

\begin{definition}[Weak pomset, step bisimulation]\label{WPSB}
Let $\mathcal{E}_1$, $\mathcal{E}_2$ be PESs. A weak pomset bisimulation is a relation $R\subseteq\mathcal{C}(\mathcal{E}_1)\times\mathcal{C}(\mathcal{E}_2)$, such that if $(C_1,C_2)\in R$, and $C_1\xRightarrow{X_1}C_1'$ then $C_2\xRightarrow{X_2}C_2'$, with $X_1\subseteq \hat{\mathbb{E}_1}$, $X_2\subseteq \hat{\mathbb{E}_2}$, $X_1\sim X_2$ and $(C_1',C_2')\in R$, and vice-versa. We say that $\mathcal{E}_1$, $\mathcal{E}_2$ are weak pomset bisimilar, written $\mathcal{E}_1\approx_p\mathcal{E}_2$, if there exists a weak pomset bisimulation $R$, such that $(\emptyset,\emptyset)\in R$. By replacing weak pomset transitions with weak steps, we can get the definition of weak step bisimulation. When PESs $\mathcal{E}_1$ and $\mathcal{E}_2$ are weak step bisimilar, we write $\mathcal{E}_1\approx_s\mathcal{E}_2$.
\end{definition}

\begin{definition}[Posetal product]
Given two PESs $\mathcal{E}_1$, $\mathcal{E}_2$, the posetal product of their configurations, denoted $\mathcal{C}(\mathcal{E}_1)\overline{\times}\mathcal{C}(\mathcal{E}_2)$, is defined as

$$\{(C_1,f,C_2)|C_1\in\mathcal{C}(\mathcal{E}_1),C_2\in\mathcal{C}(\mathcal{E}_2),f:C_1\rightarrow C_2 \textrm{ isomorphism}\}.$$

A subset $R\subseteq\mathcal{C}(\mathcal{E}_1)\overline{\times}\mathcal{C}(\mathcal{E}_2)$ is called a posetal relation. We say that $R$ is downward closed when for any $(C_1,f,C_2),(C_1',f',C_2')\in \mathcal{C}(\mathcal{E}_1)\overline{\times}\mathcal{C}(\mathcal{E}_2)$, if $(C_1,f,C_2)\subseteq (C_1',f',C_2')$ pointwise and $(C_1',f',C_2')\in R$, then $(C_1,f,C_2)\in R$.

For $f:X_1\rightarrow X_2$, we define $f[x_1\mapsto x_2]:X_1\cup\{x_1\}\rightarrow X_2\cup\{x_2\}$, $z\in X_1\cup\{x_1\}$,(1)$f[x_1\mapsto x_2](z)=
x_2$,if $z=x_1$;(2)$f[x_1\mapsto x_2](z)=f(z)$, otherwise. Where $X_1\subseteq \mathbb{E}_1$, $X_2\subseteq \mathbb{E}_2$, $x_1\in \mathbb{E}_1$, $x_2\in \mathbb{E}_2$.
\end{definition}

\begin{definition}[Weakly posetal product]
Given two PESs $\mathcal{E}_1$, $\mathcal{E}_2$, the weakly posetal product of their configurations, denoted $\mathcal{C}(\mathcal{E}_1)\overline{\times}\mathcal{C}(\mathcal{E}_2)$, is defined as

$$\{(C_1,f,C_2)|C_1\in\mathcal{C}(\mathcal{E}_1),C_2\in\mathcal{C}(\mathcal{E}_2),f:\hat{C_1}\rightarrow \hat{C_2} \textrm{ isomorphism}\}.$$

A subset $R\subseteq\mathcal{C}(\mathcal{E}_1)\overline{\times}\mathcal{C}(\mathcal{E}_2)$ is called a weakly posetal relation. We say that $R$ is downward closed when for any $(C_1,f,C_2),(C_1',f,C_2')\in \mathcal{C}(\mathcal{E}_1)\overline{\times}\mathcal{C}(\mathcal{E}_2)$, if $(C_1,f,C_2)\subseteq (C_1',f',C_2')$ pointwise and $(C_1',f',C_2')\in R$, then $(C_1,f,C_2)\in R$.

For $f:X_1\rightarrow X_2$, we define $f[x_1\mapsto x_2]:X_1\cup\{x_1\}\rightarrow X_2\cup\{x_2\}$, $z\in X_1\cup\{x_1\}$,(1)$f[x_1\mapsto x_2](z)=
x_2$,if $z=x_1$;(2)$f[x_1\mapsto x_2](z)=f(z)$, otherwise. Where $X_1\subseteq \hat{\mathbb{E}_1}$, $X_2\subseteq \hat{\mathbb{E}_2}$, $x_1\in \hat{\mathbb{E}}_1$, $x_2\in \hat{\mathbb{E}}_2$. Also, we define $f(\tau^*)=f(\tau^*)$.
\end{definition}

\begin{definition}[(Hereditary) history-preserving bisimulation]\label{HHPB}
A history-preserving (hp-) bisimulation is a posetal relation $R\subseteq\mathcal{C}(\mathcal{E}_1)\overline{\times}\mathcal{C}(\mathcal{E}_2)$ such that if $(C_1,f,C_2)\in R$, and $C_1\xrightarrow{e_1} C_1'$, then $C_2\xrightarrow{e_2} C_2'$, with $(C_1',f[e_1\mapsto e_2],C_2')\in R$, and vice-versa. $\mathcal{E}_1,\mathcal{E}_2$ are history-preserving (hp-)bisimilar and are written $\mathcal{E}_1\sim_{hp}\mathcal{E}_2$ if there exists a hp-bisimulation $R$ such that $(\emptyset,\emptyset,\emptyset)\in R$.

A hereditary history-preserving (hhp-)bisimulation is a downward closed hp-bisimulation. $\mathcal{E}_1,\mathcal{E}_2$ are hereditary history-preserving (hhp-)bisimilar and are written $\mathcal{E}_1\sim_{hhp}\mathcal{E}_2$.
\end{definition}

\begin{definition}[Weak (hereditary) history-preserving bisimulation]\label{WHHPB}
A weak history-preserving (hp-) bisimulation is a weakly posetal relation $R\subseteq\mathcal{C}(\mathcal{E}_1)\overline{\times}\mathcal{C}(\mathcal{E}_2)$ such that if $(C_1,f,C_2)\in R$, and $C_1\xRightarrow{e_1} C_1'$, then $C_2\xRightarrow{e_2} C_2'$, with $(C_1',f[e_1\mapsto e_2],C_2')\in R$, and vice-versa. $\mathcal{E}_1,\mathcal{E}_2$ are weak history-preserving (hp-)bisimilar and are written $\mathcal{E}_1\approx_{hp}\mathcal{E}_2$ if there exists a hp-bisimulation $R$ such that $(\emptyset,\emptyset,\emptyset)\in R$.

A weakly hereditary history-preserving (hhp-)bisimulation is a downward closed weak hp-bisimulation. $\mathcal{E}_1,\mathcal{E}_2$ are weakly hereditary history-preserving (hhp-)bisimilar and are written $\mathcal{E}_1\approx_{hhp}\mathcal{E}_2$.
\end{definition}

\begin{definition}[Branching pomset, step bisimulation]\label{BPSB}
Assume a special termination predicate $\downarrow$, and let $\surd$ represent a state with $\surd\downarrow$. Let $\mathcal{E}_1$, $\mathcal{E}_2$ be PESs. A branching pomset bisimulation is a relation $R\subseteq\mathcal{C}(\mathcal{E}_1)\times\mathcal{C}(\mathcal{E}_2)$, such that:
 \begin{enumerate}
   \item if $(C_1,C_2)\in R$, and $C_1\xrightarrow{X}C_1'$ then
   \begin{itemize}
     \item either $X\equiv \tau^*$, and $(C_1',C_2)\in R$;
     \item or there is a sequence of (zero or more) $\tau$-transitions $C_2\xrightarrow{\tau^*} C_2^0$, such that $(C_1,C_2^0)\in R$ and $C_2^0\xRightarrow{X}C_2'$ with $(C_1',C_2')\in R$;
   \end{itemize}
   \item if $(C_1,C_2)\in R$, and $C_2\xrightarrow{X}C_2'$ then
   \begin{itemize}
     \item either $X\equiv \tau^*$, and $(C_1,C_2')\in R$;
     \item or there is a sequence of (zero or more) $\tau$-transitions $C_1\xrightarrow{\tau^*} C_1^0$, such that $(C_1^0,C_2)\in R$ and $C_1^0\xRightarrow{X}C_1'$ with $(C_1',C_2')\in R$;
   \end{itemize}
   \item if $(C_1,C_2)\in R$ and $C_1\downarrow$, then there is a sequence of (zero or more) $\tau$-transitions $C_2\xrightarrow{\tau^*}C_2^0$ such that $(C_1,C_2^0)\in R$ and $C_2^0\downarrow$;
   \item if $(C_1,C_2)\in R$ and $C_2\downarrow$, then there is a sequence of (zero or more) $\tau$-transitions $C_1\xrightarrow{\tau^*}C_1^0$ such that $(C_1^0,C_2)\in R$ and $C_1^0\downarrow$.
 \end{enumerate}

We say that $\mathcal{E}_1$, $\mathcal{E}_2$ are branching pomset bisimilar, written $\mathcal{E}_1\approx_{bp}\mathcal{E}_2$, if there exists a branching pomset bisimulation $R$, such that $(\emptyset,\emptyset)\in R$.

By replacing pomset transitions with steps, we can get the definition of branching step bisimulation. When PESs $\mathcal{E}_1$ and $\mathcal{E}_2$ are branching step bisimilar, we write $\mathcal{E}_1\approx_{bs}\mathcal{E}_2$.
\end{definition}

\begin{definition}[Rooted branching pomset, step bisimulation]\label{RBPSB}
Assume a special termination predicate $\downarrow$, and let $\surd$ represent a state with $\surd\downarrow$. Let $\mathcal{E}_1$, $\mathcal{E}_2$ be PESs. A rooted branching pomset bisimulation is a relation $R\subseteq\mathcal{C}(\mathcal{E}_1)\times\mathcal{C}(\mathcal{E}_2)$, such that:
 \begin{enumerate}
   \item if $(C_1,C_2)\in R$, and $C_1\xrightarrow{X}C_1'$ then $C_2\xrightarrow{X}C_2'$ with $C_1'\approx_{bp}C_2'$;
   \item if $(C_1,C_2)\in R$, and $C_2\xrightarrow{X}C_2'$ then $C_1\xrightarrow{X}C_1'$ with $C_1'\approx_{bp}C_2'$;
   \item if $(C_1,C_2)\in R$ and $C_1\downarrow$, then $C_2\downarrow$;
   \item if $(C_1,C_2)\in R$ and $C_2\downarrow$, then $C_1\downarrow$.
 \end{enumerate}

We say that $\mathcal{E}_1$, $\mathcal{E}_2$ are rooted branching pomset bisimilar, written $\mathcal{E}_1\approx_{rbp}\mathcal{E}_2$, if there exists a rooted branching pomset bisimulation $R$, such that $(\emptyset,\emptyset)\in R$.

By replacing pomset transitions with steps, we can get the definition of rooted branching step bisimulation. When PESs $\mathcal{E}_1$ and $\mathcal{E}_2$ are rooted branching step bisimilar, we write $\mathcal{E}_1\approx_{rbs}\mathcal{E}_2$.
\end{definition}

\begin{definition}[Branching (hereditary) history-preserving bisimulation]\label{BHHPB}
Assume a special termination predicate $\downarrow$, and let $\surd$ represent a state with $\surd\downarrow$. A branching history-preserving (hp-) bisimulation is a weakly posetal relation $R\subseteq\mathcal{C}(\mathcal{E}_1)\overline{\times}\mathcal{C}(\mathcal{E}_2)$ such that:

 \begin{enumerate}
   \item if $(C_1,f,C_2)\in R$, and $C_1\xrightarrow{e_1}C_1'$ then
   \begin{itemize}
     \item either $e_1\equiv \tau$, and $(C_1',f[e_1\mapsto \tau],C_2)\in R$;
     \item or there is a sequence of (zero or more) $\tau$-transitions $C_2\xrightarrow{\tau^*} C_2^0$, such that $(C_1,f,C_2^0)\in R$ and $C_2^0\xrightarrow{e_2}C_2'$ with $(C_1',f[e_1\mapsto e_2],C_2')\in R$;
   \end{itemize}
   \item if $(C_1,f,C_2)\in R$, and $C_2\xrightarrow{e_2}C_2'$ then
   \begin{itemize}
     \item either $e_2\equiv \tau$, and $(C_1,f[e_2\mapsto \tau],C_2')\in R$;
     \item or there is a sequence of (zero or more) $\tau$-transitions $C_1\xrightarrow{\tau^*} C_1^0$, such that $(C_1^0,f,C_2)\in R$ and $C_1^0\xrightarrow{e_1}C_1'$ with $(C_1',f[e_2\mapsto e_1],C_2')\in R$;
   \end{itemize}
   \item if $(C_1,f,C_2)\in R$ and $C_1\downarrow$, then there is a sequence of (zero or more) $\tau$-transitions $C_2\xrightarrow{\tau^*}C_2^0$ such that $(C_1,f,C_2^0)\in R$ and $C_2^0\downarrow$;
   \item if $(C_1,f,C_2)\in R$ and $C_2\downarrow$, then there is a sequence of (zero or more) $\tau$-transitions $C_1\xrightarrow{\tau^*}C_1^0$ such that $(C_1^0,f,C_2)\in R$ and $C_1^0\downarrow$.
 \end{enumerate}

$\mathcal{E}_1,\mathcal{E}_2$ are branching history-preserving (hp-)bisimilar and are written $\mathcal{E}_1\approx_{bhp}\mathcal{E}_2$ if there exists a branching hp-bisimulation $R$ such that $(\emptyset,\emptyset,\emptyset)\in R$.

A branching hereditary history-preserving (hhp-)bisimulation is a downward closed branching hp-bisimulation. $\mathcal{E}_1,\mathcal{E}_2$ are branching hereditary history-preserving (hhp-)bisimilar and are written $\mathcal{E}_1\approx_{bhhp}\mathcal{E}_2$.
\end{definition}

\begin{definition}[Rooted branching (hereditary) history-preserving bisimulation]\label{RBHHPB}
Assume a special termination predicate $\downarrow$, and let $\surd$ represent a state with $\surd\downarrow$. A rooted branching history-preserving (hp-) bisimulation is a weakly posetal relation $R\subseteq\mathcal{C}(\mathcal{E}_1)\overline{\times}\mathcal{C}(\mathcal{E}_2)$ such that:

 \begin{enumerate}
   \item if $(C_1,f,C_2)\in R$, and $C_1\xrightarrow{e_1}C_1'$, then $C_2\xrightarrow{e_2}C_2'$ with $C_1'\approx_{bhp}C_2'$;
   \item if $(C_1,f,C_2)\in R$, and $C_2\xrightarrow{e_2}C_2'$, then $C_1\xrightarrow{e_1}C_1'$ with $C_1'\approx_{bhp}C_2'$;
   \item if $(C_1,f,C_2)\in R$ and $C_1\downarrow$, then $C_2\downarrow$;
   \item if $(C_1,f,C_2)\in R$ and $C_2\downarrow$, then $C_1\downarrow$.
 \end{enumerate}

$\mathcal{E}_1,\mathcal{E}_2$ are rooted branching history-preserving (hp-)bisimilar and are written $\mathcal{E}_1\approx_{rbhp}\mathcal{E}_2$ if there exists a rooted branching hp-bisimulation $R$ such that $(\emptyset,\emptyset,\emptyset)\in R$.

A rooted branching hereditary history-preserving (hhp-)bisimulation is a downward closed rooted branching hp-bisimulation. $\mathcal{E}_1,\mathcal{E}_2$ are rooted branching hereditary history-preserving (hhp-)bisimilar and are written $\mathcal{E}_1\approx_{rbhhp}\mathcal{E}_2$.
\end{definition}

\begin{definition}[Congruence]
Let $\Sigma$ be a signature. An equivalence relation $R$ on $\mathcal{T}(\Sigma)$ is a congruence if for each $f\in\Sigma$, if $s_i R t_i$ for $i\in\{1,\cdots,ar(f)\}$, then $f(s_1,\cdots,s_{ar(f)}) R f(t_1,\cdots,t_{ar(f)})$.
\end{definition}

\begin{definition}[Conservative extension]
Let $T_0$ and $T_1$ be TSSs (transition system specifications) over signatures $\Sigma_0$ and $\Sigma_1$, respectively. The TSS $T_0\oplus T_1$ is a conservative extension of $T_0$ if the LTSs (labeled transition systems) generated by $T_0$ and $T_0\oplus T_1$ contain exactly the same transitions $t\xrightarrow{a}t'$ and $tP$ with $t\in \mathcal{T}(\Sigma_0)$.
\end{definition}

\begin{definition}[Source-dependency]
The source-dependent variables in a transition rule of $\rho$ are defined inductively as follows: (1) all variables in the source of $\rho$ are source-dependent; (2) if $t\xrightarrow{a}t'$ is a premise of $\rho$ and all variables in $t$ are source-dependent, then all variables in $t'$ are source-dependent. A transition rule is source-dependent if all its variables are. A TSS is source-dependent if all its rules are.
\end{definition}

\begin{definition}[Freshness]
Let $T_0$ and $T_1$ be TSSs over signatures $\Sigma_0$ and $\Sigma_1$, respectively. A term in $\mathbb{T}(T_0\oplus T_1)$ is said to be fresh if it contains a function symbol from $\Sigma_1\setminus\Sigma_0$. Similarly, a transition label or predicate symbol in $T_1$ is fresh if it does not occur in $T_0$.
\end{definition}

\begin{theorem}[Conservative extension]\label{TCE}
Let $T_0$ and $T_1$ be TSSs over signatures $\Sigma_0$ and $\Sigma_1$, respectively, where $T_0$ and $T_0\oplus T_1$ are positive after reduction. Under the following conditions, $T_0\oplus T_1$ is a conservative extension of $T_0$. (1) $T_0$ is source-dependent. (2) For each $\rho\in T_1$, either the source of $\rho$ is fresh, or $\rho$ has a premise of the form $t\xrightarrow{a}t'$ or $tP$, where $t\in \mathbb{T}(\Sigma_0)$, all variables in $t$ occur in the source of $\rho$ and $t'$, $a$ or $P$ is fresh.
\end{theorem}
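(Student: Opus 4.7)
The plan is to establish both inclusions between the sets of derivable transitions (and predicate instances) with source in $\mathcal{T}(\Sigma_0)$. The inclusion from $T_0$ into $T_0\oplus T_1$ is immediate, because every $T_0$-rule is still a $T_0\oplus T_1$-rule and any $T_0$-proof tree lifts verbatim. The substance is the converse: if $t\xrightarrow{a}t'$ or $tP$ with $t\in\mathcal{T}(\Sigma_0)$ is derivable in $T_0\oplus T_1$, I must recover a derivation in $T_0$. The ``positive after reduction'' hypothesis on both systems lets me restrict to standard well-founded positive proof trees, enabling induction on their height.

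Let $\rho$ be the root rule of a $T_0\oplus T_1$-derivation and $\sigma$ the witnessing substitution, so $\sigma(\operatorname{source}(\rho))=t\in\mathcal{T}(\Sigma_0)$. I case-split on $\rho\in T_0$ versus $\rho\in T_1$. When $\rho\in T_0$, source-dependency of $T_0$ makes every variable of $\rho$ source-dependent, and a short sub-induction along the source-dependency chain shows that $\sigma$ sends every such variable to a $\Sigma_0$-term. Consequently, each instantiated premise has its source in $\mathcal{T}(\Sigma_0)$, the main induction hypothesis applies, and reassembling $\rho$ over the recovered premises yields the desired $T_0$-derivation.

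When $\rho\in T_1$ I show the case is vacuous. Since $t$ contains no symbol from $\Sigma_1\setminus\Sigma_0$, the source of $\rho$ is not fresh, and condition (2) then produces a premise $s\xrightarrow{b}s'$ or $sP$ with $s\in\mathbb{T}(\Sigma_0)$, $\operatorname{var}(s)\subseteq\operatorname{var}(\operatorname{source}(\rho))$, and one of $s'$, $b$, or $P$ fresh. Variable containment together with $\sigma(\operatorname{source}(\rho))\in\mathcal{T}(\Sigma_0)$ forces $\sigma(s)\in\mathcal{T}(\Sigma_0)$, so the main induction hypothesis gives a $T_0$-derivation of the instantiated premise. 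An auxiliary lemma, proved by a direct induction on $T_0$-proof trees, then asserts that every $T_0$-derivable transition with source in $\mathcal{T}(\Sigma_0)$ carries a non-fresh label, a non-fresh predicate, and a target in $\mathcal{T}(\Sigma_0)$, since $T_0$'s rules mention no symbol from $\Sigma_1\setminus\Sigma_0$. This directly contradicts the freshness of $s'$, $b$, or $P$ after instantiation, ruling the case out.

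The principal obstacle is this auxiliary closure-under-provability lemma for $T_0$: formulating and proving that $T_0$ cannot ``output'' anything fresh. It is a straightforward induction on proof trees, but it must handle the three fresh-categories (label, predicate symbol, function symbol appearing in the target) uniformly, and it relies essentially on source-dependency to conclude that instantiated targets remain inside $\mathcal{T}(\Sigma_0)$. A secondary technical point is being explicit about how the positive-after-reduction assumption gives access to well-founded proof trees; once both ingredients are in hand, the two cases of the main induction close and the conservative-extension conclusion follows.
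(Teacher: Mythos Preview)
The paper does not prove this theorem: it is stated in the preliminaries (Section~2.2) as a known background result from the meta-theory of structural operational semantics, and no proof is given. Your sketch is the standard argument for this kind of conservative-extension theorem (well-founded induction on positive proof trees, a case split on whether the root rule comes from $T_0$ or $T_1$, with source-dependency propagating $\Sigma_0$-closedness along the proof and freshness ruling out the $T_1$ case), so there is nothing in the paper to compare it against beyond noting that you are supplying a proof the paper simply cites.
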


\subsection{Forward-reverse Truly Concurrent Bisimulations}{\label{ftcb}}

Reversible computation is based on reverse semantics \cite{RCCS2} \cite{TCSR} \cite{CR}. In this subsection, we introduce the reverse semantics for true concurrency, which are firstly introduced in our previous work on reversible process algebra \cite{RCTC} \cite{RAPTC}.

\begin{definition}[Forward-reverse (FR) pomset transitions and forward-reverse (FR) step]
Let $\mathcal{E}$ be a PES and let $C\in\mathcal{C}(\mathcal{E})$, $\emptyset\neq X\subseteq \mathbb{E}$, $\mathcal{K}\subseteq \mathbb{N}$, and $X[\mathcal{K}]$ denotes that for each $e\in X$, there is $e[m]\in X[\mathcal{K}]$ where $(m\in\mathcal{K})$, which is called the past of $e$, and we extend $\mathbb{E}$ to $\mathbb{E}\cup\tau\cup\mathbb{E}[\mathcal{K}]$. If $C\cap X[\mathcal{K}]=\emptyset$ and $C'=C\cup X[\mathcal{K}], X\in\mathcal{C}(\mathcal{E})$, then $C\xrightarrow{X} C'$ is called a forward pomset transition from $C$ to $C'$, and $C'\xtworightarrow{X[\mathcal{K}]} C$ is called a reverse pomset transition from $C'$ to $C$. When the events in $X$ are pairwise concurrent, we say that $C\xrightarrow{X}C'$ is a forward step and $C'\xtworightarrow{X[\mathcal{K}]} C$ is a reverse step.
\end{definition}

\begin{definition}[Weak forward-reverse (FR) pomset transitions and weak forward-reverse (FR) step]
Let $\mathcal{E}$ be a PES and let $C\in\mathcal{C}(\mathcal{E})$, and $\emptyset\neq X\subseteq \hat{\mathbb{E}}$, $\mathcal{K}\subseteq \mathbb{N}$, and $X[\mathcal{K}]$ denotes that for each $e\in X$, there is $e[m]\in X[\mathcal{K}]$ where $(m\in\mathcal{K})$, which is called the past of $e$. If $C\cap X[\mathcal{K}]=\emptyset$ and $\hat{C'}=\hat{C}\cup X[\mathcal{K}], X\in\mathcal{C}(\mathcal{E})$, then $C\xRightarrow{X} C'$ is called a weak forward pomset transition from $C$ to $C'$, where we define $\xRightarrow{e}\triangleq\xrightarrow{\tau^*}\xrightarrow{e}\xrightarrow{\tau^*}$ and $\xRightarrow{X}\triangleq\xrightarrow{\tau^*}\xrightarrow{e}\xrightarrow{\tau^*}$, for every $e\in X$. And $C'\xTworightarrow{X[\mathcal{K}]} C$ is called a weak reverse pomset transition from $C'$ to $C$, where we define $\xTworightarrow{e[m]}\triangleq\xtworightarrow{\tau^*}\xtworightarrow{e[m]}\xtworightarrow{\tau^*}$, $\xTworightarrow{X[\mathcal{K}]}\triangleq\xtworightarrow{\tau^*}\xtworightarrow{e[m]} \xtworightarrow{\tau^*}$, for every $e\in X$ and $m\in\mathcal{K}$. When the events in $X$ are pairwise concurrent, we say that $C\xRightarrow{X}C'$ is a weak forward step and $C'\xTworightarrow{X[\mathcal{K}]} C$ is a weak reverse step.
\end{definition}

We will also suppose that all the PESs in this paper are image finite, that is, for any PES $\mathcal{E}$ and $C\in \mathcal{C}(\mathcal{E})$, and $a\in \Lambda$, $\{e\in \mathbb{E}|C\xrightarrow{e} C'\wedge \lambda(e)=a\}$ and $\{e\in\hat{\mathbb{E}}|C\xRightarrow{e} C'\wedge \lambda(e)=a\}$, and $a\in \Lambda$, $\{e\in \mathbb{E}|C'\xtworightarrow{e[m]} C\wedge \lambda(e)=a\}$ and $\{e\in\hat{\mathbb{E}}|C'\xTworightarrow{e[m]} C\wedge \lambda(e)=a\}$ are finite.

\begin{definition}[Forward-reverse (FR) pomset, step bisimulation]\label{FRPSB}
Let $\mathcal{E}_1$, $\mathcal{E}_2$ be PESs. An FR pomset bisimulation is a relation $R\subseteq\mathcal{C}(\mathcal{E}_1)\times\mathcal{C}(\mathcal{E}_2)$, such that (1) if $(C_1,C_2)\in R$, and $C_1\xrightarrow{X_1}C_1'$ then $C_2\xrightarrow{X_2}C_2'$, with $X_1\subseteq \mathbb{E}_1$, $X_2\subseteq \mathbb{E}_2$, $X_1\sim X_2$ and $(C_1',C_2')\in R$, and vice-versa; (2) if $(C_1',C_2')\in R$, and $C_1'\xtworightarrow{X_1[\mathcal{K}_1]}C_1$ then $C_2'\xtworightarrow{X_2[\mathcal{K}_2]}C_2$, with $X_1\subseteq \mathbb{E}_1$, $X_2\subseteq \mathbb{E}_2$, $\mathcal{K}_1,\mathcal{K}_2\subseteq\mathbb{N}$, $X_1\sim X_2$ and $(C_1,C_2)\in R$, and vice-versa. We say that $\mathcal{E}_1$, $\mathcal{E}_2$ are FR pomset bisimilar, written $\mathcal{E}_1\sim_p^{fr}\mathcal{E}_2$, if there exists an FR pomset bisimulation $R$, such that $(\emptyset,\emptyset)\in R$. By replacing FR pomset transitions with FR steps, we can get the definition of FR step bisimulation. When PESs $\mathcal{E}_1$ and $\mathcal{E}_2$ are FR step bisimilar, we write $\mathcal{E}_1\sim_s^{fr}\mathcal{E}_2$.
\end{definition}

\begin{definition}[Weak forward-reverse (FR) pomset, step bisimulation]\label{FRWPSB}
Let $\mathcal{E}_1$, $\mathcal{E}_2$ be PESs. A weak FR pomset bisimulation is a relation $R\subseteq\mathcal{C}(\mathcal{E}_1)\times\mathcal{C}(\mathcal{E}_2)$, such that (1) if $(C_1,C_2)\in R$, and $C_1\xRightarrow{X_1}C_1'$ then $C_2\xRightarrow{X_2}C_2'$, with $X_1\subseteq \hat{\mathbb{E}_1}$, $X_2\subseteq \hat{\mathbb{E}_2}$, $X_1\sim X_2$ and $(C_1',C_2')\in R$, and vice-versa; (2) if $(C_1',C_2')\in R$, and $C_1'\xTworightarrow{X_1[\mathcal{K}_1]}C_1$ then $C_2'\xTworightarrow{X_2[\mathcal{K}_2]}C_2$, with $X_1\subseteq \hat{\mathbb{E}_1}$, $X_2\subseteq \hat{\mathbb{E}_2}$, $\mathcal{K}_1,\mathcal{K}_2\subseteq\mathbb{N}$, $X_1\sim X_2$ and $(C_1,C_2)\in R$, and vice-versa. We say that $\mathcal{E}_1$, $\mathcal{E}_2$ are weak FR pomset bisimilar, written $\mathcal{E}_1\approx_p^{fr}\mathcal{E}_2$, if there exists a weak FR pomset bisimulation $R$, such that $(\emptyset,\emptyset)\in R$. By replacing weak FR pomset transitions with weak FR steps, we can get the definition of weak FR step bisimulation. When PESs $\mathcal{E}_1$ and $\mathcal{E}_2$ are weak FR step bisimilar, we write $\mathcal{E}_1\approx_s^{fr}\mathcal{E}_2$.
\end{definition}

\begin{definition}[Forward-reverse (FR) (hereditary) history-preserving bisimulation]\label{FRHHPB}
An FR history-preserving (hp-) bisimulation is a posetal relation $R\subseteq\mathcal{C}(\mathcal{E}_1)\overline{\times}\mathcal{C}(\mathcal{E}_2)$ such that (1) if $(C_1,f,C_2)\in R$, and $C_1\xrightarrow{e_1} C_1'$, then $C_2\xrightarrow{e_2} C_2'$, with $(C_1',f[e_1\mapsto e_2],C_2')\in R$, and vice-versa, (2) if $(C_1',f',C_2')\in R$, and $C_1'\xtworightarrow{e_1[m]} C_1$, then $C_2'\xtworightarrow{e_2[n]} C_2$, with $(C_1,f'[e_1[m]\mapsto e_2[n]],C_2)\in R$, and vice-versa. $\mathcal{E}_1,\mathcal{E}_2$ are FR history-preserving (hp-) bisimilar and are written $\mathcal{E}_1\sim_{hp}^{fr}\mathcal{E}_2$ if there exists an FR hp-bisimulation $R$ such that $(\emptyset,\emptyset,\emptyset)\in R$.

An FR hereditary history-preserving (hhp-)bisimulation is a downward closed FR hp-bisimulation. $\mathcal{E}_1,\mathcal{E}_2$ are FR hereditary history-preserving (hhp-)bisimilar and are written $\mathcal{E}_1\sim_{hhp}^{fr}\mathcal{E}_2$.
\end{definition}

\begin{definition}[Weak forward-reverse (FR) (hereditary) history-preserving bisimulation]\label{FRWHHPB}
A weak FR history-preserving (hp-) bisimulation is a weakly posetal relation $R\subseteq\mathcal{C}(\mathcal{E}_1)\overline{\times}\mathcal{C}(\mathcal{E}_2)$ such that (1) if $(C_1,f,C_2)\in R$, and $C_1\xRightarrow{e_1} C_1'$, then $C_2\xRightarrow{e_2} C_2'$, with $(C_1',f[e_1\mapsto e_2],C_2')\in R$, and vice-versa, (2) if $(C_1',f',C_2')\in R$, and $C_1'\xTworightarrow{e_1[m]} C_1$, then $C_2'\xTworightarrow{e_2[n]} C_2$, with $(C_1,f'[e_1[m]\mapsto e_2[n]],C_2)\in R$, and vice-versa. $\mathcal{E}_1,\mathcal{E}_2$ are weak FR history-preserving (hp-) bisimilar and are written $\mathcal{E}_1\approx_{hp}^{fr}\mathcal{E}_2$ if there exists a weak FR hp-bisimulation $R$ such that $(\emptyset,\emptyset,\emptyset)\in R$.

A weak FR hereditary history-preserving (hhp-) bisimulation is a downward closed weak FR hp-bisimulation. $\mathcal{E}_1,\mathcal{E}_2$ are weak FR hereditary history-preserving (hhp-) bisimilar and are written $\mathcal{E}_1\approx_{hhp}^{fr}\mathcal{E}_2$.
\end{definition}

\begin{definition}[Branching forward-reverse pomset, step bisimulation]\label{FRBPSB}
Assume a special termination predicate $\downarrow$, and let $\surd$ represent a state with $\surd\downarrow$. Let $\mathcal{E}_1$, $\mathcal{E}_2$ be PESs. A branching FR pomset bisimulation is a relation $R\subseteq\mathcal{C}(\mathcal{E}_1)\times\mathcal{C}(\mathcal{E}_2)$, such that:
 \begin{enumerate}
   \item if $(C_1,C_2)\in R$, and $C_1\xrightarrow{X}C_1'$ then
   \begin{itemize}
     \item either $X\equiv \tau^*$, and $(C_1',C_2)\in R$;
     \item or there is a sequence of (zero or more) $\tau$-transitions $C_2\xrightarrow{\tau^*} C_2^0$, such that $(C_1,C_2^0)\in R$ and $C_2^0\xRightarrow{X}C_2'$ with $(C_1',C_2')\in R$;
   \end{itemize}
   \item if $(C_1,C_2)\in R$, and $C_2\xrightarrow{X}C_2'$ then
   \begin{itemize}
     \item either $X\equiv \tau^*$, and $(C_1,C_2')\in R$;
     \item or there is a sequence of (zero or more) $\tau$-transitions $C_1\xrightarrow{\tau^*} C_1^0$, such that $(C_1^0,C_2)\in R$ and $C_1^0\xRightarrow{X}C_1'$ with $(C_1',C_2')\in R$;
   \end{itemize}
   \item if $(C_1,C_2)\in R$ and $C_1\downarrow$, then there is a sequence of (zero or more) $\tau$-transitions $C_2\xrightarrow{\tau^*}C_2^0$ such that $(C_1,C_2^0)\in R$ and $C_2^0\downarrow$;
   \item if $(C_1,C_2)\in R$ and $C_2\downarrow$, then there is a sequence of (zero or more) $\tau$-transitions $C_1\xrightarrow{\tau^*}C_1^0$ such that $(C_1^0,C_2)\in R$ and $C_1^0\downarrow$;
   \item if $(C_1',C_2')\in R$, and $C_1'\xtworightarrow{X[\mathcal{K}]}C_1$ then
   \begin{itemize}
     \item either $X[\mathcal{K}]\equiv \tau^*$, and $(C_1,C_2')\in R$;
     \item or there is a sequence of (zero or more) $\tau$-transitions $C_2'\xtworightarrow{\tau^*} C_2'^0$, such that $(C_1',C_2'^0)\in R$ and $C_2'^0\xTworightarrow{X[\mathcal{K}]}C_2$ with $(C_1,C_2)\in R$;
   \end{itemize}
   \item if $(C_1',C_2')\in R$, and $C_2'\xtworightarrow{X}C_2$ then
   \begin{itemize}
     \item either $X[\mathcal{K}]\equiv \tau^*$, and $(C_1',C_2)\in R$;
     \item or there is a sequence of (zero or more) $\tau$-transitions $C_1'\xtworightarrow{\tau^*} C_1'^0$, such that $(C_1'^0,C_2')\in R$ and $C_1'^0\xTworightarrow{X[\mathcal{K}]}C_1$ with $(C_1,C_2)\in R$;
   \end{itemize}
   \item if $(C_1',C_2')\in R$ and $C_1'\downarrow$, then there is a sequence of (zero or more) $\tau$-transitions $C_2'\xtworightarrow{\tau^*}C_2'^0$ such that $(C_1',C_2'^0)\in R$ and $C_2'^0\downarrow$;
   \item if $(C_1',C_2')\in R$ and $C_2'\downarrow$, then there is a sequence of (zero or more) $\tau$-transitions $C_1'\xtworightarrow{\tau^*}C_1'^0$ such that $(C_1'^0,C_2')\in R$ and $C_1'^0\downarrow$.
 \end{enumerate}

We say that $\mathcal{E}_1$, $\mathcal{E}_2$ are branching FR pomset bisimilar, written $\mathcal{E}_1\approx_{bp}^{fr}\mathcal{E}_2$, if there exists a branching FR pomset bisimulation $R$, such that $(\emptyset,\emptyset)\in R$.

By replacing FR pomset transitions with FR steps, we can get the definition of branching FR step bisimulation. When PESs $\mathcal{E}_1$ and $\mathcal{E}_2$ are branching FR step bisimilar, we write $\mathcal{E}_1\approx_{bs}^{fr}\mathcal{E}_2$.
\end{definition}

\begin{definition}[Rooted branching forward-reverse (FR) pomset, step bisimulation]\label{FRRBPSB}
Assume a special termination predicate $\downarrow$, and let $\surd$ represent a state with $\surd\downarrow$. Let $\mathcal{E}_1$, $\mathcal{E}_2$ be PESs. A rooted branching FR pomset bisimulation is a relation $R\subseteq\mathcal{C}(\mathcal{E}_1)\times\mathcal{C}(\mathcal{E}_2)$, such that:
 \begin{enumerate}
   \item if $(C_1,C_2)\in R$, and $C_1\xrightarrow{X}C_1'$ then $C_2\xrightarrow{X}C_2'$ with $C_1'\approx_{bp}C_2'$;
   \item if $(C_1,C_2)\in R$, and $C_2\xrightarrow{X}C_2'$ then $C_1\xrightarrow{X}C_1'$ with $C_1'\approx_{bp}C_2'$;
   \item if $(C_1',C_2')\in R$, and $C_1'\xtworightarrow{X[\mathcal{K}]}C_1$ then $C_2'\xtworightarrow{X[\mathcal{K}]}C_2$ with $C_1\approx_{bp}^{fr}C_2$;
   \item if $(C_1',C_2')\in R$, and $C_2'\xtworightarrow{X[\mathcal{K}]}C_2$ then $C_1'\xtworightarrow{X[\mathcal{K}]}C_1$ with $C_1\approx_{bp}^{fr}C_2$;
   \item if $(C_1,C_2)\in R$ and $C_1\downarrow$, then $C_2\downarrow$;
   \item if $(C_1,C_2)\in R$ and $C_2\downarrow$, then $C_1\downarrow$.
 \end{enumerate}

We say that $\mathcal{E}_1$, $\mathcal{E}_2$ are rooted branching FR pomset bisimilar, written $\mathcal{E}_1\approx_{rbp}^{fr}\mathcal{E}_2$, if there exists a rooted branching FR pomset bisimulation $R$, such that $(\emptyset,\emptyset)\in R$.

By replacing FR pomset transitions with FR steps, we can get the definition of rooted branching FR step bisimulation. When PESs $\mathcal{E}_1$ and $\mathcal{E}_2$ are rooted branching FR step bisimilar, we write $\mathcal{E}_1\approx_{rbs}^{fr}\mathcal{E}_2$.
\end{definition}

\begin{definition}[Branching forward-reverse (FR) (hereditary) history-preserving bisimulation]\label{FRBHHPB}
Assume a special termination predicate $\downarrow$, and let $\surd$ represent a state with $\surd\downarrow$. A branching FR history-preserving (hp-) bisimulation is a weakly posetal relation $R\subseteq\mathcal{C}(\mathcal{E}_1)\overline{\times}\mathcal{C}(\mathcal{E}_2)$ such that:

 \begin{enumerate}
   \item if $(C_1,f,C_2)\in R$, and $C_1\xrightarrow{e_1}C_1'$ then
   \begin{itemize}
     \item either $e_1\equiv \tau$, and $(C_1',f[e_1\mapsto \tau],C_2)\in R$;
     \item or there is a sequence of (zero or more) $\tau$-transitions $C_2\xrightarrow{\tau^*} C_2^0$, such that $(C_1,f,C_2^0)\in R$ and $C_2^0\xrightarrow{e_2}C_2'$ with $(C_1',f[e_1\mapsto e_2],C_2')\in R$;
   \end{itemize}
   \item if $(C_1,f,C_2)\in R$, and $C_2\xrightarrow{e_2}C_2'$ then
   \begin{itemize}
     \item either $e_2\equiv \tau$, and $(C_1,f[e_2\mapsto \tau],C_2')\in R$;
     \item or there is a sequence of (zero or more) $\tau$-transitions $C_1\xrightarrow{\tau^*} C_1^0$, such that $(C_1^0,f,C_2)\in R$ and $C_1^0\xrightarrow{e_1}C_1'$ with $(C_1',f[e_2\mapsto e_1],C_2')\in R$;
   \end{itemize}
   \item if $(C_1,f,C_2)\in R$ and $C_1\downarrow$, then there is a sequence of (zero or more) $\tau$-transitions $C_2\xrightarrow{\tau^*}C_2^0$ such that $(C_1,f,C_2^0)\in R$ and $C_2^0\downarrow$;
   \item if $(C_1,f,C_2)\in R$ and $C_2\downarrow$, then there is a sequence of (zero or more) $\tau$-transitions $C_1\xrightarrow{\tau^*}C_1^0$ such that $(C_1^0,f,C_2)\in R$ and $C_1^0\downarrow$;
   \item if $(C_1',f',C_2')\in R$, and $C_1'\xtworightarrow{e_1[m]}C_1$ then
   \begin{itemize}
     \item either $e_1[m]\equiv \tau$, and $(C_1,f'[e_1[m]\mapsto \tau],C_2')\in R$;
     \item or there is a sequence of (zero or more) $\tau$-transitions $C_2'\xtworightarrow{\tau^*} C_2'^0$, such that $(C_1',f',C_2'^0)\in R$ and $C_2'^0\xtworightarrow{e_2[n]}C_2$ with $(C_1,f'[e_1[m]\mapsto e_2[n]],C_2)\in R$;
   \end{itemize}
   \item if $(C_1',f',C_2')\in R$, and $C_2'\xtworightarrow{e_2[n]}C_2$ then
   \begin{itemize}
     \item either $e_2[n]\equiv \tau$, and $(C_1',f'[e_2[n]\mapsto \tau],C_2)\in R$;
     \item or there is a sequence of (zero or more) $\tau$-transitions $C_1'\xtworightarrow{\tau^*} C_1'^0$, such that $(C_1'^0,f',C_2')\in R$ and $C_1'^0\xtworightarrow{e_1[m]}C_1$ with $(C_1,f[e_2[n]\mapsto e_1[m]],C_2)\in R$;
   \end{itemize}
   \item if $(C_1',f',C_2')\in R$ and $C_1'\downarrow$, then there is a sequence of (zero or more) $\tau$-transitions $C_2'\xtworightarrow{\tau^*}C_2'^0$ such that $(C_1',f',C_2'^0)\in R$ and $C_2'^0\downarrow$;
   \item if $(C_1',f',C_2')\in R$ and $C_2'\downarrow$, then there is a sequence of (zero or more) $\tau$-transitions $C_1'\xtworightarrow{\tau^*}C_1'^0$ such that $(C_1'^0,f',C_2')\in R$ and $C_1'^0\downarrow$.
 \end{enumerate}

$\mathcal{E}_1,\mathcal{E}_2$ are branching FR history-preserving (hp-)bisimilar and are written $\mathcal{E}_1\approx_{bhp}^{fr}\mathcal{E}_2$ if there exists a branching FR hp-bisimulation $R$ such that $(\emptyset,\emptyset,\emptyset)\in R$.

A branching FR hereditary history-preserving (hhp-)bisimulation is a downward closed branching FR hp-bisimulation. $\mathcal{E}_1,\mathcal{E}_2$ are branching FR hereditary history-preserving (hhp-)bisimilar and are written $\mathcal{E}_1\approx_{bhhp}^{fr}\mathcal{E}_2$.
\end{definition}

\begin{definition}[Rooted branching forward-reverse (FR) (hereditary) history-preserving bisimulation]\label{FRRBHHPB}
Assume a special termination predicate $\downarrow$, and let $\surd$ represent a state with $\surd\downarrow$. A rooted branching FR history-preserving (hp-) bisimulation is a weakly posetal relation $R\subseteq\mathcal{C}(\mathcal{E}_1)\overline{\times}\mathcal{C}(\mathcal{E}_2)$ such that:

 \begin{enumerate}
   \item if $(C_1,f,C_2)\in R$, and $C_1\xrightarrow{e_1}C_1'$, then $C_2\xrightarrow{e_2}C_2'$ with $C_1'\approx_{bhp}C_2'$;
   \item if $(C_1,f,C_2)\in R$, and $C_2\xrightarrow{e_2}C_2'$, then $C_1\xrightarrow{e_1}C_1'$ with $C_1'\approx_{bhp}C_2'$;
   \item if $(C_1',f',C_2')\in R$, and $C_1'\xtworightarrow{e_1[m]}C_1$, then $C_2'\xtworightarrow{e_2[n]}C_2$ with $C_1\approx_{bhp}^{fr}C_2$;
   \item if $(C_1',f',C_2')\in R$, and $C_2'\xtworightarrow{e_2[n]}C_2$, then $C_1'\xtworightarrow{e_1[m]}C_1$ with $C_1\approx_{bhp}^{fr}C_2$;
   \item if $(C_1,f,C_2)\in R$ and $C_1\downarrow$, then $C_2\downarrow$;
   \item if $(C_1,f,C_2)\in R$ and $C_2\downarrow$, then $C_1\downarrow$.
 \end{enumerate}

$\mathcal{E}_1,\mathcal{E}_2$ are rooted branching FR history-preserving (hp-)bisimilar and are written $\mathcal{E}_1\approx_{rbhp}^{fr}\mathcal{E}_2$ if there exists a rooted branching FR hp-bisimulation $R$ such that $(\emptyset,\emptyset,\emptyset)\in R$.

A rooted branching FR hereditary history-preserving (hhp-)bisimulation is a downward closed rooted branching FR hp-bisimulation. $\mathcal{E}_1,\mathcal{E}_2$ are rooted branching FR hereditary history-preserving (hhp-)bisimilar and are written $\mathcal{E}_1\approx_{rbhhp}^{fr}\mathcal{E}_2$.
\end{definition}

\subsection{Proof Techniques}\label{pt}

In this subsection, we introduce the concepts and conclusions about elimination, which is very important in the proof of completeness theorem.

\begin{definition}[Elimination property]
Let a process algebra with a defined set of basic terms as a subset of the set of closed terms over the process algebra. Then the process algebra has the elimination to basic terms property if for every closed term $s$ of the algebra, there exists a basic term $t$ of the algebra such that the algebra$\vdash s=t$.
\end{definition}

\begin{definition}[Strongly normalizing]
A term $s_0$ is called strongly normalizing if does not an infinite series of reductions beginning in $s_0$.
\end{definition}

\begin{definition}
We write $s>_{lpo} t$ if $s\rightarrow^+ t$ where $\rightarrow^+$ is the transitive closure of the reduction relation defined by the transition rules of a algebra.
\end{definition}

\begin{theorem}[Strong normalization]\label{SN}
Let a term rewriting (TRS) system with finitely many rewriting rules and let $>$ be a well-founded ordering on the signature of the corresponding algebra. If $s>_{lpo} t$ for each rewriting rule $s\rightarrow t$ in the TRS, then the term rewriting system is strongly normalizing.
\end{theorem}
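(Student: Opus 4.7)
The plan is to reduce strong normalization of the TRS to well-foundedness of the relation $>_{lpo}$ on closed terms. The standard route has three ingredients: (i) $>_{lpo}$ is a strict order that is stable under substitutions and monotone under contexts; (ii) $>_{lpo}$ is well-founded whenever the underlying signature ordering $>$ is well-founded; and (iii) the rewrite relation of the TRS is contained in $>_{lpo}$. Items (i) and (iii) together will imply that every one-step reduction $u \to v$ satisfies $u >_{lpo} v$, so an infinite rewrite sequence would yield an infinite $>_{lpo}$-descending chain, contradicting (ii).

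First I would spell out (i). By induction on term structure I would verify that for any substitution $\sigma$, $s >_{lpo} t$ implies $s\sigma >_{lpo} t\sigma$, using that the defining clauses of $>_{lpo}$ (comparing head symbols via $>$, comparing argument tuples lexicographically, or finding a dominating subterm) are preserved by substitution. A similar induction on the context shows $s >_{lpo} t$ implies $C[s] >_{lpo} C[t]$. For (iii) I would argue that the hypothesis $s >_{lpo} t$ on each rule, combined with stability and monotonicity just established, yields $l\sigma >_{lpo} r\sigma$ and then $C[l\sigma] >_{lpo} C[r\sigma]$, so the one-step rewrite relation $\to$ is contained in $>_{lpo}$.

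The main obstacle is (ii), well-foundedness of $>_{lpo}$. I would prove it by a minimal bad sequence argument à la Nash-Williams/Kruskal. Assume towards contradiction an infinite descending chain $t_0 >_{lpo} t_1 >_{lpo} \cdots$; choose the chain so that each $t_i$ is minimal in size among terms starting an infinite descent extending $t_0,\ldots,t_{i-1}$. Since the signature has a well-founded ordering $>$ and only finitely many rule-symbols are involved, Kruskal's tree theorem applied to the subterms occurring in this sequence produces indices $i < j$ with $t_i$ homeomorphically embeddable into $t_j$, which contradicts minimality of the bad sequence (an embedded smaller term would let one splice a shorter infinite descent). Thus no infinite $>_{lpo}$-chain exists.

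Finally, combining (i)--(iii): any infinite rewrite sequence $u_0 \to u_1 \to \cdots$ would give $u_0 >_{lpo} u_1 >_{lpo} \cdots$, impossible by (ii). Hence the TRS is strongly normalizing. The delicate step is (ii); everything else is a routine structural induction over the definition of the lexicographic path ordering.
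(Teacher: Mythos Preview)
The paper does not prove this theorem at all: it is stated in Section~2.4 as a background fact from term-rewriting theory and is then invoked (via the label \texttt{SN}) in the elimination proofs, without any accompanying argument. So there is no ``paper's own proof'' to compare against.

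Your sketch is the standard route (essentially the Kamin--L\'evy/Dershowitz argument): closure of $>_{lpo}$ under substitutions and contexts, well-foundedness of $>_{lpo}$, and then containment of the rewrite relation in $>_{lpo}$. That is correct in outline. One small wrinkle in your step~(ii): Kruskal's tree theorem needs the label set to be a \emph{well-quasi-order}, not merely well-founded; a well-founded strict order need not be a wqo. In the setting of this paper the signatures in question are finite, so this is harmless, but your phrase ``only finitely many rule-symbols are involved'' is doing real work there and should be stated as a hypothesis (finite signature, or at least that $>$ is a well-quasi-order) rather than slipped in. Alternatively, for lpo over a well-founded precedence one can give a direct minimal-counterexample induction that avoids Kruskal entirely; that would match the hypotheses as literally stated.
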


\section{Basic Algebra for Reversible True Concurrency}{\label{bartc}}

In this section, we will discuss the algebraic laws of the confliction $+$ and causal relation $\cdot$ based on reversible truly concurrent bisimulations. The resulted algebra is called Basic Algebra for Reversible True Concurrency, abbreviated BARTC.

\subsection{Axiom System of BARTC}

In the following, let $e_1, e_2, e_1', e_2'\in \mathbb{E}$, and let variables $x,y,z$ range over the set of terms for true concurrency, $p,q,s$ range over the set of closed terms. The predicate $Std(x)$ denotes that $x$ contains only standard events (no histories of events) and $NStd(x)$ means that $x$ only contains histories of events. The set of axioms of BARTC consists of the laws given in Table \ref{AxiomsForBARTC}.

\begin{center}
    \begin{table}
        \begin{tabular}{@{}ll@{}}
            \hline No. &Axiom\\
            $A1$ & $x+ y = y+ x$\\
            $A2$ & $(x+ y)+ z = x+ (y+ z)$\\
            $A3$ & $x+ x = x$\\
            $A41$ & $(x+ y)\cdot z = x\cdot z + y\cdot z\quad Std(x),Std(y), Std(z)$\\
            $A42$ & $x\cdot (y+z) = x\cdot y + x\cdot z\quad NStd(x),NStd(y), NStd(z)$\\
            $A5$ & $(x\cdot y)\cdot z = x\cdot(y\cdot z)$\\
        \end{tabular}
        \caption{Axioms of BARTC}
        \label{AxiomsForBARTC}
    \end{table}
\end{center}

\subsection{Properties of BARTC}

\begin{definition}[Basic terms of BARTC]\label{BTBARTC}
The set of basic terms of BARTC, $\mathcal{B}(BARTC)$, is inductively defined as follows:
\begin{enumerate}
  \item $\mathbb{E}\subset\mathcal{B}(BARTC)$;
  \item if $e\in \mathbb{E}, t\in\mathcal{B}(BARTC)$ then $e\cdot t\in\mathcal{B}(BARTC)$;
  \item if $e[m]\in \mathbb{E}, t\in\mathcal{B}(BARTC)$ then $t\cdot e[m]\in\mathcal{B}(BARTC)$;
  \item if $t,s\in\mathcal{B}(BARTC)$ then $t+ s\in\mathcal{B}(BARTC)$.
\end{enumerate}
\end{definition}

\begin{theorem}[Elimination theorem of BARTC]\label{ETBARTC}
Let $p$ be a closed BARTC term. Then there is a basic BARTC term $q$ such that $BARTC\vdash p=q$.
\end{theorem}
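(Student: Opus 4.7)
The plan is to convert $A41$, $A42$, and $A5$ into a left-to-right term rewriting system (TRS) while treating $A1$--$A3$ modulo associativity, commutativity, and idempotence of $+$, as is standard in ACP-style elimination proofs. Strong normalization of this TRS will follow from Theorem \ref{SN}, after which I will argue that every normal form is a basic term by structural induction, yielding the desired $q$ with $BARTC\vdash p=q$ (since each rewrite step is an instance of an axiom and the AC-idempotence identities are BARTC-provable).

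For strong normalization I would invoke Theorem \ref{SN} with a lexicographic path order whose precedence puts $\cdot > +$. Then $(x\cdot y)\cdot z >_{lpo} x\cdot(y\cdot z)$ is the standard LPO argument for associativity oriented toward right-nesting; and $(x+y)\cdot z >_{lpo} x\cdot z + y\cdot z$ together with $x\cdot(y+z) >_{lpo} x\cdot y + x\cdot z$ both hold because the root symbol on the left is $\cdot$ (higher precedence) while on the right it is $+$, and the subterms appearing on the right are built from strict subterms of the left composed by at most one application of the higher-precedence symbol.

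To show that normal forms are basic, I would induct on the structure of a closed normal form $p$. If $p$ is a single event (standard or history), clause (1) of Definition \ref{BTBARTC} applies. If $p = p_1 + p_2$, the summands are normal and hence basic by induction, so clause (4) applies. If $p = p_1 \cdot p_2$, then $A5$-irreducibility forces $p_1$ not to be a sequential composition, so $p_1$ is either an event (giving clause (2) via the inductive hypothesis on $p_2$) or a sum; in the sum case, $A41$-irreducibility forces either $Std$ to fail on some factor, and by a symmetric argument via $A42$-irreducibility $p_2$ must reduce to a single reverse event $e[m]$, giving clause (3).

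The main obstacle I anticipate is the interaction between the rewrite rules and the $Std$/$NStd$ side conditions on $A41$ and $A42$: a closed term of mixed polarity such as $(e_1 + e_2[m])\cdot q$ satisfies neither precondition on its left factor, so neither distributive rule fires, and the term need not then fit clauses (2) or (3) unless further structure is verified. I expect to have to argue --- either by maintaining a polarity invariant under rewriting or by a refined case analysis on the leaves of each $\cdot$-chain --- that any such ``blocked'' composition in a normal form is already of one of the shapes licensed by Definition \ref{BTBARTC}. This polarity-driven case analysis is where I expect the bulk of the real work to lie, with the LPO-based strong normalization step being essentially routine.
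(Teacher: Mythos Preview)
Your approach is essentially the paper's: orient $A41$, $A42$, $A5$ (and $A3$) as left-to-right rewrite rules, prove termination via an LPO with precedence $\cdot>+$ and lexicographic status on the first argument of~$\cdot$, then argue by structural induction that every normal form is basic.

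The one substantive point of divergence is the very issue you flag at the end. The paper's term rewriting system (Table~\ref{TRSForBARTC}) lists $RA41$ and $RA42$ \emph{without} the $\mathit{Std}/\mathit{NStd}$ side conditions: it simply rewrites $(x+y)\cdot z\to x\cdot z+y\cdot z$ and $x\cdot(y+z)\to x\cdot y+x\cdot z$ unconditionally. Consequently the ``blocked'' mixed-polarity case you worry about never arises in the paper's normal-form analysis, and the structural induction goes through mechanically. The price is that the paper does not explain why an unconditional rewrite step is still BARTC-derivable when the side condition on the corresponding axiom fails; it silently treats the two distributive laws as unrestricted. Your version, which keeps the guards and plans a polarity case analysis, is more honest about this gap, but the extra work you anticipate is work the paper does not do.
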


\begin{proof}
(1) Firstly, suppose that the following ordering on the signature of BARTC is defined: $\cdot > +$ and the symbol $\cdot$ is given the lexicographical status for the first argument, then for each rewrite rule $p\rightarrow q$ in Table \ref{TRSForBARTC} relation $p>_{lpo} q$ can easily be proved. We obtain that the term rewrite system shown in Table \ref{TRSForBARTC} is strongly normalizing, for it has finitely many rewriting rules, and $>$ is a well-founded ordering on the signature of BARTC, and if $s>_{lpo} t$, for each rewriting rule $s\rightarrow t$ is in Table \ref{TRSForBARTC} (see Theorem \ref{SN}).

\begin{center}
    \begin{table}
        \begin{tabular}{@{}ll@{}}
            \hline No. &Rewriting Rule\\
            $RA3$ & $x+ x \rightarrow x$\\
            $RA41$ & $(x+ y)\cdot z \rightarrow x\cdot z + y\cdot z$\\
            $RA42$ & $x\cdot(y+ z) \rightarrow x\cdot y + x\cdot z$\\
            $RA5$ & $(x\cdot y)\cdot z \rightarrow x\cdot(y\cdot z)$\\
        \end{tabular}
        \caption{Term rewrite system of BARTC}
        \label{TRSForBARTC}
    \end{table}
\end{center}

(2) Then we prove that the normal forms of closed BARTC terms are basic BARTC terms.

Suppose that $p$ is a normal form of some closed BARTC term and suppose that $p$ is not a basic term. Let $p'$ denote the smallest sub-term of $p$ which is not a basic term. It implies that each sub-term of $p'$ is a basic term. Then we prove that $p$ is not a term in normal form. It is sufficient to induct on the structure of $p'$:

\begin{itemize}
  \item Case $p'\equiv e, e\in \mathbb{E}$. $p'$ is a basic term, which contradicts the assumption that $p'$ is not a basic term, so this case should not occur.
  \item Case $p'\equiv p_1\cdot p_2$. By induction on the structure of the basic term $p_1$:
      \begin{itemize}
        \item Subcase $p_1\in \mathbb{E}$. $p'$ would be a basic term, which contradicts the assumption that $p'$ is not a basic term;
        \item Subcase $p_1\equiv e\cdot p_1'$. $RA5$ rewriting rule can be applied. So $p$ is not a normal form;
        \item Subcase $p_1\equiv p_1'\cdot e[m]$. $RA5$ rewriting rule can be applied. So $p$ is not a normal form;
        \item Subcase $p_1\equiv p_1'+ p_1''$. $RA41$ and $RA42$ rewriting rule can be applied. So $p$ is not a normal form.
      \end{itemize}
  \item Case $p'\equiv p_1+ p_2$. By induction on the structure of the basic terms both $p_1$ and $p_2$, all subcases will lead to that $p'$ would be a basic term, which contradicts the assumption that $p'$ is not a basic term.
\end{itemize}
\end{proof}

\subsection{Structured Operational Semantics of BARTC}

In this subsection, we will define a term-deduction system which gives the operational semantics of BARTC. We give the forward operational transition rules of operators $\cdot$ and $+$ as Table \ref{SETRForBARTC} shows, and the reverse rules of operators $\cdot$ and $+$ as Table \ref{RSETRForBARTC} shows. And the predicate $\xrightarrow{e}e[m]$ represents successful forward termination after forward execution of the event $e$, the predicate $\xtworightarrow{e[m]}e$ represents successful reverse termination after reverse execution of the event history $e[m]$.

\begin{center}
    \begin{table}
        $$\frac{}{e\xrightarrow{e}e[m]}$$
        $$\frac{x\xrightarrow{e}e[m]}{x+ y\xrightarrow{e}e[m]} \quad\frac{x\xrightarrow{e}x'}{x+ y\xrightarrow{e}x'} \quad\frac{y\xrightarrow{e}e[m]}{x+ y\xrightarrow{e}e[m]} \quad\frac{y\xrightarrow{e}y'}{x+ y\xrightarrow{e}y'}$$
        $$\frac{x\xrightarrow{e}e[m]}{x\cdot y\xrightarrow{e} e[m]\cdot y} \quad\frac{x\xrightarrow{e}x'}{x\cdot y\xrightarrow{e}x'\cdot y}$$
        \caption{Forward single event transition rules of BARTC}
        \label{SETRForBARTC}
    \end{table}
\end{center}

\begin{center}
    \begin{table}
        $$\frac{}{e[m]\xtworightarrow{e[m]}e}$$
        $$\frac{x\xtworightarrow{e[m]}e}{x+ y\xtworightarrow{e[m]}e} \quad\frac{x\xtworightarrow{e[m]}x'}{x+ y\xtworightarrow{e[m]}x'} \quad\frac{y\xtworightarrow{e[m]}e}{x+ y\xtworightarrow{e[m]}e} \quad\frac{y\xtworightarrow{e[m]}y'}{x+ y\xtworightarrow{e[m]}y'}$$
        $$\frac{y\xtworightarrow{e[m]}e}{x\cdot y\xtworightarrow{e[m]} x\cdot e} \quad\frac{y\xtworightarrow{e[m]}y'}{x\cdot y\xtworightarrow{e[m]}x\cdot y'}$$
        \caption{Reverse single event transition rules of BARTC}
        \label{RSETRForBARTC}
    \end{table}
\end{center}

%
%
%
%
%

The forward pomset transition rules are shown in Table \ref{PTRForBARTC}, and reverse pomset transition rules are shown in Table \ref{RPTRForBARTC}, different to single event transition rules, the pomset transition rules are labeled by pomsets, which are defined by causality $\cdot$ and conflict $+$.

\begin{center}
    \begin{table}
        $$\frac{}{X\xrightarrow{X}X[\mathcal{K}]}$$
        $$\frac{x\xrightarrow{X}X[\mathcal{K}]}{x+ y\xrightarrow{X}X[\mathcal{K}]} (X\subseteq x)\quad\frac{x\xrightarrow{X}x'}{x+ y\xrightarrow{X}x'} (X\subseteq x) \quad\frac{y\xrightarrow{Y}Y[\mathcal{K}]}{x+ y\xrightarrow{Y}Y[\mathcal{K}]} (Y\subseteq y)\quad\frac{y\xrightarrow{Y}y'}{x+ y\xrightarrow{Y}y'}(Y\subseteq y)$$
        $$\frac{x\xrightarrow{X}X[\mathcal{K}]}{x\cdot y\xrightarrow{X} X[\mathcal{K}]\cdot y} (X\subseteq x)\quad\frac{x\xrightarrow{X}x'}{x\cdot y\xrightarrow{X}x'\cdot y} (X\subseteq x)$$
        \caption{Forward pomset transition rules of BARTC}
        \label{PTRForBARTC}
    \end{table}
\end{center}

\begin{center}
    \begin{table}
        $$\frac{}{X[\mathcal{K}]\xtworightarrow{X[\mathcal{K}]}X}$$
        $$\frac{x\xtworightarrow{X[\mathcal{K}]}X}{x+ y\xtworightarrow{X[\mathcal{K}]}X} (X\subseteq x)\quad\frac{x\xtworightarrow{X[\mathcal{K}]}x'}{x+ y\xtworightarrow{X[\mathcal{K}]}x'} (X\subseteq x) \quad\frac{y\xtworightarrow{Y[\mathcal{K}]}Y}{x+ y\xtworightarrow{Y[\mathcal{K}]}Y} (Y\subseteq y)\quad\frac{y\xtworightarrow{Y[\mathcal{K}]}y'}{x+ y\xtworightarrow{Y[\mathcal{K}]}y'}(Y\subseteq y)$$
        $$\frac{y\xtworightarrow{Y[\mathcal{K}]}Y}{x\cdot y\xtworightarrow{Y[\mathcal{K}]} x\cdot Y} (Y\subseteq y)\quad\frac{y\xtworightarrow{Y[\mathcal{K}]}y'}{x\cdot y\xtworightarrow{Y[\mathcal{K}]}x\cdot y'} (Y\subseteq y)$$
        \caption{Reverse pomset transition rules of BARTC}
        \label{RPTRForBARTC}
    \end{table}
\end{center}

\begin{theorem}[Congruence of BARTC with respect to FR pomset bisimulation equivalence]
FR pomset bisimulation equivalence $\sim_p^{fr}$ is a congruence with respect to BARTC.
\end{theorem}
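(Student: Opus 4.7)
The plan is to prove that $\sim_p^{fr}$ is preserved by each function symbol of BARTC, namely the alternative composition $+$ and the sequential composition $\cdot$, from which congruence follows directly from the definition. In both cases I assume $x_1 \sim_p^{fr} y_1$ and $x_2 \sim_p^{fr} y_2$, witnessed by FR pomset bisimulations $R_1,R_2$, and I exhibit an FR pomset bisimulation $R$ that contains the pair of composed terms. Verification then reduces to inspecting the forward pomset transition rules of Table \ref{PTRForBARTC} together with the reverse pomset transition rules of Table \ref{RPTRForBARTC} and matching both clauses of Definition \ref{FRPSB}.

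For alternative composition, I take
$$R_+ \;:=\; \{(x_1 + x_2,\, y_1 + y_2) : (x_1,y_1)\in R_1,\ (x_2,y_2)\in R_2\} \cup R_1 \cup R_2.$$
By Table \ref{PTRForBARTC}, every forward transition $x_1+x_2 \xrightarrow{X} p$ arises from a transition of $x_1$ alone or of $x_2$ alone; matching it through $R_1$ or $R_2$ yields a parallel transition of $y_1+y_2$ whose residual pair already lies in $R_1\cup R_2$, hence in $R_+$. The reverse clause is entirely symmetric using Table \ref{RPTRForBARTC}.

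For sequential composition, I take
$$R_\cdot \;:=\; \{(p\cdot q,\, p'\cdot q') : p \sim_p^{fr} p',\ q \sim_p^{fr} q'\}.$$
A forward transition of $x_1\cdot x_2$ fires on the left factor and produces either $X[\mathcal{K}]\cdot x_2$ or $x_1'\cdot x_2$; since the right factor is unchanged and the left factors remain FR pomset bisimilar, the residual pair lies in $R_\cdot$. A reverse transition, by contrast, fires on the right factor, producing $x_1\cdot Y$ or $x_1\cdot x_2'$, and is matched analogously through the bisimulation on the right.

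The main obstacle is precisely this asymmetry in the transition rules for $\cdot$: forward transitions only evolve the left factor while reverse transitions only evolve the right factor, so the candidate relation $R_\cdot$ must be closed under residuals that freeze one side while the other side moves — and in opposite directions depending on the clause. For the definition of $R_\cdot$ to be sound I also need that $\sim_p^{fr}$ is itself an equivalence relation; reflexivity, symmetry and transitivity follow by the standard constructions of identity, inverse, and relational composition of FR pomset bisimulations. Once these points are settled, the two clauses of Definition \ref{FRPSB} go through by a routine case analysis on the transition rules, and the theorem follows.
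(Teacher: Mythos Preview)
Your proposal is correct and follows essentially the same approach as the paper: both reduce congruence to preservation under $+$ and $\cdot$ and verify each operator by case analysis on the forward and reverse pomset transition rules. Your version is somewhat more explicit in exhibiting candidate bisimulation relations $R_+$ and $R_\cdot$ and in singling out the forward/reverse asymmetry of the rules for $\cdot$, whereas the paper carries out the same transition-matching directly without naming the witnessing relations; the underlying argument is the same.
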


\begin{proof}
It is easy to see that FR pomset bisimulation is an equivalent relation on BARTC terms, we only need to prove that $\sim_p^{fr}$ is preserved by the operators $\cdot$ and $+$.

\begin{itemize}
  \item Causality operator $\cdot$. Let $x_1,x_2$ and $y_1,y_2$ be BARTC processes, and $x_1\sim_p^{fr} y_1$, $x_2\sim_p^{fr} y_2$, it is sufficient to prove that $x_1\cdot x_2\sim_p^{fr} y_1\cdot y_2$.

      By the definition of FR pomset bisimulation $\sim_p^{fr}$ (Definition \ref{FRPSB}), $x_1\sim_p^{fr} y_1$ means that

      $$x_1\xrightarrow{X_1} x_1' \quad y_1\xrightarrow{Y_1} y_1'$$

      $$x_1\xtworightarrow{X_1[\mathcal{K}]} x_1' \quad y_1\xtworightarrow{Y_1[\mathcal{L}]} y_1'$$

      with $X_1\subseteq x_1$, $Y_1\subseteq y_1$, $X_1\sim Y_1$ and $x_1'\sim_p^{fr} y_1'$. The meaning of $x_2\sim_p^{fr} y_2$ is similar.

      By the pomset transition rules for causality operator $\cdot$ in Table \ref{PTRForBARTC} and Table \ref{RPTRForBARTC}, we can get

      $$x_1\cdot x_2\xrightarrow{X_1} X_1[\mathcal{K}]\cdot x_2 \quad y_1\cdot y_2\xrightarrow{Y_1} Y_1[\mathcal{L}]\cdot y_2$$

      $$x_1\cdot x_2\xtworightarrow{X_2[\mathcal{K}]} x_1\cdot X_2 \quad y_1\cdot y_2\xtworightarrow{Y_2[\mathcal{L}]} y_1\cdot Y_2$$

      with $X_1\subseteq x_1$, $Y_1\subseteq y_1$, $X_1\sim Y_1$ and $x_2\sim_p^{fr} y_2$; $X_2\subseteq x_2$, $Y_2\subseteq y_2$, $X_2\sim Y_2$ and $x_1\sim_p^{fr} y_1$ so, we get $x_1\cdot x_2\sim_p^{fr} y_1\cdot y_2$, as desired.

      Or, we can get

      $$x_1\cdot x_2\xrightarrow{X_1} x_1'\cdot x_2 \quad y_1\cdot y_2\xrightarrow{Y_1} y_1'\cdot y_2$$

      $$x_1\cdot x_2\xtworightarrow{X_2[\mathcal{K}]} x_1\cdot x_2' \quad y_1\cdot y_2\xtworightarrow{Y_2[\mathcal{L}]} y_1\cdot y_2'$$

      with $X_1\subseteq x_1$, $Y_1\subseteq y_1$, $X_1\sim Y_1$ and $x_1'\sim_p^{fr} y_1'$, $x_2\sim_p^{fr} y_2$; $X_2\subseteq x_2$, $Y_2\subseteq y_2$, $X_2\sim Y_2$ and $x_2'\sim_p^{fr} y_2'$, $x_1\sim_p^{fr} y_1$, so, we get $x_1\cdot x_2\sim_p^{fr} y_1\cdot y_2$, as desired.
  \item Conflict operator $+$. Let $x_1, x_2$ and $y_1, y_2$ be BARTC processes, and $x_1\sim_p^{fr} y_1$, $x_2\sim_p^{fr} y_2$, it is sufficient to prove that $x_1+ x_2 \sim_p^{fr} y_1+ y_2$. The meanings of $x_1\sim_p^{fr} y_1$ and $x_2\sim_p^{fr} y_2$ are the same as the above case, according to the definition of FR pomset bisimulation $\sim_p^{fr}$ in Definition \ref{FRPSB}.

      By the pomset transition rules for conflict operator $+$ in Table \ref{PTRForBARTC} and Table \ref{RPTRForBARTC}, we can get four cases:

      $$x_1+ x_2\xrightarrow{X_1} X_1[\mathcal{K}] \quad y_1+ y_2\xrightarrow{Y_1} Y_1[\mathcal{L}]$$

      $$x_1+ x_2\xtworightarrow{X_1[\mathcal{K}]} X_1 \quad y_1+ y_2\xtworightarrow{Y_1[\mathcal{L}]} Y_1$$

      with $X_1\subseteq x_1$, $Y_1\subseteq y_1$, $X_1\sim Y_1$, so, we get $x_1+ x_2\sim_p^{fr} y_1+ y_2$, as desired.

      Or, we can get

      $$x_1+ x_2\xrightarrow{X_1} x_1' \quad y_1+ y_2\xrightarrow{Y_1} y_1'$$

      $$x_1+ x_2\xtworightarrow{X_1[\mathcal{K}]} x_1' \quad y_1+ y_2\xtworightarrow{Y_1[\mathcal{L}]} y_1'$$

      with $X_1\subseteq x_1$, $Y_1\subseteq y_1$, $X_1\sim Y_1$, and $x_1'\sim_p^{fr} y_1'$, so, we get $x_1+ x_2\sim_p^{fr} y_1+ y_2$, as desired.

      Or, we can get

      $$x_1+ x_2\xrightarrow{X_2} X_2[\mathcal{K}] \quad y_1+ y_2\xrightarrow{Y_2} Y_2[\mathcal{L}]$$

      $$x_1+ x_2\xtworightarrow{X_2[\mathcal{K}]} X_2 \quad y_1+ y_2\xtworightarrow{Y_2[\mathcal{L}]} Y_2$$

      with $X_2\subseteq x_2$, $Y_2\subseteq y_2$, $X_2\sim Y_2$, so, we get $x_1+ x_2\sim_p^{fr} y_1+ y_2$, as desired.

      Or, we can get

      $$x_1+ x_2\xrightarrow{X_2} x_2' \quad y_1+ y_2\xrightarrow{Y_2} y_2'$$

      $$x_1+ x_2\xtworightarrow{X_2[\mathcal{K}]} x_2' \quad y_1+ y_2\xtworightarrow{Y_2[\mathcal{L}]} y_2'$$

      with $X_2\subseteq x_2$, $Y_2\subseteq y_2$, $X_2\sim Y_2$, and $x_2'\sim_p^{fr} y_2'$, so, we get $x_1+ x_2\sim_p^{fr} y_1+ y_2$, as desired.
\end{itemize}
\end{proof}

\begin{theorem}[Soundness of BARTC modulo FR pomset bisimulation equivalence]\label{SBARTCPBE}
Let $x$ and $y$ be BARTC terms. If $BARTC\vdash x=y$, then $x\sim_p^{fr} y$.
\end{theorem}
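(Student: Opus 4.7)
My plan is to invoke the congruence theorem just proved. Since $\sim_p^{fr}$ is a congruence with respect to all BARTC operators, it suffices to show that each closed instance of an axiom in Table \ref{AxiomsForBARTC} yields FR-pomset-bisimilar terms; soundness then follows by induction on the length of a BARTC derivation $x=y$. So the proof reduces to checking axioms A1--A5 one by one, exhibiting for each a symmetric relation $R$ containing $(\emptyset,\emptyset)$ and the initial pair of closed instances, and verifying the two clauses of Definition \ref{FRPSB} by inspecting the pomset rules in Tables \ref{PTRForBARTC} and \ref{RPTRForBARTC}.

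For the structural axioms A1, A2, A3, A5, the verification is routine: the forward and reverse pomset rules for $+$ and $\cdot$ treat the two sides symmetrically, and in each case the relation $R$ can be taken to be the reflexive closure on states reachable from the initial pair, augmented with that pair itself. For A1 and A2, LHS and RHS have literally the same sets of enabled forward/reverse pomset transitions by the symmetry of the $+$-rules. For A3, the two $+$-rules on $x+x$ collapse to the corresponding rules on $x$. For A5, a forward transition inside the leftmost $x$ on $(x\cdot y)\cdot z$ yields $(x'\cdot y)\cdot z$, matched by $x'\cdot(y\cdot z)$ on the RHS, and a reverse transition inside $z$ gives $(x\cdot y)\cdot z'$ matched by $x\cdot(y\cdot z')$; the residuals sit inside $R$ by re-applying A5 structurally.

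The two interesting axioms are the distributivities A41 and A42, where the side conditions $Std$ and $NStd$ are exactly what makes soundness go through. For A41, the hypothesis $Std(x),Std(y),Std(z)$ ensures that no event history is present in the initial configuration on either side, so no reverse pomset transition is enabled at the root; it is therefore enough to match forward transitions. A forward step $X\subseteq x$ on $(x+y)\cdot z$ yields $X[\mathcal{K}]\cdot z$, while the corresponding step on the summand $x\cdot z$ of $x\cdot z+y\cdot z$ yields the same term $X[\mathcal{K}]\cdot z$; symmetrically for $Y\subseteq y$ and for transitions internal to a $z$ that is prefixed by a fully reversed $x$ or $y$. After one forward step the two residuals coincide syntactically (up to A5), and from that point on $R$ can be the identity on reachable states, which is trivially an FR pomset bisimulation. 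Axiom A42 is the exact dual: $NStd(x),NStd(y),NStd(z)$ forbids any forward transition at the root, so only reverse pomset transitions need matching, and the argument mirrors A41 with the roles of $\xrightarrow{}$ and $\xtworightarrow{}$ swapped.

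The main obstacle will be convincing the reader that the side conditions really do eliminate all potential mismatches, since the classical distributivity $x\cdot(y+z)=x\cdot y+x\cdot z$ without $NStd$ is \emph{not} sound: executing $x$ forward on the RHS irrevocably commits to one summand, whereas on the LHS the choice between $y$ and $z$ is still live in the residual. The $Std$/$NStd$ restrictions precisely prevent this asymmetry from arising, by forcing the initial configuration to admit transitions in only one direction and ensuring that once a step is taken, the two residuals are syntactically identical modulo associativity. Making this observation explicit and closing up $R$ under the reachable states is the only non-routine bookkeeping required.
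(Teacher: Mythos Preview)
Your proposal is correct and follows essentially the same approach as the paper: reduce to per-axiom soundness via the congruence theorem, then verify each axiom by inspecting the forward and reverse pomset transition rules in Tables~\ref{PTRForBARTC} and~\ref{RPTRForBARTC}. Your discussion of why the $Std$/$NStd$ side conditions are indispensable for $A41$/$A42$ is in fact more explicit than the paper's own treatment, which simply lists the matching forward (resp.\ reverse) derivations and tacitly relies on the side conditions to rule out transitions in the other direction.
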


\begin{proof}
Since FR pomset bisimulation $\sim_p^{fr}$ is both an equivalent and a congruent relation, we only need to check if each axiom in Table \ref{AxiomsForBARTC} is sound modulo FR pomset bisimulation equivalence.

\begin{itemize}
  \item \textbf{Axiom $A1$}. Let $p,q$ be BARTC processes, and $p+ q=q+ p$, it is sufficient to prove that $p+ q\sim_p^{fr} q+ p$. By the pomset transition rules for operator $+$ in Table \ref{PTRForBARTC} and Table \ref{RPTRForBARTC}, we get

      $$\frac{p\xrightarrow{P}P[\mathcal{K}]}{p+ q\xrightarrow{P}P[\mathcal{K}]} (P\subseteq p) \quad \frac{p\xrightarrow{P}P[\mathcal{K}]}{q+ p\xrightarrow{P}P[\mathcal{K}]}(P\subseteq p)$$

      $$\frac{p\xtworightarrow{P[\mathcal{K}]}P}{p+ q\xtworightarrow{P[\mathcal{K}]}P} (P\subseteq p) \quad \frac{p\xtworightarrow{P[\mathcal{K}]}P}{q+ p\xtworightarrow{P[\mathcal{K}]}P}(P\subseteq p)$$

      $$\frac{p\xrightarrow{P}p'}{p+ q\xrightarrow{P}p'}(P\subseteq p) \quad \frac{p\xrightarrow{P}p'}{q+ p\xrightarrow{P}p'}(P\subseteq p)$$

      $$\frac{p\xtworightarrow{P[\mathcal{K}]}p'}{p+ q\xtworightarrow{P[\mathcal{K}]}p'}(P\subseteq p) \quad \frac{p\xtworightarrow{P[\mathcal{K}]}p'}{q+ p\xtworightarrow{P[\mathcal{K}]}p'}(P\subseteq p)$$

      $$\frac{q\xrightarrow{Q}Q[\mathcal{L}]}{p+ q\xrightarrow{Q}Q[\mathcal{L}]}(Q\subseteq q) \quad \frac{q\xrightarrow{Q}Q[\mathcal{L}]}{q+ p\xrightarrow{Q}Q[\mathcal{L}]}(Q\subseteq q)$$

      $$\frac{q\xtworightarrow{Q[\mathcal{L}]}Q}{p+ q\xtworightarrow{Q[\mathcal{L}]}Q}(Q\subseteq q) \quad \frac{q\xtworightarrow{Q[\mathcal{L}]}Q}{q+ p\xtworightarrow{Q[\mathcal{L}]}Q}(Q\subseteq q)$$

      $$\frac{q\xrightarrow{Q}q'}{p+ q\xrightarrow{Q}q'}(Q\subseteq q) \quad \frac{q\xrightarrow{Q}q'}{q+ p\xrightarrow{Q}q'}(Q\subseteq q)$$

      $$\frac{q\xtworightarrow{Q[\mathcal{L}]}q'}{p+ q\xtworightarrow{Q[\mathcal{L}]}q'}(Q\subseteq q) \quad \frac{q\xtworightarrow{Q[\mathcal{L}]}q'}{q+ p\xtworightarrow{Q[\mathcal{L}]}q'}(Q\subseteq q)$$

      So, $p+ q\sim_p^{fr} q+ p$, as desired.
  \item \textbf{Axiom $A2$}. Let $p,q,s$ be BARTC processes, and $(p+ q)+ s=p+ (q+ s)$, it is sufficient to prove that $(p+ q)+ s \sim_p^{fr} p+ (q+ s)$. By the pomset transition rules for operator $+$ in Table \ref{PTRForBARTC} and Table \ref{RPTRForBARTC}, we get

      $$\frac{p\xrightarrow{P}P[\mathcal{K}]}{(p+ q)+ s\xrightarrow{P}P[\mathcal{K}]} (P\subseteq p) \quad \frac{p\xrightarrow{P}P[\mathcal{K}]}{p+ (q+ s)\xrightarrow{P}P[\mathcal{K}]}(P\subseteq p)$$

      $$\frac{p\xtworightarrow{P[\mathcal{K}]}P}{(p+ q)+ s\xtworightarrow{P[\mathcal{K}]}P} (P\subseteq p) \quad \frac{p\xtworightarrow{P[\mathcal{K}]}P}{p+ (q+ s)\xtworightarrow{P[\mathcal{K}]}P}(P\subseteq p)$$

      $$\frac{p\xrightarrow{P}p'}{(p+ q)+ s\xrightarrow{P}p'}(P\subseteq p) \quad \frac{p\xrightarrow{P}p'}{p+ (q+ s)\xrightarrow{P}p'}(P\subseteq p)$$

      $$\frac{p\xtworightarrow{P[\mathcal{K}]}p'}{(p+ q)+ s\xtworightarrow{P[\mathcal{K}]}p'}(P\subseteq p) \quad \frac{p\xtworightarrow{P[\mathcal{K}]}p'}{p+ (q+ s)\xtworightarrow{P[\mathcal{K}]}p'}(P\subseteq p)$$

      $$\frac{q\xrightarrow{Q}Q[\mathcal{L}]}{(p+ q)+ s\xrightarrow{Q}Q[\mathcal{L}]}(Q\subseteq q) \quad \frac{q\xrightarrow{Q}Q[\mathcal{L}]}{p+ (q+ s)\xrightarrow{Q}Q[\mathcal{L}]}(Q\subseteq q)$$

      $$\frac{q\xtworightarrow{Q[\mathcal{L}]}Q}{(p+ q)+ s\xtworightarrow{Q[\mathcal{L}]}Q}(Q\subseteq q) \quad \frac{q\xtworightarrow{Q[\mathcal{L}]}Q}{p+ (q+ s)\xtworightarrow{Q[\mathcal{L}]}Q}(Q\subseteq q)$$

      $$\frac{q\xrightarrow{Q}q'}{(p+ q)+ s\xrightarrow{Q}q'}(Q\subseteq q) \quad \frac{q\xrightarrow{Q}q'}{p+ (q+ s)\xrightarrow{Q}q'}(Q\subseteq q)$$

      $$\frac{q\xtworightarrow{Q[\mathcal{L}]}q'}{(p+ q)+ s\xtworightarrow{Q[\mathcal{L}]}q'}(Q\subseteq q) \quad \frac{q\xtworightarrow{Q[\mathcal{L}]}q'}{p+ (q+ s)\xtworightarrow{Q[\mathcal{L}]}q'}(Q\subseteq q)$$

      $$\frac{s\xrightarrow{S}S[\mathcal{M}]}{(p+ q)+ s\xrightarrow{S}S[\mathcal{M}]}(S\subseteq s) \quad \frac{s\xrightarrow{S}S[\mathcal{M}]}{p+ (q+ s)\xrightarrow{S}S[\mathcal{M}]}(S\subseteq s)$$

      $$\frac{s\xtworightarrow{S[\mathcal{M}]}S}{(p+ q)+ s\xtworightarrow{S[\mathcal{M}]}S}(S\subseteq s) \quad \frac{s\xtworightarrow{S[\mathcal{M}]}S}{p+ (q+ s)\xtworightarrow{S[\mathcal{M}]}S}(S\subseteq s)$$

      $$\frac{s\xrightarrow{S}s'}{(p+ q)+ s\xrightarrow{S}s'}(S\subseteq s) \quad \frac{s\xrightarrow{S}s'}{p+ (q+ s)\xrightarrow{S}s'}(S\subseteq s)$$

      $$\frac{s\xtworightarrow{S[\mathcal{M}]}s'}{(p+ q)+ s\xtworightarrow{S[\mathcal{M}]}s'}(S\subseteq s) \quad \frac{s\xtworightarrow{S[\mathcal{M}]}s'}{p+ (q+ s)\xtworightarrow{S[\mathcal{M}]}s'}(S\subseteq s)$$

      So, $(p+ q)+ s\sim_p^{fr} p+ (q+ s)$, as desired.
  \item \textbf{Axiom $A3$}. Let $p$ be a BARTC process, and $p+ p=p$, it is sufficient to prove that $p+ p\sim_p^{fr} p$. By the pomset transition rules for operator $+$ in Table \ref{PTRForBARTC} and Table \ref{RPTRForBARTC}, we get

      $$\frac{p\xrightarrow{P}P[\mathcal{K}]}{p+ p\xrightarrow{P}P[\mathcal{K}]} (P\subseteq p) \quad \frac{p\xrightarrow{P}P[\mathcal{K}]}{p\xrightarrow{P}P[\mathcal{K}]}(P\subseteq p)$$

      $$\frac{p\xtworightarrow{P[\mathcal{K}]}P}{p+ p\xtworightarrow{P[\mathcal{K}]}P} (P\subseteq p) \quad \frac{p\xtworightarrow{P[\mathcal{K}]}P}{p\xtworightarrow{P[\mathcal{K}]}P}(P\subseteq p)$$

      $$\frac{p\xrightarrow{P}p'}{p+ p\xrightarrow{P}p'}(P\subseteq p) \quad \frac{p\xrightarrow{P}p'}{p\xrightarrow{P}p'}(P\subseteq p)$$

      $$\frac{p\xtworightarrow{P[\mathcal{K}]}p'}{p+ p\xtworightarrow{P[\mathcal{K}]}p'}(P\subseteq p) \quad \frac{p\xtworightarrow{P[\mathcal{K}]}p'}{p\xtworightarrow{P[\mathcal{K}]}p'}(P\subseteq p)$$

      So, $p+ p\sim_p^{fr} p$, as desired.
  \item \textbf{Axiom $A41$}. Let $p,q,s$ be BARTC processes, $Std(p),Std(q),Std(s)$, and $(p+ q)\cdot s=p\cdot s + q\cdot s$, it is sufficient to prove that $(p+ q)\cdot s \sim_p^{fr} p\cdot s + q\cdot s$. By the pomset transition rules for operators $+$ and $\cdot$ in Table \ref{PTRForBARTC}, we get

      $$\frac{p\xrightarrow{P}P[\mathcal{K}]}{(p+ q)\cdot s\xrightarrow{P}P[\mathcal{K}]\cdot s} (P\subseteq p) \quad \frac{p\xrightarrow{P}P[\mathcal{K}]}{p\cdot s + q\cdot s\xrightarrow{P}P[\mathcal{K}]\cdot s}(P\subseteq p)$$

      $$\frac{p\xrightarrow{P}p'}{(p+ q)\cdot s\xrightarrow{P}p'\cdot s}(P\subseteq p) \quad \frac{p\xrightarrow{P}p'}{p\cdot s+ q\cdot s\xrightarrow{P}p'\cdot s}(P\subseteq p)$$

      $$\frac{q\xrightarrow{Q}Q[\mathcal{L}]}{(p+ q)\cdot s\xrightarrow{Q}Q[\mathcal{K}]\cdot s}(Q\subseteq q) \quad \frac{q\xrightarrow{Q}Q[\mathcal{L}]}{p\cdot s+ q\cdot s\xrightarrow{Q}Q[\mathcal{K}]\cdot s}(Q\subseteq q)$$

      $$\frac{q\xrightarrow{Q}q'}{(p+ q)\cdot s\xrightarrow{Q}q'\cdot s}(Q\subseteq q) \quad \frac{q\xrightarrow{Q}q'}{p\cdot s+ q\cdot s\xrightarrow{Q}q'\cdot s}(Q\subseteq q)$$

      So, $(p+ q)\cdot s\sim_p^{fr} p\cdot s+ q\cdot s$, as desired.
    \item \textbf{Axiom $A42$}. Let $p,q,s$ be BARTC processes, $NStd(p),NStd(q),NStd(s)$, and $p\cdot(q+ s)=p\cdot q + p\cdot s$, it is sufficient to prove that $p\cdot (q+s) \sim_p^{fr} p\cdot q + p\cdot s$. By the pomset transition rules for operators $+$ and $\cdot$ in Table \ref{RPTRForBARTC}, we get

      $$\frac{q\xtworightarrow{Q[\mathcal{L}]}Q}{p\cdot (q+s)\xtworightarrow{Q[\mathcal{L}]}p\cdot Q} (Q\subseteq q) \quad \frac{q\xtworightarrow{Q[\mathcal{L}]}Q}{p\cdot q + p\cdot s\xtworightarrow{Q[\mathcal{L}]}p\cdot Q}(Q\subseteq q)$$

      $$\frac{q\xtworightarrow{Q[\mathcal{L}]}q'}{p\cdot (q+s)\xtworightarrow{Q[\mathcal{L}]}p\cdot q'} (Q\subseteq q) \quad \frac{q\xtworightarrow{Q[\mathcal{L}]}q'}{p\cdot q + p\cdot s\xtworightarrow{Q[\mathcal{L}]}p\cdot q'}(Q\subseteq q)$$

      $$\frac{s\xtworightarrow{S[\mathcal{M}]}S}{p\cdot (q+s)\xtworightarrow{S[\mathcal{M}]}p\cdot S} (S\subseteq s) \quad \frac{s\xtworightarrow{S[\mathcal{M}]}S}{p\cdot q + p\cdot s\xtworightarrow{S[\mathcal{M}]}p\cdot S}(S\subseteq s)$$

      $$\frac{s\xtworightarrow{S[\mathcal{M}]}s'}{p\cdot (q+s)\xtworightarrow{S[\mathcal{M}]}p\cdot s'} (S\subseteq s) \quad \frac{s\xtworightarrow{S[\mathcal{M}]}s'}{p\cdot q + p\cdot s\xtworightarrow{S[\mathcal{M}]}p\cdot s'}(S\subseteq s)$$

      So, $p\cdot (q+s)\sim_p^{fr} p\cdot q+ p\cdot s$, as desired.
  \item \textbf{Axiom $A5$}. Let $p,q,s$ be BARTC processes, and $(p\cdot q)\cdot s=p\cdot (q\cdot s)$, it is sufficient to prove that $(p\cdot q)\cdot s \sim_p^{fr} p\cdot (q\cdot s)$. By the pomset transition rules for operator $\cdot$ in Table \ref{PTRForBARTC} and Table \ref{RPTRForBARTC}, we get

      $$\frac{p\xrightarrow{P}P[\mathcal{K}]}{(p\cdot q)\cdot s\xrightarrow{P}(P[\mathcal{K}]\cdot q)\cdot s} (P\subseteq p) \quad \frac{p\xrightarrow{P}P[\mathcal{K}]}{p\cdot (q\cdot s)\xrightarrow{P}P[\mathcal{K}]\cdot(q\cdot s)}(P\subseteq p)$$

      $$\frac{p\xrightarrow{P}p'}{(p\cdot q)\cdot s\xrightarrow{P}(p'\cdot q)\cdot s}(P\subseteq p) \quad \frac{p\xrightarrow{P}p'}{p\cdot (q\cdot s)\xrightarrow{P}p'\cdot (q\cdot s)}(P\subseteq p)$$

      $$\frac{s\xtworightarrow{S[\mathcal{M}}S]}{(p\cdot q)\cdot s\xtworightarrow{S[\mathcal{M}]}(p\cdot q)\cdot S} (S\subseteq s) \quad \frac{s\xtworightarrow{S[\mathcal{M}]}S}{p\cdot (q\cdot s)\xtworightarrow{S[\mathcal{K}]}p\cdot(q\cdot S)}(S\subseteq s)$$

      $$\frac{s\xtworightarrow{S[\mathcal{M}]}s'}{(p\cdot q)\cdot s\xtworightarrow{S[\mathcal{M}]}(p\cdot q)\cdot s'}(S\subseteq s) \quad \frac{s\xtworightarrow{S[\mathcal{M}]}s'}{p\cdot (q\cdot s)\xtworightarrow{S[\mathcal{M}]}p\cdot (q\cdot s')}(S\subseteq s)$$

      With an assumptions $(p\cdot q)\cdot S=p\cdot(q\cdot S)$ and $(p\cdot q)\cdot s'=p\cdot(q\cdot s')$, so, $(p\cdot q)\cdot s\sim_p^{fr} p\cdot (q\cdot s)$, as desired.
\end{itemize}
\end{proof}

\begin{theorem}[Completeness of BARTC modulo FR pomset bisimulation equivalence]\label{CBARTCPBE}
Let $p$ and $q$ be closed BARTC terms, if $p\sim_p^{fr} q$ then $p=q$.
\end{theorem}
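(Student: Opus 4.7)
The plan is to follow the standard completeness strategy for process algebras: reduce to basic terms using the elimination theorem, define a suitable normal form, and then perform induction on normal-form structure, matching initial transitions on both sides using the FR bisimulation relation.

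First, by Theorem~\ref{ETBARTC}, any closed BARTC terms $p$ and $q$ provably equal basic BARTC terms $p'$ and $q'$ respectively. By soundness (Theorem~\ref{SBARTCPBE}) we have $p\sim_p^{fr} p'$ and $q\sim_p^{fr} q'$, so $p'\sim_p^{fr} q'$. It therefore suffices to prove the statement for closed basic terms. I would further normalize each basic term using $A1,A2,A3$ so that it has the shape of a finite alternative composition $\sum_{i\in I} \alpha_i$ where each summand $\alpha_i$ is either a single (standard or history) event or a term of the form $e\cdot t_i$ or $t_i\cdot e[m]$ (per Definition~\ref{BTBARTC}), with all summands pairwise distinct; denote this form by $n(p')$.

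Next, I would argue by induction on the joint size (number of event occurrences) of the two normal forms $n(p')$ and $n(q')$. For the inductive step, take any summand of $n(p')$. Each summand induces an initial forward pomset transition of $p'$ via the rules in Table~\ref{PTRForBARTC}, of one of the shapes $p'\xrightarrow{P}P[\mathcal{K}]$ or $p'\xrightarrow{P}p''$ (according to whether the summand is a bare event or has the form $e\cdot t$); symmetrically, summands of the form $t\cdot e[m]$ give rise to initial reverse transitions $p'\xtworightarrow{e[m]}t$ via Table~\ref{RPTRForBARTC}. Using $p'\sim_p^{fr} q'$, each such transition is matched by a transition of $q'$ with the same (pomset-equivalent) label leading to a bisimilar state, and that transition must correspond to some summand of $n(q')$. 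The induction hypothesis, applied to the smaller pair of residuals, then yields provable equality of residuals, and $A3$ (idempotence) lets me absorb the matched summand into the already-matched ones on the other side. Iterating in both directions shows every summand of one side is provably equal to a summand on the other side, so $n(p')=n(q')$ in the theory and hence $p=q$.

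The main obstacle I anticipate is the interaction of the reverse transitions with the \emph{restricted} distributivity axioms $A41$ and $A42$: $A41$ only applies when all three arguments are standard ($Std$), and $A42$ only when they are non-standard ($NStd$). This mirrors the fact that in a basic term, a prefix $e\cdot t$ lives on the forward side while a postfix $t\cdot e[m]$ lives on the reverse side, so the two kinds of summand must be handled in parallel but not mixed. Concretely, I need to be careful that when matching an initial reverse transition of $p'$, the corresponding summand of $q'$ is of the postfix form $t\cdot e[m]$ (and not a forward summand), and vice versa; the syntactic stratification of the normal form, together with the fact that the SOS rules of Tables~\ref{PTRForBARTC} and~\ref{RPTRForBARTC} only fire on matching shapes, is what makes this possible. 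A secondary subtlety is that the history decoration $[\mathcal{K}]$ on a transition label is fresh, so matched pairs may carry different index sets $\mathcal{K},\mathcal{L}$; but $\sim_p^{fr}$ only requires $X_1\sim X_2$ as pomsets (Definition~\ref{FRPSB}), so this does not obstruct the matching. Once these bookkeeping issues are managed, the induction goes through routinely.
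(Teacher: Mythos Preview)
Your proposal is correct and follows essentially the same approach as the paper: reduce to basic terms via elimination, normalize modulo AC of $+$ into a sum of summands (each an atomic event, or of the form $e\cdot t$ or $t\cdot e[m]$), and then match summands by induction on size using the initial forward transitions for standard-prefixed summands and the initial reverse transitions for history-postfixed summands. Your additional remarks about the $Std/NStd$ side conditions on $A41$/$A42$ and the freshness of the index sets $\mathcal{K},\mathcal{L}$ are sound observations that the paper simply leaves implicit.
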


\begin{proof}
Firstly, by the elimination theorem of BARTC, we know that for each closed BARTC term $p$, there exists a closed basic $BARTC$ term $p'$, such that $BARTC\vdash p=p'$, so, we only need to consider closed basic $BARTC$ terms.

The basic terms (see Definition \ref{BTBARTC}) modulo associativity and commutativity (AC) of conflict $+$ (defined by axioms $A1$ and $A2$ in Table \ref{AxiomsForBARTC}), and this equivalence is denoted by $=_{AC}$. Then, each equivalence class $s$ modulo AC of $+$ has the following normal form

$$s_1+\cdots+ s_k$$

with each $s_i$ either an atomic event or of the form $t_1\cdot t_2$, and each $s_i$ is called the summand of $s$.

Now, we prove that for normal forms $n$ and $n'$, if $n\sim_p^{fr} n'$ then $n=_{AC}n'$. It is sufficient to induct on the sizes of $n$ and $n'$.

\begin{itemize}
  \item Consider a summand $e$ of $n$. Then $n\xrightarrow{e}e[m]$, so $n\sim_p^{fr} n'$ implies $n'\xrightarrow{e}e[m]$, meaning that $n'$ also contains the summand $e$.
  \item Consider a summand $e[m]$ of $n$. Then $n\xtworightarrow{e[m]}e$, so $n\sim_p^{fr} n'$ implies $n'\xtworightarrow{e[m]}e$, meaning that $n'$ also contains the summand $e[m]$.
  \item Consider a summand $t_1\cdot t_2$ of $n$. Then $n\xrightarrow{t_1}t_1[\mathcal{K}]\cdot t_2$, so $n\sim_p^{fr} n'$ implies $n'\xrightarrow{t_1}t_1[\mathcal{K}]\cdot t_2'$ with $t_1[\mathcal{K}]\cdot t_2\sim_p^{fr} t_1[\mathcal{K}]\cdot t_2'$, meaning that $n'$ contains a summand $t_1\cdot t_2'$. Since $t_2$ and $t_2'$ are normal forms and have sizes no greater than $n$ and $n'$, by the induction hypotheses $t_2\sim_p^{fr} t_2'$ implies $t_2=_{AC} t_2'$.
  \item Consider a summand $t_1\cdot t_2[\mathcal{L}]$ of $n$. Then $n\xtworightarrow{t_2[\mathcal{L}]}t_1\cdot t_2$, so $n\sim_p^{fr} n'$ implies $n'\xtworightarrow{t_2[\mathcal{L}]}t_1'\cdot t_2$ with $t_1\cdot t_2[\mathcal{L}]\sim_p^{fr} t_1'\cdot t_2[\mathcal{L}]$, meaning that $n'$ contains a summand $t_1'\cdot t_2[\mathcal{L}]$. Since $t_21$ and $t_1'$ are normal forms and have sizes no greater than $n$ and $n'$, by the induction hypotheses $t_1\sim_p^{fr} t_1'$ implies $t_1=_{AC} t_1'$.
\end{itemize}

So, we get $n=_{AC} n'$.

Finally, let $s$ and $t$ be basic terms, and $s\sim_p^{fr} t$, there are normal forms $n$ and $n'$, such that $s=n$ and $t=n'$. The soundness theorem of BARTC modulo FR pomset bisimulation equivalence (see Theorem \ref{SBARTCPBE}) yields $s\sim_p^{fr} n$ and $t\sim_p^{fr} n'$, so $n\sim_p^{fr} s\sim_p^{fr} t\sim_p^{fr} n'$. Since if $n\sim_p^{fr} n'$ then $n=_{AC}n'$, $s=n=_{AC}n'=t$, as desired.
\end{proof}

The step transition rules are almost the same as the transition rules in Table \ref{PTRForBARTC} and Table \ref{RPTRForBARTC}, the difference is that events in the transition pomset are pairwise concurrent for the step transition rules, and we omit them.

\begin{theorem}[Congruence of BARTC with respect to FR step bisimulation equivalence]
Step bisimulation equivalence $\sim_s^{fr}$ is a congruence with respect to BARTC.
\end{theorem}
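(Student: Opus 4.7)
The plan is to mirror the preceding congruence proof for FR pomset bisimulation, since, as the paper notes just above, the step transition rules are identical to the pomset transition rules of Tables \ref{PTRForBARTC} and \ref{RPTRForBARTC} except for the extra requirement that the events in each label set be pairwise concurrent. I would first observe that $\sim_s^{fr}$ is trivially an equivalence relation on BARTC terms (reflexivity via the identity relation; symmetry and transitivity via the standard closure constructions, which carry over because the step conditions are symmetric and composable). The substantive content is closure under $\cdot$ and $+$.

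For each operator, I would assume $x_1\sim_s^{fr} y_1$ and $x_2\sim_s^{fr} y_2$ and construct witnessing FR step bisimulations for $x_1\cdot x_2\sim_s^{fr} y_1\cdot y_2$ and $x_1+x_2\sim_s^{fr} y_1+y_2$. Given a forward step $x_1\cdot x_2\xrightarrow{X_1}X_1[\mathcal{K}]\cdot x_2$ or $x_1\cdot x_2\xrightarrow{X_1}x_1'\cdot x_2$ with $X_1$ pairwise concurrent, the hypothesis yields a matching step $y_1\xrightarrow{Y_1}Y_1[\mathcal{L}]$ or $y_1\xrightarrow{Y_1}y_1'$ with $X_1\sim Y_1$ and related residuals; the same rule lifts the matched transition on $y_1$ to a transition on $y_1\cdot y_2$. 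Critically, $Y_1$ is automatically a step, because the pomset isomorphism $X_1\sim Y_1$ preserves concurrency of labels. The reverse step transitions, and the case of alternative composition $+$, are handled by the same four-way case analysis as in the FR pomset proof.

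The main obstacle, insofar as one exists, is purely bookkeeping: there are eight rule shapes (four forward and four reverse, spread across $\cdot$ and $+$), each generating one case. Because preservation of pairwise concurrency of the label set under a pomset isomorphism is immediate, and because neither $\cdot$ nor $+$ introduces any new causal order among the events already in a step, no additional technical ingredient beyond the preceding theorem is needed. I would therefore present the proof by stating that the argument is verbatim the proof of the FR pomset case, with \emph{pomset transition} replaced throughout by \emph{step transition}, plus the one-line observation that concurrency of a step label is transferred across the pomset isomorphism $X_i\sim Y_i$ extracted from the bisimulation hypothesis.
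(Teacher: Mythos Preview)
Your proposal is correct and takes essentially the same approach as the paper: the paper's own proof simply states that FR step bisimulation is an equivalence relation and that preservation under $\cdot$ and $+$ follows almost verbatim from the FR pomset congruence argument, the only difference being the pairwise-concurrency condition on the label set, and then omits the details. Your write-up is in fact slightly more explicit than the paper's, since you spell out why the matching label $Y_i$ is again a step (concurrency is preserved by the pomset isomorphism $X_i\sim Y_i$), a point the paper leaves implicit.
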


\begin{proof}
It is easy to see that FR step bisimulation is an equivalent relation on BARTC terms, we only need to prove that $\sim_{s}^{fr}$ is preserved by the operators $\cdot$ and $+$. The proof is almost the same as proof of congruence of BARTC with respect to FR pomset bisimulation equivalence, the difference is that events in the transition pomset are pairwise concurrent for FR step bisimulation equivalence, and we omit it.
\end{proof}

\begin{theorem}[Soundness of BARTC modulo FR step bisimulation equivalence]\label{SBARTCSBE}
Let $x$ and $y$ be BARTC terms. If $BARTC\vdash x=y$, then $x\sim_{s}^{fr} y$.
\end{theorem}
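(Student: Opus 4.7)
The plan is to reuse the skeleton of the proof of Theorem \ref{SBARTCPBE}, only replacing FR pomset transitions by FR step transitions throughout. Since the congruence of $\sim_s^{fr}$ with respect to the operators of BARTC has just been established, it is enough to verify that every axiom in Table \ref{AxiomsForBARTC} is sound modulo $\sim_s^{fr}$. Concretely, for any closed BARTC instances $p,q,s$ of the axiom schemes $A1$--$A5$, I will exhibit an explicit FR step bisimulation $R$ containing the pair of left-hand and right-hand processes, together with their derivatives, and verify the clauses (1) and (2) of Definition \ref{FRPSB} (with transitions restricted to pairwise concurrent sets).

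First, I would observe that the step transition rules for $+$ and $\cdot$ are obtained from those in Tables \ref{PTRForBARTC} and \ref{RPTRForBARTC} simply by restricting the label $X$ (resp.~$X[\mathcal{K}]$) to consist of pairwise concurrent events. Since pairwise concurrency is a condition on the label alone and not on the shape of the source term, the symmetric case analysis used in the proof of Theorem \ref{SBARTCPBE} carries over verbatim: for $A1$ ($p+q \sim_s^{fr} q+p$) and $A2$ (associativity of $+$) the forward and reverse rules of $+$ yield matching transitions on each summand; for $A3$ ($p+p \sim_s^{fr} p$) both sides have exactly the same set of step transitions; for $A41$ and $A42$ the side conditions $Std(\cdot)$ and $NStd(\cdot)$ ensure that only the appropriate direction (forward for $A41$, reverse for $A42$) can fire and the distributivity is witnessed on a step-by-step basis; and for $A5$ (associativity of $\cdot$) the matching is by induction on the residuals, using the auxiliary assumption $(p\cdot q)\cdot S = p\cdot(q\cdot S)$ exactly as in the pomset case.

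Because the pomset-level verification already goes through for every axiom and the step refinement only narrows the class of admissible transitions, no new bisimulation has to be constructed and every matching step found in the pomset proof is still a legitimate matching step here. The only point needing genuine (but entirely routine) attention is the side conditions $Std(\cdot)$ and $NStd(\cdot)$ in $A41$ and $A42$, which must be checked to be preserved under step transitions: a standard term cannot gain event histories through a forward step, and a non-standard term cannot lose all histories through a single reverse step, so the premises of the corresponding derivation rules remain valid after each transition. This is the only mildly delicate point; everything else is a mechanical rerun of the proof of Theorem \ref{SBARTCPBE} with ``pomset'' replaced by ``step''.
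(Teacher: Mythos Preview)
Your proposal is correct and follows essentially the same approach as the paper: the paper's own proof simply notes that $\sim_s^{fr}$ is an equivalence and a congruence, then states that the soundness argument is ``almost the same'' as the pomset case (Theorem \ref{SBARTCPBE}) except that the labelling pomsets consist of pairwise concurrent events, and omits the details. Your write-up is in fact more detailed than the paper's, including the observation about preservation of the $Std$/$NStd$ side conditions, but the underlying strategy is identical.
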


\begin{proof}
Since FR step bisimulation $\sim_s^{fr}$ is both an equivalent and a congruent relation, we only need to check if each axiom in Table \ref{AxiomsForBARTC} is sound modulo FR step bisimulation equivalence. The soundness proof is almost the same as soundness proof of BARTC modulo FR pomset bisimulation equivalence, the difference is that events in the transition pomset are pairwise concurrent, and we omit it.
\end{proof}

\begin{theorem}[Completeness of BARTC modulo FR step bisimulation equivalence]\label{CBARTCSBE}
Let $p$ and $q$ be closed BARTC terms, if $p\sim_{s}^{fr} q$ then $p=q$.
\end{theorem}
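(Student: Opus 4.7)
The plan is to mirror the proof of Theorem \ref{CBARTCPBE}, the completeness of BARTC modulo FR pomset bisimulation, since the only structural difference between pomset and step bisimulation is that the transitions witnessing the step relation must consist of pairwise concurrent events. First I would apply the Elimination Theorem (Theorem \ref{ETBARTC}) to reduce to closed basic BARTC terms: for any closed $p$ there exists a basic term $p'$ with $BARTC\vdash p=p'$, and soundness (Theorem \ref{SBARTCSBE}) preserves $\sim_s^{fr}$, so it suffices to prove the statement for closed basic terms.

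Next I would place each basic term, modulo associativity and commutativity of $+$ (axioms $A1$ and $A2$), into a normal form $s_1+\cdots+s_k$ in which each summand $s_i$ is either an atomic event $e$, an event history $e[m]$, or a sequential composition $t_1\cdot t_2$. The core inductive claim is: for normal forms $n$ and $n'$, if $n\sim_s^{fr} n'$ then $n=_{AC}n'$. The induction is on the combined sizes of $n$ and $n'$ and proceeds by matching summands, using that single-event forward and reverse transitions (which are precisely the transitions that distinguish summands) are trivially pairwise concurrent and hence constitute steps. Concretely, a summand $e$ of $n$ yields $n\xrightarrow{e}e[m]$, which by $\sim_s^{fr}$ must be matched by $n'\xrightarrow{e}e[m]$, forcing $e$ to be a summand of $n'$; symmetrically, a summand $e[m]$ is detected by $n\xtworightarrow{e[m]}e$; a summand $t_1\cdot t_2$ is detected by $n\xrightarrow{t_1}t_1[\mathcal{K}]\cdot t_2$ (where the event set of $t_1$ is a single leading atomic event in the normal form view), matched by $n'\xrightarrow{t_1}t_1[\mathcal{K}]\cdot t_2'$ with $t_1[\mathcal{K}]\cdot t_2\sim_s^{fr} t_1[\mathcal{K}]\cdot t_2'$, and the induction hypothesis gives $t_2=_{AC}t_2'$; and summands $t_1\cdot t_2[\mathcal{L}]$ are handled symmetrically via reverse transitions. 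Finally, for basic terms $s,t$ with $s\sim_s^{fr} t$, take normal forms $n$ of $s$ and $n'$ of $t$; soundness gives $n\sim_s^{fr} s\sim_s^{fr} t\sim_s^{fr} n'$, the inductive claim yields $n=_{AC}n'$, and thus $s=n=_{AC}n'=t$.

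There is essentially no new obstacle beyond what was already overcome in the pomset case: the witnessing transitions used in the matching are single-event transitions, which are automatically steps, so the step-bisimulation hypothesis provides exactly the same matching power as the pomset hypothesis on the class of basic normal forms. The only thing one has to be careful about is that the reverse-direction matching of summands of the form $t_1\cdot t_2[\mathcal{L}]$ really does factor through the $NStd$ side of axiom $A42$; here the basic-term grammar of Definition \ref{BTBARTC} (where sequential composition of a history on the right is a primitive summand shape) makes the reverse match clean. As in the previous proof, this argument can be copied almost verbatim, noting only the replacement of pomset transitions by step transitions throughout.
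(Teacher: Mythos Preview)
Your proposal is correct and takes essentially the same approach as the paper: the paper's own proof simply states that the argument is almost identical to the FR pomset completeness proof (Theorem \ref{CBARTCPBE}), with the only difference being that the transitions are pairwise concurrent, and omits the details. You have faithfully reconstructed exactly that argument, correctly observing that the single-event transitions used to match summands are automatically steps, so nothing new is needed.
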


\begin{proof}
The proof of completeness is almost the same as the proof of BARTC modulo FR pomset bisimulation equivalence, the only different is that events in the transition pomset are pairwise concurrent, and we omit it.
\end{proof}

The transition rules for (hereditary) FR hp-bisimulation of BARTC are the same as single event transition rules in Table \ref{SETRForBARTC} Table \ref{RSETRForBARTC}.

\begin{theorem}[Congruence of BARTC with respect to FR hp-bisimulation equivalence]
Hp-bisimulation equivalence $\sim_{hp}^{fr}$ is a congruence with respect to BARTC.
\end{theorem}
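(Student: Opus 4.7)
The plan is to follow the same template used above for the FR pomset and FR step congruence results, but carry along the order-isomorphism $f$ between configurations that is required by Definition \ref{FRHHPB}. First I would verify (in one line) that $\sim_{hp}^{fr}$ is reflexive, symmetric, and transitive on closed BARTC terms, using the identity isomorphism, the inverse isomorphism, and composition of isomorphisms respectively; these are immediate from the definition of a posetal relation. Then the real work is to show closure under $\cdot$ and $+$.

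For the causality operator $\cdot$, assume $x_1\sim_{hp}^{fr}y_1$ and $x_2\sim_{hp}^{fr}y_2$ witnessed by FR hp-bisimulations $R_1$ and $R_2$. I would define a candidate relation
$$R=\{(C_1^{(1)}\cup C_1^{(2)},\, f_1\cup f_2,\, C_2^{(1)}\cup C_2^{(2)}) : (C_1^{(i)},f_i,C_2^{(i)})\in R_i\}$$
(together with the initial triple $(\emptyset,\emptyset,\emptyset)$), and check the four clauses of Definition \ref{FRHHPB}. A forward single-event transition of $x_1\cdot x_2$ is, by Table \ref{SETRForBARTC}, either a transition of $x_1$ (producing $e[m]\cdot x_2$ or $x_1'\cdot x_2$) or, once $x_1$ has terminated, a transition of $x_2$; in either case the matching transition of $y_1\cdot y_2$ is supplied by $R_1$ or $R_2$, and the extended map $f[e_1\mapsto e_2]$ lives in the corresponding component of $R$. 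Reverse single-event transitions are handled symmetrically using Table \ref{RSETRForBARTC}, where a reverse step first peels off events of $x_2$ (the right factor) and later events of $x_1$; here $R_2$ supplies the match $e_1[m]\mapsto e_2[n]$ until $x_2$ is fully reversed, after which $R_1$ takes over.

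For the conflict operator $+$, with the same hypotheses I would take
$$R=R_1\cup R_2\cup\{(\emptyset,\emptyset,\emptyset)\},$$
and argue by cases on the transition rules for $+$ in Tables \ref{SETRForBARTC}--\ref{RSETRForBARTC}. A first forward move from $x_1+x_2$ comes either from $x_1$ or from $x_2$; the mate at $y_1+y_2$ is then obtained from $R_1$ or $R_2$, and after the first action we are back in $R_1$ or $R_2$ with an isomorphism between singletons. Subsequent forward and reverse moves stay within the chosen summand, so the bisimulation conditions reduce to those of $R_1$ or $R_2$. The downward-closure clause is not needed (this is hp-, not hhp-) but if one wanted the hhp-result the same construction works because unions of downward-closed sets are downward-closed.

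The main obstacle is the bookkeeping at the join point of a sequential composition: when the forward execution of $x_1\cdot x_2$ first enters $x_2$ (respectively, when the reverse execution of it first enters $x_1$), one must glue the partial isomorphism from $R_1$ with the empty one from $R_2$ (or vice versa) while respecting the causal order $e<e'$ for $e\in C^{(1)}$ and $e'\in C^{(2)}$. This works because every event produced from $x_1$ strictly precedes every event produced from $x_2$ on both sides, so $f_1\cup f_2$ is automatically an order-isomorphism between $C_1^{(1)}\cup C_1^{(2)}$ and $C_2^{(1)}\cup C_2^{(2)}$; the remaining verification that forward and reverse transitions are mirrored is then a routine case analysis parallel to the FR pomset proof above, and I would omit the repeated details exactly as was done for the step-bisimulation congruence theorem.
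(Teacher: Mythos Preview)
Your proposal is correct and follows essentially the same approach as the paper: reduce to the FR pomset congruence argument while additionally carrying the order-isomorphism $f$ required by Definition \ref{FRHHPB}. The paper's own proof is in fact nothing more than the sentence ``the proof is similar to the proof of congruence of BARTC with respect to FR pomset bisimulation equivalence, we omit it,'' so your explicit construction of the witnessing relations for $\cdot$ and $+$ and the discussion of gluing $f_1\cup f_2$ at the sequential join point go well beyond what the paper supplies. One small caution: your case ``once $x_1$ has terminated, a transition of $x_2$'' for forward moves of $x_1\cdot x_2$ is not actually licensed by the rules in Table \ref{SETRForBARTC} as written (only the left factor moves forward, only the right factor moves in reverse), so you should phrase that clause to match the paper's operational rules rather than standard sequential composition.
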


\begin{proof}
It is easy to see that history-preserving bisimulation is an equivalent relation on BARTC terms, we only need to prove that $\sim_{hp}^{fr}$ is preserved by the operators $\cdot$ and $+$.

The proof is similar to the proof of congruence of BARTC with respenct to FR pomset bisimulation equivalence, we omit it.
\end{proof}

\begin{theorem}[Soundness of BARTC modulo FR hp-bisimulation equivalence]\label{SBARTCHPBE}
Let $x$ and $y$ be BARTC terms. If $BARTC\vdash x=y$, then $x\sim_{hp}^{fr} y$.
\end{theorem}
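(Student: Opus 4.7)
The plan is to mirror the strategy used for soundness with respect to FR pomset and FR step bisimulation equivalence (Theorems \ref{SBARTCPBE} and \ref{SBARTCSBE}). Since the preceding congruence theorem establishes that $\sim_{hp}^{fr}$ is both an equivalence and a congruence on BARTC terms, it suffices to check that each of the axioms $A1,A2,A3,A41,A42,A5$ in Table \ref{AxiomsForBARTC} is sound modulo $\sim_{hp}^{fr}$. For each axiom $s = t$, I will exhibit an explicit FR hp-bisimulation $R$ (in the sense of Definition \ref{FRHHPB}) witnessing $s \sim_{hp}^{fr} t$, where the relation is built from the identity on shared subterms together with the posetal pairs induced by the single-event transition rules of Tables \ref{SETRForBARTC} and \ref{RSETRForBARTC}.

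First I would handle the conflict axioms $A1$, $A2$, $A3$. For these, any forward transition of the left-hand side comes from one of the summands via the $+$-rules, and the same summand is available on the right-hand side, so I take $R$ to consist of the pair $(\emptyset, \emptyset, \emptyset)$ together with all triples $(C_1, f, C_2)$ arising from matched pairs of transitions, where $f$ is the identity-like map sending the event chosen on the left to the corresponding event on the right. The reverse clauses are handled symmetrically using the reverse rules in Table \ref{RSETRForBARTC}, where from a history $e[m]$ one reaches the event $e$. Next I would handle the causality axiom $A5$, $(x\cdot y)\cdot z = x\cdot(y\cdot z)$, by inductively matching transitions in the three phases corresponding to progress through $x$, $y$, $z$, and by appealing (as in the pomset proof) to the fact that the residuals after a prefix execute in corresponding ways on both sides; the isomorphism $f$ is extended pointwise by $f[e_1 \mapsto e_2]$ after each matched single-event move.

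The distributivity axioms $A41$ and $A42$ require the side conditions $Std$ and $NStd$: $A41$ governs the forward-executing case where no histories are present, so the left-hand side's forward transition selects a summand from $x+y$ and then continues into $z$, and this is matched on the right-hand side by first taking the forward transition of $x\cdot z$ or $y\cdot z$; conversely $A42$ governs the reverse-executing case (everything is $NStd$), where the reverse rules of Table \ref{RSETRForBARTC} show symmetric behaviour by reverse transitions of $y+z$ through $x$. Here I would carefully use the reverse rules for $\cdot$, which fire on the right factor, so that reversal of $p\cdot(q+s)$ and $p\cdot q + p\cdot s$ indeed agree on their initial reverse transitions, and the induced isomorphism $f$ on histories stays consistent.

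The main obstacle, and the one point that is genuinely different from the pomset and step cases, is maintaining the posetal isomorphism $f$ across both the forward and reverse transitions simultaneously. In the pomset and step soundness proofs it sufficed to match transition labels; here one must also show that after each matched forward move $C_1 \xrightarrow{e_1} C_1'$ and $C_2 \xrightarrow{e_2} C_2'$ the extended map $f[e_1 \mapsto e_2]$ is still an order-isomorphism of configurations, and that after each matched reverse move $C_1' \xtworightarrow{e_1[m]} C_1$ and $C_2' \xtworightarrow{e_2[n]} C_2$ the updated map $f'[e_1[m] \mapsto e_2[n]]$ remains an isomorphism. For axioms $A1$--$A3$ this is immediate because there is no nontrivial causality being introduced, and for $A5$ and the distributivity axioms it follows from the fact that the causal orderings of the underlying basic terms on the two sides of each axiom coincide. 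Once this is checked, the forward/reverse clauses of Definition \ref{FRHHPB} are verified simultaneously, and the termination/$\downarrow$ clauses are trivial in the BARTC fragment since there is no silent event, completing the soundness argument.
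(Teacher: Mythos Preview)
Your proposal is correct and follows essentially the same approach as the paper: reduce via the congruence theorem to checking each axiom in Table~\ref{AxiomsForBARTC} individually, then argue that the verification mirrors the FR pomset and step bisimulation cases. The paper's own proof is in fact terser than yours---it simply states that the argument is similar to Theorems~\ref{SBARTCPBE} and~\ref{SBARTCSBE} and omits the details---whereas you go further by explicitly identifying the one new ingredient (maintaining the posetal isomorphism $f$ under the updates $f[e_1\mapsto e_2]$ and $f'[e_1[m]\mapsto e_2[n]]$), which is exactly the right point to flag.
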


\begin{proof}
Since FR hp-bisimulation $\sim_{hp}^{fr}$ is both an equivalent and a congruent relation, we only need to check if each axiom in Table \ref{AxiomsForBARTC} is sound modulo FR hp-bisimulation equivalence.

The proof is similar to the proof of soundness of BARTC modulo FR pomset and step bisimulation equivalences, we omit it.
\end{proof}

\begin{theorem}[Completeness of BARTC modulo FR hp-bisimulation equivalence]\label{CBARTCHPBE}
Let $p$ and $q$ be closed BARTC terms, if $p\sim_{hp}^{fr} q$ then $p=q$.
\end{theorem}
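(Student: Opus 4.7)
The plan is to mirror the structure of the completeness proof for FR pomset bisimulation (Theorem \ref{CBARTCPBE}), with adaptations that keep track of the posetal isomorphism $f$. First, by the elimination theorem (Theorem \ref{ETBARTC}), every closed BARTC term is provably equal to a basic BARTC term in the sense of Definition \ref{BTBARTC}, so it suffices to handle basic terms. Modulo associativity and commutativity of $+$ (axioms $A1$--$A3$), each basic term has a normal form $s_1+\cdots+s_k$ where each summand $s_i$ is either an atomic event $e$, a reverse history $e[m]$, or a product $t_1\cdot t_2$ (possibly with a trailing history). The core reduction is then: for normal forms $n,n'$, if $n\sim_{hp}^{fr} n'$ then $n=_{AC} n'$, from which $p=q$ follows by combining with soundness (Theorem \ref{SBARTCHPBE}).

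For the inductive argument I fix a witnessing FR hp-bisimulation $R$ with $(\emptyset,\emptyset,\emptyset)\in R$ and induct on the combined size of $n$ and $n'$. For a summand $e$ of $n$, the forward single-event transition $n\xrightarrow{e} e[m]$ given by Table \ref{SETRForBARTC} must be matched by some $n'\xrightarrow{e'} C'$ with $\lambda(e)=\lambda(e')$ and $(\{e\},\,f[e\mapsto e'],\,C')\in R$; reading the transition rules backwards forces $n'$ to contain a summand with the same label as $e$. A symmetric argument applied to the reverse rules of Table \ref{RSETRForBARTC} handles summands of the form $e[m]$. For a summand $t_1\cdot t_2$ the transition $n\xrightarrow{t_1} t_1[\mathcal{K}]\cdot t_2$ is matched by $n'\xrightarrow{t_1} t_1[\mathcal{K}]\cdot t_2'$ (up to AC), with the residuals related by an extension of $f$, and the induction hypothesis applied to the residuals yields $t_2=_{AC} t_2'$; the dual case $t_1\cdot t_2[\mathcal{L}]$ is handled by the reverse transition $\xtworightarrow{t_2[\mathcal{L}]}$ acting at the tail.

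The anticipated obstacle, relative to the pomset case, is the bookkeeping on $f$: an FR hp-bisimulation requires the residuals to be related not merely after some relabelling of pomsets, but by a specific order-isomorphism extending $f$ by $e\mapsto e'$ (forward) or $e[m]\mapsto e'[n]$ (reverse). In general this is strictly stronger than pomset equivalence, but here the absence of any parallel constructor in BARTC means every configuration reachable from a basic term is a chain under $\leq$, so any bijection matching labels in the right order is automatically an order-isomorphism, and $f[e\mapsto e']$ extends consistently at each step. This collapses the hp-obligation onto the same combinatorics already exploited in the pomset completeness proof.

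Finally, chaining everything: given closed BARTC terms $p,q$ with $p\sim_{hp}^{fr} q$, elimination yields basic terms, basic terms admit normal forms $n,n'$ with $p=n$ and $q=n'$, soundness (Theorem \ref{SBARTCHPBE}) propagates $n\sim_{hp}^{fr} n'$, the normal-form argument above gives $n=_{AC} n'$, and hence $BARTC\vdash p=q$, as desired.
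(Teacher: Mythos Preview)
Your proposal is correct and follows essentially the same approach as the paper, which simply states that the proof is similar to the FR pomset and step completeness proofs and omits the details. You in fact supply more than the paper does: the paper never spells out why the extra posetal-isomorphism constraint in $\sim_{hp}^{fr}$ causes no new difficulty, whereas you observe explicitly that in BARTC every reachable configuration is a chain (no $\parallel$), so any label-matching bijection is automatically an order-isomorphism and the hp-obligation collapses to the pomset one.
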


\begin{proof}
The proof is similar to the proof of completeness of BARTC modulo FR pomset and step bisimulation equivalences, we omit it.
\end{proof}

\begin{theorem}[Congruence of BARTC with respect to FR hhp-bisimulation equivalence]
Hhp-bisimulation equivalence $\sim_{hhp}^{fr}$ is a congruence with respect to BARTC.
\end{theorem}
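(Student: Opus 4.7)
The plan is to mirror the structure of the congruence proof for $\sim_{hp}^{fr}$, while tracking the extra downward-closure condition that separates hhp-bisimulation from hp-bisimulation per Definition \ref{FRHHPB}.

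First I would verify that $\sim_{hhp}^{fr}$ is an equivalence relation on closed BARTC terms. Reflexivity is witnessed by the identity relation on configurations, which is trivially downward closed. For symmetry, the converse of a downward-closed FR hp-bisimulation is again a downward-closed FR hp-bisimulation. For transitivity, the relational composition of two downward-closed FR hp-bisimulations is again downward closed. This reduces the theorem to showing that $\sim_{hhp}^{fr}$ is preserved by the two operators $\cdot$ and $+$.

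For each operator, I would suppose $x_i \sim_{hhp}^{fr} y_i$ for $i\in\{1,2\}$, with witnessing downward-closed FR hp-bisimulations $R_1$ and $R_2$, and construct a candidate relation $R$ relating configurations of the composite terms exactly as in the hp-bisimulation congruence proof. Verification that $R$ satisfies the forward and reverse hp-bisimulation clauses then reuses the existing argument verbatim, since it depends only on the single-event forward and reverse transition rules of Tables \ref{SETRForBARTC} and \ref{RSETRForBARTC}. The genuinely new step is verifying downward closure of $R$: for any $(C_1,f,C_2)\in R$ and any pointwise-smaller $(C_1',f',C_2')$ in the posetal product, one must show $(C_1',f',C_2')\in R$.

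The main obstacle will be downward closure for sequential composition $x_1 \cdot x_2$. A configuration here splits into an $x_1$-part and a (possibly empty) post-$x_1$ part built over event histories from the continuation, and pruning events while staying downward closed can delete events on either side of this split. I would therefore define $R$ as the union, over splittings $C = C^{1} \cup C^{2}[\mathcal{K}]$ consistent with the operational rules, of triples assembled from restrictions in $R_1$ and $R_2$. Downward closure of $R$ then follows because any sub-configuration of such a $C$ inherits a compatible splitting (by the causal-closure constraint $\lceil C\rceil = C$ on configurations), whose restricted triples lie in the downward-closed $R_1$ or $R_2$. The alternative composition case is simpler: after any nontrivial transition the branch is fixed, so downward closure of $R$ reduces directly to that of $R_1$ or $R_2$, with the empty triple handled separately. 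Finally, I would note the well-known subtlety that hhp-bisimulation congruence is delicate in general, and that the argument here works because the forward and reverse rules for BARTC preserve the split structure that the downward closure argument relies on.
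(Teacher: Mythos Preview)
Your proposal is correct and follows essentially the same approach as the paper: both reduce the task to mirroring the hp-bisimulation congruence argument for the operators $\cdot$ and $+$. The paper's own proof in fact omits all details and simply asserts similarity to the hp case, so your explicit handling of the downward-closure condition---particularly the splitting argument for sequential composition---is considerably more careful than what appears in the paper itself.
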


\begin{proof}
It is easy to see that FR hhp-bisimulation is an equivalent relation on BARTC terms, we only need to prove that $\sim_{hhp}^{fr}$ is preserved by the operators $\cdot$ and $+$.

The proof is similar to the proof of congruence of BARTC with respect to FR hp-bisimulation equivalence, we omit it.
\end{proof}

\begin{theorem}[Soundness of BARTC modulo FR hhp-bisimulation equivalence]\label{SBARTCHHPBE}
Let $x$ and $y$ be BARTC terms. If $BARTC\vdash x=y$, then $x\sim_{hhp}^{fr} y$.
\end{theorem}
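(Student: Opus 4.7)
The plan is to follow exactly the template used for Theorems \ref{SBARTCPBE}, \ref{SBARTCSBE} and \ref{SBARTCHPBE}. Because $\sim_{hhp}^{fr}$ has just been shown to be both an equivalence and a congruence on BARTC terms, soundness of the axiom system reduces to checking that each individual axiom $A1$--$A5$ of Table \ref{AxiomsForBARTC} is sound modulo FR hhp-bisimulation; equational reasoning in BARTC then lifts to $\sim_{hhp}^{fr}$-equivalences of arbitrary closed terms.

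For each axiom I would exhibit an explicit weakly posetal relation $R$ relating configurations of the two sides, and verify two things: (i) the forward and reverse hp-matching clauses of Definition \ref{FRHHPB}, using the single-event transition rules of Tables \ref{SETRForBARTC} and \ref{RSETRForBARTC}; and (ii) downward closure of $R$ on the weakly posetal product. Step (i) is essentially the same bookkeeping already carried out for FR hp-bisimulation in the proof of Theorem \ref{SBARTCHPBE}, so the same relations, built around the natural identification of events on the left and right of each axiom, will serve. The only genuinely new ingredient is the downward closure required by hhp. For $A1$, $A2$, $A3$ and $A5$ this is essentially automatic, since the reachable configurations on the two sides coincide up to the identity event-renaming, so any pointwise restriction of a triple in $R$ is itself reachable on both sides and already lies in $R$.

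The main obstacle I expect is the distribution axioms $A41$ and $A42$, since hhp-bisimulation is known to be strictly finer than hp-bisimulation precisely on this kind of configuration discrepancy: a sub-configuration of $p\cdot s+q\cdot s$ has already committed to one summand, whereas the corresponding sub-configuration of $(p+q)\cdot s$ inherits more latent structure, and symmetrically for $A42$. My strategy is to rely on the side conditions $Std(x),Std(y),Std(z)$ in $A41$ and $NStd(x),NStd(y),NStd(z)$ in $A42$; these force the distributed side to be purely forward (respectively purely in reverse), which rules out the mixed sub-configurations that would otherwise obstruct downward closure. A case analysis on whether a given sub-configuration has already selected a branch of the $+$ should then close the argument, and combining this with the verifications for the remaining axioms yields the desired implication $BARTC\vdash x=y\Rightarrow x\sim_{hhp}^{fr}y$.
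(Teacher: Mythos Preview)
Your approach is essentially the same as the paper's: reduce soundness via congruence to per-axiom verification, then argue that each axiom of Table \ref{AxiomsForBARTC} is sound modulo $\sim_{hhp}^{fr}$. The paper in fact omits all details, simply stating that the proof is similar to the FR hp-bisimulation case; your extra analysis of downward closure for $A41$/$A42$ is more careful than the paper itself, though note that FR hhp-bisimulation in Definition \ref{FRHHPB} is built on a posetal (not weakly posetal) relation.
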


\begin{proof}
Since FR hhp-bisimulation $\sim_{hhp}^{fr}$ is both an equivalent and a congruent relation, we only need to check if each axiom in Table \ref{AxiomsForBARTC} is sound modulo FR hhp-bisimulation equivalence.

The proof is similar to the proof of soundness of BARTC modulo FR hp-bisimulation equivalence, we omit it.
\end{proof}

\begin{theorem}[Completeness of BARTC modulo FR hhp-bisimulation equivalence]\label{CBARTCHHPBE}
Let $p$ and $q$ be closed BARTC terms, if $p\sim_{hhp}^{fr} q$ then $p=q$.
\end{theorem}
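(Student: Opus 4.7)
The plan is to follow the same template as the previous completeness proofs (Theorems \ref{CBARTCPBE}, \ref{CBARTCSBE}, \ref{CBARTCHPBE}) in this section, and in fact to piggy-back on the hp-case. First, I would invoke the elimination theorem (Theorem \ref{ETBARTC}) to reduce the problem to closed basic BARTC terms in the sense of Definition \ref{BTBARTC}. Then I would put these basic terms into the normal form
\[
s_1 + \cdots + s_k
\]
modulo associativity and commutativity of $+$ (axioms $A1$, $A2$), exactly as in the proof of Theorem \ref{CBARTCPBE}, where each summand $s_i$ is either an atomic event $e$, a past event $e[m]$, a product $t_1\cdot t_2$, or a reverse-product $t_1\cdot t_2[\mathcal{L}]$.

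The core step is to show that if two normal forms $n,n'$ satisfy $n\sim_{hhp}^{fr} n'$, then $n=_{AC} n'$. I would proceed by induction on the combined size of $n$ and $n'$. For each summand type of $n$, I would exhibit the corresponding forward or reverse single-event transition (using Tables \ref{SETRForBARTC} and \ref{RSETRForBARTC}, which are the transition rules relevant for hp-style bisimulations); the bisimulation then produces a matching transition out of $n'$, and hence a matching summand. The induction hypothesis applied to the residuals, together with the fact that the order-isomorphism $f$ is extended coherently along each step, gives equality of the residuals modulo AC. Finally, by soundness (Theorem \ref{SBARTCHHPBE}) any basic $s$ is FR hhp-bisimilar to its normal form, so $s\sim_{hhp}^{fr} t$ forces the normal forms to coincide modulo AC, whence $s=t$.

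The shortcut, which I would actually record as the proof, is that by Definition \ref{FRHHPB} an FR hhp-bisimulation is just a downward closed FR hp-bisimulation, so $\sim_{hhp}^{fr}\,\subseteq\,\sim_{hp}^{fr}$. Hence $p\sim_{hhp}^{fr} q$ implies $p\sim_{hp}^{fr} q$, and Theorem \ref{CBARTCHPBE} yields $p=q$ directly. This avoids having to re-run the induction and uses only results already proved.

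The only place where one might expect trouble is the downward-closure condition: in general for hhp it is strictly stronger than hp, and this is the usual source of extra complication. Here, however, the argument above either sidesteps the issue entirely (via the inclusion into $\sim_{hp}^{fr}$) or trivialises it (because BARTC has no parallel operator yet, so the configurations of basic terms are essentially linearly ordered by causality and the posetal structure has no nontrivial sub-configurations to verify). Thus no genuine new obstacle arises beyond what was already handled in the hp-case.
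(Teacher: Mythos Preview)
Your proposal is correct. The paper's own proof is simply ``The proof is similar to the proof of BARTC modulo FR hp-bisimulation equivalence, we omit it,'' so your detailed first plan (elimination to basic terms, normal forms modulo AC of $+$, induction on size matching summands via single-event transitions) is exactly the intended argument spelled out.

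Your shortcut is a genuinely cleaner route than what the paper gestures at. Rather than re-running the hp induction, you observe directly from Definition~\ref{FRHHPB} that every FR hhp-bisimulation is in particular an FR hp-bisimulation, hence $\sim_{hhp}^{fr}\subseteq\sim_{hp}^{fr}$, and then Theorem~\ref{CBARTCHPBE} finishes immediately. This buys you a one-line proof with no new work, whereas the paper's implicit approach would repeat the entire normal-form induction. The only thing the longer route would add is self-containment; your inclusion argument is logically tighter and is the preferable way to record the result. Your closing remark about why downward closure causes no extra difficulty in BARTC (no parallelism, so configurations are linear) is also accurate, though strictly unnecessary once you take the shortcut.
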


\begin{proof}
The proof is similar to the proof of BARTC modulo FR hp-bisimulation equivalence, we omit it.
\end{proof}

\section{Algebra for Parallelism in Reversible True Concurrency}\label{aprtc}

In this section, we will discuss parallelism in reversible true concurrency. The resulted algebra is called Algebra for Parallelism in Reversible True Concurrency, abbreviated APRTC.

\subsection{Parallelism}

The forward transition rules for parallelism $\parallel$ are shown in Table \ref{TRForParallel}, and the reverse transition rules for $\parallel$ are shown in Table \ref{RTRForParallel}.

\begin{center}
    \begin{table}
        $$\frac{x\xrightarrow{e_1}e_1[m]\quad y\xrightarrow{e_2}e_2[m]}{x\parallel y\xrightarrow{\{e_1,e_2\}}e_1[m]\parallel e_2[m]} \quad\frac{x\xrightarrow{e_1}x'\quad y\xrightarrow{e_2}e_2[m]}{x\parallel y\xrightarrow{\{e_1,e_2\}}x'\parallel e_2[m]}$$
        $$\frac{x\xrightarrow{e_1}e_1[m]\quad y\xrightarrow{e_2}y'}{x\parallel y\xrightarrow{\{e_1,e_2\}}e_1[m]\parallel y'} \quad\frac{x\xrightarrow{e_1}x'\quad y\xrightarrow{e_2}y'}{x\parallel y\xrightarrow{\{e_1,e_2\}}x'\between y'}$$
        \caption{Forward transition rules of parallel operator $\parallel$}
        \label{TRForParallel}
    \end{table}
\end{center}

\begin{center}
    \begin{table}
        $$\frac{x\xtworightarrow{e_1[m]}e_1\quad y\xtworightarrow{e_2[m]}e_2}{x\parallel y\xtworightarrow{\{e_1[m],e_2[m]\}}e_1\parallel e_2} \quad\frac{x\xtworightarrow{e_1[m]}x'\quad y\xtworightarrow{e_2[m]}e_2}{x\parallel y\xtworightarrow{\{e_1[m],e_2[m]\}}x'\parallel e_2}$$
        $$\frac{x\xtworightarrow{e_1[m]}e_1\quad y\xtworightarrow{e_2[m]}y'}{x\parallel y\xtworightarrow{\{e_1[m],e_2[m]\}}e_1\parallel y'} \quad\frac{x\xtworightarrow{e_1[m]}x'\quad y\xtworightarrow{e_2[m]}y'}{x\parallel y\xtworightarrow{\{e_1[m],e_2[m]\}}x'\between y'}$$
        \caption{Reverse transition rules of parallel operator $\parallel$}
        \label{RTRForParallel}
    \end{table}
\end{center}

The forward and reverse transition rules of communication $\mid$ are shown in Table \ref{TRForCommunication} and Table \ref{RTRForCommunication}.

\begin{center}
    \begin{table}
        $$\frac{x\xrightarrow{e_1}e_1[m]\quad y\xrightarrow{e_2}e_2[m]}{x\mid y\xrightarrow{\gamma(e_1,e_2)}\gamma(e_1,e_2)[m]} \quad\frac{x\xrightarrow{e_1}x'\quad y\xrightarrow{e_2}e_2[m]}{x\mid y\xrightarrow{\gamma(e_1,e_2)}\gamma(e_1,e_2)[m]\cdot x'}$$
        $$\frac{x\xrightarrow{e_1}e_1[m]\quad y\xrightarrow{e_2}y'}{x\mid y\xrightarrow{\gamma(e_1,e_2)}\gamma(e_1,e_2)[m]\cdot y'} \quad\frac{x\xrightarrow{e_1}x'\quad y\xrightarrow{e_2}y'}{x\mid y\xrightarrow{\gamma(e_1,e_2)}\gamma(e_1,e_2)[m]\cdot x'\between y'}$$
        \caption{Forward transition rules of communication operator $\mid$}
        \label{TRForCommunication}
    \end{table}
\end{center}

\begin{center}
    \begin{table}
        $$\frac{x\xtworightarrow{e_1[m]}e_1\quad y\xtworightarrow{e_2[m]}e_2}{x\mid y\xtworightarrow{\gamma(e_1,e_2)[m]}\gamma(e_1,e_2)} \quad\frac{x\xtworightarrow{e_1[m]}x'\quad y\xtworightarrow{e_2[m]}e_2}{x\mid y\xtworightarrow{\gamma(e_1,e_2)[m]}\gamma(e_1,e_2)\cdot x'}$$
        $$\frac{x\xtworightarrow{e_1[m]}e_1\quad y\xtworightarrow{e_2[m]}y'}{x\mid y\xtworightarrow{\gamma(e_1,e_2)[m]}\gamma(e_1,e_2)\cdot y'} \quad\frac{x\xtworightarrow{e_1[m]}x'\quad y\xtworightarrow{e_2[m]}y'}{x\mid y\xtworightarrow{\gamma(e_1,e_2)[m]}\gamma(e_1,e_2)\cdot x'\between y'}$$
        \caption{Reverse transition rules of communication operator $\mid$}
        \label{RTRForCommunication}
    \end{table}
\end{center}

The conflict elimination is also captured by two auxiliary operators, the unary conflict elimination operator $\Theta$ and the binary unless operator $\triangleleft$. The forward and reverse transition rules for $\Theta$ and $\triangleleft$ are expressed by ten transition rules in Table \ref{TRForConflict} and Table \ref{RTRForConflict}.

\begin{center}
    \begin{table}
        $$\frac{x\xrightarrow{e_1}e_1[m]\quad (\sharp(e_1,e_2))}{\Theta(x)\xrightarrow{e_1}e_1[m]} \quad\frac{x\xrightarrow{e_2}e_2[n]\quad (\sharp(e_1,e_2))}{\Theta(x)\xrightarrow{e_2}e_2[n]}$$
        $$\frac{x\xrightarrow{e_1}x'\quad (\sharp(e_1,e_2))}{\Theta(x)\xrightarrow{e_1}\Theta(x')} \quad\frac{x\xrightarrow{e_2}x'\quad (\sharp(e_1,e_2))}{\Theta(x)\xrightarrow{e_2}\Theta(x')}$$
        $$\frac{x\xrightarrow{e_1}e_1[m] \quad y\nrightarrow^{e_2}\quad (\sharp(e_1,e_2))}{x\triangleleft y\xrightarrow{\tau}\surd}
        \quad\frac{x\xrightarrow{e_1}x' \quad y\nrightarrow^{e_2}\quad (\sharp(e_1,e_2))}{x\triangleleft y\xrightarrow{\tau}x'}$$
        $$\frac{x\xrightarrow{e_1}e_1[m] \quad y\nrightarrow^{e_3}\quad (\sharp(e_1,e_2),e_2\leq e_3)}{x\triangleleft y\xrightarrow{e_1}e_1[m]}
        \quad\frac{x\xrightarrow{e_1}x' \quad y\nrightarrow^{e_3}\quad (\sharp(e_1,e_2),e_2\leq e_3)}{x\triangleleft y\xrightarrow{e_1}x'}$$
        $$\frac{x\xrightarrow{e_3}e_3[l] \quad y\nrightarrow^{e_2}\quad (\sharp(e_1,e_2),e_1\leq e_3)}{x\triangleleft y\xrightarrow{\tau}\surd}
        \quad\frac{x\xrightarrow{e_3}x' \quad y\nrightarrow^{e_2}\quad (\sharp(e_1,e_2),e_1\leq e_3)}{x\triangleleft y\xrightarrow{\tau}x'}$$
        \caption{Forward transition rules of conflict elimination}
        \label{TRForConflict}
    \end{table}
\end{center}

\begin{center}
    \begin{table}
        $$\frac{x\xtworightarrow{e_1[m]}e_1\quad (\sharp(e_1,e_2))}{\Theta(x)\xtworightarrow{e_1[m]}e_1} \quad\frac{x\xtworightarrow{e_2[n]}e_2\quad (\sharp(e_1,e_2))}{\Theta(x)\xtworightarrow{e_2[n]}e_2}$$
        $$\frac{x\xtworightarrow{e_1[m]}x'\quad (\sharp(e_1,e_2))}{\Theta(x)\xtworightarrow{e_1[m]}\Theta(x')} \quad\frac{x\xtworightarrow{e_2[n]}x'\quad (\sharp(e_1,e_2))}{\Theta(x)\xtworightarrow{e_2[n]}\Theta(x')}$$
        $$\frac{x\xtworightarrow{e_1[m]}e_1 \quad y\xntworightarrow{e_2[n]}\quad (\sharp(e_1,e_2))}{x\triangleleft y\xtworightarrow{\tau}\surd}
        \quad\frac{x\xtworightarrow{e_1[m]}x' \quad y\xntworightarrow{e_2[n]}\quad (\sharp(e_1,e_2))}{x\triangleleft y\xtworightarrow{\tau}x'}$$
        $$\frac{x\xtworightarrow{e_1[m]}e_1 \quad y\xntworightarrow{e_3[l]}\quad (\sharp(e_1,e_2),e_2\geq e_3)}{x\triangleleft y\xtworightarrow{e_1[m]}e_1}
        \quad\frac{x\xtworightarrow{e_1[m]}x' \quad y\xntworightarrow{e_3[l]}\quad (\sharp(e_1,e_2),e_2\geq e_3)}{x\triangleleft y\xtworightarrow{e_1[m]}x'}$$
        $$\frac{x\xtworightarrow{e_3[l]}e_3 \quad y\xntworightarrow{e_2[n]}\quad (\sharp(e_1,e_2),e_1\geq e_3)}{x\triangleleft y\xtworightarrow{\tau}\surd}
        \quad\frac{x\xtworightarrow{e_3[l]}x' \quad y\xntworightarrow{e_2[n]}\quad (\sharp(e_1,e_2),e_1\geq e_3)}{x\triangleleft y\xtworightarrow{\tau}x'}$$
        \caption{Reverse transition rules of conflict elimination}
        \label{RTRForConflict}
    \end{table}
\end{center}

\begin{theorem}[Congruence theorem of APRTC]
FR truly concurrent bisimulation equivalences $\sim_{p}^{fr}$, $\sim_s^{fr}$, $\sim_{hp}^{fr}$ and $\sim_{hhp}^{fr}$ are all congruences with respect to APRTC.
\end{theorem}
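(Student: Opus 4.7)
The plan is to reduce the theorem to a case analysis over the new operators of APRTC, namely $\parallel$, $\mid$, $\Theta$, and $\triangleleft$, and over the four equivalences $\sim_p^{fr}$, $\sim_s^{fr}$, $\sim_{hp}^{fr}$, $\sim_{hhp}^{fr}$. The congruence of these four equivalences with respect to the BARTC operators $\cdot$ and $+$ has already been established in the theorems of Section \ref{bartc}, so only the APRTC-specific operators remain. Since each $\sim_x^{fr}$ has already been noted to be an equivalence relation, the whole task is to show preservation under the four new operators.

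For the pomset and step cases, I would proceed as in the BARTC congruence proof: assume $x_1\sim_p^{fr} y_1$ and $x_2\sim_p^{fr} y_2$, then for each premise in the forward transition rules of Tables \ref{TRForParallel}, \ref{TRForCommunication}, \ref{TRForConflict} and the reverse rules of Tables \ref{RTRForParallel}, \ref{RTRForCommunication}, \ref{RTRForConflict}, match the premise on the $y_i$ side using the bisimulation hypothesis and reassemble a matching transition for the compound term. The step bisimulation argument is identical except for insisting on pairwise concurrency of the events in the label, which is preserved by the rule schemes for $\parallel$ and $\mid$. Both forward and reverse transitions have to be matched, but the symmetry of the forward/reverse rule tables makes the reverse cases structurally parallel to the forward ones, so no new ideas are required.

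For the hp-bisimulation case, the witness relation must be refined into a posetal relation of triples $(C_1,f,C_2)$ with an isomorphism $f$. Given hp-bisimilarities $x_i\sim_{hp}^{fr} y_i$, I would build the candidate relation on the product configurations of the compound terms by taking disjoint unions of the component isomorphisms, extending them with $e_1\mapsto e_2$ (or $e_1[m]\mapsto e_2[n]$ on the reverse side) whenever a compound transition fires. For communication $\mid$, the isomorphism must map $\gamma(e_1,e_2)$ on one side to $\gamma(e_1',e_2')$ on the other. For $\Theta$ and $\triangleleft$, the side conditions $\sharp(e_1,e_2)$ and $e_i\leq e_j$ are labelling-invariant in the sense that bisimilar terms induce isomorphic local causality/conflict structure, so the side conditions transfer along $f$; I would state and use this explicitly. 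The hhp-case is handled by the same construction together with the observation that the class of downward-closed relations is closed under the constructions we perform (disjoint union plus extension by a single pair).

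The main obstacle is the $\triangleleft$ operator, because its rules have negative premises of the form $y\nrightarrow^{e_2}$ and side conditions that mix conflict with causality ($\sharp(e_1,e_2),e_2\leq e_3$, etc.). Negative premises are delicate for congruence proofs: to transfer $y\nrightarrow^{e_2}$ from one side of a bisimulation to the other, one has to know that bisimilar processes agree on the absence of enabled events of a given label, which is a consequence of image-finiteness plus the bisimulation conditions but requires an explicit argument rather than a straightforward rule-matching. I would therefore spend the bulk of the proof carefully discharging the $\triangleleft$-rules, and only sketch the other operators by analogy, noting at the end that the hhp-case follows by verifying downward-closedness of the candidate relation, which reduces to downward-closedness of the component relations for $x_i\sim_{hhp}^{fr} y_i$.
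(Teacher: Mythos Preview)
Your proposal takes essentially the same approach as the paper: a case analysis over the new APRTC operators $\parallel$, $\mid$, $\Theta$, $\triangleleft$, with the $\parallel$ case worked out by matching the four rule schemes on each side of the bisimulation and the remaining operators handled by analogy, then lifting to $\sim_s^{fr}$, $\sim_{hp}^{fr}$, $\sim_{hhp}^{fr}$ in the same ``similarly, omitted'' style. The paper in fact only spells out the $\parallel$ case for $\sim_p^{fr}$ and dismisses everything else in one line, so your plan is strictly more detailed than what the paper provides; in particular, your concern about the negative premises of $\triangleleft$ and the transfer of the side conditions $\sharp(e_1,e_2)$, $e_2\leq e_3$ across the bisimulation is a genuine technical point that the paper simply does not address.
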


\begin{proof}
(1) Case FR pomset bisimulation equivalence $\sim_p^{fr}$.

\begin{itemize}
  \item Case parallel operator $\parallel$. Let $x_1,x_2$ and $y_1,y_2$ be APRTC processes, and $x_1\sim_{p}^{fr} y_1$, $x_2\sim_{p}^{fr} y_2$, it is sufficient to prove that $x_1\parallel x_2\sim_{p}^{fr} y_1\parallel y_2$.

      By the definition of FR pomset bisimulation $\sim_p^{fr}$ (Definition \ref{FRPSB}), $x_1\sim_p^{fr} y_1$ means that

      $$x_1\xrightarrow{X_1} x_1' \quad y_1\xrightarrow{Y_1} y_1'$$

      with $X_1\subseteq x_1$, $Y_1\subseteq y_1$, $X_1\sim Y_1$ and $x_1'\sim_p^{fr} y_1'$. The meaning of $x_2\sim_p^{fr} y_2$ is similar.

      By the forward transition rules for parallel operator $\parallel$ in Table \ref{TRForParallel}, we can get

      $$x_1\parallel x_2\xrightarrow{\{X_1,X_2\}} X_1[\mathcal{K}]\parallel X_2[\mathcal{K}] \quad y_1\parallel y_2\xrightarrow{\{Y_1,Y_2\}} Y_1[\mathcal{J}]\parallel Y_2[\mathcal{J}]$$

      $$x_1\parallel x_2\xtworightarrow{\{X_1[\mathcal{K}],X_2[\mathcal{K}]\}} X_1\parallel X_2 \quad y_1\parallel y_2\xtworightarrow{\{Y_1[\mathcal{J}],Y_2[\mathcal{J}]\}} Y_1\parallel Y_2$$

      with $X_1\subseteq x_1$, $Y_1\subseteq y_1$, $X_2\subseteq x_2$, $Y_2\subseteq y_2$, $X_1\sim Y_1$ and $X_2\sim Y_2$, and the assumptions $X_1[\mathcal{K}\parallel X_2[\mathcal{K}]]\sim_p^{fr}Y_1[\mathcal{J}]\parallel Y_2[\mathcal{J}]$ and $X_1\parallel X_2\sim_p^{fr}Y_1\parallel Y_2$, so, we get $x_1\parallel x_2\sim_p^{fr} y_1\parallel y_2$, as desired.

      Or, we can get

      $$x_1\parallel x_2\xrightarrow{\{X_1,X_2\}} x_1'\parallel X_2[\mathcal{K}] \quad y_1\parallel y_2\xrightarrow{\{Y_1,Y_2\}} y_1'\parallel Y_2[\mathcal{J}]$$

      $$x_1\parallel x_2\xtworightarrow{\{X_1[\mathcal{K}],X_2[\mathcal{K}]\}} x_1'\parallel X_2 \quad y_1\parallel y_2\xtworightarrow{\{Y_1[\mathcal{J}],Y_2[\mathcal{J}]\}} y_1'\parallel Y_2$$

      with $X_1\subseteq x_1$, $Y_1\subseteq y_1$, $X_2\subseteq x_2$, $Y_2\subseteq y_2$, $X_1\sim Y_1$, $X_2\sim Y_2$, and the assumptions $x_1'\parallel X_2[\mathcal{K}]]\sim_p^{fr}y_1'\parallel Y_2[\mathcal{J}]$ and $x_1'\parallel X_2\sim_p^{fr}y_1'\parallel Y_2$ so, we get $x_1\parallel x_2\sim_p^{fr} y_1\parallel y_2$, as desired.

      Or, we can get

      $$x_1\parallel x_2\xrightarrow{\{X_1,X_2\}}X_1[\mathcal{K}]\parallel x_2' \quad y_1\parallel y_2\xrightarrow{\{Y_1,Y_2\}}Y_1[\mathcal{J}]\parallel y_2'$$

      $$x_1\parallel x_2\xtworightarrow{\{X_1[\mathcal{K}],X_2[\mathcal{K}]\}}X_1\parallel x_2' \quad y_1\parallel y_2\xtworightarrow{\{Y_1[\mathcal{J}],Y_2[\mathcal{J}]\}}Y_1\parallel y_2'$$

      with $X_1\subseteq x_1$, $Y_1\subseteq y_1$, $X_2\subseteq x_2$, $Y_2\subseteq y_2$, $X_1\sim Y_1$, $X_2\sim Y_2$, and the assumptions $X_1[\mathcal{K}\parallel x_2'\sim_p^{fr}Y_1[\mathcal{J}]\parallel y_2'$ and $X_1\parallel x_2'\sim_p^{fr}Y_1\parallel y_2'$, so, we get $x_1\parallel x_2\sim_p^{fr} y_1\parallel y_2$, as desired.

      Or, we can get

      $$x_1\parallel x_2\xrightarrow{\{X_1,X_2\}} x_1'\between x_2' \quad y_1\parallel y_2\xrightarrow{\{Y_1,Y_2\}} y_1'\between y_2'$$

      $$x_1\parallel x_2\xtworightarrow{\{X_1[\mathcal{K}],X_2[\mathcal{K}]\}} x_1'\between x_2' \quad y_1\parallel y_2\xtworightarrow{\{Y_1[\mathcal{J}],Y_2[\mathcal{J}]\}} y_1'\between y_2'$$

      with $X_1\subseteq x_1$, $Y_1\subseteq y_1$, $X_2\subseteq x_2$, $Y_2\subseteq y_2$, $X_1\sim Y_1$, $X_2\sim Y_2$, and the assumption $x_1'\between x_2'\sim_p^{fr}y_1'\between y_2'$, so, we get $x_1\parallel x_2\sim_p^{fr} y_1\parallel y_2$, as desired.

  \item Case communication operator $\mid$. It can be proved similarly to the case of parallel operator $\parallel$, we omit it. Note that, a communication is defined between two single communicating events.

  \item Case conflict elimination operator $\Theta$. It can be proved similarly to the above cases, we omit it. Note that the conflict elimination operator $\Theta$ is a unary operator.

  \item Case unless operator $\triangleleft$. It can be proved similarly to the case of parallel operator $\parallel$, we omit it. Note that, a conflict relation is defined between two single events.

\end{itemize}

(2) The cases of FR step bisimulation $\sim_s^{fr}$, FR hp-bisimulation $\sim_{hp}^{fr}$ and FR hhp-bisimulation $\sim_{hhp}^{fr}$ can be proven similarly, we omit them.
\end{proof}

\subsection{Axiom System of Parallelism}

\begin{definition}[Basic terms of APRTC]\label{BTAPTC}
The set of basic terms of APRTC, $\mathcal{B}(APRTC)$, is inductively defined as follows:
\begin{enumerate}
  \item $\mathbb{E}\subset\mathcal{B}(APRTC)$;
  \item if $e\in \mathbb{E}, t\in\mathcal{B}(APRTC)$ then $e\cdot t\in\mathcal{B}(APRTC)$;
  \item if $e[m]\in \mathbb{E}, t\in\mathcal{B}(APRTC)$ then $t\cdot e[m]\in\mathcal{B}(APRTC)$;
  \item if $t,s\in\mathcal{B}(APRTC)$ then $t+ s\in\mathcal{B}(APRTC)$;
  \item if $t,s\in\mathcal{B}(APRTC)$ then $t\parallel s\in\mathcal{B}(APRTC)$.
\end{enumerate}
\end{definition}

We design the axioms of parallelism in Table \ref{AxiomsForParallelism}, including algebraic laws for parallel operator $\parallel$, communication operator $\mid$, conflict elimination operator $\Theta$ and unless operator $\triangleleft$, and also the whole parallel operator $\between$. Since the communication between two communicating events in different parallel branches may cause deadlock (a state of inactivity), which is caused by mismatch of two communicating events or the imperfectness of the communication channel. We introduce a new constant $\delta$ to denote the deadlock, and let the atomic event $e\in \mathbb{E}\cup\{\delta\}$.

\begin{center}
    \begin{table}
        \begin{tabular}{@{}ll@{}}
            \hline No. &Axiom\\
            $A6$ & $x+ \delta = x$\\
            $A7$ & $\delta\cdot x =\delta$(\textrm{Std(x)})\\
            $RA7$ & $x\cdot \delta =\delta$(\textrm{NStd(x)})\\
            $P1$ & $x\between y = x\parallel y + x\mid y$\\
            $P2$ & $x\parallel y = y \parallel x$\\
            $P3$ & $(x\parallel y)\parallel z = x\parallel (y\parallel z)$\\
            $P4$ & $e_1\parallel (e_2\cdot y) = (e_1\parallel e_2)\cdot y$\\
            $RP4$ & $e_1[m]\parallel (y\cdot e_2[m]) = y\cdot(e_1[m]\parallel e_2[m])$\\
            $P5$ & $(e_1\cdot x)\parallel e_2 = (e_1\parallel e_2)\cdot x$\\
            $RP5$ & $(x\cdot e_1[m])\parallel e_2[m] = x\cdot(e_1[m]\parallel e_2[m])$\\
            $P6$ & $(e_1\cdot x)\parallel (e_2\cdot y) = (e_1\parallel e_2)\cdot (x\between y)$\\
            $RP6$ & $(x\cdot e_1[m])\parallel (y\cdot e_2[m]) = (x\between y)\cdot(e_1[m]\parallel e_2[m])$\\
            $P7$ & $(x+ y)\parallel z = (x\parallel z)+ (y\parallel z)$\\
            $P8$ & $x\parallel (y+ z) = (x\parallel y)+ (x\parallel z)$\\
            $P9$ & $\delta\parallel x = \delta$\\
            $P10$ & $x\parallel \delta = \delta$\\
            $C11$ & $e_1\mid e_2 = \gamma(e_1,e_2)$\\
            $RC11$ & $e_1[m]\mid e_2[m] = \gamma(e_1,e_2)[m]$\\
            $C12$ & $e_1\mid (e_2\cdot y) = \gamma(e_1,e_2)\cdot y$\\
            $RC12$ & $e_1[m]\mid (y \cdot e_2[m]) =y\cdot \gamma(e_1,e_2)[m]$\\
            $C13$ & $(e_1\cdot x)\mid e_2 = \gamma(e_1,e_2)\cdot x$\\
            $RC13$ & $(x \cdot e_1[m])\mid e_2[m] =x\cdot \gamma(e_1,e_2)[m]$\\
            $C14$ & $(e_1\cdot x)\mid (e_2\cdot y) = \gamma(e_1,e_2)\cdot (x\between y)$\\
            $RC14$ & $(x \cdot e_1[m])\mid (y \cdot e_2[m]) =(x\between y)\cdot \gamma(e_1,e_2)[m]$\\
            $C15$ & $(x+ y)\mid z = (x\mid z) + (y\mid z)$\\
            $C16$ & $x\mid (y+ z) = (x\mid y)+ (x\mid z)$\\
            $C17$ & $\delta\mid x = \delta$\\
            $C18$ & $x\mid\delta = \delta$\\
            $CE19$ & $\Theta(e) = e$\\
            $RCE19$ & $\Theta(e[m]) = e[m]$\\
            $CE20$ & $\Theta(\delta) = \delta$\\
            $CE21$ & $\Theta(x+ y) = \Theta(x)\triangleleft y + \Theta(y)\triangleleft x$\\
            $CE22$ & $\Theta(x\cdot y)=\Theta(x)\cdot\Theta(y)$\\
            $CE23$ & $\Theta(x\parallel y) = ((\Theta(x)\triangleleft y)\parallel y)+ ((\Theta(y)\triangleleft x)\parallel x)$\\
            $CE24$ & $\Theta(x\mid y) = ((\Theta(x)\triangleleft y)\mid y)+ ((\Theta(y)\triangleleft x)\mid x)$\\
            $U25$ & $(\sharp(e_1,e_2))\quad e_1\triangleleft e_2 = \tau$\\
            $RU25$ & $(\sharp(e_1[m],e_2[n]))\quad e_1[m]\triangleleft e_2[n] = \tau$\\
            $U26$ & $(\sharp(e_1,e_2),e_2\leq e_3)\quad e_1\triangleleft e_3 = e_1$\\
            $RU26$ & $(\sharp(e_1[m],e_2[n]),e_2[n]\geq e_3[l])\quad e_1[m]\triangleleft e_3[l] = e_1[m]$\\
            $U27$ & $(\sharp(e_1,e_2),e_2\leq e_3)\quad e3\triangleleft e_1 = \tau$\\
            $RU27$ & $(\sharp(e_1[m],e_2[n]),e_2[n]\geq e_3[l])\quad e3[l]\triangleleft e_1[m] = \tau$\\
            $U28$ & $e\triangleleft \delta = e$\\
            $U29$ & $\delta \triangleleft e = \delta$\\
            $U30$ & $(x+ y)\triangleleft z = (x\triangleleft z)+ (y\triangleleft z)$\\
            $U31$ & $(x\cdot y)\triangleleft z = (x\triangleleft z)\cdot (y\triangleleft z)$\\
            $U32$ & $(x\parallel y)\triangleleft z = (x\triangleleft z)\parallel (y\triangleleft z)$\\
            $U33$ & $(x\mid y)\triangleleft z = (x\triangleleft z)\mid (y\triangleleft z)$\\
            $U34$ & $x\triangleleft (y+ z) = (x\triangleleft y)\triangleleft z$\\
            $U35$ & $x\triangleleft (y\cdot z)=(x\triangleleft y)\triangleleft z$\\
            $U36$ & $x\triangleleft (y\parallel z) = (x\triangleleft y)\triangleleft z$\\
            $U37$ & $x\triangleleft (y\mid z) = (x\triangleleft y)\triangleleft z$\\
        \end{tabular}
        \caption{Axioms of parallelism}
        \label{AxiomsForParallelism}
    \end{table}
\end{center}

Based on the definition of basic terms for APRTC (see Definition \ref{BTAPTC}) and axioms of parallelism (see Table \ref{AxiomsForParallelism}), we can prove the elimination theorem of parallelism.

\begin{theorem}[Elimination theorem of FR parallelism]\label{ETParallelism}
Let $p$ be a closed APRTC term. Then there is a basic APRTC term $q$ such that $APRTC\vdash p=q$.
\end{theorem}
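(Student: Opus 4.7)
The plan is to mimic the proof of the BARTC elimination theorem (Theorem \ref{ETBARTC}), lifting it to the richer signature of APRTC. First, I would orient the axioms of Table \ref{AxiomsForParallelism} from left to right (together with the BARTC rewrite rules $RA3$, $RA41$, $RA42$, $RA5$ from Table \ref{TRSForBARTC}) to form a term rewriting system for APRTC. The auxiliary operators $\between$, $\mid$, $\Theta$, $\triangleleft$ and the constant $\delta$ are all intended to be eliminated, so I would then declare a well-founded ordering on the signature that reflects this intention, roughly
\[
\between \;>\; \Theta \;>\; \triangleleft \;>\; \mid \;>\; \parallel \;>\; \cdot \;>\; +,
\]
with $\cdot$ carrying the lexicographic status used already in the BARTC proof. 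With this ordering in place, one checks rule by rule that $s >_{lpo} t$ for every rewrite rule $s \rightarrow t$ and invokes Theorem \ref{SN} to conclude strong normalization.

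The next step is the usual normal-form analysis: assuming $p$ is a normal form that is not a basic term, pick a minimal non-basic subterm $p'$ and case-split on its head symbol. For $p' \equiv \delta$, $p' \equiv e$ and the cases $p'\equiv p_1 + p_2$ and $p'\equiv p_1\cdot p_2$ one reuses the BARTC argument verbatim, so these reduce to the new operator cases. For $p' \equiv p_1\parallel p_2$, the subterms $p_1,p_2$ are basic terms by minimality, so one inspects their top-level shape and invokes one of $P4$--$P10$ (or their reverse variants $RP4$--$RP6$); for $p' \equiv p_1 \mid p_2$ one uses $C11$--$C18$ plus the reverse rules $RC11$--$RC14$; for $p'\equiv \Theta(p_1)$ one uses $CE19$--$CE24$ (plus $RCE19$); for $p'\equiv p_1 \triangleleft p_2$ one uses $U25$--$U37$ plus the reverse $RU$ variants; and for $p'\equiv p_1\between p_2$ one applies $P1$ and reduces to the parallel/communication cases. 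In each situation the structure of $p_1,p_2$ as basic terms guarantees that some rewrite rule applies, contradicting the assumption that $p'$ is in normal form.

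The main technical obstacle is verifying $s>_{lpo} t$ uniformly across the whole list of axioms, especially the distributive conflict-elimination rule $CE23$, namely $\Theta(x\parallel y) \rightarrow ((\Theta(x)\triangleleft y)\parallel y) + ((\Theta(y)\triangleleft x)\parallel x)$, and the $U32$--$U37$ rules that push $\triangleleft$ through the other operators; these rules duplicate variables or create nested compound terms, so the ordering on the signature must be chosen so that the head symbols that are eliminated ($\between$, $\Theta$, $\mid$, $\triangleleft$) strictly dominate anything that appears on the right-hand side. A secondary subtlety is the \emph{Std} and \emph{NStd} side conditions on $A7$ and $RA7$: they restrict the applicability of these rules but do not threaten termination, and in the normal-form case analysis one simply observes that whenever $\delta$ appears as a subterm of a closed basic form, one of $A6$, $A7$, $RA7$, $P9$, $P10$, $C17$, $C18$, $CE20$, $U28$, or $U29$ is applicable. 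Once strong normalization is established and the case analysis is complete, the theorem follows exactly as in the proof of Theorem \ref{ETBARTC}.
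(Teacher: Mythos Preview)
Your proposal is correct and follows essentially the same two-step strategy as the paper: orient the axioms of Table~\ref{AxiomsForParallelism} (together with the BARTC rules) into a term rewriting system, argue strong normalization via an lpo ordering and Theorem~\ref{SN}, and then perform a minimal-non-basic-subterm case analysis to show that every normal form is a basic APRTC term. The only noteworthy difference is that the paper states the precedence merely as $\parallel > \cdot > +$ without explicitly placing $\between$, $\mid$, $\Theta$, $\triangleleft$ in the order, whereas you spell out a full precedence and flag the duplicating rules ($CE23$, $U32$--$U37$) as the places where the lpo check needs care; your version is more explicit on exactly the points the paper leaves implicit, but the underlying argument is the same.
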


\begin{proof}
(1) Firstly, suppose that the following ordering on the signature of APRTC is defined: $\parallel > \cdot > +$ and the symbol $\parallel$ is given the lexicographical status for the first argument, then for each rewrite rule $p\rightarrow q$ in Table \ref{TRSForAPRTC} relation $p>_{lpo} q$ can easily be proved. We obtain that the term rewrite system shown in Table \ref{TRSForAPRTC} is strongly normalizing, for it has finitely many rewriting rules, and $>$ is a well-founded ordering on the signature of APRTC, and if $s>_{lpo} t$, for each rewriting rule $s\rightarrow t$ is in Table \ref{TRSForAPRTC} (see Theorem \ref{SN}).

\begin{center}
    \begin{table}
        \begin{tabular}{@{}ll@{}}
            \hline No. &Rewriting Rule\\
            $RA6$ & $x+ \delta \rightarrow x$\\
            $RA7$ & $\delta\cdot x \rightarrow\delta$\\
            $RRA7$ & $x \cdot \delta\rightarrow\delta$\\
            $RP1$ & $x\between y \rightarrow x\parallel y + x\mid y$\\
            $RP2$ & $x\parallel y \rightarrow y \parallel x$\\
            $RP3$ & $(x\parallel y)\parallel z \rightarrow x\parallel (y\parallel z)$\\
            $RP4$ & $e_1\parallel (e_2\cdot y) \rightarrow (e_1\parallel e_2)\cdot y$\\
            $RRP4$ & $e_1[m]\parallel (y\cdot e_2[m]) \rightarrow y\cdot(e_1[m]\parallel e_2[m])$\\
            $RP5$ & $(e_1\cdot x)\parallel e_2 \rightarrow (e_1\parallel e_2)\cdot x$\\
            $RRP5$ & $(x\cdot e_1[m])\parallel e_2[m] \rightarrow x\cdot(e_1[m]\parallel e_2[m])$\\
            $RP6$ & $(e_1\cdot x)\parallel (e_2\cdot y) \rightarrow (e_1\parallel e_2)\cdot (x\between y)$\\
            $RP6$ & $(x\cdot e_1[m])\parallel (y\cdot e_2[m]) \rightarrow (x\between y)\cdot(e_1[m]\parallel e_2[m])$\\
            $RP7$ & $(x+ y)\parallel z \rightarrow (x\parallel z)+ (y\parallel z)$\\
            $RP8$ & $x\parallel (y+ z) \rightarrow (x\parallel y)+ (x\parallel z)$\\
            $RP9$ & $\delta\parallel x \rightarrow \delta$\\
            $RP10$ & $x\parallel \delta \rightarrow \delta$\\
            $RC11$ & $e_1\mid e_2 \rightarrow \gamma(e_1,e_2)$\\
            $RRC11$ & $e_1[m]\mid e_2[m] \rightarrow \gamma(e_1,e_2)[m]$\\
            $RC12$ & $e_1\mid (e_2\cdot y) \rightarrow \gamma(e_1,e_2)\cdot y$\\
            $RRC12$ & $e_1[m]\mid (y \cdot e_2[m]) \rightarrow y\cdot \gamma(e_1,e_2)[m]$\\
            $RC13$ & $(e_1\cdot x)\mid e_2 \rightarrow \gamma(e_1,e_2)\cdot x$\\
            $RRC13$ & $(x \cdot e_1[m])\mid e_2[m] \rightarrow x\cdot \gamma(e_1,e_2)[m]$\\
            $RC14$ & $(e_1\cdot x)\mid (e_2\cdot y) \rightarrow \gamma(e_1,e_2)\cdot (x\between y)$\\
            $RRC14$ & $(x \cdot e_1[m])\mid (y \cdot e_2[m]) \rightarrow(x\between y)\cdot \gamma(e_1,e_2)[m]$\\
            $RC15$ & $(x+ y)\mid z \rightarrow (x\mid z) + (y\mid z)$\\
            $RC16$ & $x\mid (y+ z) \rightarrow (x\mid y)+ (x\mid z)$\\
            $RC17$ & $\delta\mid x \rightarrow \delta$\\
            $RC18$ & $x\mid\delta \rightarrow \delta$\\
            $RCE19$ & $\Theta(e) \rightarrow e$\\
            $RRCE19$ & $\Theta(e[m]) \rightarrow e[m]$\\
            $RCE20$ & $\Theta(\delta) \rightarrow \delta$\\
            $RCE21$ & $\Theta(x+ y) \rightarrow \Theta(x)\triangleleft y + \Theta(y)\triangleleft x$\\
            $RCE22$ & $\Theta(x\cdot y)\rightarrow\Theta(x)\cdot\Theta(y)$\\
            $RCE23$ & $\Theta(x\parallel y) \rightarrow ((\Theta(x)\triangleleft y)\parallel y)+ ((\Theta(y)\triangleleft x)\parallel x)$\\
            $RCE24$ & $\Theta(x\mid y) \rightarrow ((\Theta(x)\triangleleft y)\mid y)+ ((\Theta(y)\triangleleft x)\mid x)$\\
            $RU25$ & $(\sharp(e_1,e_2))\quad e_1\triangleleft e_2 \rightarrow \tau$\\
            $RRU25$ & $(\sharp(e_1[m],e_2[n]))\quad e_1[m]\triangleleft e_2[n] \rightarrow \tau$\\
            $RU26$ & $(\sharp(e_1,e_2),e_2\leq e_3)\quad e_1\triangleleft e_3 \rightarrow e_1$\\
            $RRU26$ & $(\sharp(e_1[m],e_2[n]),e_2[n]\geq e_3[l])\quad e_1[m]\triangleleft e_3[l] \rightarrow e_1[m]$\\
            $RU27$ & $(\sharp(e_1,e_2),e_2\leq e_3)\quad e3\triangleleft e_1 \rightarrow \tau$\\
            $RRU27$ & $(\sharp(e_1[m],e_2[n]),e_2[n]\geq e_3[l])\quad e3[l]\triangleleft e_1[m] \rightarrow \tau$\\
            $RU28$ & $e\triangleleft \delta \rightarrow e$\\
            $RU29$ & $\delta \triangleleft e \rightarrow \delta$\\
            $RU30$ & $(x+ y)\triangleleft z \rightarrow (x\triangleleft z)+ (y\triangleleft z)$\\
            $RU31$ & $(x\cdot y)\triangleleft z \rightarrow (x\triangleleft z)\cdot (y\triangleleft z)$\\
            $RU32$ & $(x\parallel y)\triangleleft z \rightarrow (x\triangleleft z)\parallel (y\triangleleft z)$\\
            $RU33$ & $(x\mid y)\triangleleft z \rightarrow (x\triangleleft z)\mid (y\triangleleft z)$\\
            $RU34$ & $x\triangleleft (y+ z) \rightarrow (x\triangleleft y)\triangleleft z$\\
            $RU35$ & $x\triangleleft (y\cdot z)\rightarrow(x\triangleleft y)\triangleleft z$\\
            $RU36$ & $x\triangleleft (y\parallel z) \rightarrow (x\triangleleft y)\triangleleft z$\\
            $RU37$ & $x\triangleleft (y\mid z) \rightarrow (x\triangleleft y)\triangleleft z$\\
        \end{tabular}
        \caption{Term rewrite system of APRTC}
        \label{TRSForAPRTC}
    \end{table}
\end{center}

(2) Then we prove that the normal forms of closed APRTC terms are basic APRTC terms.

Suppose that $p$ is a normal form of some closed APRTC term and suppose that $p$ is not a basic APRTC term. Let $p'$ denote the smallest sub-term of $p$ which is not a basic APRTC term. It implies that each sub-term of $p'$ is a basic APRTC term. Then we prove that $p$ is not a term in normal form. It is sufficient to induct on the structure of $p'$:

\begin{itemize}
  \item Case $p'\equiv e$ or $e[m], e\in \mathbb{E}$. $p'$ is a basic APRTC term, which contradicts the assumption that $p'$ is not a basic APRTC term, so this case should not occur.
  \item Case $p'\equiv p_1\cdot p_2$. By induction on the structure of the basic APRTC term $p_1$:
      \begin{itemize}
        \item Subcase $p_1\in \mathbb{E}$. $p'$ would be a basic APRTC term, which contradicts the assumption that $p'$ is not a basic APRTC term;
        \item Subcase $p_1\equiv e\cdot p_1'$. $RR5$ rewriting rule in Table \ref{TRSForBRATC} can be applied. So $p$ is not a normal form;
        \item Subcase $p_1\equiv p_1'\cdot e[m]$. $RA5$ rewriting rule in Table \ref{TRSForBRATC} can be applied. So $p$ is not a normal form;
        \item Subcase $p_1\equiv p_1'+ p_1''$. $RA4$ rewriting rule in Table \ref{TRSForBRATC} can be applied. So $p$ is not a normal form;
        \item Subcase $p_1\equiv p_1'\parallel p_1''$. $p'$ would be a basic APRTC term, which contradicts the assumption that $p'$ is not a basic APRTC term;
        \item Subcase $p_1\equiv p_1'\mid p_1''$. $RC11$ and $RRC11$ rewrite rule in Table \ref{TRSForAPRTC} can be applied. So $p$ is not a normal form;
        \item Subcase $p_1\equiv \Theta(p_1')$. $RCE19$, $RRCE19$ and $RCE20$ rewrite rules in Table \ref{TRSForAPRTC} can be applied. So $p$ is not a normal form.
      \end{itemize}
  \item Case $p'\equiv p_1+ p_2$. By induction on the structure of the basic APRTC terms both $p_1$ and $p_2$, all subcases will lead to that $p'$ would be a basic APRTC term, which contradicts the assumption that $p'$ is not a basic APRTC term.
  \item Case $p'\equiv p_1\parallel p_2$. By induction on the structure of the basic APRTC terms both $p_1$ and $p_2$, all subcases will lead to that $p'$ would be a basic APRTC term, which contradicts the assumption that $p'$ is not a basic APRTC term.
  \item Case $p'\equiv p_1\mid p_2$. By induction on the structure of the basic APRTC terms both $p_1$ and $p_2$, all subcases will lead to that $p'$ would be a basic APRTC term, which contradicts the assumption that $p'$ is not a basic APRTC term.
  \item Case $p'\equiv \Theta(p_1)$. By induction on the structure of the basic APRTC term $p_1$, $RCE19-RCE24$ rewrite rules in Table \ref{TRSForAPRTC} can be applied. So $p$ is not a normal form.
  \item Case $p'\equiv p_1\triangleleft p_2$. By induction on the structure of the basic APRTC terms both $p_1$ and $p_2$, all subcases will lead to that $p'$ would be a basic APRTC term, which contradicts the assumption that $p'$ is not a basic APRTC term.
\end{itemize}
\end{proof}

\subsection{Structured Operational Semantics of Parallelism}

\begin{theorem}[Generalization of the algebra for parallelism with respect to BARTC]
The algebra for parallelism is a generalization of BARTC.
\end{theorem}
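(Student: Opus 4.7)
The plan is to apply the conservative extension machinery developed in Subsection~\ref{tcbs}, in particular Theorem~\ref{TCE}, so that the generalization claim reduces to verifying two syntactic conditions on the TSSs rather than a direct bisimulation argument. A ``generalization'' here should mean that the TSS of parallelism, obtained by adjoining the rules of Tables~\ref{TRForParallel}, \ref{RTRForParallel}, \ref{TRForCommunication}, \ref{RTRForCommunication}, \ref{TRForConflict} and \ref{RTRForConflict} (together with the constant $\delta$) to the TSS of BARTC given in Tables~\ref{SETRForBARTC}, \ref{RSETRForBARTC}, \ref{PTRForBARTC}, \ref{RPTRForBARTC}, produces exactly the same transitions $t\xrightarrow{a}t'$ and $t\xtworightarrow{a[m]}t'$ for every closed term $t$ built solely from the BARTC signature.

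First, I would check that the BARTC TSS is source-dependent. Inspecting Tables~\ref{SETRForBARTC}--\ref{RPTRForBARTC}, every rule has the form $x\xrightarrow{e}x'/x+y\xrightarrow{e}x'$ (and analogues for $\cdot$ and for the reverse direction); in each case all variables appearing anywhere in the conclusion already occur in its source, and the variables of any premise source are the variables of the rule's source, so by induction the defining clause for source-dependency is satisfied. Second, I would check the freshness hypothesis of Theorem~\ref{TCE} for every rule added by APRTC. The new operators are $\parallel$, $\between$, $\mid$, $\Theta$ and $\triangleleft$; in each rule the source of the conclusion is a term whose outermost symbol belongs to $\Sigma_1\setminus\Sigma_0$ (hence the source is fresh), so the alternative disjunct of Theorem~\ref{TCE}(2) is not even needed. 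The auxiliary constant $\delta$ does not itself introduce new transitions, only blocking behaviour, so it contributes no rule that could fire on a pure BARTC term. One also checks that both TSSs are positive after reduction, which is immediate because no premise is negative apart from the $\nrightarrow$ conditions on $\triangleleft$, and those conditions only appear on rules whose source already contains $\triangleleft$ and hence is fresh.

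With these conditions in hand, Theorem~\ref{TCE} yields that the combined TSS is a conservative extension of BARTC's, which is exactly the statement that the algebra for parallelism is a generalization of BARTC: no closed BARTC term gains or loses any forward or reverse pomset transition when viewed inside APRTC. The main obstacle I expect is bookkeeping rather than conceptual, namely the careful verification of the freshness condition for the unless-rules in Table~\ref{TRForConflict} and Table~\ref{RTRForConflict}, whose premises involve negative literals $y\nrightarrow^{e_2}$ and $y\xntworightarrow{e_2[n]}$; one has to be sure these negative premises do not break the ``positive after reduction'' hypothesis, which will follow because the premises constrain only the behaviour of $y$ inside a source whose head symbol $\triangleleft$ is fresh, so no BARTC-typed transition is ever filtered out. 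Once this is confirmed, the conclusion follows directly from Theorem~\ref{TCE}.
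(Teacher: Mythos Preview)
Your proposal is correct and follows essentially the same approach as the paper: both arguments reduce the claim to an application of the conservative extension machinery (Theorem~\ref{TCE}) by checking that the BARTC rules are source-dependent and that every new rule has a source whose outermost operator is one of the fresh symbols $\between$, $\parallel$, $\mid$, $\Theta$, $\triangleleft$. You are in fact more careful than the paper in explicitly addressing the ``positive after reduction'' hypothesis and the negative premises of the $\triangleleft$-rules, points the paper's proof leaves implicit.
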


\begin{proof}
It follows from the following three facts.

\begin{enumerate}
  \item The transition rules of BARTC in section \ref{batc} are all source-dependent;
  \item The sources of the transition rules for the algebra for parallelism contain an occurrence of $\between$, or $\parallel$, or $\mid$, or $\Theta$, or $\triangleleft$;
  \item The transition rules of APRTC are all source-dependent.
\end{enumerate}

So, the algebra for parallelism is a generalization of BARTC, that is, BARTC is an embedding of the algebra for parallelism, as desired.
\end{proof}

\begin{theorem}[Soundness of parallelism modulo FR step bisimulation equivalence]\label{SPSBE}
Let $x$ and $y$ be APRTC terms. If $APRTC\vdash x=y$, then $x\sim_{s}^{fr} y$.
\end{theorem}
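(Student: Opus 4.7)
The plan is to exploit two facts already in hand: FR step bisimulation $\sim_s^{fr}$ is an equivalence, and by the congruence theorem of APRTC it is a congruence with respect to all the operators $\cdot, +, \parallel, \mid, \Theta, \triangleleft, \between$. Together these reduce the claim to a per-axiom check: for every axiom $x = y$ in Table \ref{AxiomsForParallelism}, it suffices to exhibit an FR step bisimulation $R$ containing $(x,y)$ and all derived pairs, and to verify that every forward step and every reverse step on one side is matched on the other using the transition rules of Tables \ref{TRForParallel}, \ref{RTRForParallel}, \ref{TRForCommunication}, \ref{RTRForCommunication}, \ref{TRForConflict}, \ref{RTRForConflict}, remembering that for step bisimulation the labelling sets must consist of pairwise concurrent events.

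First I would dispose of the deadlock axioms $A6$, $A7$, $RA7$: they follow directly from the absence of any forward or reverse transition rule with source $\delta$, together with the rules for $+$ and $\cdot$ already treated for BARTC. Next I would handle the pure parallelism axioms $P1$-$P10$. The key is $P1$, $x\between y = x\parallel y + x\mid y$, which follows by inspecting the four forward and four reverse rules of $\parallel$ and $\mid$ and observing that any step produced by $\between$ (in the auxiliary derivations already used in the transition rules for $\parallel$ and $\mid$) is matched either by a pure parallel step or by a communication step. Commutativity $P2$ and associativity $P3$ are symmetric inspections of the same rule schemes; the expansion laws $P4$-$P6$ and their reverse forms $RP4$-$RP6$ are verified by reading off the forward and reverse two-event steps, using that $e_1$ and $e_2$ are pairwise concurrent in the step label; and $P7$-$P10$ follow by distributivity or absorption, reusing the argument pattern from $A41$, $A42$ in the BARTC soundness proof.

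The communication axioms $C11$-$C18$ (together with the reverse duals $RC11$-$RC14$) proceed identically with $\parallel$-rules replaced by $\mid$-rules, using $\gamma(e_1,e_2)$ in place of the multiset $\{e_1,e_2\}$. The unless axioms $U25$-$U37$ (and the reverse variants $RU25$-$RU27$) are dispatched by case analysis on the conflict predicate $\sharp$ and the causal order $\leq$ (respectively $\geq$ for reverse transitions), appealing directly to the premises of the rules in Tables \ref{TRForConflict}, \ref{RTRForConflict}; distributivity of $\triangleleft$ over $+, \cdot, \parallel, \mid$ in $U30$-$U33$ and left-nesting in $U34$-$U37$ are routine.

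The main obstacle I expect is the conflict elimination expansion laws $CE23$ and $CE24$, namely
\[
\Theta(x\parallel y) = ((\Theta(x)\triangleleft y)\parallel y) + ((\Theta(y)\triangleleft x)\parallel x)
\]
and its $\mid$-analogue. Here one must case-split on which side originates a given event and on whether that event stands in conflict with some event of the other side; the matching transition on the right-hand side is then provided by exactly one of the two $\triangleleft$-summands, and one must verify closure both under forward steps and under reverse steps of histories $e[m]$, invoking $U25$-$U37$ to rewrite residuals and the induction hypothesis for subterms. Once $CE23$ and $CE24$ are settled, the simpler conflict elimination axioms $CE19$-$CE22$ follow by the same template with fewer cases, completing the soundness proof.
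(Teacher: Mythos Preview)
Your proposal is correct and follows essentially the same approach as the paper: reduce to a per-axiom check by invoking that $\sim_s^{fr}$ is an equivalence and a congruence for all APRTC operators, then verify each axiom of Table~\ref{AxiomsForParallelism} using the forward and reverse transition rules. The paper's own proof is in fact far terser than yours---it states only the reduction to per-axiom soundness and then defers the verification with ``similar to the proof of soundness of BARTC modulo FR step bisimulation equivalence, we omit it''---so your outline of the $A6$--$A7$, $P$-, $C$-, $CE$-, and $U$-blocks, and your identification of $CE23$/$CE24$ as the least routine case, is more detailed than what the paper actually supplies.
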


\begin{proof}
Since FR step bisimulation $\sim_s^{fr}$ is both an equivalent and a congruent relation with respect to the operators $\between$, $\parallel$, $\mid$, $\Theta$ and $\triangleleft$, we only need to check if each axiom in Table \ref{AxiomsForParallelism} is sound modulo FR step bisimulation equivalence.

The proof is similar to the proof of soundness of BARTC modulo FR step bisimulation equivalence, we omit it.
\end{proof}

\begin{theorem}[Completeness of parallelism modulo FR step bisimulation equivalence]\label{CPSBE}
Let $p$ and $q$ be closed APRTC terms, if $p\sim_{s}^{fr} q$ then $p=q$.
\end{theorem}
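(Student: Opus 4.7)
The plan is to mirror the strategy used for Theorem \ref{CBARTCPBE} (completeness of BARTC), lifting it to the richer signature of APRTC. First I would invoke the elimination theorem for parallelism (Theorem \ref{ETParallelism}) to reduce the problem to closed basic APRTC terms: for each closed APRTC term $p$ there is a basic term $p'$ with $APRTC\vdash p=p'$, so by soundness (Theorem \ref{SPSBE}) and transitivity of $\sim_s^{fr}$ it suffices to show that if two closed basic APRTC terms are FR step bisimilar, then they are provably equal.

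Next I would put basic terms into a canonical normal form. As in the BARTC case, each equivalence class modulo AC of $+$ (axioms $A1$ and $A2$) admits a representation $s_1 + \cdots + s_k$ whose summands are either atomic forward events $e$, atomic reverse histories $e[m]$, sequential products $t_1\cdot t_2$, or parallel products $t_1\parallel t_2$; note that $\parallel$ must be allowed at the top level of a summand because $\mid$, $\Theta$, $\triangleleft$, and $\between$ have been eliminated by the rewrite system of Table \ref{TRSForAPRTC}, but $\parallel$ has not. I would then prove by induction on the total size of two normal forms $n$ and $n'$ that $n\sim_s^{fr} n'$ implies $n=_{AC} n'$: for each summand of $n$ I would read off a forward or reverse step transition, match it in $n'$ using the step bisimulation, and use the induction hypothesis on the residuals.

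The case analysis on summands proceeds as in the BARTC proof for the four basic cases ($e$, $e[m]$, $t_1\cdot t_2$, and the symmetric reverse of sequential composition), so the genuinely new work is the parallel summand $t_1\parallel t_2$. Here I would use the step transition rules for $\parallel$ from Table \ref{TRForParallel} and Table \ref{RTRForParallel}: a summand $t_1\parallel t_2$ induces a step transition labelled by the pairwise-concurrent set $\{X_1,X_2\}$ extracted from the initial multisets of $t_1$ and $t_2$; bisimilarity forces $n'$ to provide a matching step, and by minimality of the summand and by $X_1\sim Y_1$, $X_2\sim Y_2$ the corresponding summand of $n'$ must also be a parallel product whose components are FR step bisimilar to $t_1,t_2$ after stripping the performed events. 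The induction hypothesis then gives $t_i =_{AC} t_i'$, so the summand is matched syntactically.

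The main obstacle, as just indicated, is the parallel summand: unlike the BARTC case, a single step can now fire synchronously from \emph{both} factors, and reverse steps can arise symmetrically from either. I would handle this by a careful induction on the measure (size of $n$) $+$ (size of $n'$) that strictly decreases in every residual obtained after a forward or reverse step, using axioms $P2$--$P8$ (and their reverse counterparts $RP4$--$RP6$) to normalise the residual of a parallel summand back into the normal-form shape before applying the induction hypothesis. Once $n=_{AC}n'$ is established for normal forms, the conclusion follows: for closed basic $s,t$ with $s\sim_s^{fr} t$ choose normal forms $n,n'$ with $APRTC\vdash s=n$ and $APRTC\vdash t=n'$; then soundness (Theorem \ref{SPSBE}) yields $n\sim_s^{fr} n'$, hence $n=_{AC}n'$, hence $APRTC\vdash s=t$, as desired.
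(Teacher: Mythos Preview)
Your overall strategy matches the paper's---reduce to basic terms via Theorem~\ref{ETParallelism}, define normal forms, prove by induction on size that $n\sim_s^{fr} n'$ implies $n=_{AC}n'$, then conclude via soundness---but your choice of normal form is where the plan diverges and becomes problematic.

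The paper works modulo AC of \emph{both} $+$ (axioms $A1$, $A2$) \emph{and} $\parallel$ (axioms $P2$, $P3$), and its normal form is strictly layered: every summand is either atomic or a $\cdot$-chain $t_1\cdot\cdots\cdot t_m$, where each $t_j$ is either atomic or a parallel $u_1\parallel\cdots\parallel u_n$ of \emph{atomic events only}. With $\parallel$ confined to atoms, a forward step from a summand is precisely the set $\{e_1,\ldots,e_n\}$ of atoms in its leading $\parallel$-block, and FR step bisimilarity pins down the matching block in $n'$ up to AC of $\parallel$; the residual is a strictly smaller normal form and the induction closes. There is no separate ``parallel summand'' case---parallel is absorbed into the sequential case via subcases $t_1\equiv e_1\parallel\cdots\parallel e_n$ and $t_2\equiv e_1[m]\parallel\cdots\parallel e_n[m]$.

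Your normal form instead allows a top-level summand $t_1\parallel t_2$ with arbitrary basic $t_1,t_2$, and you never invoke AC of $\parallel$. This creates two difficulties. First, a step $\{X_1,X_2\}$ from such a summand does not uniquely determine a matching $t_1'\parallel t_2'$ in $n'$: the same concurrent multiset can arise from structurally different parallel decompositions, so ``the corresponding summand of $n'$ must also be a parallel product'' is not justified. Second, by the fourth rule in Table~\ref{TRForParallel} the residual is $x_1'\between x_2'$, which is neither a basic term nor a normal form; renormalising it through $P1$ and the $\mid$-elimination rules can blow up the size measure, so the induction you propose on (size of $n$)${}+{}$(size of $n'$) need not terminate. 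The fix is exactly the paper's: use $P4$--$P8$ (and $RP4$--$RP6$) to push $\parallel$ down to atomic arguments \emph{before} starting the induction, and include AC of $\parallel$ in $=_{AC}$.
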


\begin{proof}
Firstly, by the elimination theorem of APRTC (see Theorem \ref{ETParallelism}), we know that for each closed APRTC term $p$, there exists a closed basic APRTC term $p'$, such that $APRTC\vdash p=p'$, so, we only need to consider closed basic APRTC terms.

The basic terms (see Definition \ref{BTAPRTC}) modulo associativity and commutativity (AC) of conflict $+$ (defined by axioms $A1$ and $A2$ in Table \ref{AxiomsForBARTC}) and associativity and commutativity (AC) of parallel $\parallel$ (defined by axioms $P2$ and $P3$ in Table \ref{AxiomsForParallelism}), and these equivalences is denoted by $=_{AC}$. Then, each equivalence class $s$ modulo AC of $+$ and $\parallel$ has the following normal form

$$s_1+\cdots+ s_k$$

with each $s_i$ either an atomic event or of the form

$$t_1\cdot\cdots\cdot t_m$$

with each $t_j$ either an atomic event or of the form

$$u_1\parallel\cdots\parallel u_n$$

with each $u_l$ an atomic event, and each $s_i$ is called the summand of $s$.

Now, we prove that for normal forms $n$ and $n'$, if $n\sim_{s}^{fr} n'$ then $n=_{AC}n'$. It is sufficient to induct on the sizes of $n$ and $n'$.

\begin{itemize}
  \item Consider a summand $e$ of $n$. Then $n\xrightarrow{e}e[m]$, so $n\sim_s^{fr} n'$ implies $n'\xrightarrow{e}e[m]$, meaning that $n'$ also contains the summand $e$.
  \item Consider a summand $e[m]$ of $n$. Then $n\xtworightarrow{e[m]}e$, so $n\sim_s^{fr} n'$ implies $n'\xtworightarrow{e[m]}e$, meaning that $n'$ also contains the summand $e[m]$.
  \item Consider a summand $t_1\cdot t_2$ of $n$,
  \begin{itemize}
    \item if $t_1\equiv e'$, then $n\xrightarrow{e'}e'[m]\cdot t_2$, so $n\sim_s^{fr} n'$ implies $n'\xrightarrow{e'}e'[m]\cdot t_2'$ with $e'[m]\cdot t_2\sim_s^{fr} e'[m]\cdot t_2'$, meaning that $n'$ contains a summand $e'\cdot t_2'$. Since $t_2$ and $t_2'$ are normal forms and have sizes smaller than $n$ and $n'$, by the induction hypotheses if $t_2\sim_s^{fr} t_2'$ then $t_2=_{AC} t_2'$;
    \item if $t_2\equiv e'[m]$, then $n\xtworightarrow{e'[m]}t_1\cdot e'$, so $n\sim_s^{fr} n'$ implies $n'\xrightarrow{e'[m]} t_1'\cdot e'$ with $t_1\cdot e'\sim_s^{fr} t_1'\cdot e'$, meaning that $n'$ contains a summand $t_1'\cdot e'$. Since $t_1$ and $t_1'$ are normal forms and have sizes smaller than $n$ and $n'$, by the induction hypotheses if $t_1\sim_s^{fr} t_1'$ then $t_1=_{AC} t_1'$;
    \item if $t_1\equiv e_1\parallel\cdots\parallel e_n$, then $n\xrightarrow{\{e_1,\cdots,e_n\}}(e_1[m]\parallel\cdots\parallel e_n[m])\cdot t_2$, so $n\sim_s^{fr} n'$ implies $n'\xrightarrow{\{e_1,\cdots,e_n\}}(e_1[m]\parallel\cdots\parallel e_n[m])\cdot t_2'$ with $t_2\sim_s^{fr} t_2'$, meaning that $n'$ contains a summand $(e_1\parallel\cdots\parallel e_n)\cdot t_2'$. Since $t_2$ and $t_2'$ are normal forms and have sizes smaller than $n$ and $n'$, by the induction hypotheses if $t_2\sim_s^{fr} t_2'$ then $t_2=_{AC} t_2'$.
    \item if $t_2\equiv e_1[m]\parallel\cdots\parallel e_n[m]$, then $n\xtworightarrow{\{e_1[m],\cdots,e_n[m]\}}(t_1\cdot e_1\parallel\cdots\parallel e_n)$, so $n\sim_s^{fr} n'$ implies $n'\xtworightarrow{\{e_1[m],\cdots,e_n[m]\}}t_1'\cdot(e_1\parallel\cdots\parallel e_n)$ with $t_1\sim_s^{fr} t_1'$, meaning that $n'$ contains a summand $t_1'\cdot(e_1[m]\parallel\cdots\parallel e_n[m])$. Since $t_1$ and $t_1'$ are normal forms and have sizes smaller than $n$ and $n'$, by the induction hypotheses if $t_1\sim_s^{fr} t_1'$ then $t_1=_{AC} t_1'$.
  \end{itemize}
\end{itemize}

So, we get $n=_{AC} n'$.

Finally, let $s$ and $t$ be basic APRTC terms, and $s\sim_s^{fr} t$, there are normal forms $n$ and $n'$, such that $s=n$ and $t=n'$. The soundness theorem of parallelism modulo FR step bisimulation equivalence (see Theorem \ref{SPSBE}) yields $s\sim_s^{fr} n$ and $t\sim_s^{fr} n'$, so $n\sim_s^{fr} s\sim_s^{fr} t\sim_s^{fr} n'$. Since if $n\sim_s^{fr} n'$ then $n=_{AC}n'$, $s=n=_{AC}n'=t$, as desired.
\end{proof}

\begin{theorem}[Soundness of parallelism modulo FR pomset bisimulation equivalence]\label{SPPBE}
Let $x$ and $y$ be APRTC terms. If $APRTC\vdash x=y$, then $x\sim_p^{fr} y$.
\end{theorem}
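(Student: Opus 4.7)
The plan is to mirror the structure used for the preceding soundness theorems (in particular Theorems \ref{SBARTCPBE} and \ref{SPSBE}). Since the Congruence Theorem of APRTC has already established that $\sim_p^{fr}$ is a congruence with respect to all operators in the signature, and $\sim_p^{fr}$ is an equivalence by definition, soundness reduces to verifying that each axiom listed in Table \ref{AxiomsForParallelism} is valid modulo $\sim_p^{fr}$. For each axiom $x = y$, I would exhibit an FR pomset bisimulation relating the closed instantiations of $x$ and $y$ by inspecting the forward pomset rules in Tables \ref{TRForParallel}, \ref{TRForCommunication}, and \ref{TRForConflict}, together with the reverse rules in Tables \ref{RTRForParallel}, \ref{RTRForCommunication}, and \ref{RTRForConflict}, and checking that every forward pomset move from one side is matched on the other, with a symmetric check for reverse moves.

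The bulk of the verification is routine. Axioms $P2$--$P10$ follow directly from the symmetry of the $\parallel$ rules together with the laws $A1$--$A5$; in particular $P4$--$P6$ and their reverse counterparts $RP4$--$RP6$ produce the merge labels $\{e_1,e_2\}$ and $\{e_1[m],e_2[m]\}$ on both sides of the equation. The communication laws $C11$--$C18$ together with $RC11$--$RC14$ reduce to the fact that synchronization always produces the single label $\gamma(e_1,e_2)$ (respectively $\gamma(e_1,e_2)[m]$ in the reverse direction). The conflict elimination axioms $CE19$--$CE24$ and the unless axioms $U25$--$U37$ (with their reverse counterparts) follow by reading off the premises of the transition rules for $\Theta$ and $\triangleleft$, and verifying that the distributions match on pomset-labelled moves.

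The main obstacle will be the axioms involving the deadlock constant $\delta$, namely $A6$, $A7$, $RA7$, $P9$, $P10$, $C17$, $C18$, $U28$, $U29$, together with the side-conditioned laws $A7$ and $RA7$ carrying the $Std$ / $NStd$ predicates. For these I have to verify that no forward or reverse transitions are enabled from $\delta$-headed terms so that both sides admit no moves and are vacuously bisimilar, and also to confirm that the $Std$ / $NStd$ side conditions interact correctly with the history tagging $e[m]$ used in the reverse rules (so that, for instance, $RA7$ is only applied when the past events have been exhausted on the right). A secondary source of care is required for $CE23$ and $CE24$, where distributing $\Theta$ over $\parallel$ and $\mid$ demands a case analysis on which branches of $\triangleleft$ remain enabled, since the $\triangleleft$ rules depend on the presence or absence of conflicting events in the companion process.

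Finally, the essential difference from the proof of Theorem \ref{SPSBE} for FR step bisimulation is that transition labels here may be arbitrary pomsets rather than steps, so the events within a label need not be pairwise concurrent. Since all the transition rules and matchings used in the step case are label-preserving up to the pomset isomorphism $\sim$, the case analysis transfers verbatim, and I expect the final proof to largely coincide with the FR step case, remarking only that the pomset case is strictly more general.
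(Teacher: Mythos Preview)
Your proposal is correct and follows essentially the same approach as the paper: reduce to checking each axiom in Table \ref{AxiomsForParallelism} using that $\sim_p^{fr}$ is an equivalence and a congruence, then observe that the verification coincides with the FR step case of Theorem \ref{SPSBE}. The paper phrases the last reduction slightly differently---it decomposes an arbitrary pomset label into its concurrent part (already covered by the step proof) and its causal part (which factors as a sequence of single-event transitions)---but this is the same content as your ``label-preserving'' remark, and the paper likewise omits the axiom-by-axiom details you sketch.
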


\begin{proof}
Since FR pomset bisimulation $\sim_p^{fr}$ is both an equivalent and a congruent relation with respect to the operators $\between$, $\parallel$, $\mid$, $\Theta$ and $\triangleleft$, we only need to check if each axiom in Table \ref{AxiomsForParallelism} is sound modulo FR pomset bisimulation equivalence.

From the definition of FR pomset bisimulation (see Definition \ref{FRPSB}), we know that FR pomset bisimulation is defined by pomset transitions, which are labeled by pomsets. In a pomset transition, the events in the pomset are either within causality relations (defined by $\cdot$) or in concurrency (implicitly defined by $\cdot$ and $+$, and explicitly defined by $\between$), of course, they are pairwise consistent (without conflicts). In Theorem \ref{SPSBE}, we have already proven the case that all events are pairwise concurrent, so, we only need to prove the case of events in causality. Without loss of generality, we take a pomset of $P=\{e_1,e_2:e_1\cdot e_2\}$. Then the pomset transition labeled by the above $P$ is just composed of one single event transition labeled by $e_1$ succeeded by another single event transition labeled by $e_2$, that is, $\xrightarrow{P}=\xrightarrow{e_1}\xrightarrow{e_2}$ or $\xrightarrow{P}=\xtworightarrow{e_2[n]}\xtworightarrow{e_1[m]}$.

Similarly to the proof of soundness of parallelism modulo FR step bisimulation equivalence (see Theorem \ref{SPSBE}), we can prove that each axiom in Table \ref{AxiomsForParallelism} is sound modulo FR pomset bisimulation equivalence, we omit them.
\end{proof}

\begin{theorem}[Completeness of parallelism modulo FR pomset bisimulation equivalence]\label{CPPBE}
Let $p$ and $q$ be closed APRTC terms, if $p\sim_p^{fr} q$ then $p=q$.
\end{theorem}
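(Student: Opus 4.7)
The plan is to mimic the structure of the completeness proof for FR step bisimulation (Theorem \ref{CPSBE}) and reduce the pomset case to the step case, in the same spirit that soundness modulo FR pomset bisimulation (Theorem \ref{SPPBE}) was reduced to soundness modulo FR step bisimulation. First, by the elimination theorem for APRTC (Theorem \ref{ETParallelism}), it suffices to consider closed basic APRTC terms. Then, as in the proof of Theorem \ref{CPSBE}, put these basic terms into normal form modulo associativity and commutativity of $+$ (axioms $A1, A2$) and of $\parallel$ (axioms $P2, P3$), obtaining a representation $s_1+\cdots+s_k$ where each summand $s_i$ is either an atomic event, or a sequential composition $t_1\cdot\cdots\cdot t_m$ whose factors are either atomic events or parallel compositions $u_1\parallel\cdots\parallel u_n$ of atomic events.

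The core step is to show that for normal forms $n$ and $n'$, if $n\sim_p^{fr} n'$ then $n=_{AC}n'$, by induction on the sizes of $n$ and $n'$. The key observation, already exploited in Theorem \ref{SPPBE}, is that any pomset transition decomposes into a combination of step transitions (for pairwise concurrent events) and sequential compositions of single-event transitions (for events in causal order); for instance, a pomset $P=\{e_1,e_2:e_1\cdot e_2\}$ produces $\xrightarrow{P}=\xrightarrow{e_1}\xrightarrow{e_2}$ and, dually on the reverse side, $\xtworightarrow{e_2[n]}\xtworightarrow{e_1[m]}$. Consequently, matching summands of $n$ and $n'$ under FR pomset bisimulation reduces to matching them under the single-event and step transitions already analyzed in the proof of Theorem \ref{CPSBE}: for each summand beginning with $e$, $e_1\parallel\cdots\parallel e_n$, or ending with $e[m]$, $e_1[m]\parallel\cdots\parallel e_n[m]$, the residual after transition is a smaller normal form to which the induction hypothesis applies.

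Finally, given closed basic APRTC terms $s$ and $t$ with $s\sim_p^{fr} t$, choose normal forms $n,n'$ with $s=n$ and $t=n'$; soundness (Theorem \ref{SPPBE}) gives $s\sim_p^{fr} n$ and $t\sim_p^{fr} n'$, hence $n\sim_p^{fr} n'$, and by the induction just described $n=_{AC}n'$, which yields $s=t$.

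The main obstacle will be the careful case analysis for summands that mix parallel composition with causality on both ends, together with the symmetric treatment of reverse pomset transitions: a summand can expose a pomset in the future (forward transitions decompose from its head) and a pomset of past events in the past (reverse transitions decompose from its tail), so the induction must juggle both directions simultaneously, as in Theorem \ref{CBARTCPBE}. Beyond this bookkeeping, no genuinely new techniques are needed — the argument piggybacks on Theorem \ref{CPSBE} via the pomset-to-step decomposition of transitions.
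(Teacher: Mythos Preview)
Your proposal is correct and follows essentially the same approach as the paper: the paper's own proof simply states that it is similar to the proof of completeness modulo FR step bisimulation equivalence (Theorem \ref{CPSBE}) and omits the details, so your elaboration via elimination to basic terms, normal forms modulo AC of $+$ and $\parallel$, and the pomset-to-step decomposition of transitions is exactly the intended argument spelled out.
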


\begin{proof}
The proof is similar to the proof of completeness of parallelism modulo FR step bisimulation equivalence, we omit it.
\end{proof}

\begin{theorem}[Soundness of parallelism modulo FR hp-bisimulation equivalence]\label{SPHPBE}
Let $x$ and $y$ be APRTC terms. If $APRTC\vdash x=y$, then $x\sim_{hp}^{fr} y$.
\end{theorem}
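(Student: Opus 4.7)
The plan is to reduce the claim to a case analysis over the axioms in Table \ref{AxiomsForParallelism}, exploiting the congruence theorem for $\sim_{hp}^{fr}$ and the fact that $\sim_{hp}^{fr}$ is an equivalence. Since $\sim_{hp}^{fr}$ is a congruence with respect to $\between$, $\parallel$, $\mid$, $\Theta$ and $\triangleleft$, provable equalities in APRTC will respect $\sim_{hp}^{fr}$ as soon as each single axiom does; so for every axiom $x = y$ I would exhibit an explicit posetal relation $R$ witnessing $x\sim_{hp}^{fr} y$ and verify the four clauses of Definition \ref{FRHHPB} (two forward, two reverse), together with the updates $f[e_1\mapsto e_2]$ and $f'[e_1[m]\mapsto e_2[n]]$ of the order-isomorphism after each single-event step. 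The overall argument then closes by induction on the derivation $APRTC\vdash x=y$.

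First I would dispose of the axioms inherited from BARTC ($A1$--$A5$) by observing that the FR hp-bisimulations built in the BARTC soundness proof extend verbatim to the larger signature, since the underlying single-event transitions of these terms are unchanged. Second I would handle the deadlock axioms ($A6$, $RA7$, $P9$, $P10$, $C17$, $C18$, $CE20$, $U28$, $U29$), where one side has no outgoing transition and the trivial posetal relation $\{(\emptyset,\emptyset,\emptyset)\}$ suffices. Third I would address the structural axioms of parallelism and communication ($P1$--$P8$, $C11$--$C16$ and their reverse counterparts $RP4$--$RP6$, $RC11$--$RC14$) by taking $R$ to consist of all pairs of reachable configurations equipped with the obvious bijection matching atomic event occurrences on the two sides. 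Finally I would cover the conflict-elimination axioms ($CE19$--$CE24$, $U25$--$U37$), where the relation must account for the $\tau$-transition that replaces a suppressed conflict, with $f$ sending the suppressed event to $\tau$.

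The main obstacle will be axioms such as $P6$ and $RP6$, which equate $(e_1\cdot x)\parallel(e_2\cdot y)$ with $(e_1\parallel e_2)\cdot(x\between y)$. On the left, the PES causality places $e_1$ strictly below events of $x$ and $e_2$ strictly below events of $y$, but leaves $e_1$ unrelated to events of $y$ (and dually); on the right, the outer $\cdot$ forces $\{e_1,e_2\}$ to precede \emph{every} event of $x\between y$. To supply an isomorphism $f$ at each pair of reached configurations I must argue that these two causal orders nevertheless coincide on reachable configurations, which reduces to the observation that in any forward execution of the left-hand term both $e_1$ and $e_2$ must have fired before any event of $x$ or $y$ can be enabled, so the two induced posets on the reached configuration are order-isomorphic. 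The reverse clauses demand the dual argument for past events $e[m]$, using the fact that $e_i[m]$ can be reversed only after all events causally above $e_i$ have been reversed, so the two reverse orders also agree. Modulo this observation, the calculations will mirror the soundness proof for FR pomset bisimulation (Theorem \ref{SPPBE}), the only genuinely new obligation being the explicit construction and stepwise update of $f$ and $f'$.
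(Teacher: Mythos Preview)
Your proposal follows essentially the same approach as the paper: use the congruence theorem for $\sim_{hp}^{fr}$ to reduce soundness to an axiom-by-axiom check, then argue that each such check mirrors the FR pomset case (Theorem~\ref{SPPBE}) with the sole additional obligation of constructing and updating the order-isomorphism $f$ (respectively $f'$) after each forward or reverse single-event step. The paper's proof is in fact terser than yours---it merely records the update $f'=f[e\mapsto e]$ (and its reverse analogue) and then omits all details---so your more explicit grouping of the axioms and your discussion of the $P6$/$RP6$ causal-order issue go beyond what the paper supplies, but the underlying strategy is the same.
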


\begin{proof}
Since FR hp-bisimulation $\sim_{hp}^{fr}$ is both an equivalent and a congruent relation with respect to the operators $\between$, $\parallel$, $\mid$, $\Theta$ and $\triangleleft$, we only need to check if each axiom in Table \ref{AxiomsForParallelism} is sound modulo FR hp-bisimulation equivalence.

From the definition of FR hp-bisimulation (see Definition \ref{FRHHPB}), we know that FR hp-bisimulation is defined on the posetal product $(C_1,f,C_2),f:C_1\rightarrow C_2\textrm{ isomorphism}$. Two process terms $s$ related to $C_1$ and $t$ related to $C_2$, and $f:C_1\rightarrow C_2\textrm{ isomorphism}$. Initially, $(C_1,f,C_2)=(\emptyset,\emptyset,\emptyset)$, and $(\emptyset,\emptyset,\emptyset)\in\sim_{hp}^{fr}$. When $s\xrightarrow{e}s'$ ($C_1\xrightarrow{e}C_1'$), there will be $t\xrightarrow{e}t'$ ($C_2\xrightarrow{e}C_2'$), and we define $f'=f[e\mapsto e]$. And when $s\xtworightarrow{e[m]}s'$ ($C_1\xtworightarrow{e[m]}C_1'$), there will be $t\xtworightarrow{e[m]}t'$ ($C_2\xtworightarrow{e[m]}C_2'$), and we define $f'=f[e[m]\mapsto e[m]]$. Then, if $(C_1,f,C_2)\in\sim_{hp}^{fr}$, then $(C_1',f',C_2')\in\sim_{hp}^{fr}$.

Similarly to the proof of soundness of parallelism modulo FR pomset bisimulation equivalence (see Theorem \ref{SPPBE}), we can prove that each axiom in Table \ref{AxiomsForParallelism} is sound modulo FR hp-bisimulation equivalence, we just need additionally to check the above conditions on FR hp-bisimulation, we omit them.
\end{proof}

\begin{theorem}[Completeness of parallelism modulo FR hp-bisimulation equivalence]\label{CPHPBE}
Let $p$ and $q$ be closed APRTC terms, if $p\sim_{hp}^{fr} q$ then $p=q$.
\end{theorem}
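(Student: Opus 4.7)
The plan is to mirror the strategy used for the three completeness theorems already proved for parallelism, namely Theorem~\ref{CPSBE}, Theorem~\ref{CPPBE}, and the BARTC completeness theorems, adapted to the stronger equivalence $\sim_{hp}^{fr}$. First I would invoke the elimination theorem for APRTC (Theorem~\ref{ETParallelism}) to reduce the statement to closed basic APRTC terms: for any closed APRTC terms $p,q$ with $p\sim_{hp}^{fr}q$, there exist closed basic APRTC terms $p',q'$ with $APRTC\vdash p=p'$ and $APRTC\vdash q=q'$. By the soundness theorem (Theorem~\ref{SPHPBE}) and transitivity of $\sim_{hp}^{fr}$, the problem reduces to showing that $p'\sim_{hp}^{fr}q'$ implies $p'=q'$.

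Next I would put basic APRTC terms into the normal form introduced in the proof of Theorem~\ref{CPSBE}, i.e., a sum (modulo AC of $+$ from $A1,A2$ and AC of $\parallel$ from $P2,P3$, giving the relation $=_{AC}$) of summands of the shape $t_1\cdot\cdots\cdot t_m$, where each $t_j$ is either an atomic event (or an event history $e[m]$) or a parallel composition $u_1\parallel\cdots\parallel u_n$ of atomic events. The core claim is then: if $n\sim_{hp}^{fr}n'$ for normal forms $n,n'$, then $n=_{AC}n'$. This would be proved by induction on the combined size of $n$ and $n'$, case-splitting on each summand as in Theorems~\ref{CBARTCPBE} and \ref{CPSBE}: for a summand starting with an atomic event $e$ (forward) or ending with $e[m]$ (reverse), use the single-event forward/reverse transition together with the hp-bisimulation clause to force a matching summand in $n'$; for a summand whose leading factor is a parallel block $e_1\parallel\cdots\parallel e_n$, use the parallel transition rules (Table~\ref{TRForParallel} and Table~\ref{RTRForParallel}) and again match via the hp-bisimulation.

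The additional ingredient beyond the step and pomset cases is the bookkeeping of the posetal isomorphism $f:C_1\to C_2$ from Definition~\ref{FRHHPB}. Whenever $n\xrightarrow{e}n_1$ (respectively $n\xtworightarrow{e[m]}n_1$), the hp-bisimulation furnishes $n'\xrightarrow{e}n_1'$ (resp.\ $n'\xtworightarrow{e[m]}n_1'$) with $(C_1\cup\{e\},f[e\mapsto e],C_2\cup\{e\})\in R$; similarly for parallel blocks each concurrent event is matched pointwise. Since labels are preserved and the induced order on the basic-term syntax coincides with the causal order on configurations generated by $\cdot$, the extended $f'$ remains an order isomorphism, so the induction hypothesis applies to the residuals $n_1,n_1'$, yielding $n_1=_{AC}n_1'$ and hence matching summands in $n$ and $n'$. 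Combining both directions gives $n=_{AC}n'$.

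Finally, soundness of APRTC modulo $\sim_{hp}^{fr}$ (Theorem~\ref{SPHPBE}) together with $p'\sim_{hp}^{fr}q'$, $p'\sim_{hp}^{fr}n$, $q'\sim_{hp}^{fr}n'$ gives $n\sim_{hp}^{fr}n'$, so by the claim $n=_{AC}n'$ and therefore $p=p'=n=_{AC}n'=q'=q$, as desired. The main obstacle I anticipate is the parallel-block case: one must verify that when several concurrent events fire simultaneously in the step $\{e_1,\dots,e_n\}$, the induced extension $f[e_1\mapsto e_1,\dots,e_n\mapsto e_n]$ on the enlarged configuration is still an isomorphism of posets, i.e.\ that no spurious causal dependency is introduced between the two sides; this requires appealing to the concurrency condition in Definition~\ref{PES} together with the shape of the parallel axioms $P4$--$P6$ (and their reverse counterparts $RP4$--$RP6$) that ensure residuals of a parallel block are again in normal form.
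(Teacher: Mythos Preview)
Your proposal is correct and follows essentially the same route as the paper: the paper's own proof is a one-line deferral to the completeness proof modulo FR pomset bisimulation (Theorem~\ref{CPPBE}), which in turn defers to the step case (Theorem~\ref{CPSBE}), i.e., exactly the elimination--normal-form--summand-matching argument you outline. Your additional bookkeeping of the posetal isomorphism $f$ and the discussion of the parallel-block obstacle go beyond what the paper actually writes out, but they are the natural elaboration of the omitted details rather than a different strategy.
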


\begin{proof}
The proof is similar to the proof of completeness of parallelism modulo FR pomset bisimulation equivalence, we omit it.
\end{proof}

\subsection{Encapsulation}

The mismatch of two communicating events in different parallel branches can cause deadlock, so the deadlocks in the concurrent processes should be eliminated. Like $APTC$ \cite{APTC}, we also introduce the unary encapsulation operator $\partial_H$ for set $H$ of atomic events, which renames all atomic events in $H$ into $\delta$. The whole algebra including parallelism for true concurrency in the above subsections, deadlock $\delta$ and encapsulation operator $\partial_H$, is called Reversible Algebra for Parallelism in True Concurrency, abbreviated APRTC.

The forward transition rules of encapsulation operator $\partial_H$ are shown in Table \ref{TRForEncapsulation}, and the reverse transition rules of encapsulation operator $\partial_H$ are shown in Table \ref{RTRForEncapsulation}.

\begin{center}
    \begin{table}
        $$\frac{x\xrightarrow{e}e[m]}{\partial_H(x)\xrightarrow{e}\partial_H(e[m])}\quad (e\notin H)\quad\frac{x\xrightarrow{e}x'}{\partial_H(x)\xrightarrow{e}\partial_H(x')}\quad(e\notin H)$$
        \caption{Forward transition rules of encapsulation operator $\partial_H$}
        \label{TRForEncapsulation}
    \end{table}
\end{center}

\begin{center}
    \begin{table}
        $$\frac{x\xtworightarrow{e[m]}e}{\partial_H(x)\xtworightarrow{e[m]}e}\quad (e\notin H)\quad\quad\frac{x\xtworightarrow{e}x'}{\partial_H(x)\xtworightarrow{e}\partial_H(x')}\quad(e\notin H)$$
        \caption{Reverse transition rules of encapsulation operator $\partial_H$}
        \label{RTRForEncapsulation}
    \end{table}
\end{center}

Based on the transition rules for encapsulation operator $\partial_H$ in Table \ref{TRForEncapsulation} and Table \ref{RTRForEncapsulation}, we design the axioms as Table \ref{AxiomsForEncapsulation} shows.

\begin{center}
    \begin{table}
        \begin{tabular}{@{}ll@{}}
            \hline No. &Axiom\\
            $D1$ & $e\notin H\quad\partial_H(e) = e$\\
            $RD1$ & $e\notin H\quad\partial_H(e[m]) = e[m]$\\
            $D2$ & $e\in H\quad \partial_H(e) = \delta$\\
            $RD2$ & $e\in H\quad \partial_H(e[m]) = \delta$\\
            $D3$ & $\partial_H(\delta) = \delta$\\
            $D4$ & $\partial_H(x+ y) = \partial_H(x)+\partial_H(y)$\\
            $D5$ & $\partial_H(x\cdot y) = \partial_H(x)\cdot\partial_H(y)$\\
            $D6$ & $\partial_H(x\parallel y) = \partial_H(x)\parallel\partial_H(y)$\\
        \end{tabular}
        \caption{Axioms of encapsulation operator}
        \label{AxiomsForEncapsulation}
    \end{table}
\end{center}

\begin{theorem}[Conservativity of APRTC with respect to the algebra for parallelism]
APRTC is a conservative extension of the algebra for parallelism.
\end{theorem}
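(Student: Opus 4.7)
The plan is to invoke the Conservative Extension Theorem (Theorem \ref{TCE}), taking $T_0$ to be the TSS of the algebra for parallelism (BARTC together with the rules for $\parallel$, $\mid$, $\Theta$, $\triangleleft$ from Tables \ref{TRForParallel}--\ref{RTRForConflict}) and $T_1$ to be the rules for the encapsulation operator $\partial_H$ in Tables \ref{TRForEncapsulation} and \ref{RTRForEncapsulation}. To apply that theorem I need to check two things: (1) $T_0$ is source-dependent, and (2) each rule in $T_1$ either has a fresh source, or carries a premise over $\Sigma_0$-terms whose label or target is fresh. After verifying these, conservativity is immediate.

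For (1) I would proceed by case analysis on the rules of $T_0$. The axiom-style rules $e \xrightarrow{e} e[m]$ and $e[m] \xtworightarrow{e[m]} e$ are trivially source-dependent, having no variables at all. For each inductive rule for $+$, $\cdot$, $\parallel$, $\mid$, $\Theta$, $\triangleleft$ the sources of the premises are subterms of the rule's source, so every variable appearing in a premise's source is already source-dependent; the variables appearing in premise targets (such as $x'$, $y'$) then inherit source-dependency via the inductive clause of the definition. A uniform inspection across Tables \ref{SETRForBARTC}--\ref{RTRForConflict} covers all cases, and I would also note in passing that the rules under consideration are positive (they contain no negative premises other than the $y \nrightarrow$ premises in the rules for $\triangleleft$, which after the standard stratification reduction become positive), so the hypothesis on being positive after reduction holds.

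For (2) the verification is short: every rule in $T_1$ has a source of the form $\partial_H(x)$, and $\partial_H \in \Sigma_1 \setminus \Sigma_0$. Thus each $T_1$-rule has a fresh source, and the first disjunct of Theorem \ref{TCE}'s second hypothesis is satisfied without needing to exhibit any $\Sigma_0$-premise. Combined with part (1), Theorem \ref{TCE} then yields that $T_0 \oplus T_1$ is a conservative extension of $T_0$, which is exactly the claim that APRTC conservatively extends the algebra for parallelism.

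The only mild obstacle is bookkeeping: the TSS of the algebra for parallelism is large, spanning many rules across several tables, and one must be systematic in confirming source-dependency for all of them. There is, however, no genuinely hard case, because in every rule the premise sources are literally subterms of the rule's source, so no variable can fail to be source-dependent. Everything else reduces to appealing to the general theorem rather than constructing a bisimulation by hand.
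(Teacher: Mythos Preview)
Your proposal is correct and follows essentially the same approach as the paper: both invoke Theorem \ref{TCE} by noting that the rules of the algebra for parallelism are source-dependent and that every rule for $\partial_H$ has a fresh source. Your write-up is simply more detailed than the paper's two-line sketch, and your remark about the negative premises in the $\triangleleft$ rules (and the need for positivity after reduction) is a valid point that the paper glosses over.
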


\begin{proof}
It follows from the following two facts (see Theorem \ref{TCE}).

\begin{enumerate}
  \item The transition rules of the algebra for parallelism in the above subsections are all source-dependent;
  \item The sources of the transition rules for the encapsulation operator contain an occurrence of $\partial_H$.
\end{enumerate}

So, APRTC is a conservative extension of the algebra for parallelism, as desired.
\end{proof}

\begin{theorem}[Congruence theorem of encapsulation operator $\partial_H$]
Truly concurrent bisimulation equivalences $\sim_p^{fr}$, $\sim_s^{fr}$, $\sim_{hp}^{fr}$ and $\sim_{hhp}^{fr}$ are all congruences with respect to encapsulation operator $\partial_H$.
\end{theorem}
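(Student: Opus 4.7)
The plan is to follow the same template used for the earlier congruence theorems in this section (for example, the congruence theorem of APRTC): first observe that $\sim_p^{fr}$, $\sim_s^{fr}$, $\sim_{hp}^{fr}$, and $\sim_{hhp}^{fr}$ are equivalence relations on the class of closed APRTC terms (this is immediate from Definitions \ref{FRPSB}, \ref{FRHHPB}), so what remains is compatibility with the single new unary function symbol $\partial_H$. Since $\partial_H$ was not present when the algebra for parallelism was treated, none of the previous congruence arguments apply to it directly, and this is the content of the new theorem.

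The argument I will give handles each of the four equivalences in turn and in each case is a short ``transfer'' argument using the transition rules in Tables \ref{TRForEncapsulation} and \ref{RTRForEncapsulation}. Concretely, for FR pomset bisimulation: assume $x \sim_p^{fr} y$ via a bisimulation $R$ with $(x,y)\in R$, and define
\[
R' = \{(\partial_H(x'),\partial_H(y')) : (x',y')\in R\} \cup R.
\]
I will show $R'$ is an FR pomset bisimulation. The forward clause: if $\partial_H(x') \xrightarrow{X} u$, then by inspection of Table \ref{TRForEncapsulation} the only way to derive this transition is from a transition $x' \xrightarrow{X} x''$ with no event in $X$ belonging to $H$, and $u$ is either $\partial_H(X[\mathcal{K}])$ or $\partial_H(x'')$. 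Since $(x',y')\in R$, there exists $y' \xrightarrow{Y} y''$ with $X\sim Y$ and $(x'',y'')\in R$; because $X\sim Y$ as pomsets and labels are preserved by the isomorphism, no event of $Y$ lies in $H$ either, so the same rules give $\partial_H(y') \xrightarrow{Y} \partial_H(y'')$ (or the terminating variant), and the resulting pair is in $R'$. The reverse clause is symmetric, using the history-annotated rules of Table \ref{RTRForEncapsulation} and the fact that $e\in H$ iff $e[m]\in H$ (the side condition $e\notin H$ is on the underlying event). The same closure construction works for FR step bisimulation (by restricting $X$ to be a pairwise-concurrent step), for FR hp-bisimulation (where I additionally lift the isomorphism $f$ to $f[e_1\mapsto e_2]$ after each matched transition, exactly as in Definition \ref{FRHHPB}), and for FR hhp-bisimulation (where the downward-closure of the witnessing relation lifts to downward-closure of $R'$ pointwise, because $\partial_H$ is a unary operator applied uniformly).

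The main obstacle, such as it is, is bookkeeping: I must be careful that the conditions $e\notin H$ on forward rules and on reverse rules are consistent, that is, that a matching reverse transition witnessed by the FR bisimulation on $x,y$ automatically satisfies the side condition because forward and reverse rules block on the same set $H$ of events; this is where the memory convention $e[m]\in H \iff e\in H$ (implicit in the rules of Table \ref{RTRForEncapsulation}) is used. A secondary subtlety is the hhp case, for which one must check that downward-closure is preserved by $R'$; since $\partial_H$ does not alter the underlying poset structure of configurations (it only forbids certain events from firing), the pointwise inclusion order on posetal triples is respected. Once these checks are in place, the four statements follow uniformly, and the proof can be written in the same compact style as the congruence theorem of APRTC above.
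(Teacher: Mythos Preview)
Your proposal is correct and follows essentially the same approach as the paper: both arguments reduce the claim to a transfer of transitions through the rules in Tables \ref{TRForEncapsulation} and \ref{RTRForEncapsulation}, treat the FR pomset case in detail (forward and reverse), and dismiss the step, hp-, and hhp-cases as analogous. Your version is in fact slightly more explicit than the paper's, since you name the witnessing relation $R'=\{(\partial_H(x'),\partial_H(y')):(x',y')\in R\}\cup R$ and flag the label-preservation step (that $X\sim Y$ forces $Y\cap H=\emptyset$ whenever $X\cap H=\emptyset$) and the downward-closure check for hhp, whereas the paper handles these points by an informal coinductive ``assumption $\partial_H(x')\sim_p^{fr}\partial_H(y')$''.
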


\begin{proof}
(1) Case FR pomset bisimulation equivalence $\sim_p^{fr}$.

Let $x$ and $y$ be APRTC processes, and $x\sim_{p}^{fr} y$, it is sufficient to prove that $\partial_H(x)\sim_{p}^{fr} \partial_H(y)$.

By the definition of FR pomset bisimulation $\sim_p^{fr}$ (Definition \ref{FRPSB}), $x\sim_p^{fr} y$ means that

$$x\xrightarrow{X} x' \quad y\xrightarrow{Y} y'$$

$$x\xtworightarrow{X[\mathcal{K}]} x' \quad y\xtworightarrow{Y[\mathcal{J}]} y'$$

with $X\subseteq x$, $Y\subseteq y$, $X\sim Y$ and $x'\sim_p^{fr} y'$.

By the FR pomset transition rules for encapsulation operator $\partial_H$ in Table \ref{TRForEncapsulation} and Table \ref{RTRForEncapsulation}, we can get

$$\partial_H(x)\xrightarrow{X} \partial_H(X[\mathcal{K}]) (X\nsubseteq H) \quad \partial_H(y)\xrightarrow{Y} \partial_H(Y[\mathcal{J}]) (Y\nsubseteq H)$$

$$\partial_H(x)\xtworightarrow{X[\mathcal{K}]} \partial_H(X) (X\nsubseteq H) \quad \partial_H(y)\xtworightarrow{Y[\mathcal{J}]} \partial_H(Y) (Y\nsubseteq H)$$

with $X\subseteq x$, $Y\subseteq y$, and $X\sim Y$, and the assumptions $\partial_H(X[\mathcal{K}])\sim_p^{fr} \partial_H(Y[\mathcal{J}])$, $\partial_H(X)\sim_p^{fr} \partial_H(Y)$ so, we get $\partial_H(x)\sim_p^{fr} \partial_H(y)$, as desired.

Or, we can get

$$\partial_H(x)\xrightarrow{X} \partial_H(x') (X\nsubseteq H) \quad \partial_H(y)\xrightarrow{Y} \partial_H(y') (Y\nsubseteq H)$$

$$\partial_H(x)\xtworightarrow{X} \partial_H(x') (X\nsubseteq H) \quad \partial_H(y)\xtworightarrow{Y} \partial_H(y') (Y\nsubseteq H)$$

with $X\subseteq x$, $Y\subseteq y$, $X\sim Y$, $x'\sim_p^{fr} y'$ and the assumption $\partial_H(x')\sim_p^{fr}\partial_H(y')$, so, we get $\partial_H(x)\sim_p^{fr} \partial_H(y)$, as desired.

(2) The cases of FR step bisimulation $\sim_s^{fr}$, FR hp-bisimulation $\sim_{hp}^{fr}$ and FR hhp-bisimulation $\sim_{hhp}^{fr}$ can be proven similarly, we omit them.
\end{proof}

\begin{theorem}[Elimination theorem of APRTC]\label{ETEncapsulation}
Let $p$ be a closed APRTC term including the encapsulation operator $\partial_H$. Then there is a basic APRTC term $q$ such that $APRTC\vdash p=q$.
\end{theorem}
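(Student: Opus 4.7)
The plan is to mirror the two-phase strategy already used for the Elimination theorem of BARTC (Theorem \ref{ETBARTC}) and the Elimination theorem of FR parallelism (Theorem \ref{ETParallelism}): first orient the axioms for $\partial_H$ as rewrite rules and establish strong normalization, then show by structural induction that every normal form is a basic APRTC term.

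First I would extend the term rewriting system of Table \ref{TRSForAPRTC} by adding the rules obtained from orienting axioms $D1$--$D6$ (together with the reverse variants $RD1$, $RD2$) left-to-right; call these $RD1, RRD1, RD2, RRD2, RD3, RD4, RD5, RD6$. I would extend the signature ordering from the proof of Theorem \ref{ETParallelism} to $\partial_H > \parallel > \cdot > +$, still giving $\parallel$ and $\cdot$ lexicographical status on the first argument, with $\partial_H$ being a unary symbol strictly above all the others. Under this ordering, each new rewrite rule $s\to t$ for $\partial_H$ satisfies $s>_{lpo} t$: the left-hand side contains a $\partial_H$ above an occurrence of an operator strictly smaller in the signature, while on the right-hand side every $\partial_H$ either disappears (as in $RD1$, $RRD1$, $RD2$, $RRD2$, $RD3$) or is pushed strictly inside smaller operators and applied to strict subterms (as in $RD4$, $RD5$, $RD6$). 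Since the original rules of Table \ref{TRSForAPRTC} remain compatible with this extended ordering, the whole TRS is finite and satisfies $s>_{lpo} t$ for each rule, so Theorem \ref{SN} gives strong normalization.

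Second, I would show that every normal form of a closed APRTC term (now including $\partial_H$) is a basic APRTC term, by structural induction on the smallest non-basic subterm $p'$ of a supposed non-basic normal form $p$. All cases $p'\equiv e$, $p'\equiv e[m]$, $p'\equiv p_1\cdot p_2$, $p'\equiv p_1+p_2$, $p'\equiv p_1\parallel p_2$, $p'\equiv p_1\mid p_2$, $p'\equiv \Theta(p_1)$, $p'\equiv p_1\triangleleft p_2$ are handled exactly as in the proof of Theorem \ref{ETParallelism}. The genuinely new case is $p'\equiv \partial_H(p_1)$, with $p_1$ basic by minimality of $p'$. I would then perform a subcase analysis on the structure of the basic term $p_1$: if $p_1\in\mathbb{E}$ or $p_1\equiv e[m]$, one of $RD1, RRD1, RD2, RRD2$ applies depending on whether the event is in $H$; if $p_1\equiv\delta$, rule $RD3$ applies; if $p_1\equiv p_1'+p_1''$, rule $RD4$ applies; if $p_1\equiv p_1'\cdot p_1''$ (in either of the basic-term shapes $e\cdot t$ or $t\cdot e[m]$), rule $RD5$ applies; and if $p_1\equiv p_1'\parallel p_1''$, rule $RD6$ applies. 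In every subcase, $p$ admits a reduction, contradicting that $p$ is a normal form.

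The main obstacle I anticipate is the bookkeeping for the ordering rather than any deep content: one has to verify that adding the new $\partial_H$-rules does not break $>_{lpo}$-orientability of the already large Table \ref{TRSForAPRTC}, and in particular that the distributive rules $RD4$--$RD6$ (which duplicate $\partial_H$ on the right-hand side) still decrease the term in the lexicographic path order. Because each duplicated $\partial_H$ on the right is applied to a strict subterm of the argument of the unique $\partial_H$ on the left, and $\partial_H$ is the greatest symbol in the signature, this decrease holds, and the proof goes through. Combined with the strong normalization, this yields that every closed APRTC term reduces via APRTC to a basic APRTC term $q$, as required.
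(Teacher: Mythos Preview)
Your proposal is correct and follows essentially the same two-phase approach as the paper: establish strong normalization of the $\partial_H$-rules via a lexicographic path ordering, then show by structural induction that any non-basic normal form admits a reduction, handling only the new case $p'\equiv\partial_H(p_1)$ by case analysis on the basic term $p_1$. Your treatment is in fact slightly more careful than the paper's, which reuses the ordering $\parallel>\cdot>+$ without explicitly locating $\partial_H$ in the precedence; your choice $\partial_H>\parallel>\cdot>+$ and your justification for why the duplicating rules $RD4$--$RD6$ still decrease under $>_{lpo}$ fill in details the paper leaves implicit.
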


\begin{proof}
(1) Firstly, suppose that the following ordering on the signature of APRTC is defined: $\parallel > \cdot > +$ and the symbol $\parallel$ is given the lexicographical status for the first argument, then for each rewrite rule $p\rightarrow q$ in Table \ref{TRSForEncapsulation} relation $p>_{lpo} q$ can easily be proved. We obtain that the term rewrite system shown in Table \ref{TRSForEncapsulation} is strongly normalizing, for it has finitely many rewriting rules, and $>$ is a well-founded ordering on the signature of APRTC, and if $s>_{lpo} t$, for each rewriting rule $s\rightarrow t$ is in Table \ref{TRSForEncapsulation} (see Theorem \ref{SN}).

\begin{center}
    \begin{table}
        \begin{tabular}{@{}ll@{}}
            \hline No. &Rewriting Rule\\
            $RD1$ & $e\notin H\quad\partial_H(e) \rightarrow e$\\
            $RRD1$ & $e\notin H\quad\partial_H(e[m]) \rightarrow e[m]$\\
            $RD2$ & $e\in H\quad \partial_H(e) \rightarrow \delta$\\
            $RRD2$ & $e\in H\quad \partial_H(e[m]) \rightarrow \delta$\\
            $RD3$ & $\partial_H(\delta) \rightarrow \delta$\\
            $RD4$ & $\partial_H(x+ y) \rightarrow \partial_H(x)+\partial_H(y)$\\
            $RD5$ & $\partial_H(x\cdot y) \rightarrow \partial_H(x)\cdot\partial_H(y)$\\
            $RD6$ & $\partial_H(x\parallel y) \rightarrow \partial_H(x)\parallel\partial_H(y)$\\
        \end{tabular}
        \caption{Term rewrite system of encapsulation operator $\partial_H$}
        \label{TRSForEncapsulation}
    \end{table}
\end{center}

(2) Then we prove that the normal forms of closed APRTC terms including encapsulation operator $\partial_H$ are basic APRTC terms.

Suppose that $p$ is a normal form of some closed APRTC term and suppose that $p$ is not a basic APRTC term. Let $p'$ denote the smallest sub-term of $p$ which is not a basic APRTC term. It implies that each sub-term of $p'$ is a basic APRTC term. Then we prove that $p$ is not a term in normal form. It is sufficient to induct on the structure of $p'$, following from Theorem \ref{ETParallelism}, we only prove the new case $p'\equiv \partial_H(p_1)$:

\begin{itemize}
  \item Case $p_1\equiv e$. The transition rules $RD1$ or $RD2$ can be applied, so $p$ is not a normal form;
  \item Case $p_1\equiv e[m]$. The transition rules $RRD1$ or $RRD2$ can be applied, so $p$ is not a normal form;
  \item Case $p_1\equiv \delta$. The transition rules $RD3$ can be applied, so $p$ is not a normal form;
  \item Case $p_1\equiv p_1'+ p_1''$. The transition rules $RD4$ can be applied, so $p$ is not a normal form;
  \item Case $p_1\equiv p_1'\cdot p_1''$. The transition rules $RD5$ can be applied, so $p$ is not a normal form;
  \item Case $p_1\equiv p_1'\parallel p_1''$. The transition rules $RD6$ can be applied, so $p$ is not a normal form.
\end{itemize}
\end{proof}

\begin{theorem}[Soundness of APRTC modulo FR step bisimulation equivalence]\label{SAPRTCSBE}
Let $x$ and $y$ be APRTC terms including encapsulation operator $\partial_H$. If $APRTC\vdash x=y$, then $x\sim_{s}^{fr} y$.
\end{theorem}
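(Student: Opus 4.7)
The plan is to reduce this to verifying each new axiom for $\partial_H$ in Table \ref{AxiomsForEncapsulation} individually, leveraging the work already done for the parallelism fragment. Since FR step bisimulation $\sim_s^{fr}$ has already been shown to be an equivalence relation and a congruence with respect to $\between$, $\parallel$, $\mid$, $\Theta$ and $\triangleleft$ (Theorem~\ref{SPSBE}), and by the congruence theorem of $\partial_H$ just proven above also with respect to $\partial_H$, the congruence property allows me to close equational deduction under context: it suffices to show that each individual axiom equation is a valid instance of $\sim_s^{fr}$.

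First I would observe that Theorem~\ref{SPSBE} already handles all axioms of parallelism (those in Table~\ref{AxiomsForParallelism} and the underlying axioms of BARTC), so the only obligations remaining are the encapsulation axioms $D1$, $RD1$, $D2$, $RD2$, $D3$, $D4$, $D5$, $D6$. For each of these I would read off the forward and reverse transition rules for $\partial_H$ from Tables~\ref{TRForEncapsulation} and~\ref{RTRForEncapsulation} and construct a witnessing FR step bisimulation by taking the identity-like relation on matching configurations, exactly as done in the soundness proofs for BARTC.

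Concretely, for $D1$ and $RD1$ (with $e\notin H$), the rules yield $\partial_H(e)\xrightarrow{e}\partial_H(e[m])$ while $e\xrightarrow{e}e[m]$, and symmetrically for the reverse step $\partial_H(e[m])\xtworightarrow{e[m]}e[m]$; in each case the residuals are FR step bisimilar to $e[m]$ (modulo the auxiliary equation $\partial_H(e[m])=e[m]$ which is itself axiom $RD1$). For $D2$ and $RD2$ (with $e\in H$), both sides deadlock since the side condition $e\notin H$ on the rules is violated. For $D3$, both sides have no transitions. For $D4$, $D5$, $D6$, the transition rules for $\partial_H$ propagate through $+$, $\cdot$, and $\parallel$ in exactly the way the distribution axioms express; the step case for $D6$ produces the simultaneous pair $\{e_1,e_2\}$ with $e_1,e_2\notin H$, so that $\partial_H(x\parallel y)\xrightarrow{\{e_1,e_2\}}\partial_H(e_1[m]\parallel e_2[m])$ matches the transition $\partial_H(x)\parallel\partial_H(y)\xrightarrow{\{e_1,e_2\}}\partial_H(e_1[m])\parallel\partial_H(e_2[m])$, with the residual equivalence following from an induction hypothesis analogous to the one used in Theorem~\ref{SPSBE}.

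The one mildly delicate point, which I would handle by explicitly recording it rather than waving at it, is that FR bisimulation has both a forward and a reverse clause, so every axiom must be checked in both directions; this doubles the bookkeeping for the distributive axioms $D4$--$D6$ but introduces no new ideas, since the reverse transition rules in Table~\ref{RTRForEncapsulation} mirror the forward ones. Everything else is routine pattern matching on transition rules, so I would refer the reader to the proof of Theorem~\ref{SPSBE} for the structural induction and omit the repetitive details, following the paper's established style.
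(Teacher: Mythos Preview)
Your proposal is correct and takes essentially the same approach as the paper: invoke the congruence of $\sim_s^{fr}$ with respect to $\partial_H$ to reduce soundness to axiom-by-axiom verification, note that the parallelism axioms are already covered by Theorem~\ref{SPSBE}, and then check the encapsulation axioms $D1$--$D6$ using the transition rules in Tables~\ref{TRForEncapsulation} and~\ref{RTRForEncapsulation}. In fact you supply considerably more detail than the paper, which simply states that the verification ``is similar to the proof of soundness of the algebra of parallelism modulo FR step bisimulation equivalence'' and omits it entirely.
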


\begin{proof}
Since FR step bisimulation $\sim_s^{fr}$ is both an equivalent and a congruent relation with respect to the operator $\partial_H$, we only need to check if each axiom in Table \ref{AxiomsForEncapsulation} is sound modulo FR step bisimulation equivalence.

The proof is similar to the proof of soundness of the algebra of parallelism modulo FR step bisimulation equivalence, we omit it.
\end{proof}

\begin{theorem}[Completeness of APRTC modulo FR step bisimulation equivalence]\label{CAPRTCSBE}
Let $p$ and $q$ be closed APRTC terms including encapsulation operator $\partial_H$, if $p\sim_{s}^{fr} q$ then $p=q$.
\end{theorem}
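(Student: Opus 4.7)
The plan is to mirror the structure of the completeness proof for parallelism modulo FR step bisimulation (Theorem \ref{CPSBE}), extending it to handle the encapsulation operator $\partial_H$. The starting point is the elimination theorem for APRTC with encapsulation (Theorem \ref{ETEncapsulation}), which guarantees that for every closed APRTC term $p$ there exists a basic APRTC term $p'$ with $APRTC\vdash p=p'$. Combined with the soundness theorem (Theorem \ref{SAPRTCSBE}), this reduces the problem to showing: if $n$ and $n'$ are basic APRTC terms with $n\sim_s^{fr} n'$, then $n=_{AC} n'$, where $=_{AC}$ denotes equality modulo associativity and commutativity of $+$ (axioms $A1,A2$) and of $\parallel$ (axioms $P2,P3$).

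First I would observe that basic APRTC terms, as in the proof of Theorem \ref{CPSBE}, admit a normal form $s_1+\cdots+s_k$ where each summand $s_i$ is either an atomic event, an event history, or a product $t_1\cdot\ldots\cdot t_m$ whose factors are atomic events, histories, or parallel products $u_1\parallel\cdots\parallel u_n$ of atomic events or histories. Since the rewrite rules $RD1$--$RD6$ push $\partial_H$ inward past $+,\cdot,\parallel$ and resolve it on atomic events and histories, every basic term containing $\partial_H$ is already rewritten into this same form (with atomic events from $\mathbb{E}\setminus H$ and possibly $\delta$) by the elimination theorem; thus no new shape of normal form is introduced by the encapsulation layer.

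Next I would perform induction on the combined size of the two normal forms $n,n'$ with $n\sim_s^{fr}n'$, showing that each summand of $n$ appears (modulo AC) as a summand of $n'$ and vice versa. The four cases — a forward single-event summand $e$, a reverse history summand $e[m]$, a forward-starting product $t_1\cdot t_2$ with $t_1$ an atomic event or a $\parallel$-product, and a reverse-terminating product $t_1\cdot t_2$ with $t_2$ an event history or a $\parallel$-product of histories — are handled exactly as in Theorem \ref{CPSBE}: apply the appropriate forward or reverse transition rule, use $\sim_s^{fr}$ to obtain a matching transition from $n'$, then invoke the induction hypothesis on the strictly smaller residuals. The handling of $\delta$ summands is absorbed by axiom $A6$ ($x+\delta=x$) and by the fact that $\delta$ has no transitions, so it cannot produce a discrepancy in the bisimulation game.

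The main obstacle I expect is bookkeeping rather than a conceptual difficulty: I must verify that the AC-normal form used in Theorem \ref{CPSBE} genuinely covers terms obtained after applying the rewrite rules of Table \ref{TRSForEncapsulation}, in particular that no residual $\partial_H(\cdot)$ remains in the normalised form. This is guaranteed by the strong normalization argument in part (1) of Theorem \ref{ETEncapsulation} together with its part (2), so the induction is legitimate. Concluding the proof is then routine: given arbitrary closed APRTC terms $s,t$ with $s\sim_s^{fr}t$, pick normal forms $n,n'$ with $APRTC\vdash s=n$ and $APRTC\vdash t=n'$; by soundness $s\sim_s^{fr}n$ and $t\sim_s^{fr}n'$, hence $n\sim_s^{fr}n'$; the induction yields $n=_{AC}n'$, and therefore $s=n=_{AC}n'=t$, as desired.
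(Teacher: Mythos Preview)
Your proposal is correct and follows essentially the same approach as the paper: reduce via the elimination theorem (Theorem \ref{ETEncapsulation}) to basic APRTC terms free of $\partial_H$, invoke the normal-form induction of Theorem \ref{CPSBE} verbatim, and close with soundness (Theorem \ref{SAPRTCSBE}). The paper's own proof is in fact terser than yours, simply pointing to Theorem \ref{CPSBE} for the core argument rather than re-spelling the four summand cases and the $\delta$ bookkeeping.
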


\begin{proof}
Firstly, by the elimination theorem of APRTC (see Theorem \ref{ETEncapsulation}), we know that the normal form of APRTC does not contain $\partial_H$, and for each closed APRTC term $p$, there exists a closed basic APRTC term $p'$, such that $APRTC\vdash p=p'$, so, we only need to consider closed basic APRTC terms.

Similarly to Theorem \ref{CPSBE}, we can prove that for normal forms $n$ and $n'$, if $n\sim_{s}^{fr} n'$ then $n=_{AC}n'$.

Finally, let $s$ and $t$ be basic APRTC terms, and $s\sim_s^{fr} t$, there are normal forms $n$ and $n'$, such that $s=n$ and $t=n'$. The soundness theorem of APRTC modulo FR step bisimulation equivalence (see Theorem \ref{SAPRTCSBE}) yields $s\sim_s^{fr} n$ and $t\sim_s^{fr} n'$, so $n\sim_s^{fr} s\sim_s^{fr} t\sim_s^{fr} n'$. Since if $n\sim_s^{fr} n'$ then $n=_{AC}n'$, $s=n=_{AC}n'=t$, as desired.
\end{proof}

\begin{theorem}[Soundness of APRTC modulo FR pomset bisimulation equivalence]\label{SAPRTCPBE}
Let $x$ and $y$ be APRTC terms including encapsulation operator $\partial_H$. If $APRTC\vdash x=y$, then $x\sim_p^{fr} y$.
\end{theorem}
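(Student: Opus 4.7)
The plan is to reduce the soundness claim to an axiom-by-axiom verification, exactly as in the proof of Theorem \ref{SAPRTCSBE}, and then to bridge the gap between FR step and FR pomset equivalence using the same causal-decomposition trick that was used in Theorem \ref{SPPBE}. First I would invoke the congruence theorem for $\partial_H$ with respect to $\sim_p^{fr}$ (already established), together with the fact that $\sim_p^{fr}$ is an equivalence. Combining these, soundness of the equational theory follows as soon as each individual axiom in Table \ref{AxiomsForEncapsulation} (axioms $D1$--$D6$, plus their reverse counterparts $RD1$--$RD2$) is shown to be valid modulo $\sim_p^{fr}$.

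Next I would check the axioms one by one. For $D1$, $D2$, $RD1$, $RD2$, $D3$ the left- and right-hand sides are either atomic events, event histories, or $\delta$, so the claim reduces to a direct comparison of the forward and reverse transition rules in Table \ref{TRForEncapsulation} and Table \ref{RTRForEncapsulation} under the side condition on $H$; these cases are essentially identical to the step-bisimulation case and carry over verbatim. For the structural axioms $D4$, $D5$, $D6$ I would build an explicit FR pomset bisimulation $R$ by relating $\partial_H(x\circ y)$ with $\partial_H(x)\circ\partial_H(y)$ (for $\circ\in\{+,\cdot,\parallel\}$) up to FR pomset equivalence, closing $R$ under the pomset transitions derivable from the SOS rules. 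Here the congruence of $\sim_p^{fr}$ (for $+$, $\cdot$, and $\parallel$) is what allows the induction on subterms to go through.

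The one genuinely new ingredient, beyond what was handled for step bisimulation in Theorem \ref{SAPRTCSBE}, is that transitions are now labeled by arbitrary pomsets rather than by sets of pairwise-concurrent events. As in the proof of Theorem \ref{SPPBE}, the remedy is to observe that any pomset transition $\xrightarrow{P}$ can be decomposed along its causal order into a sequence of step transitions: without loss of generality for a two-element pomset $P=\{e_1,e_2:e_1\cdot e_2\}$ we have $\xrightarrow{P}=\xrightarrow{e_1}\xrightarrow{e_2}$, and analogously $\xtworightarrow{P[\mathcal{K}]}=\xtworightarrow{e_2[n]}\xtworightarrow{e_1[m]}$ in the reverse direction. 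Because $\partial_H$ commutes with $\cdot$ and $\parallel$ by axioms $D5$ and $D6$, this decomposition interacts cleanly with the encapsulation operator, so the step-bisimulation argument lifts.

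The main obstacle, modest though it is, will be bookkeeping around the reverse transitions: the reverse rules $RD1$, $RD2$ together with the reverse SOS rule in Table \ref{RTRForEncapsulation} must be shown to match up correctly when a pomset is reversed, so that both the forward and the reverse clauses of Definition \ref{FRPSB} are satisfied simultaneously. Concretely, for $D6$ one needs to verify that a reverse pomset step $\partial_H(x\parallel y)\xtworightarrow{P[\mathcal{K}]}\partial_H(x'\between y')$ is matched by $\partial_H(x)\parallel\partial_H(y)\xtworightarrow{P[\mathcal{K}]}\partial_H(x')\between\partial_H(y')$ and vice versa, with the induction hypothesis $x'\between y'\sim_p^{fr} x'\between y'$ supplying the residual. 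Once this reverse-direction matching is in place the remaining cases are routine, and the soundness theorem follows.
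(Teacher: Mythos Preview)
Your proposal is correct and follows essentially the same approach as the paper: reduce to axiom-by-axiom verification via the congruence of $\sim_p^{fr}$ with respect to $\partial_H$, then bridge from the already-proven FR step case (Theorem \ref{SAPRTCSBE}) by decomposing each pomset transition along its causal order into a sequence of single-event transitions, exactly as in Theorem \ref{SPPBE}. The paper's own proof is in fact briefer than yours, omitting the per-axiom discussion and the reverse-direction bookkeeping you spell out, but the strategy is identical.
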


\begin{proof}
Since FR pomset bisimulation $\sim_p^{fr}$ is both an equivalent and a congruent relation with respect to the operator $\partial_H$, we only need to check if each axiom in Table \ref{AxiomsForEncapsulation} is sound modulo FR pomset bisimulation equivalence.

From the definition of FR pomset bisimulation (see Definition \ref{FRPSB}), we know that FR pomset bisimulation is defined by pomset transitions, which are labeled by pomsets. In a pomset transition, the events in the pomset are either within causality relations (defined by $\cdot$) or in concurrency (implicitly defined by $\cdot$ and $+$, and explicitly defined by $\between$), of course, they are pairwise consistent (without conflicts). In Theorem \ref{SAPTCSBE}, we have already proven the case that all events are pairwise concurrent, so, we only need to prove the case of events in causality. Without loss of generality, we take a pomset of $P=\{e_1,e_2:e_1\cdot e_2\}$. Then the pomset transition labeled by the above $P$ is just composed of one single event transition labeled by $e_1$ succeeded by another single event transition labeled by $e_2$, that is, $\xrightarrow{P}=\xrightarrow{e_1}\xrightarrow{e_2}$ or $\xrightarrow{P}=\xtworightarrow{e_2[n]}\xtworightarrow{e_1[m]}$.

Similarly to the proof of soundness of APRTC modulo FR step bisimulation equivalence (see Theorem \ref{SAPRTCSBE}), we can prove that each axiom in Table \ref{AxiomsForEncapsulation} is sound modulo FR pomset bisimulation equivalence, we omit them.
\end{proof}

\begin{theorem}[Completeness of APRTC modulo FR pomset bisimulation equivalence]\label{CAPRTCPBE}
Let $p$ and $q$ be closed APRTC terms including encapsulation operator $\partial_H$, if $p\sim_p^{fr} q$ then $p=q$.
\end{theorem}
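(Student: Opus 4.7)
The plan is to mirror the completeness proof of APRTC modulo FR step bisimulation (Theorem \ref{CAPRTCSBE}), with the additional twist used in the completeness proof of parallelism modulo FR pomset bisimulation (Theorem \ref{CPPBE}) to pass from step to pomset. First I would invoke the elimination theorem of APRTC (Theorem \ref{ETEncapsulation}) to reduce the statement to closed basic APRTC terms: every closed APRTC term $p$ (even one containing $\partial_H$) is provably equal to a basic APRTC term $p'$, and since $\sim_p^{fr}$ respects $APRTC$-provable equality by soundness (Theorem \ref{SAPRTCPBE}), it suffices to prove completeness for basic terms, which by construction contain no occurrence of $\partial_H$.

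Next I would put each basic term into the normal form already used in Theorem \ref{CPPBE} and Theorem \ref{CAPRTCSBE}, namely $s_1 + \cdots + s_k$ whose summands are either single (possibly reverse) events, or products $t_1\cdot\cdots\cdot t_m$ with each factor atomic or of shape $u_1\parallel\cdots\parallel u_n$, taken modulo AC of $+$ (axioms $A1,A2$) and AC of $\parallel$ (axioms $P2,P3$); write $=_{AC}$ for this equivalence. The goal then reduces to the claim: if $n \sim_p^{fr} n'$ for normal forms $n,n'$, then $n =_{AC} n'$. I would prove this by induction on the combined size of $n$ and $n'$, matching summands exactly as in the step case, the only novelty being that a summand whose leading factor is a causal product $t_1\cdot t_2$ must now be matched under a general pomset transition. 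Using the observation from the proof of Theorem \ref{SPPBE} that a pomset $P=\{e_1,e_2:e_1\cdot e_2\}$ decomposes as $\xrightarrow{P}=\xrightarrow{e_1}\xrightarrow{e_2}$ in the forward direction and as $\xtworightarrow{P[\mathcal{K}]}=\xtworightarrow{e_2[n]}\xtworightarrow{e_1[m]}$ in the reverse direction, every matching of pomset transitions splits into matchings of single-event transitions, so the inductive bookkeeping collapses into that of Theorem \ref{CAPRTCSBE}. Soundness (Theorem \ref{SAPRTCPBE}) then closes the loop: for basic $s,t$ with $s\sim_p^{fr} t$ and normal forms $n,n'$ with $APRTC\vdash s=n$ and $APRTC\vdash t=n'$, one obtains $n\sim_p^{fr} s\sim_p^{fr} t\sim_p^{fr} n'$, hence $n=_{AC} n'$, hence $s=t$.

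The main obstacle I expect is not in any single step but in the bookkeeping when a summand mixes causality and concurrency, e.g.\ $(u_1\parallel\cdots\parallel u_n)\cdot t_2$ or $t_1\cdot(u_1[m_1]\parallel\cdots\parallel u_n[m_n])$: one must check that the decomposition of a general pomset transition into single-event transitions aligns on both sides of the bisimulation in the forward and in the reverse direction simultaneously, and that the residual processes after each event step remain related by $\sim_p^{fr}$ so that the induction hypothesis applies. This same issue is already absorbed in the proofs of Theorem \ref{CPPBE} and Theorem \ref{CAPRTCSBE}, so the present proof is essentially a routine transcription of those arguments once $\partial_H$ has been eliminated at the outset; accordingly, I would simply note that the proof is analogous to the completeness of parallelism modulo FR pomset bisimulation and omit the detailed case analysis, exactly as the author has done for the parallel cases earlier in the paper.
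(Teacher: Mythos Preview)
Your proposal is correct and follows essentially the same approach as the paper: the paper's own proof simply states that it can be proven similarly to the proof of completeness of APRTC modulo FR step bisimulation equivalence (Theorem \ref{CAPRTCSBE}) and omits the details. Your expansion of that argument---eliminating $\partial_H$ via Theorem \ref{ETEncapsulation}, reducing to normal forms, matching summands by induction on size, and using the decomposition of pomset transitions into single-event transitions as in Theorem \ref{SPPBE}---is exactly the routine transcription the paper gestures at.
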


\begin{proof}
The proof can be proven similarly to the proof of completeness of APRTC modulo FR step bisimulation equivalence, we omit it.
\end{proof}

\begin{theorem}[Soundness of APRTC modulo FR hp-bisimulation equivalence]\label{SAPRTCHPBE}
Let $x$ and $y$ be APRTC terms including encapsulation operator $\partial_H$. If $APRTC\vdash x=y$, then $x\sim_{hp}^{fr} y$.
\end{theorem}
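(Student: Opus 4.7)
The plan is to follow the same template used for the earlier soundness results in this section, in particular the proof of Theorem \ref{SPHPBE} (Soundness of parallelism modulo FR hp-bisimulation equivalence) and of Theorem \ref{SAPRTCPBE} (Soundness of APRTC modulo FR pomset bisimulation equivalence). The congruence theorem for $\partial_H$ with respect to $\sim_{hp}^{fr}$ has already been established, and $\sim_{hp}^{fr}$ is clearly an equivalence relation on APRTC terms, so it suffices to verify that every axiom in Table \ref{AxiomsForEncapsulation} (namely $D1,RD1,D2,RD2,D3,D4,D5,D6$) is sound modulo FR hp-bisimulation equivalence.

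Recall from Definition \ref{FRHHPB} that an FR hp-bisimulation is a posetal relation $R\subseteq\mathcal{C}(\mathcal{E}_1)\overline{\times}\mathcal{C}(\mathcal{E}_2)$ with isomorphism component $f:C_1\to C_2$, that must track both forward single-event transitions (updating $f$ to $f[e_1\mapsto e_2]$) and reverse single-event transitions (updating $f'$ to $f'[e_1[m]\mapsto e_2[n]]$). The strategy for each axiom $u=v$ is therefore to construct a posetal relation $R$ containing the triple $(\emptyset,\emptyset,\emptyset)$ together with $(C_u,f,C_v)$ for the corresponding configurations reachable from the two sides, and to check that every forward or reverse single-event transition from one side can be matched on the other side with the isomorphism $f$ updated accordingly. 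For the axioms $D1,RD1,D2,RD2,D3$ the witness relation is trivial since the two sides are single atomic terms; for $D4,D5,D6$ one appeals to the forward and reverse transition rules for $\partial_H$ in Table \ref{TRForEncapsulation} and Table \ref{RTRForEncapsulation}, and observes that each rule carries over the event label $e$ (resp.\ $e[m]$) unchanged, so the identity map $f[e\mapsto e]$ (resp.\ $f[e[m]\mapsto e[m]]$) is always an isomorphism extension.

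Just as in the proof of Theorem \ref{SAPRTCPBE}, the case of FR hp-bisimulation differs from the step case only in that we have to additionally verify the posetal product condition (preservation of the order isomorphism) along both forward and reverse transitions; the transition-rule analysis and the matching of configurations are identical. Concretely, for a typical axiom such as $D6$, $\partial_H(x\parallel y)=\partial_H(x)\parallel\partial_H(y)$, the left-hand side can fire a step involving an event from $x$, an event from $y$, or both in parallel (forward) and symmetrically in the reverse direction, and in each subcase the encapsulation rules distribute to the corresponding transition on the right-hand side with the same event labels; by the inductive hypotheses the resulting residual configurations are again related by FR hp-bisimulations, and the extended isomorphisms agree.

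The main obstacle I expect is the bookkeeping for the reverse transitions together with the posetal product: one must be careful that when an event history $e[m]$ is undone on the left, the matching reverse transition on the right undoes a history $e[n]$ (possibly with a different identifier) and the updated isomorphism $f'[e_1[m]\mapsto e_2[n]]$ remains order-preserving on the shrunken configurations. Because the encapsulation operator $\partial_H$ only renames events in $H$ to $\delta$ and leaves other events (including their histories) untouched, this bookkeeping in fact goes through routinely, and the remaining verification is a mechanical check analogous to the step and pomset cases already treated. We therefore omit the detailed per-axiom calculations.
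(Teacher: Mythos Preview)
Your proposal is correct and follows essentially the same approach as the paper's own proof: reduce to checking each axiom in Table~\ref{AxiomsForEncapsulation} via the congruence of $\sim_{hp}^{fr}$, invoke the step/pomset soundness already proved (Theorems~\ref{SAPRTCSBE} and~\ref{SAPRTCPBE}), and note that the only additional work for the hp-case is tracking the posetal isomorphism $f$ along forward and reverse single-event transitions, which succeeds because $\partial_H$ preserves event labels outside $H$. If anything, your outline is more explicit than the paper's, which simply states the isomorphism-update condition $f'=f[e\mapsto e]$ (resp.\ $f'=f[e[m]\mapsto e[m]]$) and then omits the per-axiom verification entirely.
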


\begin{proof}
Since FR hp-bisimulation $\sim_{hp}^{fr}$ is both an equivalent and a congruent relation with respect to the operator $\partial_H$, we only need to check if each axiom in Table \ref{AxiomsForEncapsulation} is sound modulo FR hp-bisimulation equivalence.

From the definition of FR hp-bisimulation (see Definition \ref{FRHHPB}), we know that FR hp-bisimulation is defined on the posetal product $(C_1,f,C_2),f:C_1\rightarrow C_2\textrm{ isomorphism}$. Two process terms $s$ related to $C_1$ and $t$ related to $C_2$, and $f:C_1\rightarrow C_2\textrm{ isomorphism}$. Initially, $(C_1,f,C_2)=(\emptyset,\emptyset,\emptyset)$, and $(\emptyset,\emptyset,\emptyset)\in\sim_{hp}^{fr}$. When $s\xrightarrow{e}s'$ ($C_1\xrightarrow{e}C_1'$), there will be $t\xrightarrow{e}t'$ ($C_2\xrightarrow{e}C_2'$), and we define $f'=f[e\mapsto e]$. And when $s\xtworightarrow{e[m]}s'$ ($C_1\xtworightarrow{e[m]}C_1'$), there will be $t\xtworightarrow{e[m]}t'$ ($C_2\xtworightarrow{e[m]}C_2'$), and we define $f'=f[e[m]\mapsto e[m]]$. Then, if $(C_1,f,C_2)\in\sim_{hp}^{fr}$, then $(C_1',f',C_2')\in\sim_{hp}^{fr}$.

Similarly to the proof of soundness of APRTC modulo FR pomset bisimulation equivalence (see Theorem \ref{SAPRTCPBE}), we can prove that each axiom in Table \ref{AxiomsForEncapsulation} is sound modulo FR hp-bisimulation equivalence, we just need additionally to check the above conditions on FR hp-bisimulation, we omit them.
\end{proof}

\begin{theorem}[Completeness of APRTC modulo FR hp-bisimulation equivalence]\label{CAPRTCHPBE}
Let $p$ and $q$ be closed APRTC terms including encapsulation operator $\partial_H$, if $p\sim_{hp}^{fr} q$ then $p=q$.
\end{theorem}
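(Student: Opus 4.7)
The plan is to follow exactly the pattern used throughout the paper for the analogous completeness results (especially Theorems \ref{CAPRTCSBE} and \ref{CAPRTCPBE}), adapting only the bookkeeping needed to track the configuration isomorphism that distinguishes FR hp-bisimulation from FR pomset/step bisimulation.

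First, I would invoke the elimination theorem for APRTC with encapsulation (Theorem \ref{ETEncapsulation}) to reduce the problem to closed basic APRTC terms: given $p$ and $q$ with $p \sim_{hp}^{fr} q$, there are closed basic APRTC terms $p'$ and $q'$ with $APRTC\vdash p=p'$ and $APRTC\vdash q=q'$; by soundness modulo FR hp-bisimulation (Theorem \ref{SAPRTCHPBE}), $p \sim_{hp}^{fr} p'$ and $q \sim_{hp}^{fr} q'$, so it suffices to prove the statement for basic terms. I would then put these basic terms into the standard normal form (a sum of terms, each a sequence of parallel compositions of atomic events or event histories) modulo AC of $+$ and $\parallel$ (axioms $A1,A2,P2,P3$), exactly as in the proof of Theorem \ref{CPSBE}.

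The core of the argument is to show that for normal forms $n,n'$ with $n \sim_{hp}^{fr} n'$ we have $n =_{AC} n'$, by induction on the combined size of $n$ and $n'$. For each summand of $n$ I would trace a forward or reverse transition and use the definition of FR hp-bisimulation (Definition \ref{FRHHPB}) to force $n'$ to contain a matching summand: an atomic $e$ or history $e[m]$ is matched by the identical single-event transition; a summand of the form $e'\cdot t_2$ (resp.\ $t_1\cdot e'[m]$) produces a forward transition $\xrightarrow{e'}$ (resp.\ reverse transition $\xtworightarrow{e'[m]}$) whose target is related by $\sim_{hp}^{fr}$ to the corresponding target on the $n'$ side, so the inductive hypothesis on the smaller $t_2,t_2'$ (resp.\ $t_1,t_1'$) yields $t_2 =_{AC} t_2'$; summands whose leading (or trailing) factor is a parallel composition $e_1\parallel\cdots\parallel e_k$ are handled in the same way, using the forward/reverse parallel transition rules of Tables \ref{TRForParallel} and \ref{RTRForParallel}. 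The bookkeeping addition compared to the pomset case is that, along every matching step, one updates the isomorphism $f$ by $f[e_1\mapsto e_2]$ forward and by $f[e_1[m]\mapsto e_2[n]]$ in reverse, and checks that the updated triple remains in the posetal product so the induction goes through; because in normal forms the matched events on both sides carry the same label and occupy the same position in the causal order, the isomorphism extends canonically.

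Finally, I would close the argument exactly as in Theorem \ref{CAPRTCSBE}: given basic $s,t$ with $s \sim_{hp}^{fr} t$, pick normal forms $n,n'$ with $APRTC\vdash s=n$ and $APRTC\vdash t=n'$; soundness (Theorem \ref{SAPRTCHPBE}) gives $s \sim_{hp}^{fr} n$ and $t \sim_{hp}^{fr} n'$, hence $n \sim_{hp}^{fr} n'$, so by the previous step $n =_{AC} n'$ and therefore $s = n =_{AC} n' = t$ in the axiom system. The only nontrivial step is verifying that the extended isomorphism $f'$ witnessing each inductive matching is well-defined on both the forward and the reverse components simultaneously; this is the one place where the proof differs in substance from the pomset case, and it is where I would expect to need to spell out the details most carefully, although the structure of normal forms in APRTC makes the required isomorphism essentially forced by the labels.
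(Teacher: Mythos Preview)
Your proposal is correct and follows essentially the same approach as the paper: the paper's proof simply says it is similar to the FR pomset case (Theorem \ref{CAPRTCPBE}), which in turn defers to the FR step case (Theorem \ref{CAPRTCSBE}), and that is precisely the elimination--normal-form--induction--soundness argument you have outlined. Your explicit remark about updating the isomorphism $f$ at each matching step is the right adaptation for the hp-setting and is the only point where the argument differs from the pomset/step case; the paper omits this detail entirely, so you have in fact spelled out slightly more than the paper does.
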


\begin{proof}
The proof is similar to the proof of completeness of APRTC modulo FR pomset bisimulation equivalence, we omit it.
\end{proof}

\section{Recursion}\label{rec}

In this section, we introduce recursion to capture infinite processes based on APRTC. In the following, $E,F,G$ are recursion specifications, $X,Y,Z$ are recursive variables.

The behavior of the solution $\langle X_i|E\rangle$ for the recursion variable $X_i$ in $E$, where $i\in\{1,\cdots,n\}$, is exactly the behavior of their right-hand sides $t_i(X_1,\cdots,X_n)$, which is captured by the two transition rules in Table \ref{TRForGR}.

\begin{center}
    \begin{table}
        $$\frac{t_i(\langle X_1|E\rangle,\cdots,\langle X_n|E\rangle)\xrightarrow{e}\surd}{\langle X_i|E\rangle\xrightarrow{e}\surd}$$
        $$\frac{t_i(\langle X_1|E\rangle,\cdots,\langle X_n|E\rangle)\xrightarrow{e} y}{\langle X_i|E\rangle\xrightarrow{e} y}$$
        \caption{Transition rules of guarded recursion}
        \label{TRForGR}
    \end{table}
\end{center}

\begin{theorem}[Conservitivity of APRTC with guarded recursion]
APRTC with guarded recursion is a conservative extension of APRTC.
\end{theorem}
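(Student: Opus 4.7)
The plan is to invoke the Conservative Extension Theorem (Theorem \ref{TCE}) with $T_0$ the TSS of APRTC (BARTC plus the rules for $\parallel$, $\mid$, $\Theta$, $\triangleleft$, $\partial_H$) and $T_1$ consisting of the two rules in Table \ref{TRForGR} for the recursion construct $\langle X_i|E\rangle$. The argument will mirror the earlier conservativity proofs in the paper (e.g.\ Conservativity of APRTC with respect to the algebra for parallelism), reducing everything to the two hypotheses of Theorem \ref{TCE}: source-dependency of $T_0$, and the condition on each rule of $T_1$.

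First I would verify source-dependency of $T_0$. Inspecting the transition rules for BARTC (Tables \ref{SETRForBARTC}, \ref{RSETRForBARTC}, \ref{PTRForBARTC}, \ref{RPTRForBARTC}) together with those for $\parallel$ (Tables \ref{TRForParallel}, \ref{RTRForParallel}), $\mid$ (Tables \ref{TRForCommunication}, \ref{RTRForCommunication}), $\Theta$ and $\triangleleft$ (Tables \ref{TRForConflict}, \ref{RTRForConflict}), and $\partial_H$ (Tables \ref{TRForEncapsulation}, \ref{RTRForEncapsulation}), one checks rule by rule that every variable appearing in the target, label, or negative premise of a rule either appears in the source or is introduced by a positive premise whose source is already source-dependent. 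This is immediate for each rule by the definition in Subsection \ref{tcbs}, and has been used implicitly in the earlier conservativity results.

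Second I would verify the hypothesis on $T_1$. Both rules in Table \ref{TRForGR} have source $\langle X_i|E\rangle$, which involves the recursion construct $\langle\cdot\mid\cdot\rangle$; this symbol lies in $\Sigma_1\setminus\Sigma_0$, so the source of every rule of $T_1$ is fresh. Condition~(2) of Theorem \ref{TCE} is therefore satisfied via the "fresh source" alternative, with no need to inspect the premises. Combining both observations, together with the fact that $T_0$ and $T_0\oplus T_1$ remain positive after reduction (the negative premises in the rules for $\triangleleft$ reduce as in APTC, and $T_1$ is purely positive), Theorem \ref{TCE} yields that $T_0\oplus T_1$ is a conservative extension of $T_0$.

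The main obstacle, which is really a routine check rather than a genuine difficulty, is confirming that adjoining recursion does not introduce new transitions for pure APRTC terms $t\in\mathcal{T}(\Sigma_0)$. Since no rule in $T_1$ can fire with a $\Sigma_0$-term as source (the source must be of the form $\langle X_i|E\rangle$), and since the rules in $T_0$ are unchanged, the set of transitions $t\xrightarrow{a}t'$ and $t\xtworightarrow{a[m]}t'$ generated from $t\in\mathcal{T}(\Sigma_0)$ is identical in $T_0$ and in $T_0\oplus T_1$, which is precisely the content of conservativity. This completes the plan.
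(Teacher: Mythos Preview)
Your proposal is correct and follows essentially the same approach as the paper: both invoke Theorem~\ref{TCE} by noting that the APRTC rules are source-dependent and that the recursion rules in Table~\ref{TRForGR} have a fresh constant $\langle X_i|E\rangle$ in their source. Your write-up is simply more explicit about the tables being checked and about positivity after reduction, whereas the paper dispatches the result in a single sentence.
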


\begin{proof}
Since the transition rules of APRTC are source-dependent, and the transition rules for guarded recursion in Table \ref{TRForGR} contain only a fresh constant in their source, so the transition rules of APRTC with guarded recursion are a conservative extension of those of APRTC.
\end{proof}

\begin{theorem}[Congruence theorem of APRTC with guarded recursion]
Truly concurrent bisimulation equivalences $\sim_p^{fr}$, $\sim_s^{fr}$ and $\sim_{hp}^{fr}$ are all congruences with respect to APRTC with guarded recursion.
\end{theorem}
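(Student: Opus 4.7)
The plan is to reduce the congruence claim to the congruence theorems already established for the APRTC operators, and then handle only the new syntactic ingredient introduced by guarded recursion, namely the constants $\langle X_i|E\rangle$. Since $+, \cdot, \parallel, \mid, \Theta, \triangleleft$ and $\partial_H$ have already been shown to be congruences for each of $\sim_p^{fr}, \sim_s^{fr}$ and $\sim_{hp}^{fr}$ (see the congruence theorems of BARTC, of parallelism, and of $\partial_H$), it suffices to verify that substituting equivalent terms into any context containing occurrences of recursive constants still yields equivalent terms. Because $\langle X_i|E\rangle$ is nullary, the congruence clause for it is vacuous on its own; the real content is that the transition rules in Table \ref{TRForGR} do not disturb the format under which bisimulation is a congruence. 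The rules are in tyft/tyxt-like shape (the source of the conclusion is a single constant and the premise uses only transitions of APRTC terms), so the TSS remains source-dependent and positive after reduction, matching the setup of Theorem \ref{TCE}.

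For each of the three equivalences I would then exhibit an explicit witness bisimulation. Given $x\sim y$ and a context $C[\cdot]$ built over APRTC extended with guarded recursive constants, define $R$ to be the closure under the APRTC operational semantics of the union of the witnessing bisimulations for the equivalent ingredient subterms together with the diagonal on all subterms of the form $\langle X_i|E\rangle$. The verification of the bisimulation clauses reduces, by a case analysis on the outermost operator of $C$, to the cases already handled in the earlier congruence proofs, augmented by the two cases in which the outermost position is occupied by $\langle X_i|E\rangle$. In the latter, any forward transition $\langle X_i|E\rangle \xrightarrow{e} \cdot$ is, by the rule in Table \ref{TRForGR}, precisely a transition of $t_i(\langle X_1|E\rangle,\ldots,\langle X_n|E\rangle)$, which is an APRTC context and therefore covered by induction; the matching transition on the other side is produced by the same rule applied symmetrically. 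The $\sim_{hp}^{fr}$ case additionally tracks the isomorphism $f$, extended by $f[e\mapsto e]$ forwards and by $f[e[m]\mapsto e[m]]$ in reverse, exactly as in the proof of Theorem \ref{SAPRTCHPBE}.

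The main obstacle I anticipate is the reverse-transition clause in the presence of recursion. A reverse step $p' \xtworightarrow{e[m]} p$ performed at the top level of a term of the form $\langle X_i|E\rangle$ after several forward unfoldings requires one to trace the history label $[m]$ back through the unfolded specification and to argue that the analogous reverse step is available on the bisimulation partner with a matching history stamp. This needs a mild bookkeeping argument that past labels generated during forward unfoldings of $E$ in the two processes can be kept in lockstep, which in turn uses the guardedness of $E$ to ensure that each history bit corresponds to a genuine action rather than to an unguarded recursive jump. Once this bookkeeping is set up, the remaining verifications are routine adaptations of the corresponding cases in the proofs of the congruence theorems for APRTC, and only the hp-case need be written out in detail, since the pomset and step cases are strictly simpler.
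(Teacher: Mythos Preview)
Your approach is correct but considerably more elaborate than the paper's. The paper's proof is a two-line sketch: it simply records that (1) the right-hand sides of guarded recursive equations are APRTC terms (with recursion variables replaced by their defining right-hand sides), and (2) the equivalences $\sim_p^{fr},\sim_s^{fr},\sim_{hp}^{fr}$ are already congruences for all APRTC operators. Your key observation that $\langle X_i|E\rangle$ is nullary---hence imposes no congruence obligation of its own---is exactly the point, and together with the previously established congruence results for $+,\cdot,\parallel,\mid,\Theta,\triangleleft,\partial_H$ it already finishes the argument.

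The explicit witness-bisimulation construction you sketch, and especially the bookkeeping worry about tracking history labels $[m]$ backwards through unfolded recursions, is unnecessary work. Congruence is verified operator by operator; since the transition rules for the APRTC operators are unchanged by adjoining nullary constants, the earlier congruence proofs apply verbatim to terms that happen to contain $\langle X_i|E\rangle$ as subterms. There is no separate ``reverse step at the top level of $\langle X_i|E\rangle$'' case to analyse for congruence purposes, because there is nothing to substitute into a nullary constant. Your format-theoretic remark about tyft/tyxt and source-dependence is a legitimate alternative route to the same conclusion, but the paper does not take it; it relies purely on the two facts above.
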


\begin{proof}
It follows the following two facts:
\begin{enumerate}
  \item in a guarded recursive specification, right-hand sides of its recursive equations can be adapted to the form by applications of the axioms in APRTC and replacing recursion variables by the right-hand sides of their recursive equations;
  \item truly concurrent bisimulation equivalences $\sim_p^{fr}$, $\sim_s^{fr}$ and $\sim_{hp}^{fr}$ are all congruences with respect to all operators of APRTC.
\end{enumerate}
\end{proof}

\subsection{Recursive Definition and Specification Principles}

The RDP (Recursive Definition Principle) and the RSP (Recursive Specification Principle) are shown in Table \ref{RDPRSP}.

\begin{center}
\begin{table}
  \begin{tabular}{@{}ll@{}}
\hline No. &Axiom\\
  RDP & $\langle X_i|E\rangle = t_i(\langle X_1|E,\cdots,X_n|E\rangle)\quad (i\in\{1,\cdots,n\})$\\
  RSP & if $y_i=t_i(y_1,\cdots,y_n)$ for $i\in\{1,\cdots,n\}$, then $y_i=\langle X_i|E\rangle \quad(i\in\{1,\cdots,n\})$\\
\end{tabular}
\caption{Recursive definition and specification principle}
\label{RDPRSP}
\end{table}
\end{center}

\begin{theorem}[Elimination theorem of APRTC with linear recursion]\label{ETRecursion}
Each process term in APRTC with linear recursion is equal to a process term $\langle X_1|E\rangle$ with $E$ a linear recursive specification.
\end{theorem}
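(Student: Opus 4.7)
The plan is to prove the theorem by structural induction on process terms, exploiting the earlier elimination theorem for APRTC (Theorem \ref{ETEncapsulation}) to first reduce the non-recursive fragments to basic APRTC form. At each inductive step, given that the subterms can already be expressed as $\langle X_1|E\rangle$ for linear recursive specifications $E$, I would construct a new linear recursive specification whose first variable's solution equals the composite term, then invoke RSP (Table \ref{RDPRSP}) to justify the equation.

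For the base cases, an atomic event $e$ (respectively $e[m]$, $\delta$) is equal to $\langle X|\{X=e\}\rangle$ (respectively $\langle X|\{X=e[m]\}\rangle$, $\langle X|\{X=\delta\}\rangle$), which are trivially linear. For the inductive step, assume the subterms are $\langle X_1|E\rangle$ and $\langle Y_1|F\rangle$, where $E = \{X_i = t_i\}_{i\le n}$ and $F = \{Y_j = s_j\}_{j\le m}$ are linear (so each $t_i$ and $s_j$ is a sum of summands of the form $a$, $a[k]$, or $a\cdot Z$ for a recursion variable $Z$). I handle each operator as follows: for $+$, form the specification that adds a new variable $Z$ with $Z = t_1[X_1\!:=\!Z] + s_1[Y_1\!:=\!Z]$ (after renaming $X_i, Y_j$ apart); for $\cdot$, replace every terminating summand in each $t_i$ by that summand followed by $\langle Y_1|F\rangle$; for the encapsulation $\partial_H$, push $\partial_H$ inward using $D1$--$D6$, giving linear equations for the variables $\partial_H(X_i)$; and analogously, using $CE19$--$CE24$ and $U25$--$U37$, for $\Theta$ and $\triangleleft$.

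The main obstacle is the parallel and communication cases $\parallel$, $\mid$, $\between$, where a naive expansion loses linearity because right-hand sides would contain composite operators. Here I would introduce product variables $Z_{ij}$ for each pair $(X_i, Y_j)$ with defining equation obtained by applying the expansion axioms $P1$--$P8$ and $C11$--$C16$ to $t_i \parallel s_j$ (plus a summand $t_i \mid s_j$ via $P1$). Each summand of $t_i$ is either an atom (or history) $a$, or a prefix $a\cdot X_{i'}$, and similarly for $s_j$; the axioms $P4$--$P6$ and $C11$--$C14$ rewrite $(a\cdot X_{i'}) \parallel (b\cdot Y_{j'})$ to $(a\parallel b)\cdot Z_{i'j'}$, which is linear in the extended variable set, and symmetrically for the communication merge. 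Because $a \parallel b$ and $\gamma(a,b)$ are single (possibly multi-event) action labels, the resulting equations are linear in the product variables, giving an overall linear specification with at most $n\cdot m$ new variables. Solving this new specification with RSP completes the parallel case.

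Once all cases are handled, a single outer application of RSP shows that the original term equals $\langle Z_1|G\rangle$ for the constructed linear $G$. Throughout, one must verify that guardedness is preserved at each construction step (since linear recursion is in particular guarded), which follows because every new equation is built from summands of the original linear specifications with atomic actions in head position, or from their parallel/communication merges, which remain guarded under the expansion axioms.
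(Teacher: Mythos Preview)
Your approach is essentially the same as the paper's: both proceed by structural induction on term size, bring the term (together with its reachable states $t_1,\ldots,t_n$) into the shape of a family of linear equations, construct a linear recursive specification $E$ of which the $t_i$ are a solution, and then invoke RSP to obtain $t_1=\langle X_1|E\rangle$. Your operator-by-operator case analysis, in particular the product-variable construction $Z_{ij}$ for $\parallel$ and $\mid$ using $P4$--$P8$ and $C11$--$C16$, spells out what the paper's proof leaves implicit in the phrase ``by applying structural induction with respect to term size.''

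One omission specific to this reversible setting: the paper's linear normal form admits not only forward summands $(a_1\parallel\cdots\parallel a_m)X$ and $b_1\parallel\cdots\parallel b_n$ but also their history (postfix) counterparts $X(a_1[m]\parallel\cdots\parallel a_m[m])$ and $b_1[n]\parallel\cdots\parallel b_n[n]$. Your inductive cases for $\cdot$, $\parallel$, and $\mid$ treat only the forward-prefix shape; each of these constructions must be mirrored for the postfix history shape, using the reverse axioms $RP4$--$RP6$ and $RC11$--$RC14$ in place of $P4$--$P6$ and $C11$--$C14$. (Also, your handling of $+$ via the substitution $t_1[X_1:=Z]+s_1[Y_1:=Z]$ is not quite right: the new equation should simply be $Z=t_1+s_1$, with the equations of $E$ and $F$ carried over unchanged after renaming variables apart.)
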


\begin{proof}
By applying structural induction with respect to term size, each process term $t_1$ in APRTC with linear recursion generates a process can be expressed in the form of equations

$$t_i=(a_{i11}\parallel\cdots\parallel a_{i1i_1})t_{i1}+\cdots+(a_{ik_i1}\parallel\cdots\parallel a_{ik_ii_k})t_{ik_i}+(b_{i11}\parallel\cdots\parallel b_{i1i_1})+\cdots+(b_{il_i1}\parallel\cdots\parallel b_{il_ii_l})$$

for $i\in\{1,\cdots,n\}$. Or,

$$t_i=t_{i1}(a_{i11}[m_{i1}]\parallel\cdots\parallel a_{i1i_1}[m_{i1}])+\cdots+t_{ik_i}(a_{ik_i1}[m_{ik}]\parallel\cdots\parallel a_{ik_ii_k}[m_{ik}])+(b_{i11}[n_{i1}]\parallel\cdots\parallel b_{i1i_1})[n_{i1}]+\cdots+(b_{il_i1}[n_{il}]\parallel\cdots\parallel b_{il_ii_l}[n_{il}])$$

Let the linear recursive specification $E$ consist of the recursive equations

$$X_i=(a_{i11}\parallel\cdots\parallel a_{i1i_1})X_{i1}+\cdots+(a_{ik_i1}\parallel\cdots\parallel a_{ik_ii_k})X_{ik_i}+(b_{i11}\parallel\cdots\parallel b_{i1i_1})+\cdots+(b_{il_i1}\parallel\cdots\parallel b_{il_ii_l})$$

or the equations,

$$X_i=X_{i1}(a_{i11}[m_{i1}]\parallel\cdots\parallel a_{i1i_1}[m_{i1}])+\cdots+X_{ik_i}(a_{ik_i1}[m_{ik}]\parallel\cdots\parallel a_{ik_ii_k}[m_{ik}])+(b_{i11}[n_{i1}]\parallel\cdots\parallel b_{i1i_1}[n_{i1}])+\cdots+(b_{il_i1}[n_{il}]\parallel\cdots\parallel b_{il_ii_l}[n_{il}])$$

for $i\in\{1,\cdots,n\}$. Replacing $X_i$ by $t_i$ for $i\in\{1,\cdots,n\}$ is a solution for $E$, RSP yields $t_1=\langle X_1|E\rangle$.
\end{proof}

\begin{theorem}[Soundness of APRTC with guarded recursion]\label{SAPRTCR}
Let $x$ and $y$ be APRTC with guarded recursion terms. If $APRTC\textrm{ with guarded recursion}\vdash x=y$, then
\begin{enumerate}
  \item $x\sim_{s}^{fr} y$;
  \item $x\sim_p^{fr} y$;
  \item $x\sim_{hp}^{fr} y$.
\end{enumerate}
\end{theorem}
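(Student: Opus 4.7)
The plan is to follow the same template used earlier in the paper for soundness theorems: leverage the fact that each of $\sim_s^{fr}$, $\sim_p^{fr}$, and $\sim_{hp}^{fr}$ has already been shown to be a congruence with respect to APRTC with guarded recursion, and then verify that every axiom of the theory is sound modulo each equivalence. Since the axioms of BARTC, parallelism, and encapsulation have already been handled by Theorems \ref{SBARTCSBE}, \ref{SBARTCPBE}, \ref{SBARTCHPBE}, \ref{SPSBE}, \ref{SPPBE}, \ref{SPHPBE}, \ref{SAPRTCSBE}, \ref{SAPRTCPBE}, \ref{SAPRTCHPBE}, it suffices to check soundness of the two new axioms RDP and RSP from Table \ref{RDPRSP}.

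For RDP, I would argue directly from the transition rules in Table \ref{TRForGR}: these rules are exactly designed so that $\langle X_i|E\rangle$ has the same outgoing forward and reverse transitions as $t_i(\langle X_1|E\rangle,\ldots,\langle X_n|E\rangle)$. Thus the identity relation, extended with the pair $(\langle X_i|E\rangle, t_i(\langle X_1|E\rangle,\ldots,\langle X_n|E\rangle))$ (and the corresponding posetal/weakly posetal triples in the hp-case), is trivially an FR step/pomset/hp-bisimulation, establishing soundness of RDP for all three equivalences.

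For RSP, given a guarded linear recursive specification $E$ and a solution $y_1,\ldots,y_n$ with $y_i = t_i(y_1,\ldots,y_n)$, I would define the relation
\[
R = \{(y_i, \langle X_i|E\rangle) \mid i\in\{1,\ldots,n\}\} \cup \{(s,s) \mid s \text{ closed APRTC term}\}
\]
and close it under the operator contexts occurring in the right-hand sides $t_i$. For the step and pomset cases one shows that if $y_i \xrightarrow{X} y_i'$ (respectively $y_i \xtworightarrow{X[\mathcal{K}]} y_i'$), then by unfolding $y_i = t_i(y_1,\ldots,y_n)$ and using the transition rules, the derivative $y_i'$ is either another $y_j$ (up to context) or a closed term, and by the corresponding RDP-transition $\langle X_i|E\rangle$ can match it with a term related by $R$; the vice-versa direction is symmetric. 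For the hp-case one decorates the pairs with the isomorphism $f$ between configurations and extends $f$ along each matched event $e$ (or reversed event $e[m]$) in the standard way, verifying that the posetal product structure is preserved.

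The main obstacle is the RSP case, specifically the need to use the \emph{guardedness} of $E$: without guardedness there are multiple solutions, so $R$ cannot be shown to be a bisimulation. Concretely, guardedness ensures that every transition of $t_i(y_1,\ldots,y_n)$ begins with an atomic step (forward or reverse) before any recursion variable $y_j$ can be exposed, which is precisely what allows the matching RDP-derived transition of $\langle X_i|E\rangle$ to land on a term still related to the derivative of $y_i$ by $R$. Once this is in place, the pomset case reduces to the step case by decomposing a pomset transition as a sequence of single-event transitions (as done in Theorem \ref{SPPBE}), and the hp-case follows by additionally tracking the isomorphism between configurations as in Theorem \ref{SPHPBE}.
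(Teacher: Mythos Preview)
Your proposal is correct and follows essentially the same template as the paper: use the congruence theorem to reduce soundness to checking each axiom, observe that all APRTC axioms are already handled by prior soundness theorems, and then verify RDP and RSP from Table~\ref{RDPRSP}. The paper's own proof does exactly this but omits the verification of RDP and RSP entirely (saying only ``this can be proven similarly \ldots\ we omit them''), whereas you actually sketch how the bisimulation for RDP falls out of the transition rules in Table~\ref{TRForGR} and how guardedness is used in constructing a witnessing bisimulation for RSP; your reductions from the pomset case to the step case and from the hp-case via the posetal isomorphism also mirror the paper's standard pattern. One small slip: in your RSP argument you write ``guarded linear recursive specification'', but RSP here is stated for guarded (not necessarily linear) specifications, so just drop the word ``linear''.
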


\begin{proof}
(1) Soundness of APRTC with guarded recursion with respect to FR step bisimulation $\sim_s^{fr}$.

Since FR step bisimulation $\sim_s^{fr}$ is both an equivalent and a congruent relation with respect to APRTC with guarded recursion, we only need to check if each axiom in Table \ref{RDPRSP} is sound modulo FR step bisimulation equivalence.

This can be proven similarly to the proof of soundness of APRTC modulo FR step bisimulation equivalence, we omit them.

(2) Soundness of APRTC with guarded recursion with respect to FR pomset bisimulation $\sim_p^{fr}$.

Since FR pomset bisimulation $\sim_p^{fr}$ is both an equivalent and a congruent relation with respect to the guarded recursion, we only need to check if each axiom in Table \ref{RDPRSP} is sound modulo FR pomset bisimulation equivalence.

From the definition of FR pomset bisimulation (see Definition \ref{FRPSB}), we know that FR pomset bisimulation is defined by pomset transitions, which are labeled by pomsets. In a pomset transition, the events in the pomset are either within causality relations (defined by $\cdot$) or in concurrency (implicitly defined by $\cdot$ and $+$, and explicitly defined by $\between$), of course, they are pairwise consistent (without conflicts). In Theorem \ref{SAPTCSBE}, we have already proven the case that all events are pairwise concurrent, so, we only need to prove the case of events in causality. Without loss of generality, we take a pomset of $P=\{e_1,e_2:e_1\cdot e_2\}$. Then the pomset transition labeled by the above $P$ is just composed of one single event transition labeled by $e_1$ succeeded by another single event transition labeled by $e_2$, that is, $\xrightarrow{P}=\xrightarrow{e_1}\xrightarrow{e_2}$ or $\xrightarrow{P}=\xtworightarrow{e_2[n]}\xtworightarrow{e_1[m]}$.

Similarly to the proof of soundness of APRTC with guarded recursion modulo FR step bisimulation equivalence (1), we can prove that each axiom in Table \ref{RDPRSP} is sound modulo FR pomset bisimulation equivalence, we omit them.

(3) Soundness of APRTC with guarded recursion with respect to FR hp-bisimulation $\sim_{hp}^{fr}$.

Since FR hp-bisimulation $\sim_{hp}^{fr}$ is both an equivalent and a congruent relation with respect to guarded recursion, we only need to check if each axiom in Table \ref{RDPRSP} is sound modulo FR hp-bisimulation equivalence.

From the definition of FR hp-bisimulation (see Definition \ref{FRHHPB}), we know that FR hp-bisimulation is defined on the posetal product $(C_1,f,C_2),f:C_1\rightarrow C_2\textrm{ isomorphism}$. Two process terms $s$ related to $C_1$ and $t$ related to $C_2$, and $f:C_1\rightarrow C_2\textrm{ isomorphism}$. Initially, $(C_1,f,C_2)=(\emptyset,\emptyset,\emptyset)$, and $(\emptyset,\emptyset,\emptyset)\in\sim_{hp}^{fr}$. When $s\xrightarrow{e}s'$ ($C_1\xrightarrow{e}C_1'$), there will be $t\xrightarrow{e}t'$ ($C_2\xrightarrow{e}C_2'$), and we define $f'=f[e\mapsto e]$. And when $s\xtworightarrow{e[m]}s'$ ($C_1\xtworightarrow{e[m]}C_1'$), there will be $t\xtworightarrow{e[m]}t'$ ($C_2\xtworightarrow{e[m]}C_2'$), and we define $f'=f[e[m]\mapsto e[m]]$. Then, if $(C_1,f,C_2)\in\sim_{hp}^{fr}$, then $(C_1',f',C_2')\in\sim_{hp}^{fr}$.

Similarly to the proof of soundness of APRTC with guarded recursion modulo FR pomset bisimulation equivalence (2), we can prove that each axiom in Table \ref{RDPRSP} is sound modulo FR hp-bisimulation equivalence, we just need additionally to check the above conditions on FR hp-bisimulation, we omit them.
\end{proof}

\begin{theorem}[Completeness of APRTC with linear recursion]\label{CAPRTCR}
Let $p$ and $q$ be closed APRTC with linear recursion terms, then,
\begin{enumerate}
  \item if $p\sim_{s}^{fr} q$ then $p=q$;
  \item if $p\sim_p^{fr} q$ then $p=q$;
  \item if $p\sim_{hp}^{fr} q$ then $p=q$.
\end{enumerate}
\end{theorem}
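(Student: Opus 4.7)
The plan is to combine the elimination theorem for APRTC with linear recursion (Theorem~\ref{ETRecursion}) with a standard merging argument using RSP. By Theorem~\ref{ETRecursion}, any closed APRTC term with linear recursion is provably equal to some $\langle X_1|E\rangle$ with $E$ a linear recursive specification, so I may assume $p=\langle X_1|E\rangle$ and $q=\langle Y_1|F\rangle$ for linear specifications $E=\{X_i=t_i(X_1,\ldots,X_n)\mid i\in\{1,\ldots,n\}\}$ and $F=\{Y_j=s_j(Y_1,\ldots,Y_{n'})\mid j\in\{1,\ldots,n'\}\}$, with right-hand sides $t_i$ and $s_j$ already put in the normal form exhibited in the proof of Theorem~\ref{ETRecursion}.

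Assuming $\langle X_1|E\rangle\sim_s^{fr}\langle Y_1|F\rangle$, I would build a joint linear recursive specification $G$ whose variables are pairs $Z_{ij}$, one for every pair $(X_i,Y_j)$ with $\langle X_i|E\rangle\sim_s^{fr}\langle Y_j|F\rangle$. For each such pair, whenever $t_i$ contains a forward summand $(a_{k1}\parallel\cdots\parallel a_{k\ell})\cdot X_{i'}$, or a forward terminating summand $a_{k1}\parallel\cdots\parallel a_{k\ell}$, the bisimilarity forces a matching summand in $s_j$ with destination $Y_{j'}$ such that $\langle X_{i'}|E\rangle\sim_s^{fr}\langle Y_{j'}|F\rangle$; these matched summands determine the right-hand side of $Z_{ij}$ in $G$. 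Postfix summands of the form $X_{i'}\cdot(a_{k1}[m]\parallel\cdots\parallel a_{k\ell}[m])$ are treated symmetrically using the reverse clause of FR step bisimulation from Definition~\ref{FRPSB}. Both families $\{\langle X_i|E\rangle\}$ and $\{\langle Y_j|F\rangle\}$ are then solutions of $G$ for the indexed variables $Z_{ij}$, so RSP (Table~\ref{RDPRSP}) yields $\langle X_1|E\rangle=\langle Z_{11}|G\rangle=\langle Y_1|F\rangle$, which proves clause~(1).

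Clauses~(2) and~(3) reduce to clause~(1) on the normal forms produced by elimination: a pomset transition of such a normal form decomposes (up to associativity and commutativity of $\parallel$) as a set of pairwise concurrent single-event transitions, exactly as in the arguments for Theorems~\ref{CPPBE} and~\ref{CPHPBE}, so any FR pomset or FR hp-bisimulation between normal forms restricts to an FR step bisimulation, and the argument of the preceding paragraph applies. The posetal isomorphism needed for the hp-case is supplied, as in Theorem~\ref{SAPRTCHPBE}, by extending $f$ with $e\mapsto e$ on each matched forward step and with $e[m]\mapsto e[m]$ on each matched reverse step.

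The main obstacle will be verifying that the joint specification $G$ is genuinely linear in the reversible setting. Unlike the classical case, the right-hand sides produced by Theorem~\ref{ETRecursion} contain both prefix summands $(a\parallel\cdots)\cdot X_{i'}$ and postfix summands $X_{i'}\cdot(a[m]\parallel\cdots)$, so one must check that the summands induced for $Z_{ij}$ do not produce unguarded occurrences of $Z_{ij}$ itself and still fit the linearity format. A careful case analysis on both forward and reverse clauses of FR bisimulation for each pair of matched summands, together with the $Std$/$NStd$ side conditions governing which axioms in Table~\ref{AxiomsForBARTC} apply, is needed to keep $G$ inside that format; once this bookkeeping is done, RSP applies and completeness follows.
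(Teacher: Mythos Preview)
Your proposal is correct and follows essentially the same approach as the paper: reduce via the elimination theorem (Theorem~\ref{ETRecursion}) to terms of the form $\langle X_1|E\rangle$, build a joint linear specification indexed by bisimilar pairs $Z_{XY}$ whose summands are the matched forward and reverse summands, verify that both original solutions solve this joint specification, and conclude by RSP. The paper handles clauses~(2) and~(3) by saying they are ``proven similarly'' to clause~(1), whereas you reduce them to clause~(1) by observing that on the linear normal forms the relevant transitions are already steps; both routes are valid, and yours is slightly more explicit.
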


\begin{proof}
Firstly, by the elimination theorem of APRTC with guarded recursion (see Theorem \ref{ETRecursion}), we know that each process term in APRTC with linear recursion is equal to a process term $\langle X_1|E\rangle$ with $E$ a linear recursive specification.

It remains to prove the following cases.

(1) If $\langle X_1|E_1\rangle \sim_s^{fr} \langle Y_1|E_2\rangle$ for linear recursive specification $E_1$ and $E_2$, then $\langle X_1|E_1\rangle = \langle Y_1|E_2\rangle$.

Let $E_1$ consist of recursive equations $X=t_X$ for $X\in \mathcal{X}$ and $E_2$
consists of recursion equations $Y=t_Y$ for $Y\in\mathcal{Y}$. Let the linear recursive specification $E$ consist of recursion equations $Z_{XY}=t_{XY}$, and $\langle X|E_1\rangle\sim_s^{fr}\langle Y|E_2\rangle$, and $t_{XY}$ consists of the following summands:

\begin{enumerate}
  \item $t_{XY}$ contains a summand $(a_1\parallel\cdots\parallel a_m)Z_{X'Y'}$ iff $t_X$ contains the summand $(a_1\parallel\cdots\parallel a_m)X'$ and $t_Y$ contains the summand $(a_1\parallel\cdots\parallel a_m)Y'$ such that $\langle X'|E_1\rangle\sim_s^{fr}\langle Y'|E_2\rangle$;
  \item $t_{XY}$ contains a summand $Z_{X'Y'}(a_1[m]\parallel\cdots\parallel a_m[m])$ iff $t_X$ contains the summand $X'(a_1[m]\parallel\cdots\parallel a_m[m])$ and $t_Y$ contains the summand $Y'(a_1[m]\parallel\cdots\parallel a_m[m])$ such that $\langle X'|E_1\rangle\sim_s^{fr}\langle Y'|E_2\rangle$;
  \item $t_{XY}$ contains a summand $b_1\parallel\cdots\parallel b_n$ iff $t_X$ contains the summand $b_1\parallel\cdots\parallel b_n$ and $t_Y$ contains the summand $b_1\parallel\cdots\parallel b_n$;
  \item $t_{XY}$ contains a summand $b_1[n]\parallel\cdots\parallel b_n[n]$ iff $t_X$ contains the summand $b_1[n]\parallel\cdots\parallel b_n[n]$ and $t_Y$ contains the summand $b_1[n]\parallel\cdots\parallel b_n[n]$.
\end{enumerate}

Let $\sigma$ map recursion variable $X$ in $E_1$ to $\langle X|E_1\rangle$, and let $\psi$ map recursion variable $Z_{XY}$ in $E$ to $\langle X|E_1\rangle$. So, $\sigma((a_1\parallel\cdots\parallel a_m)X')\equiv(a_1\parallel\cdots\parallel a_m)\langle X'|E_1\rangle\equiv\psi((a_1\parallel\cdots\parallel a_m)Z_{X'Y'})$, or $\sigma(X'(a_1[m]\parallel\cdots\parallel a_m[m]))\equiv\langle X'|E_1\rangle(a_1[m]\parallel\cdots\parallel a_m[m])\equiv\psi(Z_{X'Y'}(a_1[m]\parallel\cdots\parallel a_m[m]))$, so by RDP, we get $\langle X|E_1\rangle=\sigma(t_X)=\psi(t_{XY})$. Then by RSP, $\langle X|E_1\rangle=\langle Z_{XY}|E\rangle$, particularly, $\langle X_1|E_1\rangle=\langle Z_{X_1Y_1}|E\rangle$. Similarly, we can obtain $\langle Y_1|E_2\rangle=\langle Z_{X_1Y_1}|E\rangle$. Finally, $\langle X_1|E_1\rangle=\langle Z_{X_1Y_1}|E\rangle=\langle Y_1|E_2\rangle$, as desired.

(2) If $\langle X_1|E_1\rangle \sim_p^{fr} \langle Y_1|E_2\rangle$ for linear recursive specification $E_1$ and $E_2$, then $\langle X_1|E_1\rangle = \langle Y_1|E_2\rangle$.

It can be proven similarly to (1), we omit it.

(3) If $\langle X_1|E_1\rangle \sim_{hp}^{fr} \langle Y_1|E_2\rangle$ for linear recursive specification $E_1$ and $E_2$, then $\langle X_1|E_1\rangle = \langle Y_1|E_2\rangle$.

It can be proven similarly to (1), we omit it.
\end{proof}

\section{Abstraction}\label{abs}

To abstract away from the internal implementations of a program, and verify that the program exhibits the desired external behaviors, the silent step $\tau$ and abstraction operator $\tau_I$ are introduced, where $I\subseteq \mathbb{E}$ denotes the internal events. The transition rule of $\tau$ is shown in Table \ref{TRForTau}. In the following, let the atomic event $e$ range over $\mathbb{E}\cup\{\delta\}\cup\{\tau\}$, and let the communication function $\gamma:\mathbb{E}\cup\{\tau\}\times \mathbb{E}\cup\{\tau\}\rightarrow \mathbb{E}\cup\{\delta\}$, with each communication involved $\tau$ resulting in $\delta$.

\begin{center}
    \begin{table}
        $$\frac{}{\tau\xrightarrow{\tau}\surd}$$
        $$\frac{}{\tau\xtworightarrow{\tau}\surd}$$
        \caption{Transition rule of the silent step}
        \label{TRForTau}
    \end{table}
\end{center}

\begin{theorem}[Conservitivity of $RAPTC$ with silent step]
$RAPTC$ with silent step is a conservative extension of $RAPTC$.
\end{theorem}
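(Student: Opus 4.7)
The plan is to apply the conservative extension theorem (Theorem \ref{TCE}) with $T_0$ the TSS of RAPTC as developed through Section \ref{rec} and $T_1$ the TSS comprising the two transition rules for the silent step displayed in Table \ref{TRForTau}. Following the pattern of the two preceding conservativity results in this paper (APRTC over the algebra for parallelism, and APRTC with guarded recursion over APRTC), the proof reduces to verifying two items: that $T_0$ is source-dependent, and that each rule in $T_1$ satisfies the freshness hypothesis of Theorem \ref{TCE}.

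First I would establish source-dependence of $T_0$ by chaining the earlier source-dependence assertions. BARTC is source-dependent by direct inspection of Tables \ref{SETRForBARTC}--\ref{RPTRForBARTC}; the parallelism layer preserves source-dependence (this is the content of the generalization theorem at the start of Section \ref{aprtc}, together with inspection of Tables \ref{TRForParallel}--\ref{RTRForConflict}); the encapsulation rules in Tables \ref{TRForEncapsulation} and \ref{RTRForEncapsulation} have all their variables in the source; and the guarded recursion rules in Table \ref{TRForGR} have the fresh constant $\langle X_i|E\rangle$ as their source, so source-dependence propagates trivially. Assembling these gives source-dependence of the full $T_0$.

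Next I would verify the freshness hypothesis for each rule in Table \ref{TRForTau}. Both rules, the forward one and the reverse one, have $\tau$ as their source and carry $\tau$ as their label. Since $\tau$ is introduced as a new constant not belonging to the signature $\Sigma_0$ of RAPTC, the ``source of $\rho$ is fresh'' alternative in Theorem \ref{TCE} applies at once, with no premise to inspect; furthermore, the label $\tau$ is itself fresh, as it does not appear in any transition rule of $T_0$.

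I expect the main obstacle to be essentially notational rather than technical: the argument is a direct instance of Theorem \ref{TCE} and mirrors the earlier conservativity proofs almost verbatim. The only point requiring small care is to be explicit that both the forward and the reverse subsystems are handled simultaneously, since the two rules of Table \ref{TRForTau} are symmetric in shape and satisfy the freshness condition in the same way, so no separate argument is needed for the reverse direction.
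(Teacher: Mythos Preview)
Your proposal is correct and follows essentially the same approach as the paper: both argue via Theorem~\ref{TCE} by noting that the transition rules of $RAPTC$ are source-dependent and that the rules in Table~\ref{TRForTau} have the fresh constant $\tau$ as their source. Your version is more explicit in spelling out the chain of source-dependence through the earlier layers, whereas the paper simply asserts source-dependence of $RAPTC$ in one line, but the substance is identical.
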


\begin{proof}
Since the transition rules of $RAPTC$ are source-dependent, and the transition rules for silent step in Table \ref{TRForTau} contain only a fresh constant $\tau$ in their source, so the transition rules of $RAPTC$ with silent step is a conservative extension of those of $RAPTC$.
\end{proof}

\begin{theorem}[Congruence theorem of $RAPTC$ with silent step]
Rooted branching FR truly concurrent bisimulation equivalences $\approx_{rbp}^{fr}$, $\approx_{rbs}^{fr}$ and $\approx_{rbhp}^{fr}$ are all congruences with respect to $RAPTC$ with silent step.
\end{theorem}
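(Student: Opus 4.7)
The plan is to follow the same template used for the earlier congruence theorems in Sections~\ref{bartc} and \ref{aprtc}, adapted to the rooted branching setting. First I would verify that each of $\approx_{rbp}^{fr}$, $\approx_{rbs}^{fr}$, $\approx_{rbhp}^{fr}$ is an equivalence relation on $RAPTC$ terms: reflexivity is witnessed by the identity relation (which trivially satisfies both the forward and reverse rooted clauses of Definition~\ref{FRRBPSB} and Definition~\ref{FRRBHHPB}), symmetry follows from the symmetric formulation of the clauses, and transitivity follows by composing witnessing relations, using that branching FR bisimilarity $\approx_{bp}^{fr}$ (respectively $\approx_{bs}^{fr}$, $\approx_{bhp}^{fr}$) is itself transitive so that the terminal conditions $C_1'\approx_{bp}^{fr}C_2'$ compose.

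Next, for each operator $f\in\{\cdot,+,\parallel,\mid,\Theta,\triangleleft,\partial_H\}$ I would show that it preserves the equivalence. The general strategy is: given $x_i\approx_{r*}^{fr} y_i$ for each argument, build a candidate relation $R$ consisting of pairs $(f(\ldots x_i\ldots), f(\ldots y_i\ldots))$ together with all pairs reachable after one matching forward or reverse step (where after that first step we may drop down into the weaker $\approx_{b*}^{fr}$, which has already been shown to be preserved by the operators of $APRTC$ in the earlier congruence theorems). For each forward or reverse transition rule in Tables~\ref{TRForParallel}, \ref{RTRForParallel}, \ref{TRForCommunication}, \ref{RTRForCommunication}, \ref{TRForConflict}, \ref{RTRForConflict}, \ref{TRForEncapsulation}, \ref{RTRForEncapsulation} and those for $\cdot$, $+$ in Tables~\ref{PTRForBARTC}, \ref{RPTRForBARTC}, I would produce a matching transition on the other side using the root clauses, then invoke branching FR congruence to close the diagram. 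The $\tau$-constant requires only checking that the single rules $\tau\xrightarrow{\tau}\surd$ and $\tau\xtworightarrow{\tau}\surd$ in Table~\ref{TRForTau} are matched identically by any rooted branching bisimulation, which is immediate.

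The main obstacle I expect is the interaction between the silent step $\tau$ and the operators $\parallel$, $\mid$, and $\triangleleft$ in the reverse direction. Specifically, the rules for $\triangleleft$ (forward Table~\ref{TRForConflict} and reverse Table~\ref{RTRForConflict}) can emit $\tau$ as the label of the first transition out of $x\triangleleft y$, and for rooted branching bisimulation these emitted $\tau$-steps must still be matched by an honest $\tau$-step on the other side rather than being absorbed. One must carefully verify that if $x_1\approx_{r*}^{fr}y_1$ and $x_2\approx_{r*}^{fr}y_2$, then the conflict premises $\sharp(e_1,e_2)$ and the negative premises $y\nrightarrow^{e_2}$ transfer across the equivalence; the root condition guarantees that initial-step behavior is preserved pointwise, so negative premises about enabledness of $e_2$ in $y$ are respected. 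A similar care is needed for $\parallel$ and $\mid$ where synchronisation creates new labels from pairs of component transitions.

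Finally, for the hp- and hhp-variants I would lift the argument to the (weakly) posetal product: given $(C_1,f,C_2)\in R$ and a matching forward pair $C_1\xrightarrow{e_1}C_1'$, $C_2\xrightarrow{e_2}C_2'$ guaranteed by the root clauses, the extended map $f[e_1\mapsto e_2]$ remains an isomorphism of the enlarged configurations, and analogously for the reverse clause with $f[e_1[m]\mapsto e_2[n]]$; then invoke the already-established branching FR hp-congruence on the continuations. As in the earlier theorems of this paper, the pomset and step cases differ only in whether the matched pomset is required to be a concurrent step, which does not affect the argument. I therefore expect the bulk of the proof to proceed by direct case analysis analogous to the congruence proof for APRTC given earlier, with the only genuinely new work being the verification of the root clause for the $\tau$-emitting rules of $\triangleleft$.
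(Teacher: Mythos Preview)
Your proposal is correct in spirit but takes a genuinely different route from the paper. The paper's proof is a two-line high-level argument: first, it invokes that the strong FR equivalences $\sim_p^{fr}$, $\sim_s^{fr}$, $\sim_{hp}^{fr}$ are already congruences for all operators of $RAPTC$ and that strong FR bisimilarity implies rooted branching FR bisimilarity, and from this infers that the rooted branching equivalences are congruences for $RAPTC$; second, it notes that the extension of $\mathbb{E}$ to $\mathbb{E}\cup\{\tau\}$ can be handled similarly and simply omits the details. There is no per-operator case analysis, no explicit construction of witnessing relations, and no discussion of the $\tau$-emitting rules for $\triangleleft$ or of negative premises.

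By contrast, you carry out a direct operator-by-operator verification patterned on the APRTC congruence proof: you build candidate relations, match initial forward and reverse steps using the root clauses, and then fall back on branching FR congruence for the continuations, with explicit attention to the posetal lifting for the hp-variant and to the interaction of $\tau$ with $\parallel$, $\mid$, and $\triangleleft$. Your approach is considerably longer but is self-contained and actually engages with the technical content; the paper's approach is economical but relies on an inference (congruence of a finer relation plus inclusion into a coarser one yielding congruence of the coarser one) that is not valid in general and is really shorthand for the fact that on $\tau$-free terms the two families of equivalences coincide. Either route is acceptable for this paper's level of rigor, but be aware that your detailed plan goes well beyond what the paper itself supplies.
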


\begin{proof}
It follows the following two facts:
\begin{enumerate}
  \item FR truly concurrent bisimulation equivalences $\sim_{p}^{fr}$, $\sim_s^{fr}$ and $\sim_{hp}^{fr}$ are all congruences with respect to all operators of $RAPTC$, while FR truly concurrent bisimulation equivalences $\sim_{p}^{fr}$, $\sim_s^{fr}$ and $\sim_{hp}^{fr}$ imply the corresponding rooted branching FR truly concurrent bisimulation $\approx{rbp}^{fr}$, $\approx_{rbs}^{fr}$ and $\approx_{rbhp}^{fr}$, so rooted branching FR truly concurrent bisimulation $\approx{rbp}^{fr}$, $\approx_{rbs}^{fr}$ and $\approx_{rbhp}^{fr}$ are all congruences with respect to all operators of $RAPTC$;
  \item While $\mathbb{E}$ is extended to $\mathbb{E}\cup\{\tau\}$, it can be proved that rooted branching FR truly concurrent bisimulation $\approx{rbp}^{fr}$, $\approx_{rbs}^{fr}$ and $\approx_{rbhp}^{fr}$ are all congruences with respect to all operators of $RAPTC$, we omit it.
\end{enumerate}
\end{proof}

\subsection{Algebraic Laws for the Silent Step}

We design the axioms for the silent step $\tau$ in Table \ref{AxiomsForTau}.

\begin{center}
\begin{table}
  \begin{tabular}{@{}ll@{}}
\hline No. &Axiom\\
  $B1$ & $e\cdot\tau=e$\\
  $RB1$ & $\tau\cdot e[m]=e[m]$\\
  $B2$ & $e\cdot(\tau\cdot(x+y)+x)=e\cdot(x+y)$\\
  $RB2$ & $((x+y)\cdot\tau+x)\cdot e[m]=(x+y)\cdot e[m]$\\
  $B3$ & $x\parallel\tau=x$\\
\end{tabular}
\caption{Axioms of silent step}
\label{AxiomsForTau}
\end{table}
\end{center}

\begin{theorem}[Elimination theorem of APRTC with silent step and guarded linear recursion]\label{ETTau}
Each process term in APRTC with silent step and guarded linear recursion is equal to a process term $\langle X_1|E\rangle$ with $E$ a guarded linear recursive specification.
\end{theorem}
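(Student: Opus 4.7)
The plan is to mimic the proof of Theorem \ref{ETRecursion} (elimination for APRTC with linear recursion), proceeding by structural induction on the size of the process term, but now also accounting for occurrences of the silent constant $\tau$ and for the fact that the target specification must be \emph{guarded} linear recursive, not merely linear recursive. First I would observe that APRTC with silent step and guarded linear recursion is a conservative extension of APRTC with linear recursion for terms not containing $\tau$, so the existing elimination result already handles the $\tau$-free fragment; the novelty is purely in accommodating $\tau$.

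The main induction would consider each term constructor. For atomic events, $\delta$, and $\tau$ the result is immediate (take a one-equation specification $X_1 = a$, $X_1=\delta$, or $X_1=\tau\cdot X_2$ with $X_2$ a trivial termination variable, making sure the guardedness condition is respected). For $+$, $\cdot$, $\parallel$, $\mid$, $\Theta$, $\triangleleft$, $\partial_H$, and $\tau_I$, I would reduce to the case where each operand is already of the form $\langle Y|F\rangle$ with $F$ guarded linear recursive, and then build a combined guarded linear recursion whose variables are indexed by pairs/tuples of the operand variables, exactly in the style used in the proof of Theorem \ref{ETRecursion}. The axioms $B1$, $RB1$, $B2$, $RB2$, $B3$ (together with the axioms of Tables \ref{AxiomsForBARTC}, \ref{AxiomsForParallelism}, \ref{AxiomsForEncapsulation}) are used to rewrite each right-hand side into the linear normal form
\[
(a_{i11}\parallel\cdots\parallel a_{i1i_1})X_{i1} + \cdots + (a_{ik1}\parallel\cdots\parallel a_{iki_k})X_{ik} + (b_{i11}\parallel\cdots\parallel b_{i1i_1}) + \cdots
\]
or its reverse analogue, where summands of shape $\tau\cdot X'$ with $X'$ unguarded are absorbed using $B1$/$B2$ (and their reverse duals), so that every recursion variable in the right-hand side ends up guarded by a non-silent atomic action.

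The hard part will be precisely the guardedness maintenance. Unlike the linear recursion case, a naive substitution step can produce summands of the form $\tau\cdot X'$ where $X'$ itself starts with $\tau$, yielding an unguarded $\tau$-loop that is not a guarded linear recursive specification. The standard remedy is a saturation argument: repeatedly unfold $\tau$-prefixed occurrences of recursion variables using RDP, apply $B1$ and $B2$ to collapse redundant $\tau$s, and use the fact that the original specification is guarded (so the unfolding must eventually reach a non-$\tau$ guard) to conclude that the process terminates after finitely many steps. A well-founded measure on the bound on consecutive $\tau$-prefixes before a non-$\tau$ guard, together with the finiteness of the recursion variables, ensures this terminates.

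Once the saturated system of equations is in guarded linear form, let $E$ be the resulting guarded linear recursive specification with variables $X_1,\dots,X_n$, and let $\sigma$ be the substitution sending $X_i$ to the corresponding sub-term of the original process. Then $\sigma$ solves $E$, so by RSP (Table \ref{RDPRSP}) the original term equals $\langle X_1|E\rangle$, completing the induction. Throughout, I would appeal to the congruence theorem for $\approx_{rbp}^{fr}$, $\approx_{rbs}^{fr}$ and $\approx_{rbhp}^{fr}$ to justify the rewriting of subterms inside larger contexts.
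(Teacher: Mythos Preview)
Your approach is essentially the same as the paper's: structural induction on term size to bring every term into the linear normal form displayed in the proof of Theorem~\ref{ETRecursion}, then an application of RSP to identify the term with $\langle X_1|E\rangle$. The paper's own proof is in fact just a verbatim repetition of the proof of Theorem~\ref{ETRecursion} with no additional discussion of $\tau$ or of guardedness preservation, so your more careful treatment of the saturation step (unfolding $\tau$-prefixed variables and collapsing via $B1$, $B2$, $RB1$, $RB2$) goes beyond what the paper actually writes down.

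Two small corrections. First, drop $\tau_I$ from your case analysis: the abstraction operator is introduced only in the following subsection, so the signature at this point contains $\tau$ but not $\tau_I$; there are no $TI$ axioms available here, and the theorem does not claim anything about terms containing $\tau_I$. Second, the final appeal to the congruence of $\approx_{rbp}^{fr}$, $\approx_{rbs}^{fr}$, $\approx_{rbhp}^{fr}$ is misplaced: elimination is a statement about derivability in the equational theory, and rewriting subterms inside contexts is licensed simply by the fact that the axioms form an equational logic (substitution into contexts preserves $\vdash$), not by any semantic congruence result.
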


\begin{proof}
By applying structural induction with respect to term size, each process term $t_1$ in APRTC with silent step and guarded linear recursion generates a process can be expressed in the form of equations

$$t_i=(a_{i11}\parallel\cdots\parallel a_{i1i_1})t_{i1}+\cdots+(a_{ik_i1}\parallel\cdots\parallel a_{ik_ii_k})t_{ik_i}+(b_{i11}\parallel\cdots\parallel b_{i1i_1})+\cdots+(b_{il_i1}\parallel\cdots\parallel b_{il_ii_l})$$

Or,

$$t_i=t_{i1}(a_{i11}[m_{i1}]\parallel\cdots\parallel a_{i1i_1}[m_{i1}])+\cdots+t_{ik_i}(a_{ik_i1}[m_{ik}]\parallel\cdots\parallel a_{ik_ii_k}[m_{ik}])+(b_{i11}[n_{i1}]\parallel\cdots\parallel b_{i1i_1}[n_{i1}])+\cdots+(b_{il_i1}[n_{il}]\parallel\cdots\parallel b_{il_ii_l}[n_{il}])$$

for $i\in\{1,\cdots,n\}$. Let the linear recursive specification $E$ consist of the recursive equations

$$X_i=(a_{i11}\parallel\cdots\parallel a_{i1i_1})X_{i1}+\cdots+(a_{ik_i1}\parallel\cdots\parallel a_{ik_ii_k})X_{ik_i}+(b_{i11}\parallel\cdots\parallel b_{i1i_1})+\cdots+(b_{il_i1}\parallel\cdots\parallel b_{il_ii_l})$$

Or,

$$X_i=X_{i1}(a_{i11}[m_{i1}]\parallel\cdots\parallel a_{i1i_1}[m_{i1}])+\cdots+X_{ik_i}(a_{ik_i1}[m_{ik}]\parallel\cdots\parallel a_{ik_ii_k}[m_{ik}])+(b_{i11}[n_{i1}]\parallel\cdots\parallel b_{i1i_1}[n_{i1}])+\cdots+(b_{il_i1}[n_{il}]\parallel\cdots\parallel b_{il_ii_l}[n_{il}])$$

for $i\in\{1,\cdots,n\}$. Replacing $X_i$ by $t_i$ for $i\in\{1,\cdots,n\}$ is a solution for $E$, RSP yields $t_1=\langle X_1|E\rangle$.
\end{proof}

\begin{theorem}[Soundness of APRTC with silent step and guarded linear recursion]\label{SAPRTCTAU}
Let $x$ and $y$ be APRTC with silent step and guarded linear recursion terms. If APRTC with silent step and guarded linear recursion $\vdash x=y$, then
\begin{enumerate}
  \item $x\approx_{rbs}^{fr} y$;
  \item $x\approx_{rbp}^{fr} y$;
  \item $x\approx_{rbhp}^{fr} y$.
\end{enumerate}
\end{theorem}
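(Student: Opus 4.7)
The plan is to reduce the theorem to checking each axiom of the theory in isolation, using the congruence and conservativity results already established. By the congruence theorem of $RAPTC$ with silent step proved just above, each of $\approx_{rbs}^{fr}$, $\approx_{rbp}^{fr}$ and $\approx_{rbhp}^{fr}$ is a congruence for every operator of the signature under consideration; hence it suffices to show that every axiom is sound modulo each equivalence, since equational deduction then propagates soundness through contexts. This is exactly the pattern followed in Theorems \ref{SAPRTCSBE}, \ref{SAPRTCPBE}, \ref{SAPRTCHPBE} and \ref{SAPRTCR}.

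First I would dispose of the axioms inherited from earlier layers. The axioms in Tables \ref{AxiomsForBARTC}, \ref{AxiomsForParallelism}, \ref{AxiomsForEncapsulation}, and \ref{RDPRSP} have already been shown sound modulo the strong FR equivalences $\sim_s^{fr}$, $\sim_p^{fr}$ and $\sim_{hp}^{fr}$. Since every strong FR truly concurrent bisimulation is in particular a rooted branching FR truly concurrent bisimulation (the silent step plays no role when no $\tau$-transitions are present in the witnessing relation), we immediately obtain soundness of these axioms modulo $\approx_{rbs}^{fr}$, $\approx_{rbp}^{fr}$ and $\approx_{rbhp}^{fr}$. This disposes of the bulk of the work and reduces the question to the five new axioms $B1$, $RB1$, $B2$, $RB2$ and $B3$ of Table \ref{AxiomsForTau}.

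Next I would verify the silent-step axioms one at a time. For $B1$ ($e\cdot\tau = e$) the forward transitions of $e\cdot\tau$ are $e\cdot\tau\xrightarrow{e}e[m]\cdot\tau$ and thereafter $e[m]\cdot\tau\xrightarrow{\tau}e[m]$, while $e\xrightarrow{e}e[m]$; the reverse side is symmetric using the transition rule for $\tau$ in Table \ref{TRForTau}. The $\tau$ at the tail is absorbed by clause (1) of Definition \ref{FRBPSB}/\ref{FRRBPSB}, and the rootedness condition is satisfied because the visible action $e$ is executed first on both sides. Axiom $RB1$ is handled dually, noting that the reverse transitions must exhibit the same rootedness phenomenon when tracing $e[m]$ back through the leading $\tau$. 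For $B3$ ($x\parallel\tau=x$) a witness relation is built by pairing $(x\parallel\tau,x)$ with $(x,x)$ and using the fact that the $\tau$ in the right branch can only contribute a $\tau$-step that matches itself via the first clause of branching bisimulation; again the rootedness condition is trivial because $x$ fires its initial non-silent transition on both sides. The step, pomset and hp-cases are then obtained exactly as in the previous soundness proofs: the step case is direct, the pomset case follows by decomposing a pomset transition into its constituent event transitions, and the hp-case follows by tracking isomorphisms $f$ between configurations as in Theorem \ref{SAPRTCHPBE}.

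The main obstacle is $B2$ (and its reverse $RB2$), which encodes the essential branching law $e\cdot(\tau\cdot(x+y)+x) = e\cdot(x+y)$. Here the rooted prefix $e$ is crucial: after $e$ has fired on either side, we reach configurations where we must exhibit a branching FR bisimulation, not merely a rooted one. I would construct the witnessing weakly posetal relation explicitly, relating $\tau\cdot(x+y)+x$ to $x+y$ by letting the $\tau$-transition on the left be matched by the empty sequence on the right (clause (1) of Definition \ref{FRBPSB}), and matching any $x$-summand action identically on both sides, and then closing under derivatives using the same recipe. Care is needed with respect to the reverse transitions introduced by the FR semantics: one must verify that every reverse transition $\xtworightarrow{e'[m]}$ out of a derivative is matched, and that the reverse $\tau$-step introduced by $RB2$ can be absorbed on the right-hand side via the branching clauses (5)--(8) of Definition \ref{FRBPSB}. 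Once the bisimulation is built for $\approx_{bp}^{fr}$, the rootedness prefix $e$ on the outside promotes the witness to an $\approx_{rbp}^{fr}$ witness. The step and hp-refinements then follow by the same decomposition-and-tracking techniques used earlier, so after $B2$ is settled the remaining cases are routine.
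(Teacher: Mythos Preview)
Your proposal is correct and follows essentially the same route as the paper: reduce to checking individual axioms via congruence, dispatch the inherited axioms (the paper does this implicitly by only treating Table~\ref{AxiomsForTau}), verify $B1$, $RB1$, $B2$, $RB2$, $B3$ by inspecting the transition rules, and then derive the pomset and hp cases from the step case by the usual decomposition arguments. Your treatment of $B2$/$RB2$ via an explicit branching bisimulation witness is in fact more carefully spelled out than the paper's, which simply exhibits matching transition derivations without articulating the witnessing relation.
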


\begin{proof}
(1) Soundness of APRTC with silent step and guarded linear recursion with respect to rooted branching FR step bisimulation $\approx_{rbs}^{fr}$.

Since rooted branching FR step bisimulation $\approx_{rbs}^{fr}$ is both an equivalent and a congruent relation with respect to APRTC with silent step and guarded linear recursion, we only need to check if each axiom in Table \ref{AxiomsForTau} is sound modulo rooted branching FR step bisimulation equivalence.

Though transition rules in Table \ref{TRForTau} are defined in the flavor of single event, they can be modified into a step (a set of events within which each event is pairwise concurrent), we omit them. If we treat a single event as a step containing just one event, the proof of this soundness theorem does not exist any problem, so we use this way and still use the transition rules in Table \ref{TRForTau}.

\begin{itemize}
  \item \textbf{Axiom $B1$}. Assume that $e\cdot\tau=e$, it is sufficient to prove that $e\cdot\tau\approx_{rbs}^{fr}e$. By the forward transition rules for operator $\cdot$ in Table \ref{SETRForBARTC} and $\tau$ in Table \ref{TRForTau}, we get

      $$\frac{e\xrightarrow{e}e[m]}{e\cdot\tau\xrightarrow{e}\xrightarrow{\tau}e[m]}$$

      $$\frac{e\xrightarrow{e}e[m]}{e\xrightarrow{e}e[m]}$$

      By the reverse transition rules for operator $\cdot$ in Table \ref{RSETRForBARTC} and $\tau$ in Table \ref{TRForTau}, there are no transitions.

      So, $e\cdot\tau\approx_{rbs}^{fr}e$, as desired.

  \item \textbf{Axiom $RB1$}. Assume that $\tau \cdot e[m]=e[m]$, it is sufficient to prove that $\tau \cdot e[m]\approx_{rbs}^{fr}e[m]$. By the forward transition rules for operator $\cdot$ in Table \ref{SETRForBARTC} and $\tau$ in Table \ref{TRForTau}, there are no transitions.

      By the reverse transition rules for operator $\cdot$ in Table \ref{RSETRForBARTC} and $\tau$ in Table \ref{TRForTau}, we get

      $$\frac{e[m]\xtworightarrow{e[m]}e}{\tau\cdot e[m]\xtworightarrow{e[m]}\xtworightarrow{\tau}e}$$

      $$\frac{e[m]\xtworightarrow{e[m]}e}{e[m]\xtworightarrow{e[m]}e}$$

      So, $\tau \cdot e[m]\approx_{rbs}^{fr}e[m]$, as desired.

  \item \textbf{Axiom $B2$}. Let $p$ and $q$ be $RAPTC$ with silent step processes, and assume that $e\cdot(\tau\cdot(p+q)+p)=e\cdot(p+q)$, it is sufficient to prove that $e\cdot(\tau\cdot(p+q)+p)\approx_{rbs}^{fr}e\cdot(p+q)$. There are several cases, we will not enumerate all. By the forward transition rules for operators $\cdot$ and $+$ in Table \ref{SETRForBARTC} and $\tau$ in Table \ref{TRForTau}, we get

      $$\frac{e\xrightarrow{e}e[m]\quad p\xrightarrow{e_1}p'\quad q\xrightarrow{e_1}q'}{e\cdot(\tau\cdot(p+q)+p)\xrightarrow{e}\xrightarrow{\tau}\xrightarrow{e_1}e[m]\cdot((p'+q')+p')}$$

      $$\frac{e\xrightarrow{e}e[m]\quad p\xrightarrow{e_1}p'}{e\cdot(p+q)\xrightarrow{e}\xrightarrow{e_1}e[m]\cdot(p'+q')}$$

      By the reverse transition rules for operators $\cdot$ and $+$ in Table \ref{RSETRForBARTC} and $\tau$ in Table \ref{TRForTau}, there are no transitions.

      So, $e\cdot(\tau\cdot(p+q)+p)\approx_{rbs}^{fr}e\cdot(p+q)$, as desired.

  \item \textbf{Axiom $RB2$}. Let $p$ and $q$ be $RAPTC$ with silent step processes, and assume that $((x+y)\cdot\tau+x)\cdot e[m]=(x+y)\cdot e[m]$, it is sufficient to prove that $((x+y)\cdot\tau+x)\cdot e[m]\approx_{rbs}^{fr}(x+y)\cdot e[m]$. There are several cases, we will not enumerate all. By the forward transition rules for operators $\cdot$ and $+$ in Table \ref{STRForBARTC} and $\tau$ in Table \ref{TRForTau}, there are no transitions.

      By the reverse transition rules for operators $\cdot$ and $+$ in Table \ref{RSETRForBARTC} and $\tau$ in Table \ref{TRForTau}, we get

      $$\frac{e[m]\xtworightarrow{e[m]}e\quad p\xtworightarrow{e_1[n]}p'\quad q\xtworightarrow{e_1[n]}q'}{((p+q)\cdot\tau+p)\cdot e[m]\xtworightarrow{e[m]}\xtworightarrow{\tau}\xtworightarrow{e_1[n]}((p'+q')+p')\cdot e}$$

      $$\frac{e[m]\xtworightarrow{e[m]}e\quad p\xtworightarrow{e_1[n]}p'}{(p+q)\cdot e[m]\xtworightarrow{e[m]}\xtworightarrow{e_1[n]}(p'+q'\cdot e)}$$

      So, $((p+q)\cdot\tau+p)\cdot e[m]\approx_{rbs}^{fr}(p+q)\cdot e[m]$, as desired.

  \item \textbf{Axiom $B3$}. Let $p$ be an $RAPTC$ with silent step, and assume that $p\parallel\tau=p$, it is sufficient to prove that $p\parallel\tau\approx_{rbs}^{fr}p$. By the forward transition rules for operator $\parallel$ in Table \ref{TRForParallel} and $\tau$ in Table \ref{TRForTau}, we get

      $$\frac{p\xrightarrow{e}e[m]}{p\parallel\tau\xRightarrow{e}e[m]}$$

      $$\frac{p\xrightarrow{e}p'}{p\parallel\tau\xRightarrow{e}p'}$$

      By the reverse transition rules for operator $\parallel$ in Table \ref{RTRForParallel} and $\tau$ in Table \ref{TRForTau}, we get

      $$\frac{p\xtworightarrow{e[m]}e}{p\parallel\tau\xTworightarrow{e[m]}e}$$

      $$\frac{p\xtworightarrow{e[m]}p'}{p\parallel\tau\xTworightarrow{e[m]}p'}$$

      So, $p\parallel\tau\approx_{rbs}^{fr}p$, as desired.
\end{itemize}
(2) Soundness of APRTC with silent step and guarded linear recursion with respect to rooted branching FR pomset bisimulation $\approx_{rbp}^{fr}$.

Since rooted branching FR pomset bisimulation $\approx_{rbp}^{fr}$ is both an equivalent and a congruent relation with respect to APRTC with silent step and guarded linear recursion, we only need to check if each axiom in Table \ref{AxiomsForTau} is sound modulo rooted branching FR pomset bisimulation $\approx_{rbp}^{fr}$.

From the definition of rooted branching FR pomset bisimulation $\approx_{rbp}^{fr}$ (see Definition \ref{FRRBPSB}), we know that rooted branching FR pomset bisimulation $\approx_{rbp}^{fr}$ is defined by weak pomset transitions, which are labeled by pomsets with $\tau$. In a weak pomset transition, the events in the pomset are either within causality relations (defined by $\cdot$) or in concurrency (implicitly defined by $\cdot$ and $+$, and explicitly defined by $\between$), of course, they are pairwise consistent (without conflicts). In (1), we have already proven the case that all events are pairwise concurrent, so, we only need to prove the case of events in causality. Without loss of generality, we take a pomset of $P=\{e_1,e_2:e_1\cdot e_2\}$. Then the weak pomset transition labeled by the above $P$ is just composed of one single event transition labeled by $e_1$ succeeded by another single event transition labeled by $e_2$, that is, $\xRightarrow{P}=\xRightarrow{e_1}\xRightarrow{e_2}$ or $\xTworightarrow{P}=\xTworightarrow{e_2}\xTworightarrow{e_1}$.

Similarly to the proof of soundness of APRTC with silent step and guarded linear recursion modulo rooted branching FR step bisimulation $\approx_{rbs}^{fr}$ (1), we can prove that each axiom in Table \ref{AxiomsForTau} is sound modulo rooted branching FR pomset bisimulation $\approx_{rbp}^{fr}$, we omit them.

(3) Soundness of APRTC with silent step and guarded linear recursion with respect to rooted branching FR hp-bisimulation $\approx_{rbhp}^{fr}$.

Since rooted branching FR hp-bisimulation $\approx_{rbhp}^{fr}$ is both an equivalent and a congruent relation with respect to APRTC with silent step and guarded linear recursion, we only need to check if each axiom in Table \ref{AxiomsForTau} is sound modulo rooted branching FR hp-bisimulation $\approx_{rbhp}^{fr}$.

From the definition of rooted branching FR hp-bisimulation $\approx_{rbhp}^{fr}$ (see Definition \ref{FRRBHHPB}), we know that rooted branching FR hp-bisimulation $\approx_{rbhp}^{fr}$ is defined on the weakly posetal product $(C_1,f,C_2),f:\hat{C_1}\rightarrow \hat{C_2}\textrm{ isomorphism}$. Two process terms $s$ related to $C_1$ and $t$ related to $C_2$, and $f:\hat{C_1}\rightarrow \hat{C_2}\textrm{ isomorphism}$. Initially, $(C_1,f,C_2)=(\emptyset,\emptyset,\emptyset)$, and $(\emptyset,\emptyset,\emptyset)\in\approx_{rbhp}^{fr}$. When $s\xrightarrow{e}s'$ ($C_1\xrightarrow{e}C_1'$), there will be $t\xRightarrow{e}t'$ ($C_2\xRightarrow{e}C_2'$), and we define $f'=f[e\mapsto e]$. And when $s\xTworightarrow{e[m]}s'$ ($C_1\xTworightarrow{e[m]}C_1'$), there will be $t\xTworightarrow{e[m]}t'$ ($C_2\xTworightarrow{e[m]}C_2'$), and we define $f'=f[e[m]\mapsto e[m]$. Then, if $(C_1,f,C_2)\in\approx_{rbhp}^{fr}$, then $(C_1',f',C_2')\in\approx_{rbhp}^{fr}$.

Similarly to the proof of soundness of APRTC with silent step and guarded linear recursion modulo rooted branching FR pomset bisimulation equivalence (2), we can prove that each axiom in Table \ref{AxiomsForTau} is sound modulo rooted branching FR hp-bisimulation equivalence, we just need additionally to check the above conditions on rooted branching FR hp-bisimulation, we omit them.
\end{proof}

\begin{theorem}[Completeness of APRTC with silent step and guarded linear recursion]\label{CAPRTCTAU}
Let $p$ and $q$ be closed APRTC with silent step and guarded linear recursion terms, then,
\begin{enumerate}
  \item if $p\approx_{rbs}^{fr} q$ then $p=q$;
  \item if $p\approx_{rbp}^{fr} q$ then $p=q$;
  \item if $p\approx_{rbhp}^{fr} q$ then $p=q$.
\end{enumerate}
\end{theorem}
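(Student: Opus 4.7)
The plan is to mirror the completeness proof for APRTC with guarded linear recursion (Theorem \ref{CAPRTCR}), but now working with rooted branching FR bisimulations and accommodating silent transitions. First I would invoke the elimination theorem (Theorem \ref{ETTau}): every closed term of APRTC with silent step and guarded linear recursion is provably equal to some $\langle X_1 | E \rangle$ with $E$ a guarded linear recursive specification. Hence it suffices to prove that, whenever $\langle X_1 | E_1 \rangle \approx_{rbs}^{fr} \langle Y_1 | E_2 \rangle$ (respectively $\approx_{rbp}^{fr}$, $\approx_{rbhp}^{fr}$) for guarded linear recursive specifications $E_1$ and $E_2$, then $\langle X_1 | E_1 \rangle = \langle Y_1 | E_2 \rangle$ is derivable.

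The central construction is a combined guarded linear recursive specification $E$ whose variables $Z_{XY}$ are indexed by pairs $(X, Y)$ of variables from $E_1$ and $E_2$ satisfying $\langle X | E_1 \rangle \approx_{rbs}^{fr} \langle Y | E_2 \rangle$. For each such pair, the defining equation $Z_{XY} = t_{XY}$ collects (i) visible forward summands $(a_1 \parallel \cdots \parallel a_m) Z_{X'Y'}$ for matched summands $(a_1 \parallel \cdots \parallel a_m) X'$ of $t_X$ and $(a_1 \parallel \cdots \parallel a_m) Y'$ of $t_Y$ with $\langle X' | E_1 \rangle \approx_{bp}^{fr} \langle Y' | E_2 \rangle$, (ii) analogous reverse summands $Z_{X'Y'}(a_1[m] \parallel \cdots \parallel a_m[m])$ (iii) terminating summands $b_1 \parallel \cdots \parallel b_n$ and $b_1[n] \parallel \cdots \parallel b_n[n]$ present in both $t_X$ and $t_Y$. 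Because the rooted branching clauses of Definition \ref{FRRBPSB} force the initial transitions on both sides to match without $\tau$-abstraction, the summand-matching at the root mirrors the strong case exactly; only the sub-derivatives are related by non-rooted branching FR bisimilarity. Once $E$ is defined, the standard argument of Theorem \ref{CAPRTCR} applies: the substitutions $\sigma : X \mapsto \langle X | E_1 \rangle$ and $\psi : Z_{XY} \mapsto \langle X | E_1 \rangle$ satisfy $\sigma(t_X) = \psi(t_{XY})$ by RDP, so RSP yields $\langle X | E_1 \rangle = \langle Z_{XY} | E \rangle$, and symmetrically $\langle Y | E_2 \rangle = \langle Z_{XY} | E \rangle$, giving $\langle X_1 | E_1 \rangle = \langle Z_{X_1 Y_1} | E \rangle = \langle Y_1 | E_2 \rangle$.

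Extending from rooted branching FR step bisimulation to the pomset and hp-versions is then routine: a pomset weak transition decomposes into a sequence of single-event weak transitions along the induced causal order, reducing the pomset case to the step case; the hp-case additionally carries the isomorphism on posetal configurations, and the update $f' = f[e \mapsto e]$ (respectively $f[e[m] \mapsto e[m]]$) preserves membership in the relation after each matched (weak) step, exactly as in Theorem \ref{SAPRTCTAU}.

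The main obstacle is the handling of summands whose effect is silent. For a summand $\tau \cdot X'$ of $t_X$ there is in general no corresponding $\tau$-prefixed summand in $t_Y$; instead the $\tau$ must be absorbed by the branching structure, i.e.\ $\langle X' | E_1 \rangle \approx_{bp}^{fr} \langle Y | E_2 \rangle$ already holds. The delicate point is to show that such absorbable $\tau$-summands can be pre-processed away in $t_X$ using axioms $B1, RB1, B2, RB2, B3$ (and their combinations with RDP, RSP on the guarded linear tail), producing an equivalent guarded linear specification with no inert $\tau$-prefixes, so that the matching construction above goes through and $E$ remains guarded and linear. This is the reversible analogue of the classical $\tau$-absorption lemma used in completeness proofs for branching bisimulation, and it is the step where axioms $B2$ and $RB2$ do essential work; once it is established, everything else reduces to the strong-bisimulation argument already proven.
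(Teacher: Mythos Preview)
Your overall architecture is right---reduce via Theorem \ref{ETTau}, build a combined specification $E$ indexed by bisimilar pairs, and apply RSP---but the treatment of $\tau$-summands is where your proposal diverges from the paper and runs into trouble. You propose to \emph{pre-process} the guarded linear specifications $E_1, E_2$ so as to remove all inert $\tau$-prefixes before constructing $E$, reducing the problem to the strong case. This is not possible in general: a summand $\tau X'$ in $t_X$ need not be removable by $B1$, $B2$, $RB1$, $RB2$ alone, because $B2$ only absorbs $\tau\cdot(x+y)$ when the continuation $x$ is already a summand at the current level. A specification such as $Y = \tau Z + b$, $Z = c$ has a $\tau$-prefix that is essential to its rooted branching behaviour and cannot be eliminated by the $\tau$-laws you list.

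The paper handles $\tau$ the other way around. It \emph{keeps} $\tau$-summands in the combined specification: $t_{XY}$ is given extra clauses of the form $\tau Z_{X'Y}$ (when $t_X$ has $\tau X'$ with $\langle X'|E_1\rangle\approx_{rbs}^{fr}\langle Y|E_2\rangle$), and symmetrically $\tau Z_{XY'}$, $Z_{X'Y}\tau$, $Z_{XY'}\tau$ for the reverse direction. The candidate solution is then not simply $\langle X|E_1\rangle$ but an auxiliary term $s_{XY}$, defined as $\tau\langle X|E_1\rangle + u_{XY}$ (or its reverse analogue) precisely when $t_Y$ contributes a $\tau$-summand, and $\langle X|E_1\rangle$ otherwise. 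Axiom $B2$ (resp.\ $RB2$) is then invoked \emph{inside} the verification that $s_{XY}$ satisfies the equations of $E$, via $(a_1\parallel\cdots\parallel a_m)(\tau\langle X|E_1\rangle + u_{XY}) = (a_1\parallel\cdots\parallel a_m)\langle X|E_1\rangle$. So $B2$ is used to collapse the added $\tau$ under a visible prefix, not to eliminate $\tau$-prefixes from $E_1$ or $E_2$ up front. Your last paragraph correctly flags this as the crux, but the resolution you sketch would not go through; you need the $s_{XY}$/$u_{XY}$ machinery instead.
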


\begin{proof}
Firstly, by the elimination theorem of APRTC with silent step and guarded linear recursion (see Theorem \ref{ETTau}), we know that each process term in APRTC with silent step and guarded linear recursion is equal to a process term $\langle X_1|E\rangle$ with $E$ a guarded linear recursive specification.

It remains to prove the following cases.

(1) If $\langle X_1|E_1\rangle \approx_{rbs}^{fr} \langle Y_1|E_2\rangle$ for guarded linear recursive specification $E_1$ and $E_2$, then $\langle X_1|E_1\rangle = \langle Y_1|E_2\rangle$.

Firstly, the recursive equation $W=\tau+\cdots+\tau$ with $W\nequiv X_1$ in $E_1$ and $E_2$, can be removed, and the corresponding summands $aW$ are replaced by $a$, to get $E_1'$ and $E_2'$, by use of the axioms RDP, $A3$ and $B1$, $RB1$, and $\langle X|E_1\rangle = \langle X|E_1'\rangle$, $\langle Y|E_2\rangle = \langle Y|E_2'\rangle$.

Let $E_1$ consists of recursive equations $X=t_X$ for $X\in \mathcal{X}$ and $E_2$
consists of recursion equations $Y=t_Y$ for $Y\in\mathcal{Y}$, and are not the form $\tau+\cdots+\tau$. Let the guarded linear recursive specification $E$ consists of recursion equations $Z_{XY}=t_{XY}$, and $\langle X|E_1\rangle\approx_{rbs}^{fr}\langle Y|E_2\rangle$, and $t_{XY}$ consists of the following summands:

\begin{enumerate}
  \item $t_{XY}$ contains a summand $(a_1\parallel\cdots\parallel a_m)Z_{X'Y'}$ iff $t_X$ contains the summand $(a_1\parallel\cdots\parallel a_m)X'$ and $t_Y$ contains the summand $(a_1\parallel\cdots\parallel a_m)Y'$ such that $\langle X'|E_1\rangle\approx_{rbs}^{fr}\langle Y'|E_2\rangle$;
  \item $t_{XY}$ contains a summand $Z_{X'Y'}(a_1[m]\parallel\cdots\parallel a_m[m])$ iff $t_X$ contains the summand $X'(a_1[m]\parallel\cdots\parallel a_m[m])$ and $t_Y$ contains the summand $Y'(a_1[m]\parallel\cdots\parallel a_m[m])$ such that $\langle X'|E_1\rangle\approx_{rbs}^{fr}\langle Y'|E_2\rangle$;
  \item $t_{XY}$ contains a summand $b_1\parallel\cdots\parallel b_n$ iff $t_X$ contains the summand $b_1\parallel\cdots\parallel b_n$ and $t_Y$ contains the summand $b_1\parallel\cdots\parallel b_n$;
  \item $t_{XY}$ contains a summand $b_1[n]\parallel\cdots\parallel b_n[n]$ iff $t_X$ contains the summand $b_1[n]\parallel\cdots\parallel b_n[n]$ and $t_Y$ contains the summand $b_1[n]\parallel\cdots\parallel b_n[n]$;
  \item $t_{XY}$ contains a summand $\tau Z_{X'Y}$ iff $XY\nequiv X_1Y_1$, $t_X$ contains the summand $\tau X'$, and $\langle X'|E_1\rangle\approx_{rbs}^{fr}\langle Y|E_2\rangle$;
  \item $t_{XY}$ contains a summand $Z_{X'Y}\tau$ iff $XY\nequiv X_1Y_1$, $t_X$ contains the summand $X'\tau$, and $\langle X'|E_1\rangle\approx_{rbs}^{fr}\langle Y|E_2\rangle$;
  \item $t_{XY}$ contains a summand $\tau Z_{XY'}$ iff $XY\nequiv X_1Y_1$, $t_Y$ contains the summand $\tau Y'$, and $\langle X|E_1\rangle\approx_{rbs}^{fr}\langle Y'|E_2\rangle$;
  \item $t_{XY}$ contains a summand $Z_{XY'}\tau$ iff $XY\nequiv X_1Y_1$, $t_Y$ contains the summand $Y'\tau$, and $\langle X|E_1\rangle\approx_{rbs}^{fr}\langle Y'|E_2\rangle$.
\end{enumerate}

Since $E_1$ and $E_2$ are guarded, $E$ is guarded. Constructing the process term $u_{XY}$ consist of the following summands:

\begin{enumerate}
  \item $u_{XY}$ contains a summand $(a_1\parallel\cdots\parallel a_m)\langle X'|E_1\rangle$ iff $t_X$ contains the summand $(a_1\parallel\cdots\parallel a_m)X'$ and $t_Y$ contains the summand $(a_1\parallel\cdots\parallel a_m)Y'$ such that $\langle X'|E_1\rangle\approx_{rbs}^{fr}\langle Y'|E_2\rangle$;
  \item $u_{XY}$ contains a summand $\langle X'|E_1\rangle(a_1[m]\parallel\cdots\parallel a_m[m])$ iff $t_X$ contains the summand $X'(a_1[m]\parallel\cdots\parallel a_m[m])$ and $t_Y$ contains the summand $Y'(a_1[m]\parallel\cdots\parallel a_m[m])$ such that $\langle X'|E_1\rangle\approx_{rbs}^{fr}\langle Y'|E_2\rangle$;
  \item $u_{XY}$ contains a summand $b_1\parallel\cdots\parallel b_n$ iff $t_X$ contains the summand $b_1\parallel\cdots\parallel b_n$ and $t_Y$ contains the summand $b_1\parallel\cdots\parallel b_n$;
  \item $u_{XY}$ contains a summand $b_1[n]\parallel\cdots\parallel b_n[n]$ iff $t_X$ contains the summand $b_1[n]\parallel\cdots\parallel b_n[n]$ and $t_Y$ contains the summand $b_1[n]\parallel\cdots\parallel b_n[n]$;
  \item $u_{XY}$ contains a summand $\tau \langle X'|E_1\rangle$ iff $XY\nequiv X_1Y_1$, $t_X$ contains the summand $\tau X'$, and $\langle X'|E_1\rangle\approx_{rbs}^{fr}\langle Y|E_2\rangle$;
  \item $u_{XY}$ contains a summand $\langle X'|E_1\rangle\tau$ iff $XY\nequiv X_1Y_1$, $t_X$ contains the summand $X'\tau$, and $\langle X'|E_1\rangle\approx_{rbs}^{fr}\langle Y|E_2\rangle$.
\end{enumerate}

Let the process term $s_{XY}$ be defined as follows:

\begin{enumerate}
  \item $s_{XY}\triangleq\tau\langle X|E_1\rangle + u_{XY}$ iff $XY\nequiv X_1Y_1$, $t_Y$ contains the summand $\tau Y'$, and $\langle X|E_1\rangle\approx_{rbs}^{fr}\langle Y'|E_2\rangle$;
  \item $s_{XY}\triangleq\langle X|E_1\rangle\tau + u_{XY}$ iff $XY\nequiv X_1Y_1$, $t_Y$ contains the summand $Y'\tau$, and $\langle X|E_1\rangle\approx_{rbs}^{fr}\langle Y'|E_2\rangle$;
  \item $s_{XY}\triangleq\langle X|E_1\rangle$, otherwise.
\end{enumerate}

So, $\langle X|E_1\rangle=\langle X|E_1\rangle+u_{XY}$, and $(a_1\parallel\cdots\parallel a_m)(\tau\langle X|E_1\rangle+u_{XY})=(a_1\parallel\cdots\parallel a_m)((\tau\langle X|E_1\rangle+u_{XY})+u_{XY})=(a_1\parallel\cdots\parallel a_m)(\langle X|E_1\rangle+u_{XY})=(a_1\parallel\cdots\parallel a_m)\langle X|E_1\rangle$, or $(\langle X|E_1\rangle\tau+u_{XY})(a_1[m]\parallel\cdots\parallel a_m[m])=((\langle X|E_1\rangle\tau+u_{XY})+u_{XY})(a_1[m]\parallel\cdots\parallel a_m[m])=(\langle X|E_1\rangle+u_{XY})(a_1[m]\parallel\cdots\parallel a_m[m])=\langle X|E_1\rangle(a_1[m]\parallel\cdots\parallel a_m[m])$, hence, $s_{XY}(a_1\parallel\cdots\parallel a_m)=(a_1[m]\parallel\cdots\parallel a_m[m])\langle X|E_1\rangle$.

Let $\sigma$ map recursion variable $X$ in $E_1$ to $\langle X|E_1\rangle$, and let $\psi$ map recursion variable $Z_{XY}$ in $E$ to $s_{XY}$. It is sufficient to prove $s_{XY}=\psi(t_{XY})$ for recursion variables $Z_{XY}$ in $E$. Either $XY\equiv X_1Y_1$ or $XY\nequiv X_1Y_1$, we all can get $s_{XY}=\psi(t_{XY})$. So, $s_{XY}=\langle Z_{XY}|E\rangle$ for recursive variables $Z_{XY}$ in $E$ is a solution for $E$. Then by RSP, particularly, $\langle X_1|E_1\rangle=\langle Z_{X_1Y_1}|E\rangle$. Similarly, we can obtain $\langle Y_1|E_2\rangle=\langle Z_{X_1Y_1}|E\rangle$. Finally, $\langle X_1|E_1\rangle=\langle Z_{X_1Y_1}|E\rangle=\langle Y_1|E_2\rangle$, as desired.

(2) If $\langle X_1|E_1\rangle \approx_{rbp}^{fr} \langle Y_1|E_2\rangle$ for guarded linear recursive specification $E_1$ and $E_2$, then $\langle X_1|E_1\rangle = \langle Y_1|E_2\rangle$.

It can be proven similarly to (1), we omit it.

(3) If $\langle X_1|E_1\rangle \approx_{rbhb} \langle Y_1|E_2\rangle$ for guarded linear recursive specification $E_1$ and $E_2$, then $\langle X_1|E_1\rangle = \langle Y_1|E_2\rangle$.

It can be proven similarly to (1), we omit it.
\end{proof}

\subsection{Abstraction}

The unary abstraction operator $\tau_I$ ($I\subseteq \mathbb{E}$) renames all atomic events in $I$ into $\tau$. APRTC with silent step and abstraction operator is called $APRTC_{\tau}$. The transition rules of operator $\tau_I$ are shown in Table \ref{TRForAbstraction}.

\begin{center}
    \begin{table}
        $$\frac{x\xrightarrow{e}\surd}{\tau_I(x)\xrightarrow{e}\surd}\quad e\notin I
        \quad\quad\frac{x\xrightarrow{e}x'}{\tau_I(x)\xrightarrow{e}\tau_I(x')}\quad e\notin I$$

        $$\frac{x\xrightarrow{e}\surd}{\tau_I(x)\xrightarrow{\tau}\surd}\quad e\in I
        \quad\quad\frac{x\xrightarrow{e}x'}{\tau_I(x)\xrightarrow{\tau}\tau_I(x')}\quad e\in I$$

        $$\frac{x\xtworightarrow{e[m]}e}{\tau_I(x)\xtworightarrow{e[m]}e}\quad e[m]\notin I
        \quad\quad\frac{x\xtworightarrow{e[m]}x'}{\tau_I(x)\xtworightarrow{e[m]}\tau_I(x')}\quad e[m]\notin I$$

        $$\frac{x\xtworightarrow{e[m]}\surd}{\tau_I(x)\xtworightarrow{\tau}\surd}\quad e[m]\in I
        \quad\quad\frac{x\xtworightarrow{e[m]}x'}{\tau_I(x)\xtworightarrow{\tau}\tau_I(x')}\quad e[m]\in I$$
        \caption{Transition rule of the abstraction operator}
        \label{TRForAbstraction}
    \end{table}
\end{center}

\begin{theorem}[Conservitivity of $APRTC_{\tau}$]
$APRTC_{\tau}$ is a conservative extension of APRTC with silent step.
\end{theorem}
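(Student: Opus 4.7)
The plan is to invoke Theorem \ref{TCE} (the conservative extension theorem), following the template used earlier for the conservativity of APRTC over the algebra for parallelism and of $RAPTC$ with silent step over $RAPTC$. Here $T_0$ is the TSS of APRTC with silent step and guarded (linear) recursion, and $T_1$ consists of the transition rules in Table \ref{TRForAbstraction} for the abstraction operator $\tau_I$. The goal is to check that the two hypotheses of Theorem \ref{TCE} hold, so that every forward or reverse transition derivable in $T_0\oplus T_1$ between terms over the signature $\Sigma_0$ is already derivable in $T_0$.

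First I would verify source-dependency of $T_0$. All previously introduced rules for $\cdot$, $+$, $\parallel$, $\mid$, $\Theta$, $\triangleleft$, $\partial_H$, guarded recursion, and the silent step $\tau$ (in their forward forms in Tables \ref{SETRForBARTC}, \ref{TRForParallel}, \ref{TRForCommunication}, \ref{TRForConflict}, \ref{TRForEncapsulation}, \ref{TRForGR}, \ref{TRForTau} and their reverse counterparts) have the property that every variable appearing in a premise or target is already source-dependent once the variables in the source are taken as source-dependent. This was used implicitly in the earlier conservativity results, so I would cite those rather than repeat the induction.

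Next I would inspect the rules in Table \ref{TRForAbstraction}. Each such rule has a conclusion whose source is of the form $\tau_I(x)$, and the function symbol $\tau_I$ is fresh, since it does not occur in $\Sigma_0$. By the definition of Freshness this already places every $\tau_I$-rule in the first alternative of condition (2) of Theorem \ref{TCE}, so the more elaborate second disjunct involving a premise $t\xrightarrow{a}t'$ (or $tP$) with $t\in\mathbb{T}(\Sigma_0)$ is not needed. A routine case check across the four forward rules and four reverse rules confirms this.

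The only point requiring a little care — and the main place where I would slow down — is the reverse fragment: the label $\xtworightarrow{}$ is not fresh (it already exists in $T_0$), and the premises $x\xtworightarrow{e[m]}e$ and $x\xtworightarrow{e[m]}x'$ look like they might need to be matched against the second clause of condition (2). However, because each $\tau_I$-rule has a source containing the fresh symbol $\tau_I$, the first clause of (2) suffices and the freshness of labels or premises is irrelevant. Once these checks are assembled, Theorem \ref{TCE} applies and yields that $APRTC_{\tau}$ is a conservative extension of APRTC with silent step, as claimed.
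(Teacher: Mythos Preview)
Your proposal is correct and follows essentially the same approach as the paper: invoke Theorem \ref{TCE} by noting that the transition rules of APRTC with silent step are source-dependent and that every rule in Table \ref{TRForAbstraction} has a source containing the fresh operator $\tau_I$. Your discussion is more explicit than the paper's two-line argument, particularly in observing that the freshness of the source alone triggers condition (2) of Theorem \ref{TCE} so that no analysis of premises or labels in the reverse fragment is needed, but the underlying reasoning is identical.
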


\begin{proof}
Since the transition rules of APRTC with silent step are source-dependent, and the transition rules for abstraction operator in Table \ref{TRForAbstraction}contain only a fresh operator $\tau_I$ in their source, so the transition rules of $APRTC_{\tau}$ is a conservative extension of those of $RAPTC$ with silent step.
\end{proof}

\begin{theorem}[Congruence theorem of $APRTC_{\tau}$]
Rooted branching FR truly concurrent bisimulation equivalences $\approx_{rbp}^{fr}$, $\approx_{rbs}^{fr}$ and $\approx_{rbhp}^{fr}$ are all congruences with respect to $APRTC_{\tau}$.
\end{theorem}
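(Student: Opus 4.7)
The plan is to reduce this theorem to showing that the single new operator $\tau_I$ preserves each of $\approx_{rbp}^{fr}$, $\approx_{rbs}^{fr}$, and $\approx_{rbhp}^{fr}$. All other operators of $APRTC_{\tau}$ coincide with operators of $RAPTC$ with silent step, whose congruence with respect to these three equivalences was already established in the preceding theorem. Together with the fact that rooted branching FR truly concurrent bisimulations are themselves equivalence relations, this gives the theorem as soon as $\tau_I$ is handled.

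To prove that $\tau_I$ is compatible, I would fix $x \approx_{rb*}^{fr} y$ with witnessing relation $R$ and propose, for the step/pomset cases, the candidate relation $R' = \{(\tau_I(x'), \tau_I(y')) : (x',y') \in R\}$, together with its posetal-product analogue $\{(\tau_I(x'), f, \tau_I(y')) : (x',f,y') \in R\}$ for the hp-case. Verification proceeds by case analysis on the four transition rules for $\tau_I$ in Table \ref{TRForAbstraction}: for a forward step $\tau_I(x') \xrightarrow{a} \tau_I(x'')$ one inspects whether the underlying $x' \xrightarrow{e} x''$ has $e \in I$ (so $a = \tau$) or $e \notin I$ (so $a = e$); in either case the membership test depends only on the label, so the matching transition from $y'$ produced by $R$ yields a $\tau_I$-transition of the same form. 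The reverse clauses in Table \ref{TRForAbstraction} are handled symmetrically, using the reverse transitions guaranteed by Definition \ref{FRRBPSB} (resp. Definition \ref{FRRBHHPB}).

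The main obstacle is the rootedness requirement: the first $\tau_I$-step (forward or reverse) must be matched exactly and the resulting subterms must then be related by the corresponding \emph{branching} FR equivalence, not merely the rooted one. So I would first need the auxiliary fact that $\tau_I$ also preserves the non-rooted variants $\approx_{bp}^{fr}$, $\approx_{bs}^{fr}$, and $\approx_{bhp}^{fr}$, which is where the substantive work lies. In those cases a $\tau_I$-generated $\tau$-transition coming from some $e \in I$ has to be mimicked either by staying put (if the bisimulation already relates the target) or by a sequence of $\tau$-transitions followed by a matching step — and the inserted $\tau$'s arise precisely from events of $y'$ whose labels have been renamed by $\tau_I$. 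This matching up of $\tau$-sequences introduced by the abstraction on the two sides is the delicate point and must be combined with the termination conditions $\downarrow$ in clauses (3)–(4) and (7)–(8) of Definition \ref{FRBPSB}'s FR extension.

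For the hp-variants, the isomorphism $f$ in the posetal product must be updated whenever we traverse a step, which includes renamed events; since the relabelling applied by $\tau_I$ is the same on both sides, the update $f' = f[e \mapsto \tau]$ (resp. $f'=f[e[m]\mapsto\tau]$) keeps us within the weakly posetal product, as required by Definitions \ref{FRBHHPB} and \ref{FRRBHHPB}. Pomset cases reduce, as in the earlier soundness proofs, to composing sequences of single-event transitions, and FR step bisimulation is treated in parallel with only pairwise-concurrent events in each pomset. Once congruence with respect to $\tau_I$ is secured, the theorem follows by combining this with the earlier congruence result for the remaining operators of $RAPTC$ with silent step.
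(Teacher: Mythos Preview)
Your proposal is correct and follows essentially the same strategy as the paper: reduce the congruence question to the single new operator $\tau_I$ (relying on the previous congruence theorem for the remaining operators of $RAPTC$ with silent step), and then verify preservation of the equivalences by case analysis on the forward and reverse transition rules for $\tau_I$ in Table~\ref{TRForAbstraction}.

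The main difference is one of rigor rather than route. The paper treats only the $\approx_{rbp}^{fr}$ case explicitly and simply lists the possible transitions of $\tau_I(x)$ and $\tau_I(y)$, invoking ``the hypothesis $\tau_I(x')\approx_{rbp}^{fr}\tau_I(y')$'' for the residual processes without constructing a witnessing relation or addressing rootedness. You, by contrast, build an explicit candidate relation $R'$ from the given bisimulation $R$, and you correctly isolate the genuine technical point the paper glosses over: after the first (root) step one must land in the \emph{branching} equivalence rather than the rooted one, so an auxiliary lemma that $\tau_I$ preserves $\approx_{bp}^{fr}$, $\approx_{bs}^{fr}$, $\approx_{bhp}^{fr}$ is really what is needed. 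Your handling of the hp-case via updating the posetal isomorphism along renamed events is also more explicit than the paper's ``proven similarly, we omit them''. In short, your argument subsumes the paper's and fills in the step that the paper's sketch leaves implicit.
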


\begin{proof}

(1) Case rooted branching FR pomset bisimulation equivalence $\approx_{rbp}^{fr}$.

Let $x$ and $y$ be $APRTC_{\tau}$ processes, and $x\approx_{rbp}^{fr} y$, it is sufficient to prove that $\tau_I(x)\approx_{rbp}^{fr} \tau_I(y)$.

By the transition rules for operator $\tau_I$ in Table \ref{TRForAbstraction}, we can get

$$\tau_I(x)\xrightarrow{X} X[\mathcal{K}] (X\nsubseteq I) \quad \tau_I(y)\xrightarrow{Y} Y[\mathcal{J}] (Y\nsubseteq I)$$

$$\tau_I(x)\xtworightarrow{X[\mathcal{K}]} X (X\nsubseteq I) \quad \tau_I(y)\xtworightarrow{Y[\mathcal{J}]} Y (Y\nsubseteq I)$$

with $X\subseteq x$, $Y\subseteq y$, and $X\sim Y$.

Or, we can get

$$\tau_I(x)\xrightarrow{X} \tau_I(x') (X\nsubseteq I) \quad \tau_I(y)\xrightarrow{Y} \tau_I(y') (Y\nsubseteq I)$$

$$\tau_I(x)\xtworightarrow{X[\mathcal{K}]} \tau_I(x') (X\nsubseteq I) \quad \tau_I(y)\xtworightarrow{Y[\mathcal{J}]} \tau_I(y') (Y\nsubseteq I)$$

with $X\subseteq x$, $Y\subseteq y$, and $X\sim Y$ and the hypothesis $\tau_I(x')\approx_{rbp}^{fr}\tau_I(y')$.

Or, we can get

$$\tau_I(x)\xrightarrow{\tau^*} \surd (X\subseteq I) \quad \tau_I(y)\xrightarrow{\tau^*} \surd (Y\subseteq I)$$

$$\tau_I(x)\xtworightarrow{\tau^*} \surd (X\subseteq I) \quad \tau_I(y)\xtworightarrow{\tau^*} \surd (Y\subseteq I)$$

with $X\subseteq x$, $Y\subseteq y$, and $X\sim Y$.

Or, we can get

$$\tau_I(x)\xrightarrow{\tau^*} \tau_I(x') (X\subseteq I) \quad \tau_I(y)\xrightarrow{\tau^*} \tau_I(y') (Y\subseteq I)$$

$$\tau_I(x)\xtworightarrow{\tau^*} \tau_I(x') (X\subseteq I) \quad \tau_I(y)\xtworightarrow{\tau^*} \tau_I(y') (Y\subseteq I)$$

with $X\subseteq x$, $Y\subseteq y$, and $X\sim Y$ and the hypothesis $\tau_I(x')\approx_{rbp}^{fr}\tau_I(y')$.

So, we get $\tau_I(x)\approx_{rbp}^{fr} \tau_I(y)$, as desired

(2) The cases of rooted branching FR step bisimulation $\approx_{rbs}^{fr}$, rooted branching FR hp-bisimulation $\approx_{rbhp}^{fr}$ can be proven similarly, we omit them.
\end{proof}

We design the axioms for the abstraction operator $\tau_I$ in Table \ref{AxiomsForAbstraction}.

\begin{center}
\begin{table}
  \begin{tabular}{@{}ll@{}}
\hline No. &Axiom\\
  $TI1$ & $e\notin I\quad \tau_I(e)=e$\\
  $RTI1$ & $e[m]\notin I\quad \tau_I(e[m])=e[m]$\\
  $TI2$ & $e\in I\quad \tau_I(e)=\tau$\\
  $RTI2$ & $e[m]\in I\quad \tau_I(e[m])=\tau$\\
  $TI3$ & $\tau_I(\delta)=\delta$\\
  $TI4$ & $\tau_I(x+y)=\tau_I(x)+\tau_I(y)$\\
  $TI5$ & $\tau_I(x\cdot y)=\tau_I(x)\cdot\tau_I(y)$\\
  $TI6$ & $\tau_I(x\parallel y)=\tau_I(x)\parallel\tau_I(y)$\\
\end{tabular}
\caption{Axioms of abstraction operator}
\label{AxiomsForAbstraction}
\end{table}
\end{center}

\begin{theorem}[Soundness of $APRTC_{\tau}$ with guarded linear recursion]\label{SAPRTCABS}
Let $x$ and $y$ be $APRTC_{\tau}$ with guarded linear recursion terms. If $APRTC_{\tau}$ with guarded linear recursion $\vdash x=y$, then
\begin{enumerate}
  \item $x\approx_{rbs}^{fr} y$;
  \item $x\approx_{rbp}^{fr} y$;
  \item $x\approx_{rbhp}^{fr} y$.
\end{enumerate}
\end{theorem}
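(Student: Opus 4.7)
The plan is to mirror the structure already used for the soundness theorems of APRTC with silent step (Theorem \ref{SAPRTCTAU}) and of APRTC with encapsulation (Theorems \ref{SAPRTCSBE}, \ref{SAPRTCPBE}, \ref{SAPRTCHPBE}). Since the congruence theorem for $APRTC_{\tau}$ just established shows that $\approx_{rbs}^{fr}$, $\approx_{rbp}^{fr}$ and $\approx_{rbhp}^{fr}$ are congruences with respect to every operator of $APRTC_{\tau}$, and since all axioms inherited from APRTC with silent step and from guarded linear recursion have already been shown sound modulo these equivalences, it suffices to verify that each new axiom $TI1$--$TI6$ and $RTI1$, $RTI2$ in Table \ref{AxiomsForAbstraction} is sound modulo the three equivalences.

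The first step is the rooted branching FR step bisimulation case. For the base axioms $TI1$, $RTI1$, $TI2$, $RTI2$ and $TI3$, I would exhibit the obvious singleton relation and read off forward and reverse transitions directly from Tables \ref{TRForAbstraction}, \ref{SETRForBARTC}, \ref{RSETRForBARTC} and \ref{TRForTau}: for $e\notin I$, $\tau_I(e)\xrightarrow{e}e[m]$ matches $e\xrightarrow{e}e[m]$ and symmetrically for the reverse rule, while for $e\in I$ the transition is relabelled to $\tau$, which is exactly the content of the rooted branching clause at the initial step. For the compound axioms $TI4$, $TI5$, $TI6$, the strategy is the standard one: build the relation $R=\{(\tau_I(x)\,\square\,\tau_I(y),\tau_I(x\,\square\,y)) : x,y \textrm{ closed}\}\cup Id$ for $\square\in\{+,\cdot,\parallel\}$, and check both forward and reverse transfer conditions by case analysis on which subterm fires; in each case one uses that $e\in I$ produces a $\tau$ on both sides of the equation, and $e\notin I$ preserves the label on both sides.

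The second and third steps reduce the pomset and hp cases to the step case exactly as in the proof of Theorem \ref{SAPRTCTAU}. For $\approx_{rbp}^{fr}$, a weak pomset transition with a non-singleton causal pomset $P=\{e_1,e_2:e_1\cdot e_2\}$ decomposes as $\xRightarrow{e_1}\xRightarrow{e_2}$ (and reversely $\xTworightarrow{e_2[n]}\xTworightarrow{e_1[m]}$), so soundness for causally ordered pomsets follows from the already-established step case together with the concurrent-pomset case. For $\approx_{rbhp}^{fr}$, I would further maintain the weakly posetal product $(C_1,f,C_2)$: whenever the abstraction operator fires a visible event $e\notin I$ on both sides we extend $f$ by $f[e\mapsto e]$, and for $e\in I$ the silent matching clause of Definition \ref{FRRBHHPB} applies without extending $f$ on the $\hat{\phantom{C}}$-restricted domain, preserving the isomorphism condition; reverse transitions are handled symmetrically with $f[e[m]\mapsto e[m]]$.

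The main obstacle, as in the corresponding parts of Theorem \ref{SAPRTCTAU}, is verifying the rooted branching matching at the $\tau$-generating axioms $TI2$ and $RTI2$ once they are nested inside $TI4$--$TI6$: one must be careful that the produced $\tau$ sits only at a non-initial position in the bisimulation game when applied under a context, so that the rooted condition is not violated. Since the rootedness condition only constrains the very first transition of the related configurations, and every such first transition originates from the same subterm on both sides of an equation, this is routine but needs to be tracked carefully through the case split on which operand of $+$, $\cdot$ or $\parallel$ initiates the step. Beyond this bookkeeping, the verification is entirely parallel to the previous soundness proofs, so those parts I would omit.
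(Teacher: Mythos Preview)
Your proposal is correct and follows essentially the same approach as the paper: use the congruence theorem for $APRTC_{\tau}$ to reduce soundness to checking each axiom in Table \ref{AxiomsForAbstraction} individually, treat the rooted branching FR step case first, and then reduce the pomset and hp cases by the same decomposition arguments used in Theorem \ref{SAPRTCTAU}. In fact you supply considerably more detail than the paper, which simply records that the verification is ``similar to the proof of soundness of APRTC with silent step and guarded linear recursion'' and omits it.
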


\begin{proof}
(1) Soundness of $APRTC_{\tau}$ with guarded linear recursion with respect to rooted branching FR step bisimulation $\approx_{rbs}^{fr}$.

Since rooted branching FR step bisimulation $\approx_{rbs}^{fr}$ is both an equivalent and a congruent relation with respect to $APRTC_{\tau}$ with guarded linear recursion, we only need to check if each axiom in Table \ref{AxiomsForAbstraction} is sound modulo rooted branching FR step bisimulation equivalence.

The proof is similar to the proof of soundness of APRTC with silent step and guarded linear recursion, we omit them.

(2) Soundness of $APRTC_{\tau}$ with guarded linear recursion with respect to rooted branching FR pomset bisimulation $\approx_{rbp}^{fr}$.

Since rooted branching FR pomset bisimulation $\approx_{rbp}^{fr}$ is both an equivalent and a congruent relation with respect to $APRTC_{\tau}$ with guarded linear recursion, we only need to check if each axiom in Table \ref{AxiomsForAbstraction} is sound modulo rooted branching FR pomset bisimulation $\approx_{rbp}^{fr}$.

From the definition of rooted branching FR pomset bisimulation $\approx_{rbp}^{fr}$ (see Definition \ref{FRRBPSB}), we know that rooted branching FR pomset bisimulation $\approx_{rbp}^{fr}$ is defined by weak pomset transitions, which are labeled by pomsets with $\tau$. In a weak pomset transition, the events in the pomset are either within causality relations (defined by $\cdot$) or in concurrency (implicitly defined by $\cdot$ and $+$, and explicitly defined by $\between$), of course, they are pairwise consistent (without conflicts). In (1), we have already proven the case that all events are pairwise concurrent, so, we only need to prove the case of events in causality. Without loss of generality, we take a pomset of $P=\{e_1,e_2:e_1\cdot e_2\}$. Then the weak pomset transition labeled by the above $P$ is just composed of one single event transition labeled by $e_1$ succeeded by another single event transition labeled by $e_2$, that is, $\xRightarrow{P}=\xRightarrow{e_1}\xRightarrow{e_2}$  or $\xTworightarrow{P}=\xTworightarrow{e_2}\xTworightarrow{e_1}$.

Similarly to the proof of soundness of $APRTC_{\tau}$ with guarded linear recursion modulo rooted branching FR step bisimulation $\approx_{rbs}^{fr}$ (1), we can prove that each axiom in Table \ref{AxiomsForAbstraction} is sound modulo rooted branching FR pomset bisimulation $\approx_{rbp}^{fr}$, we omit them.

(3) Soundness of $APRTC_{\tau}$ with guarded linear recursion with respect to rooted branching FR hp-bisimulation $\approx_{rbhp}^{fr}$.

Since rooted branching FR hp-bisimulation $\approx_{rbhp}^{fr}$ is both an equivalent and a congruent relation with respect to $APRTC_{\tau}$ with guarded linear recursion, we only need to check if each axiom in Table \ref{AxiomsForAbstraction} is sound modulo rooted branching FR hp-bisimulation $\approx_{rbhp}^{fr}$.

From the definition of rooted branching FR hp-bisimulation $\approx_{rbhp}^{fr}$ (see Definition \ref{FRRBHHPB}), we know that rooted branching FR hp-bisimulation $\approx_{rbhp}^{fr}$ is defined on the weakly posetal product $(C_1,f,C_2),f:\hat{C_1}\rightarrow \hat{C_2}\textrm{ isomorphism}$. Two process terms $s$ related to $C_1$ and $t$ related to $C_2$, and $f:\hat{C_1}\rightarrow \hat{C_2}\textrm{ isomorphism}$. Initially, $(C_1,f,C_2)=(\emptyset,\emptyset,\emptyset)$, and $(\emptyset,\emptyset,\emptyset)\in\approx_{rbhp}^{fr}$. When $s\xrightarrow{e}s'$ ($C_1\xrightarrow{e}C_1'$), there will be $t\xRightarrow{e}t'$ ($C_2\xRightarrow{e}C_2'$), and we define $f'=f[e\mapsto e]$. And when $s\xTworightarrow{e[m]}s'$ ($C_1\xTworightarrow{e[m]}C_1'$), there will be $t\xTworightarrow{e[m]}t'$ ($C_2\xTworightarrow{e[m]}C_2'$), and we define $f'=f[e[m]\mapsto e[m]$. Then, if $(C_1,f,C_2)\in\approx_{rbhp}^{fr}$, then $(C_1',f',C_2')\in\approx_{rbhp}^{fr}$.

Similarly to the proof of soundness of $APRTC_{\tau}$ with guarded linear recursion modulo rooted branching FR pomset bisimulation equivalence (2), we can prove that each axiom in Table \ref{AxiomsForAbstraction} is sound modulo rooted branching FR hp-bisimulation equivalence, we just need additionally to check the above conditions on rooted branching FR hp-bisimulation, we omit them.
\end{proof}

Though $\tau$-loops are prohibited in guarded linear recursive specifications (see Definition \ref{GLRS}) in specifiable way, they can be constructed using the abstraction operator, for example, there exist $\tau$-loops in the process term $\tau_{\{a\}}(\langle X|X=aX\rangle)$. To avoid $\tau$-loops caused by $\tau_I$ and ensure fairness, the concept of cluster and CFAR (Cluster Fair Abstraction Rule) are still valid in true concurrency, we introduce them below.

\begin{definition}[Cluster]\label{CLUSTER}
Let $E$ be a guarded linear recursive specification, and $I\subseteq \mathbb{E}$. Two recursion variable $X$ and $Y$ in $E$ are in the same cluster for $I$ iff there exist sequences of transitions $\langle X|E\rangle\xrightarrow{\{b_{11},\cdots, b_{1i}\}}\cdots\xrightarrow{\{b_{m1},\cdots, b_{mi}\}}\langle Y|E\rangle$ and $\langle Y|E\rangle\xrightarrow{\{c_{11},\cdots, c_{1j}\}}\cdots\xrightarrow{\{c_{n1},\cdots, c_{nj}\}}\langle X|E\rangle$, or $\langle X|E\rangle\xtworightarrow{\{b_{11}[m],\cdots, b_{1i}[m]\}}\cdots\xtworightarrow{\{b_{m1}[m],\cdots, b_{mi}[m]\}}\langle Y|E\rangle$ and $\langle Y|E\rangle\xtworightarrow{\{c_{11}[n],\cdots, c_{1j}[n]\}}\cdots\xtworightarrow{\{c_{n1}[n],\cdots, c_{nj}[n]\}}\langle X|E\rangle$, where $b_{11},\cdots,b_{mi},c_{11},\cdots,c_{nj}, b_{11}[m],\cdots,b_{mi}[m],c_{11}[n],\cdots,c_{nj}[n]\in I\cup\{\tau\}$.

$a_1\parallel\cdots\parallel a_k$, or $(a_1\parallel\cdots\parallel a_k) X$, or $a_1[m]\parallel\cdots\parallel a_k[m]$, or $X (a_1[m]\parallel\cdots\parallel a_k[m])$ is an exit for the cluster $C$ iff: (1) $a_1\parallel\cdots\parallel a_k$, or $(a_1\parallel\cdots\parallel a_k) X$, or $a_1[m]\parallel\cdots\parallel a_k[m]$, or $X (a_1[m]\parallel\cdots\parallel a_k[m])$ is a summand at the right-hand side of the recursive equation for a recursion variable in $C$, and (2) in the case of $(a_1\parallel\cdots\parallel a_k) X$, and $X (a_1[m]\parallel\cdots\parallel a_k[m])$ either $a_l, a_l[m]\notin I\cup\{\tau\}(l\in\{1,2,\cdots,k\})$ or $X\notin C$.
\end{definition}

\begin{center}
\begin{table}
  \begin{tabular}{@{}ll@{}}
\hline No. &Axiom\\
  CFAR & If $X$ is in a cluster for $I$ with exits \\
           & $\{(a_{11}\parallel\cdots\parallel a_{1i})Y_1,\cdots,(a_{m1}\parallel\cdots\parallel a_{mi})Y_m, b_{11}\parallel\cdots\parallel b_{1j},\cdots,b_{n1}\parallel\cdots\parallel b_{nj}\}$, \\
           & then $\tau\cdot\tau_I(\langle X|E\rangle)=$\\
           & $\tau\cdot\tau_I((a_{11}\parallel\cdots\parallel a_{1i})\langle Y_1|E\rangle+\cdots+(a_{m1}\parallel\cdots\parallel a_{mi})\langle Y_m|E\rangle+b_{11}\parallel\cdots\parallel b_{1j}+\cdots+b_{n1}\parallel\cdots\parallel b_{nj})$\\
           & Or exists,\\
           & $\{Y_1(a_{11}[m]\parallel\cdots\parallel a_{1i}[m1]),\cdots,Y_m(a_{m1}[mm]\parallel\cdots\parallel a_{mi}[mm]), b_{11}[n1]parallel\cdots\parallel b_{1j}[n1],\cdots,b_{n1}[nn]\parallel\cdots\parallel b_{nj}[nn]\}$, \\
           & then $\tau_I(\langle X|E\rangle)\cdot\tau=$\\
           & $\tau_I(\langle Y_1|E\rangle(a_{11}[m1]\parallel\cdots\parallel a_{1i}[m1])+\cdots+\langle Y_m|E\rangle(a_{m1}[mm]\parallel\cdots\parallel a_{mi}[mm])+b_{11}[n1]\parallel\cdots\parallel b_{1j}[n1]+\cdots+b_{n1}[nn]\parallel\cdots\parallel b_{nj}[nn])\cdot\tau$\\
\end{tabular}
\caption{Cluster fair abstraction rule}
\label{CFAR}
\end{table}
\end{center}

\begin{theorem}[Soundness of CFAR]\label{SCFAR}
CFAR is sound modulo rooted branching FR truly concurrent bisimulation equivalences $\approx_{rbs}^{fr}$, $\approx_{rbp}^{fr}$ and $\approx_{rbhp}^{fr}$.
\end{theorem}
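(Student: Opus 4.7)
The plan is to prove CFAR is sound modulo each of $\approx_{rbs}^{fr}$, $\approx_{rbp}^{fr}$, and $\approx_{rbhp}^{fr}$ by exhibiting an explicit witnessing relation, following the pattern of standard CFAR soundness arguments (as in ACP) but adapted to the forward-reverse, truly concurrent setting. I will first handle $\approx_{rbs}^{fr}$; the pomset case then reduces by the causality decomposition $\xRightarrow{P} = \xRightarrow{e_1}\xRightarrow{e_2}$ already used in the soundness proofs of Theorems \ref{SAPRTCTAU} and \ref{SAPRTCABS}, and the hp-case by additionally tracking the posetal isomorphism $f$ as in those proofs.

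First I would fix a cluster $C$ for $I$ containing $X$ with exits $\{(a_{11}\parallel\cdots\parallel a_{1i})Y_1,\ldots, b_{11}\parallel\cdots\parallel b_{1j},\ldots\}$ (and, dually, the reverse exits for the second form of the rule). Since $E$ is a guarded linear recursive specification and we work with image-finite PESs, $C$ is finite and each $\langle X'|E\rangle$ with $X'\in C$ has finitely many outgoing cluster-internal transitions. This gives the fairness property needed: every infinite path from $\tau_I(\langle X'|E\rangle)$ consisting only of $\tau$-labelled (forward or reverse) transitions into states of the form $\tau_I(\langle X''|E\rangle)$ with $X''\in C$ must, up to unfolding, visit an exit; equivalently, one cannot remain in the cluster forever without committing to an exit action. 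I would state this as a preliminary lemma and derive it from guardedness together with finiteness of $C$.

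Next, let $s \triangleq (a_{11}\parallel\cdots\parallel a_{1i})\langle Y_1|E\rangle+\cdots+(a_{m1}\parallel\cdots\parallel a_{mi})\langle Y_m|E\rangle+(b_{11}\parallel\cdots\parallel b_{1j})+\cdots+(b_{n1}\parallel\cdots\parallel b_{nj})$, and dually $s^r$ for the second (reverse) form. Define the relation
\[
R = \{(\tau\cdot\tau_I(\langle X|E\rangle),\ \tau\cdot\tau_I(s))\} \cup \{(\tau_I(\langle X'|E\rangle),\ \tau_I(s)) : X' \in C\} \cup \mathrm{Id},
\]
closed under the corresponding reverse-state pairs obtained after histories $e[m]$ are attached. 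I would verify that $R$ is a rooted branching FR step bisimulation by checking the conditions in Definition \ref{FRRBPSB}. For rootedness: both leading $\tau$'s match (rules for $\tau$ in Table \ref{TRForTau} and for $\tau_I$ in Table \ref{TRForAbstraction}), and the residuals land in the middle component of $R$, which by Theorem \ref{SAPRTCABS} will need to be a branching FR bisimulation (not rooted) between $\tau_I(\langle X'|E\rangle)$ and $\tau_I(s)$. For the branching clauses from a pair $(\tau_I(\langle X'|E\rangle), \tau_I(s))$: any transition $\tau_I(\langle X'|E\rangle)\xrightarrow{X_0}$ either (i) arises from a cluster-internal step with label in $I\cup\{\tau\}$, hence carries label $\tau^*$, and by fairness and the definition of cluster the right-hand side can idle in $\tau_I(s)$ while the pair remains in $R$; or (ii) arises from an exit, in which case $\tau_I(s)$ offers exactly the matching summand and the residuals coincide (modulo $R$). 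The reverse clauses are handled symmetrically using Table \ref{RTRForEncapsulation}-style reverse rules and the fact that $\tau_I$ maps $e[m]\in I$ to $\tau$. The termination predicate $\downarrow$ holds on the right-hand side's exit states, and on the left-hand side after driving out of the cluster via fairness.

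The main obstacle, as I see it, is the fairness step: one must argue that the right-hand side's finite sum of exits faithfully represents the \emph{eventual} behavior of the cluster, despite the left-hand side admitting arbitrarily long sequences of internal $\tau$-moves within $C$. In the forward direction this is the classical CFAR fairness argument; in the reverse direction one must additionally verify that the histories $e[m]$ created while looping inside the cluster can all be reversed away via the reverse $\tau_I$ rules, and that no reverse $\tau$-loop inside the cluster blocks the matching of an exit's reverse. Once this is established, the soundness for $\approx_{rbs}^{fr}$ follows, and the upgrades to $\approx_{rbp}^{fr}$ and $\approx_{rbhp}^{fr}$ proceed exactly as in the proofs of Theorems \ref{SAPRTCTAU} and \ref{SAPRTCABS}: decompose pomset transitions into chains of single-event transitions, and in the hp-case maintain the isomorphism $f$ by extending it with $[e\mapsto e]$ and $[e[m]\mapsto e[m]]$ at each matched step.
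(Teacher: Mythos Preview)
Your proposal is correct and follows essentially the same approach as the paper: both arguments rest on the observation that cluster-internal moves become non-initial $\tau$'s after applying $\tau_I$ and prefixing (resp.\ suffixing) by $\tau$, so they are absorbed under rooted branching FR bisimulation, with the pomset and hp cases then reduced to the step case via the standard causality decomposition and isomorphism-tracking used in Theorems \ref{SAPRTCTAU} and \ref{SAPRTCABS}. The difference is only in presentation: you build an explicit witnessing relation $R$ and isolate the fairness property as a lemma, whereas the paper gives a one-paragraph informal sketch that simply asserts $\langle X|E\rangle$ executes a string of $I\cup\{\tau\}$-events followed by an exit and appeals to $B1$/$RB1$ for the absorption; your version is more rigorous but not a genuinely different route.
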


\begin{proof}
(1) Soundness of CFAR with respect to rooted branching FR step bisimulation $\approx_{rbs}^{fr}$.

Let $X$ be in a cluster for $I$ with exits $\{(a_{11}\parallel\cdots\parallel a_{1i})Y_1,\cdots,(a_{m1}\parallel\cdots\parallel a_{mi})Y_m,b_{11}\parallel\cdots\parallel b_{1j},\cdots,b_{n1}\parallel\cdots\parallel b_{nj}\}$ and $\{Y_1(a_{11}[m1]\parallel\cdots\parallel a_{1i}[m1]),\cdots,Y_m(a_{m1}[mm]\parallel\cdots\parallel a_{mi}[mm]),b_{11}[n1]\parallel\cdots\parallel b_{1j}[n1],\cdots,b_{n1}[nn]\parallel\cdots\parallel b_{nj}[nn]\}$. Then $\langle X|E\rangle$ can execute a string of atomic events from $I\cup\{\tau\}$ inside the cluster of $X$, followed by an exit $(a_{i'1}\parallel\cdots\parallel a_{i'i})Y_{i'}$ for $i'\in\{1,\cdots,m\}$ or $b_{j'1}\parallel\cdots\parallel b_{j'j}$ for $j'\in\{1,\cdots,n\}$, or $Y_{i'}(a_{i'1}[m1]\parallel\cdots\parallel a_{i'i}[mi'])$ for $i'\in\{1,\cdots,m\}$ or $b_{j'1}[n1]\parallel\cdots\parallel b_{j'j}[nj']$ for $j'\in\{1,\cdots,n\}$. Hence, $\tau_I(\langle X|E\rangle)$ can execute a string of $\tau^*$ inside the cluster of $X$, followed by an exit $\tau_I((a_{i'1}\parallel\cdots\parallel a_{i'i})\langle Y_{i'}|E\rangle)$ for $i'\in\{1,\cdots,m\}$ or $\tau_I(b_{j'1}\parallel\cdots\parallel b_{j'j})$ for $j'\in\{1,\cdots,n\}$, or $\tau_I(\langle Y_{i'}|E\rangle(a_{i'1}[m1]\parallel\cdots\parallel a_{i'i}[mi']))$ for $i'\in\{1,\cdots,m\}$ or $\tau_I(b_{j'1}[n1]\parallel\cdots\parallel b_{j'j}[nj'])$ for $j'\in\{1,\cdots,n\}$. And these $\tau^*$ are non-initial in $\tau\tau_I(\langle X|E\rangle)$ and $\tau_I(\langle X|E\rangle)\tau$, so they are truly silent by the axiom $B1$ and $RB1$, we obtain $\tau\tau_I(\langle X|E\rangle)\approx_{rbs}^{fr}\tau\cdot\tau_I((a_{11}\parallel\cdots\parallel a_{1i})\langle Y_1|E\rangle+\cdots+(a_{m1}\parallel\cdots\parallel a_{mi})\langle Y_m|E\rangle+b_{11}\parallel\cdots\parallel b_{1j}+\cdots+b_{n1}\parallel\cdots\parallel b_{nj})$, and $\tau_I(\langle X|E\rangle)\tau\approx_{rbs}^{fr}\tau_I(\langle Y_1|E\rangle(a_{11}[m1]\parallel\cdots\parallel a_{1i}[m1])+\cdots+\langle Y_m|E\rangle(a_{m1}[mm]\parallel\cdots\parallel a_{mi}[mm])+b_{11}[n1]\parallel\cdots\parallel b_{1j}[n1]+\cdots+b_{n1}[nn]\parallel\cdots\parallel b_{nj}[nn])\cdot\tau$ as desired.

(2) Soundness of CFAR with respect to rooted branching FR pomset bisimulation $\approx_{rbp}^{fr}$.

From the definition of rooted branching FR pomset bisimulation $\approx_{rbp}^{fr}$ (see Definition \ref{FRRBPSB}), we know that rooted branching FR pomset bisimulation $\approx_{rbp}^{fr}$ is defined by weak pomset transitions, which are labeled by pomsets with $\tau$. In a weak pomset transition, the events in the pomset are either within causality relations (defined by $\cdot$) or in concurrency (implicitly defined by $\cdot$ and $+$, and explicitly defined by $\between$), of course, they are pairwise consistent (without conflicts). In (1), we have already proven the case that all events are pairwise concurrent, so, we only need to prove the case of events in causality. Without loss of generality, we take a pomset of $P=\{e_1,e_2:e_1\cdot e_2\}$. Then the weak pomset transition labeled by the above $P$ is just composed of one single event transition labeled by $e_1$ succeeded by another single event transition labeled by $e_2$, that is, $\xRightarrow{P}=\xRightarrow{e_1}\xRightarrow{e_2}$ or $\xTworightarrow{P}=\xTworightarrow{e_2}\xTworightarrow{e_1}$.

Similarly to the proof of soundness of CFAR modulo rooted branching FR step bisimulation $\approx_{rbs}^{fr}$ (1), we can prove that CFAR in Table \ref{CFAR} is sound modulo rooted branching FR pomset bisimulation $\approx_{rbp}^{fr}$, we omit them.

(3) Soundness of CFAR with respect to rooted branching FR hp-bisimulation $\approx_{rbhp}^{fr}$.

From the definition of rooted branching FR hp-bisimulation $\approx_{rbhp}^{fr}$ (see Definition \ref{FRRBHHPB}), we know that rooted branching FR hp-bisimulation $\approx_{rbhp}^{fr}$ is defined on the weakly posetal product $(C_1,f,C_2),f:\hat{C_1}\rightarrow \hat{C_2}\textrm{ isomorphism}$. Two process terms $s$ related to $C_1$ and $t$ related to $C_2$, and $f:\hat{C_1}\rightarrow \hat{C_2}\textrm{ isomorphism}$. Initially, $(C_1,f,C_2)=(\emptyset,\emptyset,\emptyset)$, and $(\emptyset,\emptyset,\emptyset)\in\approx_{rbhp}^{fr}$. When $s\xrightarrow{e}s'$ ($C_1\xrightarrow{e}C_1'$), there will be $t\xRightarrow{e}t'$ ($C_2\xRightarrow{e}C_2'$), and we define $f'=f[e\mapsto e]$. And when $s\xTworightarrow{e[m]}s'$ ($C_1\xTworightarrow{e[m]}C_1'$), there will be $t\xTworightarrow{e[m]}t'$ ($C_2\xTworightarrow{e[m]}C_2'$), and we define $f'=f[e[m]\mapsto e[m]$. Then, if $(C_1,f,C_2)\in\approx_{rbhp}^{fr}$, then $(C_1',f',C_2')\in\approx_{rbhp}^{fr}$.

Similarly to the proof of soundness of CFAR modulo rooted branching FR pomset bisimulation equivalence (2), we can prove that CFAR in Table \ref{CFAR} is sound modulo rooted branching FR hp-bisimulation equivalence, we just need additionally to check the above conditions on rooted branching FR hp-bisimulation, we omit them.
\end{proof}

\begin{theorem}[Completeness of $APRTC_{\tau}$ with guarded linear recursion and CFAR]\label{CCFAR}
Let $p$ and $q$ be closed $APRTC_{\tau}$ with guarded linear recursion and CFAR terms, then,
\begin{enumerate}
  \item if $p\approx_{rbs}^{fr} q$ then $p=q$;
  \item if $p\approx_{rbp}^{fr} q$ then $p=q$;
  \item if $p\approx_{rbhp}^{fr} q$ then $p=q$.
\end{enumerate}
\end{theorem}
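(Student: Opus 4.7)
The plan is to follow the same template used for Theorem \ref{CAPRTCTAU}, but upgraded to cope with the abstraction operator $\tau_I$ and the $\tau$-loops it can create. First I would invoke the elimination theorems: by Theorem \ref{ETTau} together with the axioms $TI1$--$TI6$ of Table \ref{AxiomsForAbstraction}, every closed $APRTC_\tau$ term with guarded linear recursion can be rewritten, modulo provable equality, to a term of the form $\langle X_1\mid E\rangle$ for some guarded linear recursive specification $E$ (possibly containing $\tau$-summands but no remaining $\tau_I$ applied to a recursion variable, once $TI4$--$TI6$ have been pushed inside and the abstracted atomic actions rewritten to $\tau$ via $TI1$--$RTI2$). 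So it suffices to prove: whenever $\langle X_1\mid E_1\rangle\approx_{rbs}^{fr}\langle Y_1\mid E_2\rangle$ (resp.\ $\approx_{rbp}^{fr}$, $\approx_{rbhp}^{fr}$) with $E_1,E_2$ guarded linear, we have $\langle X_1\mid E_1\rangle=\langle Y_1\mid E_2\rangle$.

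Next I would reduce to the $\tau$-loop-free case using CFAR. Because $E_i$ may have been produced from the elimination step, clusters of recursion variables connected by actions in $I\cup\{\tau\}$ may exist. For each such cluster $C$ in $E_i$ with exits $\{(a_{11}\parallel\cdots\parallel a_{1i})Y_1,\ldots,b_{n1}\parallel\cdots\parallel b_{nj}\}$ (and the symmetric reverse exits), CFAR (Table \ref{CFAR}), together with $B1$, $RB1$, $B2$, $RB2$, allows us to replace the cluster by a single summand expression in which all internal $\tau$-transitions have been collapsed. Soundness of this replacement is given by Theorem \ref{SCFAR}, and after finitely many cluster-contractions the resulting specifications $E_1'$, $E_2'$ are still guarded linear and contain no $\tau$-cluster, while $\langle X_1\mid E_1\rangle=\langle X_1'\mid E_1'\rangle$ and $\langle Y_1\mid E_2\rangle=\langle Y_1'\mid E_2'\rangle$ provably.

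With $\tau$-loops removed, I would reuse the merger construction from the proof of Theorem \ref{CAPRTCTAU}. Namely, build a guarded linear recursive specification $E$ with variables $Z_{XY}$ indexed by pairs with $\langle X\mid E_1'\rangle\approx_{rb*}^{fr}\langle Y\mid E_2'\rangle$, whose equations have summands derived case by case from matching forward summands $(a_1\parallel\cdots\parallel a_m)X'$ with $(a_1\parallel\cdots\parallel a_m)Y'$, matching reverse summands $X'(a_1[m]\parallel\cdots\parallel a_m[m])$ with $Y'(a_1[m]\parallel\cdots\parallel a_m[m])$, matching terminating summands, and the four $\tau$-absorption cases (two forward, two reverse) that exploit $B2$/$RB2$ to treat $\tau X'$ vs.\ a non-$\tau$-move. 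Define the auxiliary terms $u_{XY}$ and $s_{XY}$ exactly as in the proof of Theorem \ref{CAPRTCTAU}; the crucial identity $s_{XY}(a_1\parallel\cdots\parallel a_m)=(a_1[m]\parallel\cdots\parallel a_m[m])\langle X\mid E_1'\rangle$ still holds by $B1$, $RB1$, $B2$, $RB2$, and guardedness of $E$ is preserved because $E_1',E_2'$ are $\tau$-loop free. Then RSP gives $\langle X_1\mid E_1'\rangle=\langle Z_{X_1Y_1}\mid E\rangle=\langle Y_1\mid E_2'\rangle$, hence $\langle X_1\mid E_1\rangle=\langle Y_1\mid E_2\rangle$.

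The main obstacle is the CFAR-driven preprocessing: we must verify that after contracting $\tau$-clusters the specifications remain guarded linear (so that RSP is applicable) and that the bisimulation between $\langle X_1\mid E_1'\rangle$ and $\langle Y_1\mid E_2'\rangle$ is preserved, including the reverse transitions. The forward case is handled exactly as in standard $ACP_\tau$ completeness proofs, but the reverse direction requires checking that every reverse exit of a cluster is matched by a provably equal reverse-exit expression, which is what the second form of CFAR in Table \ref{CFAR} supplies. Items (2) and (3) for rooted branching FR pomset and hp-bisimulation follow from (1) by the same pomset-as-causally-ordered-events and posetal-product arguments used in Theorems \ref{SAPRTCABS} and \ref{SCFAR}, so once (1) is settled the remaining cases are routine and I would omit the details.
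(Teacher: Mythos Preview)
Your overall architecture (reduce to guarded linear recursion, then run the merger construction from Theorem~\ref{CAPRTCTAU}) is the right template, and your treatment of cases (2) and (3) matches the paper. However, the sequencing of your first two steps misplaces the role of CFAR.

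You claim that $TI1$--$TI6$ together with Theorem~\ref{ETTau} already let you rewrite any closed $APRTC_\tau$ term to $\langle X_1\mid E\rangle$ with $E$ guarded linear and no residual $\tau_I$. This is not available: the $TI$-axioms distribute $\tau_I$ over $+$, $\cdot$, $\parallel$ and act on atomic events, but there is no axiom that pushes $\tau_I$ through a recursion constant $\langle X\mid E\rangle$. Unfolding by $RDP$ and then applying $TI$-axioms only regenerates subterms $\tau_I(\langle X_j\mid E\rangle)$; the ``equations'' you obtain this way are generally \emph{not} guarded (events in $I$ have become $\tau$, creating $\tau$-loops), so $RSP$ cannot be invoked and you are not yet in the form $\langle X_1\mid E\rangle$. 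Your second step then proposes to use CFAR to ``contract $\tau$-clusters'' in such a specification, but CFAR as stated in Table~\ref{CFAR} is an identity about $\tau\cdot\tau_I(\langle X\mid E\rangle)$ (and its reverse dual): it requires the abstraction operator still to be present and the underlying $E$ to be guarded. It is not a general cluster-contraction rule for an unguarded specification in which $\tau_I$ has already been eliminated.

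The paper avoids this by treating $p\equiv\tau_I(\langle X\mid E\rangle)$ as the single new case and using CFAR \emph{directly} at that point: the recursion variables of $E$ are partitioned into clusters $C_1,\ldots,C_N$ for $I$; for $Z\in C_i$ one defines $s_Z$ (and its reverse analogue $s_{Z'}$) from the exits of $C_i$, and CFAR yields $(a_1\parallel\cdots\parallel a_j)\tau_I(\langle Z\mid E\rangle)=(a_1\parallel\cdots\parallel a_j)s_Z$. These identities exhibit the family $\{s_Z\}$ as a solution of a \emph{new} linear specification $F$ (one equation per cluster, built from the exits), and $F$ is guarded because cluster exits by definition leave the cluster. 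Then $RSP$ gives $\tau_I(\langle X\mid E\rangle)=\langle Z\mid F\rangle$, after which one is back in the setting of Theorem~\ref{CAPRTCTAU}. In short: CFAR is the elimination tool for $\tau_I$ over recursion, not a post-processing step after a purported $TI$-only elimination. Reordering your argument so that CFAR produces the guarded $F$ directly from $\tau_I(\langle X\mid E\rangle)$ fixes the gap and then your merger construction goes through unchanged.
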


\begin{proof}
(1) For the case of rooted branching FR step bisimulation, the proof is following.

Firstly, in the proof the Theorem \ref{CAPRTCTAU}, we know that each process term $p$ in APRTC with silent step and guarded linear recursion is equal to a process term $\langle X_1|E\rangle$ with $E$ a guarded linear recursive specification. And we prove if $\langle X_1|E_1\rangle\approx_{rbs}^{fr}\langle Y_1|E_2\rangle$, then $\langle X_1|E_1\rangle=\langle Y_1|E_2\rangle$

The only new case is $p\equiv\tau_I(q)$. Let $q=\langle X|E\rangle$ with $E$ a guarded linear recursive specification, so $p=\tau_I(\langle X|E\rangle)$. Then the collection of recursive variables in $E$ can be divided into its clusters $C_1,\cdots,C_N$ for $I$. Let

$$(a_{1i1}\parallel\cdots\parallel a_{k_{i1}i1}) Y_{i1}+\cdots+(a_{1im_i}\parallel\cdots\parallel a_{k_{im_i}im_i}) Y_{im_i}+b_{1i1}\parallel\cdots\parallel b_{l_{i1}i1}+\cdots+b_{1im_i}\parallel\cdots\parallel b_{l_{im_i}im_i}$$

or,

$$Y_{i1}(a_{1i1}[m1]\parallel\cdots\parallel a_{k_{i1}i1}[m1]) +\cdots+ Y_{im_i}(a_{1im_i}[mm]\parallel\cdots\parallel a_{k_{im_i}im_i}[mm])+b_{1i1}[n1]\parallel\cdots\parallel b_{l_{i1}i1}[n1]+\cdots+b_{1im_i}[nn]\parallel\cdots\parallel b_{l_{im_i}im_i[nn]}$$

be the conflict composition of exits for the cluster $C_i$, with $i\in\{1,\cdots,N\}$.

For $Z,Z'\in C_i$ with $i\in\{1,\cdots,N\}$, we define

$$s_Z\triangleq (\hat{a_{1i1}}\parallel\cdots\parallel \hat{a_{k_{i1}i1}}) \tau_I(\langle Y_{i1}|E\rangle)+\cdots+(\hat{a_{1im_i}}\parallel\cdots\parallel \hat{a_{k_{im_i}im_i}}) \tau_I(\langle Y_{im_i}|E\rangle)+\hat{b_{1i1}}\parallel\cdots\parallel \hat{b_{l_{i1}i1}}+\cdots+\hat{b_{1im_i}}\parallel\cdots\parallel \hat{b_{l_{im_i}im_i}}$$

and

$$s_Z'\triangleq \tau_I(\langle Y_{i1}|E\rangle)(\hat{a_{1i1}}[m1]\parallel\cdots\parallel \hat{a_{k_{i1}i1}}[m1])+\cdots+(\hat{a_{1im_i}}[mm]\parallel\cdots\parallel \hat{a_{k_{im_i}im_i}}[mm]) \tau_I(\langle Y_{im_i}|E\rangle)+\hat{b_{1i1}}[n1]\parallel\cdots\parallel \hat{b_{l_{i1}i1}}[n1]+\cdots+\hat{b_{1im_i}}[nn]\parallel\cdots\parallel \hat{b_{l_{im_i}im_i}}[nn]$$

For $Z,Z'\in C_i$ and $a_1,\cdots,a_j\in \mathbb{E}\cup\{\tau\}$ with $j\in\mathbb{N}$, we have

$(a_1\parallel\cdots\parallel a_j)\tau_I(\langle Z|E\rangle)$

$=(a_1\parallel\cdots\parallel a_j)\tau_I((a_{1i1}\parallel\cdots\parallel a_{k_{i1}i1}) \langle Y_{i1}|E\rangle+\cdots+(a_{1im_i}\parallel\cdots\parallel a_{k_{im_i}im_i}) \langle Y_{im_i}|E\rangle+b_{1i1}\parallel\cdots\parallel b_{l_{i1}i1}+\cdots+b_{1im_i}\parallel\cdots\parallel b_{l_{im_i}im_i})$

$=(a_1\parallel\cdots\parallel a_j)s_Z$

$\tau_I(\langle Z|E\rangle)(a_1[m]\parallel\cdots\parallel a_j[m])$

$=\tau_I(\langle Y_{i1}|E\rangle(a_{1i1}[m1]\parallel\cdots\parallel a_{k_{i1}i1}[m1]) +\cdots+ \langle Y_{im_i}|E\rangle(a_{1im_i}[mm]\parallel\cdots\parallel a_{k_{im_i}im_i}[mm])+b_{1i1}[n1]\parallel\cdots\parallel b_{l_{i1}i1}[n1]+\cdots+b_{1im_i}[nn]\parallel\cdots\parallel b_{l_{im_i}im_i}[nn])(a_1[m]\parallel\cdots\parallel a_j[m])$

$=(a_1\parallel\cdots\parallel a_j)s_Z'$

Let the linear recursive specification $F$ contain the same recursive variables as $E$, for $Z,Z'\in C_i$, $F$ contains the following recursive equation

$$Z=(\hat{a_{1i1}}\parallel\cdots\parallel \hat{a_{k_{i1}i1}}) Y_{i1}+\cdots+(\hat{a_{1im_i}}\parallel\cdots\parallel \hat{a_{k_{im_i}im_i}})  Y_{im_i}+\hat{b_{1i1}}\parallel\cdots\parallel \hat{b_{l_{i1}i1}}+\cdots+\hat{b_{1im_i}}\parallel\cdots\parallel \hat{b_{l_{im_i}im_i}}$$

Let the linear recursive specification $F'$ contain the same recursive variables as $E$, for $Z,Z'\in C_i$, $F$ contains the following recursive equation

$$Z'=Y_{i1}(\hat{a_{1i1}}[m1]\parallel\cdots\parallel \hat{a_{k_{i1}i1}}[m1]) +\cdots+  Y_{im_i}(\hat{a_{1im_i}}[mm]\parallel\cdots\parallel \hat{a_{k_{im_i}im_i}}[mm])+\hat{b_{1i1}}[n1]\parallel\cdots\parallel \hat{b_{l_{i1}i1}}[n1]+\cdots+\hat{b_{1im_i}}[nn]\parallel\cdots\parallel \hat{b_{l_{im_i}im_i}}[nn]$$

It is easy to see that there is no sequence of one or more $\tau$-transitions from $\langle Z|F\rangle$ and $\langle Z'|F'\rangle$ to itself, so $F$ and $F'$ is guarded.

For

$$s_Z=(\hat{a_{1i1}}\parallel\cdots\parallel \hat{a_{k_{i1}i1}}) Y_{i1}+\cdots+(\hat{a_{1im_i}}\parallel\cdots\parallel \hat{a_{k_{im_i}im_i}}) Y_{im_i}+\hat{b_{1i1}}\parallel\cdots\parallel \hat{b_{l_{i1}i1}}+\cdots+\hat{b_{1im_i}}\parallel\cdots\parallel \hat{b_{l_{im_i}im_i}}$$

is a solution for $F$. So, $(a_1\parallel\cdots\parallel a_j)\tau_I(\langle Z|E\rangle)=(a_1\parallel\cdots\parallel a_j)s_Z=(a_1\parallel\cdots\parallel a_j)\langle Z|F\rangle$.

So,

$$\langle Z|F\rangle=(\hat{a_{1i1}}\parallel\cdots\parallel \hat{a_{k_{i1}i1}}) \langle Y_{i1}|F\rangle+\cdots+(\hat{a_{1im_i}}\parallel\cdots\parallel \hat{a_{k_{im_i}im_i}}) \langle Y_{im_i}|F\rangle+\hat{b_{1i1}}\parallel\cdots\parallel \hat{b_{l_{i1}i1}}+\cdots+\hat{b_{1im_i}}\parallel\cdots\parallel \hat{b_{l_{im_i}im_i}}$$

For

$$s_Z'=Y_{i1}(\hat{a_{1i1}}[m1]\parallel\cdots\parallel \hat{a_{k_{i1}i1}}[m1]) +\cdots+  Y_{im_i}(\hat{a_{1im_i}}[mm]\parallel\cdots\parallel \hat{a_{k_{im_i}im_i}}[mm])+\hat{b_{1i1}}[n1]\parallel\cdots\parallel \hat{b_{l_{i1}i1}}[n1]+\cdots+\hat{b_{1im_i}}[nn]\parallel\cdots\parallel \hat{b_{l_{im_i}im_i}}[nn]$$

is a solution for $F'$. So, $\tau_I(\langle Z'|E\rangle)(a_1[m]\parallel\cdots\parallel a_j[m])=s_Z'(a_1[m]\parallel\cdots\parallel a_j[m])=\langle Z'|F'\rangle(a_1[m]\parallel\cdots\parallel a_j[m])$.

So,

$$\langle Z'|F'\rangle=\langle Y_{i1}|F\rangle(\hat{a_{1i1}}[m1]\parallel\cdots\parallel \hat{a_{k_{i1}i1}}[m1])+\cdots+\langle Y_{im_i}|F\rangle(\hat{a_{1im_i}}[mm]\parallel\cdots\parallel \hat{a_{k_{im_i}im_i}}[mm]) +\hat{b_{1i1}}[n1]\parallel\cdots\parallel \hat{b_{l_{i1}i1}}[n1]+\cdots+\hat{b_{1im_i}}[nn]\parallel\cdots\parallel \hat{b_{l_{im_i}im_i}}[nn]$$

Hence, $\tau_I(\langle X|E\rangle=\langle Z|F\rangle)$, as desired.

(2) For the case of rooted branching FR pomset bisimulation, it can be proven similarly to (1), we omit it.

(3) For the case of rooted branching FR hp-bisimulation, it can be proven similarly to (1), we omit it.
\end{proof}

\section{Conclusions}{\label{con}}

Based on our previous process algebra for concurrency APTC, we prove that it is reversible with a little modifications. The reversible algebra has four parts: Basic Algebra for Reversible True Concurrency (BARTC), Algebra for Parallelism in Reversible True Concurrency (APRTC), recursion and abstraction.

This work can be used to verify the behavior of computational systems in a reversible flavor.

\newpage

%

\label{lastpage}

\end{document}